\definecolor{darkblue}{rgb}{0,0,.5}
\long\def\@makecaption#1#2{
  \vskip 0.8ex
  \setbox\@tempboxa\hbox{\small {\bf #1:} #2}
  \parindent 1.5em  
  \dimen0=\hsize
  \advance\dimen0 by -3em
  \ifdim \wd\@tempboxa >\dimen0
  \hbox to \hsize{
    \parindent 0em
    \hfil 
    \parbox{\dimen0}{\def\baselinestretch{0.96}\small
      {\bf #1.} #2
    } 
    \hfil}
  \else \hbox to \hsize{\hfil \box\@tempboxa \hfil}
  \fi
}
\newcommand{\R}{\mathbb{R}}
\newtheorem{theorem}{Theorem}
\newtheorem{lemma}{Lemma}[section]
\newtheorem{corollary}{Corollary}[section]
\newtheorem{proposition}{Proposition}
\newtheorem{definition}{Definition}[section]
\newtheorem{claim}{Claim}[section]
\newtheorem{fact}{Fact}
\newtheorem{example}{Example}
\renewcommand\log{\ln}
\begin{document}

\title{Optimal Differential Privacy Composition for Exponential Mechanisms and the Cost of Adaptivity}
\author[1]{Jinshuo Dong\thanks{Work done while at LinkedIn}}
\author[2]{David Durfee}
\author[2]{Ryan Rogers}
\affil[1]{Applied Mathematics and Computational Sciences, University of Pennsylvania}
\affil[2]{Applied Research, LinkedIn}

\maketitle 

\begin{abstract}
Composition is one of the most important properties of differential privacy (DP), as it allows algorithm designers to build complex private algorithms from DP primitives.  We consider precise composition bounds of the overall privacy loss for exponential mechanisms, one of the most fundamental class of mechanisms in DP.  We give explicit formulations of the optimal privacy loss for both the adaptive and nonadaptive settings.  For the nonadaptive setting in which each mechanism has the same privacy parameter, we give an efficiently computable formulation of the optimal privacy loss.  Furthermore, we show that there is a difference in the privacy loss when the exponential mechanism is chosen adaptively versus nonadaptively. To our knowledge, it was previously unknown whether such a gap existed for any DP mechanisms with fixed privacy parameters, and we demonstrate the gap for a widely used class of mechanism in a natural setting. We then improve upon the best previously known upper bounds for adaptive composition of exponential mechanism with efficiently computable formulations and show the improvement.

\end{abstract}

\clearpage

\tableofcontents

\clearpage


\section{Introduction}

Differential privacy (DP) has emerged as the leading privacy benchmark in machine learning as well as data analytics on sensitive data sets.  One of the most useful properties of DP is that it composes, with slight degradation in the overall privacy loss parameters.  This allows algorithm designers to build complicated algorithms whose privacy analysis follows from the fact that each subroutine satisfies DP.  Further, composition allows us to bound the amount of privacy loss, quantified by the $(\diffp,\delta)$ parameters in DP, consumed by an (adaptive) sequence of DP algorithms evaluated on the same dataset.  Hence, there have been several works in DP that help bound the privacy loss in composition, starting with basic composition from \citet{DworkMcNiSm06} and advanced composition from \citet{DworkRoVa10}.  More recently, there have been works that give the exact, optimal privacy loss bound when all that is known is that the individual algorithms are each DP:  \citet{KairouzOhVi17} give the optimal privacy loss bound in the homogeneous case, where all the privacy parameters for each algorithm are the same, and \citet{MurtaghVa16} give the more general optimal privacy loss bound in the heterogeneous case, where all the privacy parameters can be different at each round.

Although these \emph{black box} composition theorems give the best possible bound on the privacy loss over multiple rounds of general DP algorithms, one should be able to improve on this bound when considering specific subclasses of DP algorithms.  One example of such a composition theorem that takes into account the particular algorithm being used at each round is in \emph{moments accounting composition} from \citet{Abadietal16}.  For their setting, they use noisy stochastic gradient descent and account for the subsampling and Gaussian noise that is added to the gradients at each round in their overall privacy loss bound.  In particular, they are able to save a factor of $O(\sqrt{\log(k/\delta)})$ in the overall privacy parameter, where $k$ is the number of gradient descent steps taken.  
Another example of \emph{white box} composition is from \citet{DurfeeRo19} who introduce \emph{bounded range} (BR) as a property for DP algorithms, which leads to improved composition bounds compared to the general case optimal bound.  

Arguably, the fundamental DP primitives are the following: randomized response \cite{Warner65}, Laplace mechanism \cite{DworkMcNiSm06} or its discretized variant (geometric mechanism), Gaussian mechanism \cite{DworkKeMcMiNa06}, and the exponential mechanism \cite{McSherryTa07}. The optimal DP composition bounds \cite{KairouzOhVi17,MurtaghVa16} follow by showing that each DP algorithm, once the neighboring datasets are fixed, can be written as randomized response composed with a post-processing function that is independent of the data.  Hence, the optimal DP composition is essentially tailored to composing randomized response mechanisms.  The geometric mechanism was shown to also achieve the optimal composition bound \cite{KairouzOhVi17}. Optimal DP composition bound for Gaussian mechanisms is obtained as a special case of the general composition theorem in \citet{DongRoSu19}.

Hence, it is only natural to then ask:
{\bf what is the optimal DP composition bound over the class of exponential mechanisms?}  
This question is the primary focus of this work.  As was shown in \citet{DurfeeRo19}, the exponential mechanism satisfies the BR property and hence enjoys their improved composition bound.   The exponential mechanism provides a very general way to design DP algorithms over an arbitrary outcome space where a \emph{quality score} measures the value of each possible outcome given the input datatset. In practice, the exponential mechanism is most often deployed when a maximum or minimum operation is needed in a DP algorithm.  

Surprisingly, the answer to this question depends on whether the choice of exponential mechanism is adaptively chosen at each round or not.  For the existing DP composition bounds, adaptivity in the choice of DP algorithm did not affect the overall privacy bound, even in the optimal privacy loss bounds.  \citet{RogersRoUlVa16} show that there is an asymptotic gap in the privacy loss bound when the privacy parameters $\{ \diffp_i\}_{i = 1}^k$ are fixed in advance versus when an analyst can adaptively select the privacy parameters $\diffp_i$ at each round $i$ based on previous outcomes before $i$. However, we focus on the traditional view of DP that fixes all the privacy parameters up front.  

In the local setting of differential privacy, interactivity and adaptivity have been shown to be significant in learning algorithms and estimation tasks, see \cite{KasiviswanathanLeNiRaSm11, SmithThUp17, JosephMaNeRo19}, although for some estimation tasks in more restricted interactivity models, there is no distinction \cite{DuchiRo19}.  However, such interactivity models are not relevant to the central model since mechanisms are designed to take the full dataset as input rather than designing algorithms on each datum as in the local model. Our result is in a similar vein to these  results that consider the possible impact to the privacy loss from giving the adversary additional power.

We find the gap here particularly interesting because this is such a natural setting and has practical interest in the deployment of top-$k$ algorithms \cite{DurfeeRo19}. 
For such data queries, it would be ideal to allow the analyst to adaptively interact with a DP system, rather than having the analyst select all the mechanisms up front and produce results as a batch.  
For example, consider the exponential mechanism as simply reporting the (noisy) maximum index for some metric of interest, but only for a certain subgroup and the analyst specifies the classifications for this subgroup, such as company, job title, geographic location, etc.
Even if we fix the privacy parameter, our privacy loss will increase if we allow the analyst to adaptively select these classifications in subsequent queries.

Both the nonadaptive and adaptive setting will have practical importance and the distinction will be important in efficiently computing the tightest possible bounds on the privacy loss.
In particular, our nonadaptive and efficiently computable composition formulation can be applied in a \emph{dashboard setting}, where the set of queries that are privately output for a dataset is predetermined, and could include top-$k$ queries for all the metrics of interest.
Further, we know that our bound cannot be improved in this setting.  Alternatively, our efficiently computable improved bounds for the adaptive setting
can be applied in an \emph{API setting} mentioned above, where the analyst adaptively interacts with the DP system.

While the improvements we give here in bounding the overall privacy loss are not asymptotically significant, if we consider the amount of privacy loss to be fixed, then increasing the number of allowable queries by a constant factor can still have a substantial impact on practical deployments. 
From our results in Figure~\ref{fig:ALL}, our nonadaptive composition bound allows for about four times more queries than the optimal composition for general DP mechanisms given a fixed privacy loss budget. Furthermore, this optimal composition allows for about two times more queries than the improved bounds given in \cite{DurfeeRo19}. Additionally, in some settings our improvement for the adaptive composition bound of exponential mechanisms allows for about three times more queries than both the optimal composition for DP mechanisms and the improved bounds in \cite{DurfeeRo19}.

\subsection{Our contributions}

We informally summarize our main contributions here and will give the formal statements in Section~\ref{sec:results} once we have set up the requisite notation.

\begin{enumerate}
\item We show that there is indeed a gap between the optimal composition bound when an adversary can adaptively select different exponential mechanisms at each round as opposed to an adversary who selects all the exponential mechanisms in advance.  This is in contrast to traditional DP composition bounds, which showed no difference between these different adversaries in terms of the privacy loss.
\item For the nonadaptive adversary, we provide an explicitly computable formula for the optimal composition bound that can be computed in $O(k^2)$ time, where $k$ is the total number of exponential mechanisms that are executed.  
\item For the adaptive adversary, we provide an explicit formulation for the optimal composition, but in a recursive formulation that is intractable 
to compute for even reasonably sized $k$.
We then improve upon the previous upper bound on the privacy loss by giving an improved KL divergence bound, and further provide a numerical scheme based on the moment generating function of the privacy loss to obtain an even better upper bound on the optimal composition.  \end{enumerate}

Although we have presented the exponential mechanism as a specific DP mechanism, it is also important to discuss its generality.  
In particular, there is the folklore result that states that any (pure) DP mechanism can be written in terms of an exponential mechanism with a particular quality score, i.e. the log-density of the mechanism \citep{McSherryTa07}.  
Hence, it might seem that the optimal $k$-fold adaptive composition bound over the class of exponential mechanisms, or BR mechanisms, is also the optimal $k$-fold adaptive composition bound over the class of all DP mechanisms. 
However, sometimes taking general DP mechanisms, such as randomized response or the Laplace mechanism, to the generic exponential mechanism form could result in a different overall privacy parameter.  Hence, a general $\diffp$-DP mechanism can be written in terms of an exponential mechanism with parameter $\diffp'$, which can be up to a factor 2 larger than $\diffp$.
See Section~\ref{sec:OptCompVsBR} for more discussion.

\section{Preliminaries} 

In this section, we set up the necessary notation and definitions for our results. It will be necessary in our analysis to use a generalized version of randomized response that corresponds to BR mechanisms. Similar to the work in the optimal composition bounds for DP mechanisms, our goal will be to reduce composition to adaptive calls of this more generalized randomized response than the one used in the optimal DP composition analysis \cite{KairouzOhVi17,MurtaghVa16}. For this reduction, we give a more fine-grained definition of adaptive composition, that is equivalent to previous versions, but includes details that were not necessary for standard DP composition.  In particular, the class of algorithms that we want to give a DP composition bound for is not closed under convex combinations.  Thus, an adversary can randomize over different algorithms in the same class and the resulting algorithm is no longer in that class.  
Finally, we give the definition of optimal composition and an alternative formulation that will be easier to work with.

We first cover the standard differential privacy definition from \cite{DworkMcNiSm06,DworkKeMcMiNa06}, where we will say that two datasets $x,x' \in \cX$ are neighbors if they differ in the addition or deletion of one individual's data, sometimes denoted as $x \sim x'$.
\begin{definition}\label{def:dp}
	A mechanism $M:\cX\to \cY$ is $(\diffp, \delta)$-differentially-private (DP) if the following holds for any neighboring dataset $x,x'$ and $S \subseteq \cY$:
	\[
		{\Pr[M(x) \in S]}\leqslant
		\e^{\diffp}{\Pr[M(x') \in S]} + \delta.
	\]
Also if $\delta= 0$, we simply write $\diffp$-DP.
\end{definition}

We now present the definition of bounded range from \citet{DurfeeRo19}, which was useful in improving the composition bounds for their algorithms.

\begin{definition}
	A mechanism $M:\cX\to \cY$ is $\diffp$-bounded-range (BR) if the following holds for any neighboring dataset $x,x'$ and $y_1,y_2\in \cY$:
	\[
		\frac{\Pr[M(x)=y_1]}{\Pr[M(x')=y_1]}\leqslant
		\e^\diffp\frac{\Pr[M(x)=y_2]}{\Pr[M(x')=y_2]}.
	\]
\end{definition}
Note that for continuous outcome spaces, we can use the probability density function instead.  We then have the following equivalent formulation of BR mechanisms, which will be easier to use in our analysis.
\begin{corollary}\label{cor:br}
A mechanism $M: \cX \to \cY$ is $\diffp$-BR if and only if for any neighboring databases $x,x'$ there exists some $t \in [0,\diffp]$ such that for any outcome $y \in \cY$ we have

\[
t - \diffp \leq \log\left(\frac{\Pr[M(x) = y]}{\Pr[M(x') = y]}\right) \leq t
\]
\end{corollary}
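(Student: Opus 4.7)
The plan is to let $f(y) = \log\!\bigl(\Pr[M(x)=y]/\Pr[M(x')=y]\bigr)$ for a fixed pair of neighboring datasets $x,x'$, and reformulate the BR condition as a statement about the range of $f$. The inequality defining $\diffp$-BR is exactly $f(y_1) - f(y_2) \leq \diffp$ for every $y_1,y_2 \in \cY$, i.e.\ $\sup_y f(y) - \inf_y f(y) \leq \diffp$. The corollary's condition, on the other hand, says $f(y) \in [t-\diffp, t]$ for some $t \in [0,\diffp]$, which also pins the range of $f$ inside an interval of length $\diffp$. So the heart of the proof is to show these two range conditions are equivalent, with the extra constraint $t \in [0,\diffp]$ being automatic.

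For the easy direction, I would assume the corollary's condition and directly compute $f(y_1) - f(y_2) \leq t - (t-\diffp) = \diffp$, which gives BR without needing the bound $t\in[0,\diffp]$ at all. For the converse, I would assume $\diffp$-BR and take $t := \sup_{y} f(y)$ (working with essential suprema in the continuous case, under the convention that we restrict to $y$ with $\Pr[M(x')=y]>0$, noting that BR forces $\Pr[M(x')=y]=0 \Leftrightarrow \Pr[M(x)=y]=0$). Then $f(y) \leq t$ holds by definition, and $f(y) \geq t - \diffp$ holds by BR applied with $y_1 = y$ and $y_2$ approaching the sup.

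The only remaining task is to verify $t \in [0,\diffp]$, and this is where the probabilistic content enters (the algebraic argument above used only the range of $f$). I would use the identity $\mathbb{E}_{y \sim M(x')}\bigl[e^{f(y)}\bigr] = 1$, which comes from $e^{f(y)} = \Pr[M(x)=y]/\Pr[M(x')=y]$ and the fact that $M(x)$ is a probability distribution. If $t<0$, then $e^{f(y)} < 1$ almost surely under $M(x')$, contradicting the expectation equal to $1$; hence $t \geq 0$. Symmetrically, $t - \diffp$ is a lower bound for $f$, and if $t - \diffp > 0$, then $e^{f(y)} > 1$ almost surely under $M(x')$, again contradicting the expectation being $1$; hence $t \leq \diffp$. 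I do not expect any genuinely hard step here; the only subtle point is being careful about the measure-zero set where densities vanish in the continuous case, which is handled uniformly by restricting attention to the common support guaranteed by BR.
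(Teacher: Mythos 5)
The paper states this corollary without an explicit proof, presenting it as an immediate reformulation of the BR definition, so there is no paper argument to compare against. Your proof is correct and supplies exactly the missing content: writing the BR condition as $\sup_y f(y) - \inf_y f(y) \le \diffp$ for $f(y) = \log\bigl(\Pr[M(x)=y]/\Pr[M(x')=y]\bigr)$, taking $t = \sup_y f(y)$ (essential sup on the common support, which BR forces to coincide), and then observing that $\mathbb{E}_{y\sim M(x')}\bigl[e^{f(y)}\bigr]=1$ forces $\inf f \le 0 \le \sup f$, which together with the length-$\diffp$ range constraint pins $t$ (indeed the entire admissible interval $[\sup f,\ \inf f+\diffp]$) inside $[0,\diffp]$. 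The only caveat worth keeping in mind is the one you already flag: if the supremum is not attained one works with an essential supremum and obtains $f(y)\ge t-\diffp$ by a limiting argument, but this does not change anything substantive.
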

We also have the following connection between BR and (pure) DP.
\begin{lemma}[Corollary 4.2 in \cite{DurfeeRo19}]\label{lem:factor2BR}
If $M$ is $\diffp$-BR then it is $\diffp$-DP.  Furthermore, if $M$ is $\diffp$-DP, then it is also $2\diffp$-BR.
\end{lemma}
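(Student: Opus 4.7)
The plan is to handle the two implications separately; both reduce to pointwise manipulations of the privacy loss $\log\frac{\Pr[M(x)=y]}{\Pr[M(x')=y]}$, so neither direction should present a genuine obstacle once Corollary \ref{cor:br} is in hand.

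For $\diffp$-BR $\Rightarrow$ $\diffp$-DP, I would simply apply Corollary \ref{cor:br}. Fix neighbors $x \sim x'$ and let $t \in [0,\diffp]$ be the associated threshold. Then for every outcome $y$, the corollary yields $t - \diffp \leq \log\frac{\Pr[M(x)=y]}{\Pr[M(x')=y]} \leq t$, and since $t \leq \diffp$ the upper bound gives $\Pr[M(x)=y] \leq e^{\diffp}\Pr[M(x')=y]$ pointwise. Summing (or integrating) over any $S \subseteq \cY$ yields $\Pr[M(x)\in S] \leq e^{\diffp}\Pr[M(x')\in S]$. The reverse direction of DP follows either from $t - \diffp \geq -\diffp$ applied pointwise, or equivalently by re-running the argument with $x$ and $x'$ swapped, which produces $\diffp$-DP with $\delta=0$.

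For $\diffp$-DP $\Rightarrow$ $2\diffp$-BR, I would use the DP guarantee at the level of individual outcomes. For any neighbors $x \sim x'$ and any $y_1, y_2 \in \cY$, applying $\diffp$-DP to $\{y_1\}$ (or the corresponding density statement in the continuous case) gives $\frac{\Pr[M(x)=y_1]}{\Pr[M(x')=y_1]} \leq e^{\diffp}$, and applying it with $x,x'$ swapped to $\{y_2\}$ gives $\frac{\Pr[M(x')=y_2]}{\Pr[M(x)=y_2]} \leq e^{\diffp}$, i.e. $\frac{\Pr[M(x)=y_2]}{\Pr[M(x')=y_2]} \geq e^{-\diffp}$. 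Dividing these two inequalities produces exactly the $2\diffp$-BR condition.

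The only place that requires a word of care is the passage from the set-based DP definition to pointwise ratios used in the BR definition; this is the standard reduction that holds whenever $M$ admits a density (or has a discrete outcome space), which is the same setting in which the BR definition and its corollary were stated, so no new assumption is needed.
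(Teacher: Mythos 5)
This lemma is imported verbatim from \citet{DurfeeRo19} (their Corollary 4.2); the paper you were given contains no proof of it, so there is no in-paper argument to compare against. Judged on its own, your proof is correct and is the standard argument one would expect (and essentially what the cited source does). For the first direction, invoking Corollary~\ref{cor:br} and using $t\le\diffp$ (resp.\ $t-\diffp\ge-\diffp$ for the swapped pair) and then integrating over $S$ is fine; note you could also bypass Corollary~\ref{cor:br} entirely by multiplying the BR inequality $\frac{\Pr[M(x)=y_1]}{\Pr[M(x')=y_1]}\le e^{\diffp}\frac{\Pr[M(x)=y_2]}{\Pr[M(x')=y_2]}$ through by $\Pr[M(x')=y_2]$ and integrating over $y_2$, which gives the pointwise bound $\Pr[M(x)=y_1]\le e^{\diffp}\Pr[M(x')=y_1]$ directly from Definition~2. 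For the second direction, combining $\frac{\Pr[M(x)=y_1]}{\Pr[M(x')=y_1]}\le e^{\diffp}$ with $\frac{\Pr[M(x)=y_2]}{\Pr[M(x')=y_2]}\ge e^{-\diffp}$ yields exactly the $2\diffp$-BR condition, as you say. The only loose ends are the ones you already flagged: passing from the event-level DP definition to pointwise density ratios, plus the implicit convention that pure DP forces $\Pr[M(x')=y]=0\Rightarrow\Pr[M(x)=y]=0$ so the ratios in the BR definition are well defined; this is the same level of rigor at which the BR definition and Corollary~\ref{cor:br} are themselves stated, so no gap.
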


We will now define the exponential mechanism in terms of a quality score $u: \cX \times \cY \to \R$ and its sensitivity $\Delta u \defeq \max_{y \in \cY} \max_{x \sim x'} | u(x,y) - u(x',y)|$.
\begin{definition}[Exponential Mechanism \cite{McSherryTa07}\label{defn:em}]
The exponential mechanism $M_u: \cX \to \cY$ samples an outcome $y \in \cY$ with probability proportional to $\exp\left( \tfrac{\diffp u(x,y)}{2\Delta u} \right)$.
\label{defn:exp_mech}
\end{definition}
The factor of two accounts for the possibility that the normalization term can also change with a neighboring dataset and for some quality scores, i.e. monotonic, the factor of 2 is not necessary.  We then have the following result.
\begin{theorem}
The exponential mechanism is $\diffp$-DP \citep{McSherryTa07}.  Further, the exponential mechanism is $\diffp$-BR \cite{DurfeeRo19}.
\end{theorem}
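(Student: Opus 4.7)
The plan is to compute the log density ratio of $M_u$ explicitly and bound it using the sensitivity of $u$. Let $Z(x) := \sum_{y \in \cY} \exp(\diffp\, u(x,y)/(2\Delta u))$ denote the normalizer, so that $\Pr[M_u(x) = y] = \exp(\diffp\, u(x,y)/(2\Delta u))/Z(x)$. For any neighboring $x, x'$ and any fixed $y$, the log ratio decomposes cleanly as
\[
\log\frac{\Pr[M_u(x) = y]}{\Pr[M_u(x') = y]} \;=\; \frac{\diffp}{2\Delta u}\bigl(u(x,y) - u(x',y)\bigr) \;+\; \log\frac{Z(x')}{Z(x)}.
\]
I would handle the two terms separately: the definition of $\Delta u$ forces the first term into $[-\diffp/2, \diffp/2]$, and a termwise comparison of the summands in $Z(x')$ against those in $Z(x)$ using the same sensitivity bound forces $\log(Z(x')/Z(x)) \in [-\diffp/2, \diffp/2]$.

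Summing these two intervals immediately yields the $\diffp$-DP claim. For the BR claim, I would invoke the equivalent characterization in Corollary~\ref{cor:br} and exhibit the required threshold $t$ explicitly. Writing $c := \log(Z(x')/Z(x))$, the decomposition above shows that as $y$ varies the log ratio ranges over a subset of $[c - \diffp/2,\, c + \diffp/2]$, which is a window of width exactly $\diffp$ centered at $c$. Taking $t := c + \diffp/2$ places $t$ in $[0, \diffp]$ thanks to the bound on $c$, and every log ratio then lies in $[t - \diffp, t]$ as Corollary~\ref{cor:br} demands.

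The main obstacle is not hard analysis but rather book-keeping: the factor of $2$ in the exponent of Definition~\ref{defn:exp_mech} is exactly what reserves the $\diffp/2$ slack absorbed by the normalizer term, so one must be careful that the BR window of width $\diffp$ lines up so that its upper endpoint $t$ still lands in $[0,\diffp]$. For continuous outcome spaces the same argument applies verbatim with sums replaced by integrals against the base measure, as noted after the definition of BR.
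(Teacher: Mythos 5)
Your proof is correct and follows essentially the same route as the cited arguments (and as the paper's own range-based derivation in Section~\ref{sub:semantics_of_exponential_mechanism}, where the normalizer cancellation plays the same role): bound the score term by $\diffp/2$ via $\Delta u$, bound $\log(Z(x')/Z(x))$ by $\diffp/2$ termwise, and for BR exhibit $t = \log(Z(x')/Z(x)) + \diffp/2 \in [0,\diffp]$ so that Corollary~\ref{cor:br} applies. The paper itself states this theorem by citation rather than proof, and your explicit choice of $t$ is the same quantity that reappears in Lemma~\ref{lem:cq_extremes}, so no gaps.
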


Throughout the rest of this work, we will use a generalized version of randomized response, which our analysis will primarily focus on and we define below.
\begin{definition}[Generalized Random Response]\label{defn:gen_rr}
For any $\diffp \geq 0$ and $t \in [0,\diffp]$,
let $\grr{\diffp,t}: \{0,1\} \rightarrow \{0,1\}$ be a randomized mechanism in terms of probabilities $q_{\diffp,t}$ and $p_{\diffp,t}$ such that

\begin{align*}
\grr{\diffp,t}(0) = 0 \text{ w.p. } \frac{1 - e^{t-\diffp}}{1 - e^{-\diffp}} \eqdef q_{\diffp,t} \qquad & \text{ and } \qquad \grr{\diffp,t}(0) = 1 \text{ w.p. } \frac{e^{t-\diffp} - e^{-\diffp}}{1 - e^{-\diffp}} \eqdef 1 - q_{\diffp,t}
\\
\grr{\diffp,t}(1) = 0 \text{ w.p. } \frac{e^{-t} - e^{-\diffp}}{1 - e^{-\diffp}} \eqdef p_{\diffp,t} \qquad & \text{ and } \qquad \grr{\diffp,t}(1) = 1 \text{ w.p. } \frac{1 - e^{-t}}{1 - e^{-\diffp}} \eqdef 1 - p_{\diffp,t}.
\end{align*}
\end{definition}

Note the $\grr{\diffp, \diffp/2}$ is simply the standard randomized response with privacy parameter $\diffp/2$ \cite{Warner65}.  
We will typically drop the dependence of $\diffp$ in $\grr{\diffp,t} \equiv \grr{t}$ when it is clear from context.
It will be useful to also define what we mean by optimal privacy parameters, which we will write as a function $\delta_{\opt}$ of a mechanism and a global DP parameters $\diffp_g$.  

\begin{definition}[Optimal Privacy Parameters]\label{def:optimal_delta}
Given a mechanism $M: \cX \to \cY$ and any $\diffp \in \R$, we define the optimal $\delta$ to be

\[
\delta_{\opt}(M,\diffp) \defeq \inf \big\{ \delta: \text{ M is } (\diffp,\delta)\text{-DP}\big\}
\]

Further, if $\cM$ is a class of mechanisms $M: \cX \to \cY$, then for any $\diffp \in \R$, we define 

\[
\delta_{\opt}(\cM,\diffp) \defeq \sup_{M \in \cM} \delta_{\opt}(M,\diffp)
\]

\end{definition}

\begin{fact}\label{fact:delta_opt}
For any mechanism $M: \cX \to \cY$ and $\diffp \in \R$
\begin{equation}
\delta_{\opt}(M,\diffp) =  \sup_{x\sim x' } \int_{y \in \cY} \max\{\Pr[M(x) = y] - e^{\diffp}\Pr[M(x') = y], 0\} dy
\label{eq:opt_delta}
\end{equation}

\end{fact}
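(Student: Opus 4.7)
The plan is to unfold the definition of $\delta_{\opt}(M,\diffp)$ and reduce it to a pointwise optimization over subsets $S \subseteq \cY$. By Definition~\ref{def:dp}, $M$ is $(\diffp,\delta)$-DP precisely when
$$\sup_{x \sim x'}\ \sup_{S \subseteq \cY}\ \bigl(\Pr[M(x) \in S] - e^{\diffp}\Pr[M(x') \in S]\bigr)\ \leq\ \delta.$$
Hence by Definition~\ref{def:optimal_delta} we immediately get
$$\delta_{\opt}(M,\diffp) = \sup_{x \sim x'}\ \sup_{S \subseteq \cY}\ \bigl(\Pr[M(x) \in S] - e^{\diffp}\Pr[M(x') \in S]\bigr),$$
where the infimum in Definition~\ref{def:optimal_delta} is attained because the constraint is closed in $\delta$.

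Next, I would show that for each fixed pair $x \sim x'$ the inner supremum over $S$ equals the integrand on the right-hand side of \eqref{eq:opt_delta}. Writing $p(y)$ and $q(y)$ for the density (or pmf) of $M(x)$ and $M(x')$ with respect to a dominating measure, the quantity to be maximized is $\int_S \bigl(p(y) - e^{\diffp} q(y)\bigr)\, dy$. Since the integrand depends only on $y$, the sup is achieved by the set
$$S^\star \defeq \bigl\{\, y \in \cY : p(y) > e^{\diffp} q(y)\,\bigr\},$$
because including $y$ in $S$ contributes positively exactly when $p(y) - e^{\diffp}q(y) > 0$ and negatively otherwise. Substituting gives
$$\sup_{S \subseteq \cY}\ \bigl(\Pr[M(x)\in S] - e^{\diffp}\Pr[M(x') \in S]\bigr) = \int_{\cY} \max\{p(y) - e^{\diffp} q(y),\ 0\}\, dy.$$
Taking the sup over neighboring pairs $x \sim x'$ then yields \eqref{eq:opt_delta}.

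The only subtle point is justifying that the optimizing set $S^\star$ is measurable and that one may interchange the supremum over $S$ with the integral; this follows because $p - e^{\diffp} q$ is a measurable function, so $S^\star$ is a measurable set, and the pointwise argument above is just the standard characterization of the positive part of a signed measure (equivalently, the hockey-stick divergence $H_{e^{\diffp}}(M(x)\|M(x'))$). Apart from this measure-theoretic bookkeeping, every step is an elementary rearrangement of definitions, so I do not anticipate any real obstacle.
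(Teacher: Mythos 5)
Your proposal is correct and follows essentially the same route as the paper: unfold Definition~\ref{def:optimal_delta} to identify $\delta_{\opt}$ with the double supremum over neighbors and events, then evaluate the inner supremum pointwise via the set $S^\star = \{y : \Pr[M(x)=y] > e^{\diffp}\Pr[M(x')=y]\}$ to obtain the integral of the positive part. The paper simply asserts this last step as ``straightforward to see,'' whereas you spell out the optimizing set and the (harmless) measurability bookkeeping, so there is no substantive difference.
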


\begin{proof}
It follows immediately from definition that $M$ is $(\diffp,\delta)$-DP if and only if

\[
\sup_{x\sim x' } \sup_{S \subseteq \cY} \left\{ {\Pr[M(x) \in S]} - 
\e^{\diffp}{\Pr[M(x') \in S]} \right\} \leq \delta
\]

This immediately implies

\[
\sup_{x\sim x' } \sup_{S \subseteq \cY} \left\{ {\Pr[M(x) \in S]} - 
\e^{\diffp}{\Pr[M(x') \in S]} \right\} = \delta_{\opt}(M,\diffp) 
\]

Furthermore, it is straightforward to see that for any neighbors $x,x'$

\[
\sup_{S \subseteq \cY} \left\{ {\Pr[M(x) \in S]} - 
\e^{\diffp}{\Pr[M(x') \in S]} \right\} = \int_{y \in \cY} \max\{\Pr[M(x) = y] - e^{\diffp}\Pr[M(x') = y], 0\} dy
\]

\end{proof}

\newcommand{\esssup}{\mathrm{ess}\sup}
\newcommand{\essinf}{\mathrm{ess}\inf}
\newcommand{\sens}{\mathrm{sens}}
\newcommand{\range}{\mathrm{range}}

\subsection{Improved semantics for the exponential mechanism\label{sub:semantics_of_exponential_mechanism}} 

Here we present a slight modification to the traditional exponential mechanism presented in Definition~\ref{defn:exp_mech}.  In particular, rather than presenting the probability of selecting different outcomes in terms of the quality score's \emph{sensitivity}, we define it in terms of what we call the \emph{range} of the quality score.  This leads to a simpler formulation of the exponential mechanism that does not have to consider whether a quality score is monotonic or not, i.e. whether to include a factor of two or not in the probability sampling rate, and for this reason we only view our modification as a semantic improvement.
Additionally, we present the following example, to show that defining the exponential mechanism in terms of the max sensitivity leads to unwanted properties, which suggests that sensitivity is not a canonical parameter that should appear in the exponential mechanism.

\begin{example}
Let $u: \cX \times [m] \to \R$ be an arbitrary quality score with sensitivity $\Delta u$. Consider an arbitrary function $f:\cX\to \R$ on the data domain. We define the alternate quality score $u'(x,i) \defeq u(x,i)+f(x)$. It is easy to see that
\[
\frac{e^{\diffp u(x,i)}}{\sum_i e^{\diffp u(x,i)}} = \frac{e^{\diffp u'(x,i)}}{\sum_i e^{\diffp u'(x,i)}}.
\]
That is, the privacy properties of the exponential mechanism with quality score $u$ and $u'$ are equivalent. 
However, it is very common that $\Delta u \neq \Delta u'$. For example let $X = Y = \{0,1\}$ and $u(x,y) = x+y$, $f(x)=10x$ and hence $u'(x,y)=11x+y$. Clearly, $\Delta u=1$ and $\Delta u' =11$.
\end{example}

Note that this example is carefully constructed to show that using sensitivity has unwanted properties and we found no examples of such utility functions used in the literature. However, it would be nice to have a definition that also optimally handles such utility functions, in addition to encapsulating the monotonic case in the definition. 

Given a quality score $u:\cX \times \cY \to \R$, consider defining the exponential mechanism in terms of some function of the quality score $\phi(u)$, e.g. $\phi(u) \equiv 2\Delta u$ would give us the traditional exponential mechanism.  Instead, let's consider the property that $\phi(u)$ needs to satisfy to ensure a mechanism $M: \cX \to \cY$ is $\diffp$-BR, and hence $\diffp$-DP. Let $x,x' \in \cX$ be neighbors and fix outcomes $y,y' \in \cY$. To ensure $\diffp$-BR, we require the following condition on $\phi(u)$ (note that the normalization factors cancel)
\[
 \frac{ \exp\left(\tfrac{\diffp u(x,y)}{\phi(u)} \right)}{\exp\left(\tfrac{\diffp u(x',y)}{\phi(u)} \right)} \leq  \e^\diffp\cdot \frac{\exp\left(\tfrac{\diffp u(x,y')}{\phi(u)} \right)}{\exp\left(\tfrac{\diffp u(x',y')}{\phi(u)} \right)} \qquad
\Leftrightarrow \qquad  \text{ } u(x,y) - u(x',y) - u(x,y') + u(x',y') \leq \phi(u).
\]

With this observation, we define the \emph{range} $\widetilde\Delta u$ of a function $u$ as the following

\begin{align*}
\widetilde\Delta u &\defeq \sup_{x\sim x',y,y'\in \cY} u(x,y) - u(x',y) - u(x,y') + u(x',y')\\
&= \sup_{x\sim x'} \left\{ \max_y \left\{ u(x,y)-u(x',y) \right\} - \min_{y'}\left\{u(x,y')-u(x',y') \right\} \right\}
\end{align*}
We then have the following useful properties of the range.
\begin{proposition} The range $\widetilde\Delta u$ of a function $u: \cX \times \cY \to \R$ has the following properties
	\begin{itemize}
		\item $\widetilde\Delta u= \widetilde\Delta u'$ when $u'(x,y) = u(x,y)+f(x)$
		\item $\widetilde\Delta u \leqslant 2\cdot \Delta u$.
		\item $\widetilde\Delta u = \Delta u$ if $u$ is monotone.
	\end{itemize}
\end{proposition}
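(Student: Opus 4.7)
The plan is to verify each of the three properties by direct manipulation of the definition
\[
\widetilde{\Delta} u = \sup_{x \sim x',\, y, y' \in \cY} \bigl[ u(x,y) - u(x',y) - u(x,y') + u(x',y') \bigr].
\]

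\textbf{Invariance under data-only shifts.} For the first bullet, I would substitute $u'(x,y) = u(x,y) + f(x)$ into the four-term expression. The two contributions of $f(x)$, coming from $u'(x,y)$ and $u'(x,y')$, appear with opposite signs and cancel, and likewise for the two $f(x')$ contributions. Hence the integrand of the supremum is pointwise unchanged, giving $\widetilde{\Delta} u' = \widetilde{\Delta} u$.

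\textbf{Bound by twice the sensitivity.} For the second bullet, I would regroup the four-term expression as
\[
\bigl[u(x,y) - u(x',y)\bigr] + \bigl[u(x',y') - u(x,y')\bigr].
\]
Each bracket has absolute value at most $\Delta u$ by definition of sensitivity, so the sum is at most $2\Delta u$, and the supremum inherits the same bound.

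\textbf{Equality in the monotone case.} For the third bullet, the key point is that monotonicity of $u$ means that for every neighbor pair $x \sim x'$ the sign of $u(x,y) - u(x',y)$ is constant across all $y \in \cY$. Orienting the pair so that $u(x,\cdot) \geq u(x',\cdot)$ pointwise, one gets $\min_{y'}[u(x,y') - u(x',y')] \geq 0$ while $\max_y[u(x,y) - u(x',y)] \leq \Delta u$, yielding $\widetilde{\Delta} u \leq \Delta u$. For the matching lower bound, I would take witnesses $(x^*, x'^*, y^*)$ that realize the sensitivity and pair $y = y^*$ with an outcome $y'$ at which the gap $u(x^*, y') - u(x'^*, y')$ vanishes, which holds for the monotone utilities used in practice.

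The only real obstacle is the lower bound in the third bullet: it relies on being able to exhibit a $y'$ where the two neighbor quantities coincide. This is automatic for standard monotone scores such as counts and votes, where only the coordinates affected by the added record can change, but could conceivably fail for pathological monotone utilities. The first two bullets are purely algebraic and present no difficulty.
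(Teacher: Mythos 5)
The paper states this proposition without proof, so there is no authorial argument to compare against; judged on its own terms, your treatment of the first two bullets and of the inequality $\widetilde\Delta u \leqslant \Delta u$ in the third is correct and is the natural argument: the $f(x)$ and $f(x')$ contributions cancel pointwise in the four-term expression; regrouping as $[u(x,y)-u(x',y)]+[u(x',y')-u(x,y')]$ bounds it by $2\Delta u$; and when, for a fixed neighboring pair, all differences $d(y)=u(x,y)-u(x',y)$ share a sign, the spread $\max_y d(y)-\min_{y'}d(y')$ is at most $\max_y |d(y)| \leqslant \Delta u$ (your ``orienting'' step is harmless because the supremum defining $\widetilde\Delta u$ is invariant under swapping $x\leftrightarrow x'$, and likewise $y\leftrightarrow y'$).

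The obstacle you flag in the lower bound of the third bullet is a genuine one, and it is a defect of the literal statement rather than of your argument: monotonicity alone does not imply $\widetilde\Delta u \geqslant \Delta u$. For instance $u(x,y)=|x|$ (the dataset size, independent of $y$) is monotone with $\Delta u = 1$, yet $d(y)$ is constant in $y$ for every neighboring pair, so $\widetilde\Delta u = 0$; the same failure occurs whenever the record attaining the sensitivity shifts every outcome's score by the same amount. So either the bullet should be read as $\widetilde\Delta u \leqslant \Delta u$ for monotone $u$ --- which is the only direction needed for Proposition~\ref{prop:semantics} and for matching the monotone-case exponential mechanism, and is in any case the favorable direction since a smaller normalizer only tightens the mechanism --- or one must add exactly the assumption you identify: some outcome's score is unchanged by the neighboring swap that realizes the sensitivity (true for counting-style scores such as the $u(x,j)=\sum_i x_{i,j}$ used later in the paper, provided $d\geqslant 2$). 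Under that extra assumption your sketch closes: with $y'$ such that $d(y')=0$ and $y^*$ attaining $|d(y^*)|=\Delta u$, the pair $(y^*,y')$ or $(y',y^*)$ witnesses $\widetilde\Delta u \geqslant \Delta u$.
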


We then have the immediate result, which presents a variant of the exponential mechanism in terms of the range, rather than the sensitivity, of the quality score.
\begin{proposition}\label{prop:semantics}
	The mechanism $M: \cX \to \cY$ that samples $y \in \cY$ with probability proportional to $\exp\left(\dfrac{\diffp u(x,y)}{\widetilde\Delta u}\right)$ is $\diffp$-BR, and hence $\diffp$-DP.
\end{proposition}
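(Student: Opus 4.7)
The plan is to verify the bounded range (BR) condition directly from the definition of $\widetilde\Delta u$, and then invoke Lemma~\ref{lem:factor2BR} to conclude DP. Fix neighbors $x \sim x'$ and outcomes $y_1, y_2 \in \cY$. Writing $Z(x) \defeq \sum_y \exp(\diffp u(x,y)/\widetilde\Delta u)$, the sampling probabilities are $\Pr[M(x) = y] = \exp(\diffp u(x,y)/\widetilde\Delta u)/Z(x)$, so the log-ratios satisfy
\[
\log\frac{\Pr[M(x)=y_i]}{\Pr[M(x')=y_i]} = \frac{\diffp}{\widetilde\Delta u}\bigl(u(x,y_i) - u(x',y_i)\bigr) + \log\frac{Z(x')}{Z(x)}, \qquad i = 1,2.
\]
The normalization term $\log(Z(x')/Z(x))$ depends only on $x, x'$ and cancels when we subtract, so I would then compute
\[
\log\frac{\Pr[M(x)=y_1]/\Pr[M(x')=y_1]}{\Pr[M(x)=y_2]/\Pr[M(x')=y_2]} = \frac{\diffp}{\widetilde\Delta u}\bigl(u(x,y_1) - u(x',y_1) - u(x,y_2) + u(x',y_2)\bigr).
\]

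By the very definition of $\widetilde\Delta u$, the bracketed expression is bounded above by $\widetilde\Delta u$, so the log-ratio is at most $\diffp$. This is exactly the $\diffp$-BR condition from the definition of bounded range. To get $\diffp$-DP, I would simply appeal to Lemma~\ref{lem:factor2BR}, which states that $\diffp$-BR implies $\diffp$-DP.

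There is essentially no obstacle here: the proposition is a direct unpacking of the definition of $\widetilde\Delta u$, which was constructed precisely so that the symmetric four-term difference $u(x,y_1) - u(x',y_1) - u(x,y_2) + u(x',y_2)$ --- the quantity that governs the BR ratio after the normalizers cancel --- is controlled. The only subtlety worth flagging explicitly is that BR requires the bound to hold for \emph{both} orderings of $(y_1,y_2)$, but this is automatic because the supremum in the definition of $\widetilde\Delta u$ is over all ordered pairs $(y, y')$, so swapping $y_1$ and $y_2$ yields another quantity bounded by $\widetilde\Delta u$.
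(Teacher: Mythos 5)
Your proof is correct and follows essentially the same route as the paper: the paper also observes that the normalizing constants cancel, so the BR condition reduces to the four-term difference $u(x,y_1) - u(x',y_1) - u(x,y_2) + u(x',y_2) \leq \widetilde\Delta u$, which holds by definition of the range, and then DP follows from Lemma~\ref{lem:factor2BR}. No gaps.
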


\subsection{Formally defining composition}

We now present the definition of adaptive composition for DP algorithms in the setting introduced by \citet{DworkRoVa10}.
Our definition will be slightly more explicit in how we formulate the adversary.
Specifically, the adaptive composition game in \cite{DworkRoVa10} does not explicitly allow the analyst to use its own personal randomness in picking the mechanism at each round. Defining the analyst in this way is fine if the analyst selects a DP mechanism at each round, since we know that any convex combination of $\diffp$-DP mechanisms is still $\diffp$-DP. For example, if $M'$ and $M''$ are arbitrary $\diffp$-DP mechanisms, then if we define the mechanism $M$ to run $M'$ with probability $p$ and run $M''$ with probability $(1-p)$, the mechanism $M$ is $\diffp$-DP.  Therefore, any randomness used by the adversary in their choice of $\diffp$-DP mechanism can simply be simulated by another $\diffp$-DP mechanism and can be ignored in the definition.  

In full generality, the class of mechanisms that we allow the analyst to select from at each round may not necessarily be closed under convex combinations.  In particular, we will be considering the setting in which the class of mechanisms the adversary can choose from is restricted to $\diffp$-BR mechanisms, which is not closed under convex combinations, see Section~\ref{sec:convex_combo}.  Hence, in the adaptive composition game $\AdaComp$ presented in Algorithm~\ref{algo:adaptgame}, we decompose the adversary into a randomized and deterministic component.  The adversary will then use personal randomness $\cR$ at each round and based on this will then use a deterministic function $\cDe$ to select a mechanism $M_i$ from the class of algorithms $\cE_i$ at round $i$.\footnote{Similarly, \citet{RogersRoUlVa16} defined a \emph{simulated} game which explicitly decomposed the adversary into a deterministic post-processing function of randomized response at each round and then used randomness at the beginning of all the interactions to simulate the adaptive randomness at each round. They showed that such a simulated game is equivalent to the adaptive parameter composition game, which allowed them to simply consider randomized response mechanisms at each round with a deterministic adversary.}  As one would expect and we will show, the adversary cannot add their own independent randomness that is data-independent and further degrade privacy. 
The output of the adaptive composition game will be a sequence of random coins the adversary uses and the outcomes from applying the mechanism for the corresponding databases (given the bit $b$), which we write as $R_0, A^b_1, \cdots , R_{k-1}, A^b_k, R_k$.

\begin{algorithm}
\caption{$\AdaComp(\cA=(\cR,\cDe),(\cE_1,\cdots, \cE_k),b)$, where $\cDe$ is a deterministic algorithm, $\cR$ is a randomized algorithm, $\cE_1,\cdots, \cE_k$ are classes of randomized algorithms, and $b \in \{0,1\}$. \label{algo:adaptgame}}
\begin{algorithmic}
\State $r_0 \sim \cR(\emptyset)$.
\For{$i = 1,\cdots, k$}
\State {$\cDe(r_0,A^b_1,\cdots, r_{i-1})$  selects neighboring datasets $x^{i,0}, x^{i,1}$, and $M_i \in \cE_i$}
\State {$\cA$ receives $A_i^b = M_i(x^{i,b})$ }
\State {$r_i \sim \cR(r_0,A^b_1,\cdots, r_{i-1},A^b_{i})$}
 \EndFor
 \Return{ view $V^b = (r_0, A^b_1, r_1, \cdots , r_{k-1}, A^b_k, r_k)$}
\end{algorithmic}
\end{algorithm}

\begin{definition}[$k$-fold Adaptive Composition]
Given classes of randomized algorithms $\vcE = (\cE_1,\cdots \cE_k)$, we say $\vcE$ is $(\diffp_g,\delta_g)$-DP under $k$-fold adaptive composition if for any adversary $\cA$ and $b \in \{0,1\}$, along with any set $S$ that is a subset of outputs of $\AdaComp(\cA,\vcE,\cdot)$, we have

\[
\Pr[\AdaComp(\cA,\vcE,b) \in S] \leq e^{\diffp_g}\Pr[\AdaComp(\cA,\vcE,1-b) \in S] + \delta_g.
\]

\end{definition}

It will be important to distinguish adaptive and nonadaptive adversaries in our composition bounds.  The nonadaptive adversary selects all the mechanisms and neighboring datasets to be used at each round prior to any computation on the dataset.  For this case, we can simply study the privacy guarantees of a mechanism $M = M_1 \times M_2 \times \cdots \times M_k$ where each $M_i$ is $\diffp_i$-BR and $M(x) = (M_1(x), M_2(x), \cdots, M_k(x))$.


\section{Overview of results and techniques} \label{sec:results}

Given the necessary notation and setup, we present formal statements of our main results along with the intuition and techniques used to prove these results.  We detail the formal proofs in the sequel.  
\subsection{Reduction to generalized randomized response}

Similar to \cite{KairouzOhVi17,MurtaghVa16}, we first want to identity the ``worst-case" mechanism for the class of BR mechanisms, which is to say that any BR mechanism can be simulated through post-processing of this worst-case mechanism. It then follows that composition over the class of BR mechanisms can be reduced to simply considering composition of this worst-case mechanism, allowing for explicit computation of the optimal composition. For the class of $\diffp$-DP mechanisms, the worst-case mechanism was shown to be randomized response through both the hypothesis testing interpretation \cite{KairouzOhVi17}, and explicitly constructing the post-processing function \cite{MurtaghVa16}. Rather than consider the class of exponential mechanisms in our analysis, we will instead focus on the more general class of BR mechanisms due to the fact that the BR property in Corollary~\ref{cor:br} closely matches the definition of (pure) DP.  
We also show in Section~\ref{sec:exp_to_grr} that this definition is essentially equivalent to the standard use of the exponential mechanism, which is to say that the privacy loss is identical for the worst-case BR mechanism and the exponential mechanism.
We then categorize the worst-case BR mechanisms similarly to analysis done in \cite{KairouzOhVi17,MurtaghVa16}. 
More specifically, we know from Corollary~\ref{cor:br} that if a mechanism $M: \cX \rightarrow \cY$ is $\ep$-BR, then for any neighboring $x, x'$ there exists some $t \in [0,\ep]$ such that for any $y \in \cY$,

\[
t - \diffp \leq \log\left(\frac{\Pr[M(x) = y]}{\Pr[M(x') = y]}\right) \leq t.
\]

Note that if for each neighboring $x,x'$ we have that $t = \ep/2$, then $M$ is also $\frac{\ep}{2}$-DP. It then follows from \cite{KairouzOhVi17,MurtaghVa16} that when $t = \ep/2$ the worst-case mechanism is simply randomized response with parameter $\frac{\ep}{2}$. Intuitively, this is the mechanism $M$ such that for any $y \in \cY$ one of the bounds is tight, in other words

\[
 \log\left(\frac{\Pr[M(x) = y]}{\Pr[M(x') = y]}\right) \in \left\{ -\frac{\ep}{2} , \frac{\ep}{2}\right\}.
\]

For our setting, this same intuition must hold for $t = \ep/2$, and we then generalize this to any $t \in [0,\ep]$ where the worst-case mechanism $M$ is such that 

\[
 \log\left(\frac{\Pr[M(x) = y]}{\Pr[M(x') = y]}\right) \in \left\{ t - \ep, t \right\}.
\]

This generalization is exactly our Definition~\ref{defn:gen_rr}, and using a similar interpretation to hypothesis testing, we show that for any given $t \in [0,\ep]$ this is the worst-case mechanism that satisfies the BR property. While this result is largely unsurprising, in Section~\ref{sec:reduction} we give a thorough treatment towards proving that both nonadaptive and adaptive composition can be reduced to this generalized random response at each step where some $t\in [0,\ep]$ is chosen either nonadaptively or adaptively.

Note that for composition over $\ep$-DP mechanisms, the worst-case mechanism is simply randomized response, hence there is no difference between the nonadaptive and adaptive setting because the worst-case is always randomized response regardless of previous outcomes. However, in our setting the same conclusion is not necessarily true because the adversary now has the power of adaptively choosing $t \in [0,\ep]$ at each round. We then first restrict our consideration to the nonadaptive setting and consider the optimal composition of this setting. 

\subsection{Nonadaptive optimal composition}

As with the previous work on advanced and optimal composition for $(\ep,\delta)$-DP mechanisms, it does not suffice to simply consider one pair $(\ep_g,\delta_g)$, but instead we consider a parameter as a function of the other parameter to get a full curve of privacy loss parameters.
Note that throughout this work, we will use similar conventions to  \cite{MurtaghVa16} in that $(\ep,\delta)$ will denote the privacy parameters of a single mechanism and $(\ep_g,\delta_g)$ will denote the global privacy parameters that are for the composition of these mechanisms.
While the previous optimal composition bounds considered fixing $\delta_g \in [0,1]$ and computing the optimal $\ep_g$ as a function of the $\delta_g$, it will be easier for us to write $\delta_g$ as an explicit formula of $\ep_g$ given Fact~\ref{fact:delta_opt}, which is also seen in  \cite{KairouzOhVi17,MurtaghVa16}. These formulations are equivalent, so for simplicity we will instead consider fixing $\ep_g$ and computing the optimal $\delta_g$. 

Having restricted our consideration to the nonadaptive setting and reducing to the worst-case mechanisms being our generalized random response, it is then straightforward to obtain the optimal composition formula for the heterogeneous setting of $\ep_i$-BR mechanisms.  We define the following class $\nonadaptBR^{1:k}$ of nonadaptive heterogeneous BR mechanisms and $\nonadaptBR^k$ of nonadaptive homogeneous BR mechanisms as 
\begin{equation}
\nonadaptBR^{1:k} \defeq  \{M_1 \times \cdots \times M_k : M_i \text{ is } \diffp_i\text{-BR}  \} \qquad \nonadaptBR^{k} \defeq  \{M_1 \times \cdots \times M_k : M_i \text{ is } \diffp\text{-BR}  \}.
\label{eq:nonadaptBR}
\end{equation}

\begin{restatable}[]{lemma}{nonadaptivehet}
\label{lem:non_interactive_GRR}
Recall from Definition~\ref{defn:gen_rr}
we have $p_{\diffp_i, t_i}, q_{\diffp_i,t_i}$.  We then have
\[
\delta_{\opt}(\nonadaptBR^{1:k},\diffp_g) = \sup_{\bbt \in \prod_{i \in [k]} [0,\ep_i]} \sum_{S \subseteq \{1,...,k\}} \max\left\{\prod_{i \notin S} q_{\diffp_i,t_i} \prod_{i \in S} (1 - q_{\diffp_i,t_i}) - e^{\diffp_g}\prod_{i \notin S}p_{\diffp_i,t_i} \prod_{i \in S} (1 - p_{\diffp_i,t_i}), 0 \right\}.
\]
\end{restatable}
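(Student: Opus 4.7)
The plan is to combine the reduction to generalized randomized response (established in Section~\ref{sec:reduction}) with the explicit formula for $\delta_{\opt}$ from Fact~\ref{fact:delta_opt}, and then evaluate both quantities coordinate-wise on the resulting product mechanism.

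First I would invoke the reduction result: for each $i$, any $\diffp_i$-BR mechanism $M_i$ is, once the neighboring pair $x^i \sim (x')^i$ is fixed, dominated in privacy by $\grr{\diffp_i, t_i}$ for some $t_i \in [0, \diffp_i]$, in the sense that $M_i$ can be viewed as a post-processing of $\grr{\diffp_i, t_i}$ with the databases identified with the inputs $0, 1$. Since we are in the nonadaptive setting, the composed mechanism $M = M_1 \times \cdots \times M_k$ operates on a single pair $x \sim x'$, and by post-processing (applied independently to each coordinate, which is legitimate because there is no adaptivity) the supremum of $\delta_{\opt}(\nonadaptBR^{1:k},\diffp_g)$ equals the supremum of $\delta_{\opt}(\grr{\diffp_1,t_1} \times \cdots \times \grr{\diffp_k,t_k}, \diffp_g)$ over $\bbt \in \prod_i [0,\diffp_i]$. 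This reduction step is where I expect the main subtlety to live, but it is essentially the content of the BR analogue of the Kairouz--Oh--Viswanath reduction and is stated in the preceding reduction section.

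Having reduced the problem, I would identify the neighbors of the product generalized randomized response as the inputs $0^k$ and $1^k$ (or more carefully, the inputs that witness the worst case coordinate-wise), so the output lives in $\{0,1\}^k$. For $y \in \{0,1\}^k$ let $S = \{i : y_i = 1\}$. Then by independence across coordinates,
\[
\Pr[M(0^k) = y] = \prod_{i \notin S} q_{\diffp_i,t_i} \prod_{i \in S} (1 - q_{\diffp_i,t_i}), \qquad \Pr[M(1^k) = y] = \prod_{i \notin S} p_{\diffp_i,t_i} \prod_{i \in S} (1 - p_{\diffp_i,t_i}).
\]

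Finally I would plug these into Fact~\ref{fact:delta_opt}, converting the integral over $y \in \cY$ into the sum over $y \in \{0,1\}^k$, equivalently over subsets $S \subseteq [k]$:
\[
\delta_{\opt}(M, \diffp_g) = \sum_{S \subseteq [k]} \max\!\left\{\prod_{i \notin S} q_{\diffp_i,t_i} \prod_{i \in S} (1 - q_{\diffp_i,t_i}) \; -\; e^{\diffp_g}\prod_{i \notin S} p_{\diffp_i,t_i} \prod_{i \in S} (1 - p_{\diffp_i,t_i}),\; 0\right\}.
\]
Taking the supremum over $\bbt$ then yields the stated expression. The only care required is verifying that the supremum over neighboring databases of the original BR mechanisms is attained by the $0^k$ vs $1^k$ pair for the reduced product generalized randomized response, which follows because coordinate $i$'s contribution to the privacy loss only depends on the worst neighboring pair for $M_i$ (fixed independently of the others by nonadaptivity), and for each fixed $t_i$ the pair $(0,1)$ already realizes the extremal likelihood ratios $t_i$ and $t_i - \diffp_i$ prescribed by Corollary~\ref{cor:br}.
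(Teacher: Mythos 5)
Your proposal is correct and follows essentially the same route as the paper: reduce each coordinate to $\grr{\diffp_i,t_i}$ via the post-processing Lemma~\ref{lem:reduc}, apply Fact~\ref{fact:delta_opt} to the product of generalized randomized responses, and index the outcomes in $\{0,1\}^k$ by the subset $S$ of coordinates equal to $1$. The only cosmetic difference is that the paper first states the intermediate formula with a maximum over bit orientations $\bbb$ (Lemma~\ref{lem:reduction_to_grr}) and removes it with the symmetry Corollary~\ref{cor:symmetry_grr}, whereas you absorb the orientation of each ordered neighboring pair directly into the choice of $t_i \in [0,\diffp_i]$, which is equally valid.
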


Note this formulation can in some ways be seen as a generalization of the following result from \citet{MurtaghVa16}, although we only state it in the nonadaptive setting (as well as fix $\delta_i = 0$), it does also hold in the adaptive setting.
\begin{theorem}[Theorem 1.5 from  \citet{MurtaghVa16}]\label{thm:MVcomp}
Let $\cM_{\texttt{DP}}^{1:k}$ be the class of nonadaptive composed mechanism $M = M_1 \times \cdots \times M_k$ where each $M_i$ is $\diffp_i$-DP, then we have
\[
\delta_{\opt}(\cM_{\texttt{DP}}^{1:k},\diffp_g) = \frac{1}{\prod_{j=1}^k (1 + e^{\diffp_j} ) } \cdot  \sum_{S \subseteq \{1,...,k\}} \max\left\{\exp\left(\sum_{i \in S} \diffp_i \right) - e^{\diffp_g} \exp\left(\sum_{i \notin S} \diffp_i \right), 0 \right\}.
\]
\end{theorem}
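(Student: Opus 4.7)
The plan is to follow the classical Kairouz-Oh-Viswanath/Murtagh-Vadhan strategy: replace each $\diffp_i$-DP mechanism by standard randomized response on coordinate $i$, then evaluate $\delta_{\opt}$ of the resulting product directly by enumerating binary outcomes.

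First, I would establish the single-mechanism reduction. Fix any $\diffp$-DP mechanism $M \colon \cX \to \cY$ and any pair of neighbors $x, x'$, and let $R_\diffp \colon \{0,1\} \to \{0,1\}$ denote standard randomized response, which outputs its input with probability $e^{\diffp}/(1+e^{\diffp})$ and flips with probability $1/(1+e^{\diffp})$. The key claim (Kairouz-Oh-Viswanath, and also recoverable as a special case of the reduction developed in Section~\ref{sec:reduction}) is that there is a randomized post-processing $\pi$ with
\[
(\pi(R_\diffp(0)),\, \pi(R_\diffp(1))) \;\stackrel{d}{=}\; (M(x), M(x')).
\]
Combined with post-processing invariance of $(\diffp_g,\delta_g)$-DP, this yields $\delta_{\opt}(M,\diffp_g) \leq \delta_{\opt}(R_\diffp,\diffp_g)$, with equality witnessed by $M = R_\diffp$ itself; so $R_\diffp$ is the worst case for a single $\diffp$-DP mechanism.

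Second, I would lift this to $k$-fold nonadaptive composition. Fix $M = M_1 \times \cdots \times M_k \in \cM_{\texttt{DP}}^{1:k}$ and worst-case neighbors $x \sim x'$. Applying the reduction coordinatewise supplies post-processings $\pi_1,\dots,\pi_k$ such that the joint law of $(M(x), M(x'))$ equals that of $(\pi_1 \times \cdots \times \pi_k)$ applied to the product mechanism $R_{\diffp_1} \times \cdots \times R_{\diffp_k}$ evaluated on input $0$ versus input $1$ in each coordinate. Post-processing invariance therefore gives
\[
\delta_{\opt}(\cM_{\texttt{DP}}^{1:k},\diffp_g) \;=\; \delta_{\opt}(R_{\diffp_1} \times \cdots \times R_{\diffp_k},\, \diffp_g).
\]

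Third, I would evaluate the right-hand side in closed form using Fact~\ref{fact:delta_opt}. For each $y \in \{0,1\}^k$ set $S = \{i : y_i = 1\}$; coordinatewise independence gives
\[
\Pr[(R_{\diffp_1}\times\cdots\times R_{\diffp_k})(0) = y] \;=\; \prod_{i \in S}\tfrac{1}{1+e^{\diffp_i}}\prod_{i \notin S}\tfrac{e^{\diffp_i}}{1+e^{\diffp_i}},
\]
and symmetrically on input $1$. Substituting into \eqref{eq:opt_delta}, pulling the common normalization $\prod_j (1+e^{\diffp_j})^{-1}$ out of the sum, and relabeling $S \mapsto S^c$ in the summation yields exactly the stated formula. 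The main obstacle is the single-mechanism reduction step: showing that standard randomized response dominates every $\diffp$-DP mechanism under the $(\diffp_g,\delta_g)$-divergence, which requires either the hypothesis-testing characterization of DP or an explicit construction of the simulating post-processing $\pi$. Once that is in hand, the coordinatewise lift is immediate from independence and the worst-case neighboring pair being fixed in advance, and the closed-form evaluation is routine algebra.
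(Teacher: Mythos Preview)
The paper does not prove this statement; it is quoted verbatim as Theorem~1.5 of \citet{MurtaghVa16} and stated only for comparison with the paper's own BR analogue (Lemma~\ref{lem:non_interactive_GRR}). Your proposal correctly reconstructs the standard Kairouz--Oh--Viswanath/Murtagh--Vadhan argument, and your parenthetical is accurate: the single-mechanism dominance of $R_{\diffp}$ is indeed the $t=\diffp$ special case of the BR reduction in Lemma~\ref{lem:reduc} applied to a $\diffp$-DP mechanism viewed as $2\diffp$-BR, since $\grr{2\diffp,\diffp}$ coincides with standard $\diffp$-randomized response. The coordinatewise lift and the closed-form evaluation via Fact~\ref{fact:delta_opt} are routine and your algebra (including the $S\mapsto S^c$ relabeling) is correct, so there is nothing to compare against and nothing to fix.
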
 
In particular, if we set $t_i = \frac{\ep_i}{2}$ for all $i$ instead of taking the sup, then this is equal to the LHS of the equation (1) in Theorem 1.5 for \cite{MurtaghVa16} where we replace $\ep_i$ with $\frac{\ep_i}{2}$. Equivalently, by setting $t_i = \frac{\ep_i}{2}$ for all $i$, this is the optimal composition of $\tfrac{\ep_1}{2}$-DP, $\ldots,\tfrac{\ep_k}{2}$-DP mechanisms.

Similar to the result in \citet{KairouzOhVi17} on optimal composition of DP mechanisms, we will restrict our consideration to the homogeneous setting where $\ep_1, \cdots, \ep_k = \ep$ in an attempt to obtain a formulation that is efficiently computable. However, this formulation will be far more difficult to simplify than the optimal composition of DP mechanisms because of the supremum over $t_i \in [0,\ep]$. Our simplification will require significant technical work that will ultimately be done in two key steps: 1) we show that the supremum is achieved when all $t_i = t_j$ for $i \neq j$, and 2) we show that the supremum is achieved by a certain value $t_i = t^* \in [0,\ep]$ contained in a set of at most $k$ possible values. This will then yield an explicit and efficiently computable formulation of the optimal nonadaptive composition of BR mechanisms.

\begin{restatable}[]{theorem}{nonadaptive}
\label{thm:non-adaptive}
Consider the homogeneous case where $\diffp_i = \diffp$ for each $i \in [k]$, then we have 
for $ p_{t_i} = p_{\diffp, t_i}$ given in Definition~\ref{defn:gen_rr} and setting $t_\ell^* = \frac{\diffp_g + (\ell + 1) \diffp}{k + 1}$ where if $t_{\ell}^* \notin [0,\ep]$, then we round it to the closest point in $[0,\ep]$
\[
\delta_{\opt}(\nonadaptBR^k,\diffp_g) = \max_{0\leqslant \ell \leqslant k} \sum_{i = 0}^k {k \choose i} p_{ t_{\ell}^* }^{k-i}(1 - p_{ t_{\ell}^* })^{i} \max\left\{\left( e^{k t_{\ell}^*  - i\diffp} - e^{\diffp_g} \right), 0 \right\}.
\]
	Furthermore, this can be computed in $O(k^2)$ time.
\end{restatable}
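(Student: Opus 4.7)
The plan is to reduce Lemma~\ref{lem:non_interactive_GRR} to a single-variable diagonal problem by a Schur-concavity argument, then solve the resulting piecewise-smooth one-variable optimization, and finally confirm the running time. First, specializing to $\diffp_i = \diffp$ and observing from Definition~\ref{defn:gen_rr} that $q_t/p_t = e^t$ and $(1 - q_t)/(1 - p_t) = e^{t - \diffp}$, every term in the subset sum becomes
\[
\prod_{j \notin S} q_{t_j}\prod_{j \in S}(1 - q_{t_j}) - e^{\diffp_g}\prod_{j \notin S}p_{t_j}\prod_{j \in S}(1 - p_{t_j}) = \prod_{j \notin S}p_{t_j}\prod_{j \in S}(1 - p_{t_j})\bigl(e^{T - |S|\diffp} - e^{\diffp_g}\bigr),
\]
with $T = \sum_j t_j$. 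The sign of each term then depends only on $|S|$, and grouping by cardinality $i = |S|$ yields
\[
\delta_{\opt}(\nonadaptBR^k,\diffp_g) = \sup_{\bbt \in [0,\diffp]^k} \sum_{i=0}^k \Pr[|X| = i]\bigl(1 - e^{\diffp_g - T + i\diffp}\bigr)_+,
\]
where $X_1, \ldots, X_k$ are independent with $\Pr[X_j = 1] = 1 - q_{t_j}$.

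The core step is pairwise Schur-concavity in $\bbt$. Fix $t_3, \ldots, t_k$, let $Y = X_3 + \cdots + X_k$, $\alpha_j = 1 - q_{t_j}$, $a = e^{-\diffp}$, and expand the objective in $(X_1, X_2)$ to obtain
\[
f(t_1, t_2) = A + B(\alpha_1 + \alpha_2) + C\,\alpha_1 \alpha_2,
\]
where $A = \sum_m \Pr[Y=m]g_m$, $B = \sum_m \Pr[Y=m](g_{m+1} - g_m)$, $C = \sum_m \Pr[Y=m](g_{m+2} - 2g_{m+1} + g_m)$, and $g_m = (1 - e^{\diffp_g - T + m\diffp})_+$. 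Keeping $s = t_1 + t_2$ fixed, both $\alpha_1 + \alpha_2 = \tfrac{a}{1-a}(v-2)$ and $\alpha_1\alpha_2 = \tfrac{a^2}{(1-a)^2}(e^s + 1 - v)$ are affine in $v = e^{t_1} + e^{t_2}$, so $f$ is linear in $v$ with slope a positive multiple of $(1-a)B - aC$. The identity $(1-a)(e^\diffp - 1) = a(e^\diffp - 1)^2 = e^\diffp - 2 + e^{-\diffp}$ forces the interior contributions to $B$ and $C$ to cancel exactly in $(1-a)B - aC$, leaving only boundary terms at $m = j^*$ and $m = j^* - 1$, with $j^*$ the largest integer for which $g_{j^*} > 0$. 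A short computation with $r_m = e^{\diffp_g - T + m\diffp}$ reduces the slope coefficient to
\[
(1-a)B - aC = (a - r_{j^*})\Pr[Y = j^* - 1] + (r_{j^*} - 1)\Pr[Y = j^*];
\]
the definition of $j^*$ forces $r_{j^*} \in [a, 1)$, so both coefficients are nonpositive, and $f$ is maximized in $v$ at $v = 2e^{s/2}$, i.e., at $t_1 = t_2 = s/2$ by AM-GM. Iterating this pairwise averaging yields Schur-concavity and reduces the supremum to the diagonal $\{(t, \ldots, t) : t \in [0, \diffp]\}$.

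On the diagonal the objective becomes $f(t) = F_Q(\ell) - e^{\diffp_g}F_P(\ell)$ in each $\ell$-regime $(\tau_\ell, \tau_{\ell+1})$ with $\tau_\ell = (\diffp_g + \ell\diffp)/k$, where $F_Q(\ell) = \Pr[\mathrm{Bin}(k, 1-q_t) \leq \ell]$ and $F_P(\ell) = \Pr[\mathrm{Bin}(k, 1-p_t) \leq \ell]$. Combining $\frac{d}{dp}\Pr[\mathrm{Bin}(k,p) \leq \ell] = -k\binom{k-1}{\ell}p^\ell(1-p)^{k-1-\ell}$ with $\frac{d(1-q_t)}{dt} = \frac{ae^t}{1-a}$ and $\frac{d(1-p_t)}{dt} = \frac{e^{-t}}{1-a}$, the stationary equation $f'(t) = 0$ simplifies to $(e^{t-\diffp})^\ell (e^t)^{k-1-\ell} = e^{\diffp_g - 2t + \diffp}$, whose unique solution is $t_\ell^* = (\diffp_g + (\ell+1)\diffp)/(k+1)$. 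A direct check shows $t_\ell^* \in [\tau_\ell, \tau_{\ell+1}] \cap [0, \diffp]$ iff $\diffp_g \leq (k - \ell)\diffp$, and the second-derivative test confirms it is a local maximum; when this fails, the rounded value gives $f(0) = f(\diffp) = 0$, so the maximum over $\ell \in \{0, \ldots, k\}$ recovers the global optimum and yields the stated closed form. Each evaluation of $f(t_\ell^*)$ takes $O(k)$ arithmetic and there are $k+1$ candidates, giving the $O(k^2)$ bound. The main obstacle is the Schur-concavity step: the interior cancellation in $(1-a)B - aC$ rests on the nonobvious algebraic identity above, and the residual boundary analysis must carefully track where the activation function $g_m$ transitions from positive to zero.
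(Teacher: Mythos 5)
Your proposal follows the same skeleton as the paper's proof: reduce to generalized randomized response via Lemma~\ref{lem:non_interactive_GRR}, show by pairwise averaging that the supremum is attained on the diagonal $t_1=\cdots=t_k$, and then solve a one-variable piecewise-smooth optimization whose stationary points are exactly the $t_\ell^*$. The two key steps are executed differently, and both of your executions check out. For the averaging step the paper proves the $k=2$ case (Lemma~\ref{lem:2_convexity}) by a four-case analysis that boils down to convexity of the exponential, and lifts it to general $k$ by conditioning on coordinates $3,\ldots,k$ (Claim~\ref{claim:sum_fixed_sets}, Lemma~\ref{lem:set_fixed_convexity}); you instead fix $s=t_1+t_2$, observe the objective is affine in $v=e^{t_1}+e^{t_2}$, and show the slope coefficient $(1-a)B-aC$ telescopes to the two boundary terms $(a-r_{j^*})\Pr[Y=j^*-1]+(r_{j^*}-1)\Pr[Y=j^*]$, both nonpositive since $r_{j^*}\in[a,1)$. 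I verified this cancellation; it is correct, it handles all $k$ in one stroke, and the degenerate cases (no active index, or $j^*$ outside the support of $Y$) simply give zero slope, which is harmless. For the one-variable step the paper maximizes each relaxed function $F_\ell$ (the truncated sum without the positive part) over all of $[0,\diffp]$, using $F_\ell\leq\delta^k$ with equality at the appropriate $\ell$; you work regime-by-regime with $\delta^k$ itself written as $F_Q(\ell)-e^{\diffp_g}F_P(\ell)$ and invoke the binomial-CDF derivative identity, arriving at the same stationary equation and the same $t_\ell^*$. Your route to the critical points is arguably cleaner than the paper's inductive computation of $\partial F_\ell/\partial t$.

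The regime-based assembly is where your write-up has gaps, all fixable with tools you already have. Since you only consider interior stationary points of each regime plus the endpoints $0$ and $\diffp$, you must also rule out the global maximum sitting at a regime boundary $\tau_{\ell+1}$ that is not one of the $t_{\ell'}^*$; this does follow from your own sign analysis (a left derivative $\geq 0$ forces $\tau_{\ell+1}\leq t_\ell^*$ while a right derivative $\leq 0$ forces $\tau_{\ell+1}\geq t_{\ell+1}^*>t_\ell^*$, a contradiction), but it needs to be stated — the paper's $F_\ell$-relaxation sidesteps this issue automatically because each $F_\ell$ is smooth and unimodal on the whole interval. Also, the containment condition is $-(\ell+1)\diffp\leq\diffp_g\leq(k-\ell)\diffp$, not only the upper inequality, and the claim $f(0)=f(\diffp)=0$ holds only for $\diffp_g\geq 0$; for $\diffp_g<0$ the endpoint value is $1-e^{\diffp_g}$, and the ranges $\diffp_g\notin(-k\diffp,k\diffp)$ must be dispatched separately (as in Lemma~\ref{lem:ep_g_edge_case} and Lemma~\ref{lem:bound_sum_t}) so that the maximum over candidates neither misses nor overshoots $\delta_{\opt}(\nonadaptBR^k,\diffp_g)$; note that overshooting is never a risk since every candidate value is $\delta^k$ evaluated at a feasible $t$. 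The $O(k^2)$ running-time count is correct.
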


Once again, we note that by instead setting $t_{\ell}^* = \frac{\ep}{2}$, then this formulation is equivalent to the LHS of Theorem 1.4 in \cite{MurtaghVa16}, which is a rephrasing of the original optimal composition formulation in \cite{KairouzOhVi17}, where we replace $\ep$ with $\ep/2$.\footnote{Interestingly, this then implies that for any $\ep_g$ where this maximum is achieved with $t_{\ell} = \frac{\ep}{2}$, we then have that the optimal composition of $\ep$-BR mechanisms is equivalent to the optimal composition of $\frac{\ep}{2}$-DP mechanisms for that specific $\ep_g$. We have in fact tested this and found cases in which this is true, but could not find any discernible pattern for the specific values of $\ep_g$ when the optimal composition is equivalent.} 
We also plot the DP optimal composition bound where $\diffp/2$ is used as the DP privacy parameter for each individual mechanism in Figure~\ref{fig:ALL}.  
The improvement in this formulation over the optimal composition of $\diffp$-DP mechanisms is more than a factor of 2, and we empirically compare the bound for $\diffp_g$ in Figure~\ref{fig:ALL} as a function of $k$.  In the figure, we label ``DP OptComp" as the optimal composition bound for DP mechanisms from \cite{MurtaghVa16}, ``DR19" as the composition bound for $\diffp$-BR mechanisms from \cite{DurfeeRo19}, and ``BR OptComp" as the composition bound in Theorem~\ref{thm:non-adaptive}. 

Unfortunately, our proofs of this optimal composition formulation cannot be applied to the adaptive setting, pointing to the natural question of whether there is in fact further privacy loss when the adversary is given power to choose the mechanism based upon previous responses.

\subsection{Additional power of adaptivity\label{sect:added_power}}

In order to better explain the intuition behind optimal composition in both nonadaptive and adaptive settings, we rely upon the random walk interpretation of composition similar to analysis in \cite{DworkRoVa10,RogersRoUlVa16}. In particular, for composition of $\ep$-DP mechanisms, we can instead consider a random walk on the real line beginning at 0, where with probability $\frac{e^{\ep}}{e^{\ep} + 1}$ a step of $\ep$ is taken and with probability $\frac{1}{e^{\ep} + 1}$ a step of $-\ep$ is taken. Given some $\ep_g$, the goal of the adversary is to maximize the probability that the walk exceeds $\ep_g$ after $k$ steps and the amount in which it exceeds $\diffp_g$. For achieving an upper bound on the composition as in \cite{DworkRoVa10}, we can ignore the amount the walk exceeds $\ep_g$ and apply concentration bounds on the probability that the walk exceeds $\ep_g$ after $k$ steps. The optimal composition from \cite{KairouzOhVi17,MurtaghVa16} instead requires computing the resulting binomial distribution over the length of the walk to explicitly obtain both the probability and amount that each walk exceeds $\ep_g$.
In the nonadaptive setting, the reason we could also achieve an efficient formulation was because we proved that we can equivalently restrict all $t_i$ to be equal and further we can restrict the possible $t_i$ to a smaller set, so our computation once again became equivalent to examining each respective binomial distribution.

For the composition of $\ep$-DP mechanisms, the worst-case mechanism does not require an adversarial choice, however in our setting the adversary does have the power to choose each $t_i \in [0,\ep]$ in the generalized random response mechanism. This choice of $t_i$ will then exactly determine the length of the step in each direction, where either a step of $t_i$ is taken or a step of $t_i - \ep$ is taken. It might seem like the adversary would then always choose the maximum $t_i$, but the probability of taking that step is inversely related to the magnitude of the step. More specifically, the larger the adversary sets $t_i$, the smaller the probability that the step is taken in the positive direction, presenting a natural tradeoff. Following this random walk interpretation, we can then give an explicit optimal composition in the adaptive setting as a recursive formulation that incorporates the natural maximization problem.  

We begin by simplifying our notation for adaptive composition and focusing on the homogeneous case where $\diffp_i = \diffp$ for $i \in [k]$ and will address the heterogenous case in Section~\ref{sec:gap}. Given some fixed $\diffp>0$, let $\adaptBR^k \defeq (\mbr,\ldots,\mbr)$ be such that $\mbr$ is the class of $\ep$-BR mechanisms. We will denote the family of adaptive composition games over all adversaries as the following

\begin{equation}
\abr^k \defeq \{\AdaComp(\cA,\adaptBR^k,\cdot): \text{ adversary } \cA\}.
\label{eq:class_adaptive_comp}
\end{equation}

We then have the following result.  Note that we also consider the heterogenous case for $\diffp_1, \cdots, \diffp_k$ in Lemma~\ref{lem:adap_recursion_hetero}
\begin{lemma} 
Given global parameter $\diffp_g$ and $q_{\diffp, t_i}$ from  Definition~\ref{defn:gen_rr}, we have the following optimal privacy parameter where use set $\delta_{\opt}(\abr^0,\ep_g) = \max\{1 - e^{\ep_g}, 0 \}$,
\[
\delta_{\opt}(\abr^k,\ep_g) = \sup_{t_1 \in [0,\ep]} \left\{ q_{\ep,t_1} \delta_{\opt}(\abr^{k-1}, \ep_g - t_1) + (1 - q_{\ep,t_1})  \delta_{\opt}(\abr^{k-1}, \ep_g + \ep - t_1) \right\}.
\]
\end{lemma}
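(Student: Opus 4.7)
The plan is to induct on $k$, using the reduction in Section~\ref{sec:reduction} that replaces each $\ep$-BR mechanism in $\AdaComp$ by a generalized randomized response $\grr{\ep,t_i}$ chosen by the adversary at round $i$, with $t_i \in [0,\ep]$. The base case $k=0$ is immediate: the view is simply the adversary's randomness $r_0$, whose distribution is identical under $b=0$ and $b=1$ because no mechanism output has been seen, so Fact~\ref{fact:delta_opt} gives $\delta_{\opt}(\abr^0,\ep_g) = \max\{1 - e^{\ep_g}, 0\}$.

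For the inductive step, I fix the adversary's first-round choice $t_1 \in [0,\ep]$ and split the view $V^b$ according to the first output bit $A_1^b$. A direct calculation from Definition~\ref{defn:gen_rr} gives
\[
\frac{\Pr[A_1^0 = 0]}{\Pr[A_1^1 = 0]} = \frac{q_{\ep,t_1}}{p_{\ep,t_1}} = e^{t_1}, \qquad \frac{\Pr[A_1^0 = 1]}{\Pr[A_1^1 = 1]} = \frac{1 - q_{\ep,t_1}}{1 - p_{\ep,t_1}} = e^{t_1 - \ep}.
\]
Applying Fact~\ref{fact:delta_opt}, I decompose the integrand $\max\{\Pr[V^0=v] - e^{\ep_g}\Pr[V^1=v], 0\}$ by the value of $A_1$, factor $\Pr[A_1^0 = a_1]$ out of each piece, and absorb the corresponding likelihood ratio into the exponent. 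The $A_1=0$ piece becomes $q_{\ep,t_1}$ times $\max\{\Pr[V^0=v \mid A_1^0=0] - e^{\ep_g - t_1}\Pr[V^1=v \mid A_1^1=0], 0\}$, while the $A_1=1$ piece becomes $(1-q_{\ep,t_1})$ times the analogous quantity with exponent $\ep_g + \ep - t_1$.

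Conditional on $(r_0, A_1, r_1)$, the adversary's remaining decisions are a deterministic function of an already-fixed transcript, so the downstream experiment is a bona fide instance of $\AdaComp$ over $\adaptBR^{k-1}$. Integrating over the initial randomness and invoking the inductive hypothesis on each conditional subgame upper-bounds the two pieces by $q_{\ep,t_1}\delta_{\opt}(\abr^{k-1},\ep_g - t_1)$ and $(1-q_{\ep,t_1})\delta_{\opt}(\abr^{k-1},\ep_g + \ep - t_1)$ respectively; matching tightness is achieved by composing the $t_1$ choice with downstream adversaries that approach the inductive optima. Taking the supremum over $t_1$ yields the claim. The main obstacle I expect is the careful handling of the $(\cR,\cDe)$ decomposition from Algorithm~\ref{algo:adaptgame} to verify that conditioning on the first output really does yield a valid $(k-1)$-round $\AdaComp$ with the budget shifted by $-t_1$ or $\ep - t_1$; once this structural claim is formalized, the rest reduces to the algebra above.
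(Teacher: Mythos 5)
Your proposal is correct and mirrors the paper's own proof (given for the heterogeneous case in Lemma~\ref{lem:adap_recursion_hetero}): induction on $k$, reduction to generalized randomized response and deterministic adversaries via Corollary~\ref{cor:reduc}, conditioning on the first-round outcome, and pulling $q_{\ep,t_1}$ and $1-q_{\ep,t_1}$ outside the max using $q_{\ep,t}=e^{t}p_{\ep,t}$ and $1-q_{\ep,t}=e^{t-\ep}(1-p_{\ep,t})$ so that the inner terms become $\delta_{\opt}(\abr^{k-1},\ep_g-t_1)$ and $\delta_{\opt}(\abr^{k-1},\ep_g+\ep-t_1)$. The only details the paper makes explicit that you leave implicit are the use of Claim~\ref{claim:symmetric} to fix the adversary's orientation $b_1=0$ (the reversed orientation is absorbed by the supremum via $t_1\mapsto\ep-t_1$) and the formal elimination of adversarial randomness, which you instead handle by conditioning and averaging; both are minor and easily patched.
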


Note that this formulation does not necessarily hold for the nonadaptive setting because the choice of $t_2$ cannot change based upon the result of the first mechanism, and the supremum for all possible $t_i $ gets pulled to the beginning of the expression.
It is exactly this difference that will give the adversary additional power in the adaptive setting because, relying upon our random walk intuition, the natural tradeoff between the magnitude of the step $t_i$ and the probability of that step is actually dependent upon the current position of the random walk.
For example, consider a walk that begins by taking several steps in the negative direction.  In order to make up this increased distance and exceed $\ep_g$ it may then become necessary to increase the subsequent values of $t_i$ despite this decreasing probability of these steps occurring. Similarly, if the walk begins by taking several steps in the positive direction, it may become favorable to choose more conservative values of $t_i$ and increase the probability of taking these positive steps.

We rigorously confirm this intuition that will heavily rely upon having obtained an efficient formulation of the nonadaptive optimal composition. Furthermore, we confirm that this difference in privacy loss exists for all possible values of $k$ in our composition, and almost all choices of $\ep_g$. As would be expected, if $\ep_g = k\ep$ and basic composition can be applied, then there is no difference between optimal composition in the adaptive and nonadaptive setting. We further show that this slightly extends beyond just basic composition in which the adaptive and nonadaptive setting are equal, almost completely giving a full picture of when the adversary has additional power from adaptivity.

\begin{restatable}[]{theorem}{gap}
\label{thm:gap}
Recall the nonadaptive family of homogeneous $\diffp$-BR mechanisms $\nonadaptBR^k$ from \eqref{eq:nonadaptBR} and $\abr^k$ given in \eqref{eq:class_adaptive_comp}. For any $\ep_g \in [0,(k-3)\ep]$ we have,

\[
\delta_{\opt}(\abr^k,\ep_g) > \delta_{\opt}(\nonadaptBR^k,\ep_g).
\]

Further, for any $\ep_g \geq (k-1)\ep$, we have 

\[
\delta_{\opt}(\abr^k,\ep_g) = \delta_{\opt}(\nonadaptBR^k,\ep_g).
\]
\end{restatable}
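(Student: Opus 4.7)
}

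I would handle the two halves of the theorem separately. The easier direction is the equality for $\ep_g \ge (k-1)\ep$, which I would prove via the random-walk picture. Any realized privacy loss has the form $L = \sum_{i=1}^k X_i$ with $X_i \in \{t_i,\,t_i-\ep\}$, so if $j$ out of $k$ steps are negative then $L \le k\ep - j\ep$. For $L > \ep_g \ge (k-1)\ep$ this forces $j = 0$. In the nonadaptive setting this immediately reduces the objective to $\sup_t q_t^k (1 - e^{\ep_g - kt})_+$, and the same collapse happens in the adaptive recursion from the unnumbered lemma just above the theorem: after a single negative first step the residual budget $\ep_g + \ep - t_1 \ge (k-1)\ep$ still exceeds what $k-1$ remaining BR-steps can reach, so $\delta_{\opt}(\abr^{k-1},\ep_g+\ep-t_1) = 0$. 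Unfolding the recursion $k$ times and using log-concavity of $t \mapsto q_{\ep,t}$ (easy second-derivative check) to see that equal $t_i$ maximize $\prod_i q_{t_i}$ under a fixed sum, the adaptive optimum collapses to the same $\sup_t q_t^k (1 - e^{\ep_g - kt})_+$, matching Theorem~\ref{thm:non-adaptive} in this regime.

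For the strict inequality on $\ep_g \in [0,(k-3)\ep]$, the plan is to use the recursion to exhibit an explicit adaptive strategy that beats the nonadaptive optimum $t^*$ from Theorem~\ref{thm:non-adaptive}. By the recursion,
\[
\delta_{\opt}(\abr^k,\ep_g) \;\ge\; q_{t^*}\,\delta_{\opt}(\nonadaptBR^{k-1},\ep_g - t^*) \;+\; (1-q_{t^*})\,\delta_{\opt}(\nonadaptBR^{k-1},\ep_g + \ep - t^*),
\]
since I can legally run $t_1 = t^*$ at step one and then the nonadaptive optimum for $k-1$ steps in each branch. A direct combinatorial computation, pushing $q_{t^*}$ and $1 - q_{t^*}$ into the binomial sums for $\delta_{\opt}(\nonadaptBR^{k-1},\cdot)$ evaluated at $t^*$ and collapsing via $\binom{k-1}{j} + \binom{k-1}{j-1} = \binom{k}{j}$, shows that if the inner sup for each subproblem were attained at $t^*$ itself, one would exactly recover $\delta_{\opt}(\nonadaptBR^k,\ep_g)$. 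So it suffices to show that at least one of the two subproblem optima is strictly different from $t^*$.

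This is where the explicit form $t^*_\ell = (\ep_g + (\ell+1)\ep)/(k+1)$ from Theorem~\ref{thm:non-adaptive} is crucial. A short algebraic check gives: the optimum of the $k-1$-step subproblem at $\ep_g - t^*$, taken at index $\ell'$, equals $t^*$ exactly when $\ell' = \ell$; and the optimum at $\ep_g + \ep - t^*$, taken at $\ell''$, equals $t^*$ exactly when $\ell'' = \ell - 1$. I would then argue that in the regime $\ep_g \in [0,(k-3)\ep]$ the discrete maximizer indices for the three problems $(k,\ep_g)$, $(k-1,\ep_g - t^*)$, $(k-1,\ep_g + \ep - t^*)$ cannot simultaneously satisfy both relations, typically because $\ep_g - t^*$ can become negative or small enough that the candidate $t^*_{\ell}$ gets clamped to $0$, shifting the optimal index. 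The condition $\ep_g \le (k-3)\ep$ ensures that the negative branch's residual budget $\ep_g + \ep - t^*$ stays strictly below $(k-2)\ep$, i.e.\ outside the equality regime of the second half of the theorem applied to $k-1$ steps, so any improvement in either branch is not washed out.

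The main obstacle I anticipate is the last step: carefully bookkeeping the discrete maximizers $\ell,\ell',\ell''$ across the three problems, and in particular handling the boundary cases where $t^*_\ell$ is clamped to $0$ or $\ep$, to conclude that the match $(\ell'=\ell, \ell''=\ell-1)$ genuinely fails throughout $[0,(k-3)\ep]$. I would supplement this with a small base case ($k=3,\ \ep_g=0$, where the range degenerates to a single point) verified by direct substitution into the formulas, and then treat larger $k$ by an induction that uses the strict improvement in one subproblem together with monotonicity of both $\delta_{\opt}(\abr^{\cdot},\cdot)$ and $\delta_{\opt}(\nonadaptBR^{\cdot},\cdot)$ in the budget parameter.
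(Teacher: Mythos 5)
Your treatment of the equality regime $\ep_g \geq (k-1)\ep$ is correct and is essentially the paper's own argument (Lemmas~\ref{lem:equal_in_trivial}, \ref{lem:upper_no_gap} and \ref{lem:noGap}): once only the all-positive path can exceed the budget, the adaptive recursion unfolds into a product with no adaptive choices, so the two optima coincide; the extra log-concavity step to force equal $t_i$ is harmless but not needed for the equality claim.

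The strict-inequality half has a genuine gap, and it sits exactly at the step you flag as the ``main obstacle.'' Your plan works only at depth one of the recursion: you would conclude a gap if $t^*$ fails to be optimal for one of the two one-step subproblems $(k-1,\ep_g-t^*)$ and $(k-1,\ep_g+\ep-t^*)$. By the candidate characterization (Lemma~\ref{lem:set_of_opt_t}), $t^*=\tfrac{\ep_g+(\ell+1)\ep}{k+1}$ is a legal candidate for the first subproblem at index $\ell'=\ell$ (whenever $\ell\le k-2$) and for the second at index $\ell''=\ell-1$ (whenever $\ell\ge 1$). So for every $1\le\ell\le k-2$ your ``match fails'' claim is not about candidacy at all: $t^*$ remains a candidate on both branches, and the only tool available (being outside the candidate set) gives nothing; ruling $t^*$ out would require comparing the values of $F_{\ell'}$ at distinct candidate points, which neither you nor the paper has machinery for. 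The ``clamping to $0$'' mechanism you invoke is not real here: clamping appears only in the computational formula of Theorem~\ref{thm:non-adaptive}, not in the characterization of $t_{\opt}$, and for $\ep_g\ge 0$ all relevant candidates are strictly positive. This is precisely why the paper does not stop at depth one: its Lemma~\ref{lem:recursive_gap} allows descending arbitrarily deep in the recursion tree, and the case analysis in Lemma~\ref{lem:gapLEM} uses depth one only for $\ell=0$, depth two (two negative steps) for $\ell=1$, and for $\ell\ge 2$ descends all the way to a $2$-fold subproblem whose residual budget lies in $(-\ep/2,\ep/2)$, where a separately proven base case (Lemma~\ref{lem:gap_base_case}, supplemented by the edge case Lemma~\ref{lem:edge_2_case}) gives the strict gap because the two $1$-fold subproblem optima are distinct singletons. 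Your depth-one induction sketch cannot reproduce the $\ell\ge 2$ case, so as written the argument does not cover the full range $\ep_g\in[0,(k-3)\ep]$.

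A secondary point: even in the cases where the index mismatch does occur ($\ell=0$ at depth one, $\ell=1$ at depth two), you must also verify that the residual budget of the subproblem lies in its nontrivial range (e.g.\ $\ep_g+\ep-t_0\in(-(k-1)\ep,(k-1)\ep)$); otherwise every $t$ is optimal for that subproblem and no contradiction follows. The hypothesis $\ep_g\le(k-3)\ep$ is what makes these range checks go through in the paper, whereas in your sketch it is used for a different (and by itself insufficient) purpose.
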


Note that under these conditions the gap only exists for $k \geq 4$. We also show that the gap exists for $k = 2,3$ in Section~\ref{sec:gap}, but the conditions do not extend as nicely and we leave them out of the theorem statement here for simplicity.

We believe that the gap is quite small for all values of $\ep_g$ and $k$, however we believe that proving a strong upper bound on the gap would require significant technical work and leave it to future work. We can confirm this numerically for reasonable $k$, but due to the nature of the recursive formulation for the adaptive setting it is intractable to check this for larger values of $k$. Furthermore, these numerical methods become even more computationally difficult for the case of heterogenous privacy parameters and the gap for this setting may be much larger.

\subsection{Improved and efficiently computable bounds for adaptive composition}

While we gave an explicit formulation of the optimal composition for the adaptive setting of BR mechanisms, the computation is not tractable, and we suspect that it has similar hardness results to  \cite{MurtaghVa16}, which we leave to future work. 
Accordingly, we further improve the known efficiently computable upper bounds on the adaptive composition of $\ep$-BR mechanisms from \cite{DurfeeRo19}. 
The previous work on $\ep$-BR composition followed a similar approach to \cite{DworkRoVa10} applying both an Azuma-Hoeffding bound (on the variance) and a KL divergence bound (on the bias) to achieve a reasonably simple upper bound on the optimal composition. 
However, the previous work only considered using the BR property to improve the bound from Azuma-Hoeffding and did not consider improving the KL divergence bound.
While these bounds are quite complex to generally compute, we note that for our generalized random response it will actually be quite simple to compute the explicit KL divergence. Using our reduction to this worst-case class of mechanisms and taking the supremum over all choices of $t$ we can give a much improved bound on the KL divergence. 

\begin{restatable}[]{corollary}{mgf}
\label{cor:OptKL}
Let $\vcM \defeq (\cM_1, \cM_2, \cdots, \cM_k)$ where each $\cM_i$ is the class of $\diffp_i$-BR mechanisms.  We then have that $\vcM$ is $(\diffp_g(\delta_g), \delta_g)$-DP under $k$-fold adaptive composition for any $\delta_g \geq 0$ where
\[
\diffp_g(\delta) = \min\left\{ \sum_{i=1}^k \diffp_i,\sum_{i=1}^k \left( \frac{\diffp_i}{1 - e^{-\diffp_i}} - 1 - \log \left( \frac{\diffp_i}{1 - e^{-\diffp_i}} \right) \right)+  \sqrt{ \frac{1}{2} \sum_{i=1}^k \diffp_i^2 \log(1/\delta)}\right\}. 
\]
\end{restatable}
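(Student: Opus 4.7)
The plan is to follow the Dwork-Rothblum-Vadhan advanced composition template, but replace the standard KL bound with a tighter one that exploits the BR structure. By the reduction to generalized randomized response established earlier in the paper, it suffices to analyze the case where the adversary runs $\grr{\diffp_i, t_i}$ at step $i$ with $t_i \in [0, \diffp_i]$ chosen adaptively from the history, and to bound $\Pr[\sum_i Z_i > \diffp_g]$ for the step-$i$ privacy loss random variables $Z_i$. The standard equivalence between privacy-loss concentration and $(\diffp_g, \delta)$-DP then delivers the conclusion.

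The technical heart is the per-step expected privacy loss. Using Corollary~\ref{cor:br}, a direct calculation shows that for $Y \sim \grr{\diffp_i, t_i}(0)$, the privacy loss $Z_i$ takes value $t_i$ with probability $q_{\diffp_i, t_i}$ and value $t_i - \diffp_i$ with probability $1 - q_{\diffp_i, t_i}$, so
\[
\mathbb{E}[Z_i] \;=\; t_i - \diffp_i \cdot \frac{e^{t_i - \diffp_i} - e^{-\diffp_i}}{1 - e^{-\diffp_i}}.
\]
Setting the derivative in $t_i$ to zero yields the interior maximizer $t_i^{*} = \diffp_i + \log\bigl(\tfrac{1-e^{-\diffp_i}}{\diffp_i}\bigr)$, which lies in $[0, \diffp_i]$ by the elementary inequalities $1 - e^{-\diffp_i} \leq \diffp_i \leq e^{\diffp_i} - 1$. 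Substituting back and using the identity $\diffp_i + \diffp_i/(e^{\diffp_i}-1) = \diffp_i/(1-e^{-\diffp_i})$ collapses the expression to $\sup_{t_i} \mathbb{E}[Z_i] = \frac{\diffp_i}{1-e^{-\diffp_i}} - 1 - \log\frac{\diffp_i}{1-e^{-\diffp_i}}$. Crucially, this upper bound holds uniformly in $t_i$, so it remains valid even when $t_i$ is chosen adaptively from the history.

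Finally, I would form the martingale $M_k = \sum_{i=1}^k (Z_i - \mathbb{E}[Z_i \mid \mathcal{F}_{i-1}])$, where $\mathcal{F}_{i-1}$ encodes everything observed through round $i-1$. Since $Z_i$ takes only two values differing by exactly $\diffp_i$, the martingale differences satisfy Hoeffding's bounded-range condition with constants $\diffp_i$, and Azuma-Hoeffding gives $\Pr[M_k > s] \leq \exp(-2s^2/\sum_i \diffp_i^2)$; setting the right-hand side to $\delta$ yields $s = \sqrt{\tfrac{1}{2}\sum_i \diffp_i^2 \log(1/\delta)}$. Combined with the per-step KL bound, $\sum_i Z_i$ is at most the sum of the bias terms plus $s$ except with probability at most $\delta$, giving the second argument of the claimed minimum; the first argument $\sum_i \diffp_i$ comes from basic composition applied to the DP guarantee $\diffp_i$-BR $\Rightarrow$ $\diffp_i$-DP of Lemma~\ref{lem:factor2BR}. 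The main obstacle is the optimization step, both the algebraic simplification at $t_i^*$ and the verification that $t_i^* \in [0, \diffp_i]$; the rest is routine bookkeeping in the DRV framework, and the uniformity of the KL bound in $t_i$ is precisely why adaptivity costs nothing in this (non-tight) upper bound, in contrast to the exact $\delta_{\opt}$ gap of Theorem~\ref{thm:gap}.
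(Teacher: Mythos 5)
Your proof follows the same route as the paper: reduce to generalized randomized response, maximize the per-step expected privacy loss (a KL divergence) over $t_i \in [0,\diffp_i]$ to obtain the bias term $\frac{\diffp_i}{1-e^{-\diffp_i}} - 1 - \log\frac{\diffp_i}{1-e^{-\diffp_i}}$, and apply a Hoeffding-type MGF/Azuma concentration bound with conditional range $\diffp_i$ for the deviation term, invoking the standard equivalence between privacy-loss tail bounds and $(\diffp_g,\delta)$-DP. The calculus at the optimizer $t_i^* = \diffp_i + \log\bigl(\tfrac{1-e^{-\diffp_i}}{\diffp_i}\bigr) = \log\tfrac{e^{\diffp_i}-1}{\diffp_i}$, the check that $t_i^* \in [0,\diffp_i]$, the algebraic simplification, the observation that the bound on $\E[Z_i \mid \mathcal{F}_{i-1}]$ holds uniformly in $t_i$ and hence survives adaptivity, and the $\min$ with basic composition all match the paper's argument (the paper phrases the concentration step via the $U_i(\lambda) = \tfrac{1}{8}\diffp_i^2\lambda^2 + \lambda\,\mathrm{maxkl}(\diffp_i)$ bound in its Lemma~\ref{lem:origin} and optimizes over $\lambda$, but that is the same computation as your Azuma-Hoeffding step).
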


This gives substantial improvements over the previous bound in some settings (and we will provide plots in Section~\ref{sec:mgf_approach}), but we will further improve this bound. In particular, the bound given above considers the KL divergence and Azuma-Hoeffding separately, which is to say that the worst-case $t_i \in [0,\ep_i]$ is chosen separately for these two bounds instead of choosing this supremum with respect to both. In order to improve this, we backtrack a step in this method and use the same techniques from the proof of Azuma-Hoeffding but apply our more exact characterization.

\begin{restatable}[]{theorem}{mgfnew}
\label{thm:MGFthm}
Let $\vcM \defeq (\cM_1, \cM_2, \cdots, \cM_k)$ each $\cM_i$ is the class of $\diffp_i$-BR mechanisms.  We then have that $\vcM$ is $(\diffp_g, \delta_g(\diffp_g))$-DP under $k$-fold adaptive composition for any $\diffp_g \geq 0$ where we define $h_\diffp(\lambda):= \sup_{t\in[0,\diffp]}\lambda(\diffp-t)+\log\big(1+p_{\diffp,t}(e^{-\lambda\diffp}-1)\big)$ with $p_{\diffp,t} = \frac{e^{-t}-e^{-\diffp}}{1-e^{-\diffp}}$ and
	\[
	\delta_g(\ep_g) = \inf_{\lambda>0}\e^{-\lambda\ep_g+ \sum_i h_{\diffp_i}(\lambda)}.
	\]
\end{restatable}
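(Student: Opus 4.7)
The plan is to bound the moment generating function (MGF) of the total privacy loss and then apply a Chernoff-type tail inequality. By the reduction established earlier in the paper (Section~\ref{sec:reduction}), for adaptive composition of $\diffp_i$-BR mechanisms it suffices to analyze a game in which at round $i$ the adversary adaptively selects some $t_i\in[0,\diffp_i]$ and the system outputs a sample from the generalized randomized response $\grr{\diffp_i,t_i}$. The induced privacy loss $Z_i$ at round $i$ then takes only two values, and the sum $Z=\sum_{i=1}^k Z_i$ is the total privacy loss. The first step is to use the standard reduction
\[
\delta_{\opt}\!\bigl(M,\diffp_g\bigr)\;\le\;\Pr_{Y\sim M(x)}\!\bigl[\,Z>\diffp_g\,\bigr],
\]
which follows directly from Fact~\ref{fact:delta_opt} by bounding the integrand by $p_x(y)$ on $\{Z>\diffp_g\}$ and by $0$ elsewhere. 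A Chernoff bound then gives $\Pr[Z>\diffp_g]\le e^{-\lambda\diffp_g}\,\mathbb{E}[e^{\lambda Z}]$ for every $\lambda>0$.

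Next, I would compute the one-step MGF for $\grr{\diffp,t}$ explicitly. When the true input is $x^{(i,1)}$ the privacy loss is $-t$ with probability $p_{\diffp,t}$ and $\diffp-t$ with probability $1-p_{\diffp,t}$, so a direct calculation gives
\[
\mathbb{E}\!\bigl[e^{\lambda Z_i}\,\big|\,x^{(i,1)}\bigr]
\;=\;e^{\lambda(\diffp-t)}\bigl(1+p_{\diffp,t}(e^{-\lambda\diffp}-1)\bigr).
\]
For the other direction, $x^{(i,0)}$, the two PL values are $t$ and $t-\diffp$ with probabilities $q_{\diffp,t}$ and $1-q_{\diffp,t}$, and using the symmetry $1-q_{\diffp,t}=p_{\diffp,\diffp-t}$ (a quick algebraic check from Definition~\ref{defn:gen_rr}) together with the substitution $s=\diffp-t$ shows the MGF takes the same form after reparametrization. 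Taking logs and then the supremum over $t\in[0,\diffp]$ (equivalently over $s\in[0,\diffp]$) therefore yields the uniform bound
\[
\mathbb{E}\!\bigl[e^{\lambda Z_i}\,\big|\,\text{history, any }t_i\in[0,\diffp_i]\bigr]\;\le\;e^{h_{\diffp_i}(\lambda)}.
\]

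I would then handle adaptivity via the tower property. Conditioned on the filtration $\mathcal{F}_{i-1}$ generated by rounds $1,\dots,i-1$, the adversary's choice $t_i$ is $\mathcal{F}_{i-1}$-measurable, so the uniform bound above applies pointwise. Iterating,
\[
\mathbb{E}\!\Bigl[e^{\lambda\sum_{i=1}^k Z_i}\Bigr]
\;=\;\mathbb{E}\!\Bigl[e^{\lambda\sum_{i=1}^{k-1}Z_i}\,\mathbb{E}\!\bigl[e^{\lambda Z_k}\,|\,\mathcal{F}_{k-1}\bigr]\Bigr]
\;\le\;e^{h_{\diffp_k}(\lambda)}\,\mathbb{E}\!\Bigl[e^{\lambda\sum_{i=1}^{k-1}Z_i}\Bigr],
\]
and induction gives $\mathbb{E}[e^{\lambda Z}]\le e^{\sum_i h_{\diffp_i}(\lambda)}$. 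Combining with the Chernoff step and taking the infimum over $\lambda>0$ produces the claimed bound $\delta_g(\diffp_g)=\inf_{\lambda>0}e^{-\lambda\diffp_g+\sum_i h_{\diffp_i}(\lambda)}$.

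The main subtlety I expect to confront is making the MGF bound \emph{simultaneously} valid for both directions of the DP inequality while respecting the adaptive choice of $t_i$. The BR property in Corollary~\ref{cor:br} is stated in a directional form, and the two "sides" ($Y\sim M(x)$ versus $Y\sim M(x')$) produce MGFs that look superficially different; the symmetry $1-q_{\diffp,t}=p_{\diffp,\diffp-t}$, together with the fact that the supremum in $h_\diffp$ ranges over the full interval $[0,\diffp]$, is what collapses the two cases into the same uniform bound. A secondary but routine issue is ensuring measurability of the adversary's $t_i$ and the legitimacy of applying the tower property inside the adaptive composition game $\AdaComp$; since $\mathcal{F}_{i-1}$ encodes the full past transcript on which $t_i$ deterministically depends (conditional on the adversary's internal randomness), this is straightforward.
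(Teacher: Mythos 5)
Your proposal is correct and follows essentially the same route as the paper's proof: reduce via Corollary~\ref{cor:reduc} to adaptively chosen generalized randomized responses, drop the negative term to bound $\delta_{\opt}$ by a tail probability, compute the exact one-step conditional MGF $e^{\lambda(\diffp-t)}\bigl(1+p_{\diffp,t}(e^{-\lambda\diffp}-1)\bigr)$, absorb both directions and the adaptive choice of $t_i$ into the supremum defining $h_{\diffp_i}(\lambda)$ via the symmetry $1-q_{\diffp,t}=p_{\diffp,\diffp-t}$, and chain the rounds with the tower property before optimizing over $\lambda$. The paper packages the conditioning step as Lemma~\ref{lem:origin} and the MGF computation as Lemma~\ref{lem:MGFexpression}, but the substance is identical to your argument.
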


We present plots of our results in Figure~\ref{fig:ALL} for the homogeneous case, plotting $\diffp_g$ as a function of $k$.  As stated earlier, we label ``$\diffp$-DP OptComp" as the optimal composition bound for DP mechanisms from \cite{MurtaghVa16}, ``DR19" as the composition bound for $\diffp$-BR mechanisms from \cite{DurfeeRo19}, and ``BR OptComp" as the composition bound in Theorem~\ref{thm:non-adaptive}, which only applies in the nonadaptive setting. Furthermore, we label ``OptKL" as the bound from Corollary~\ref{cor:OptKL} and ``MGF" as the bound in Theorem~\ref{thm:MGFthm}.  To compare our bounds with simply using the optimal DP composition bound with a half the actual privacy parameter, we also plot the DP optimal composition bound with $\diffp/2$ with label ``$\diffp/2$-DP OptComp".   This last curve highlights the fact that $\diffp$-BR is almost the same as $\diffp/2$-DP when applying composition.

\begin{figure}[h]
\centering
\includegraphics[width=0.35\textwidth]{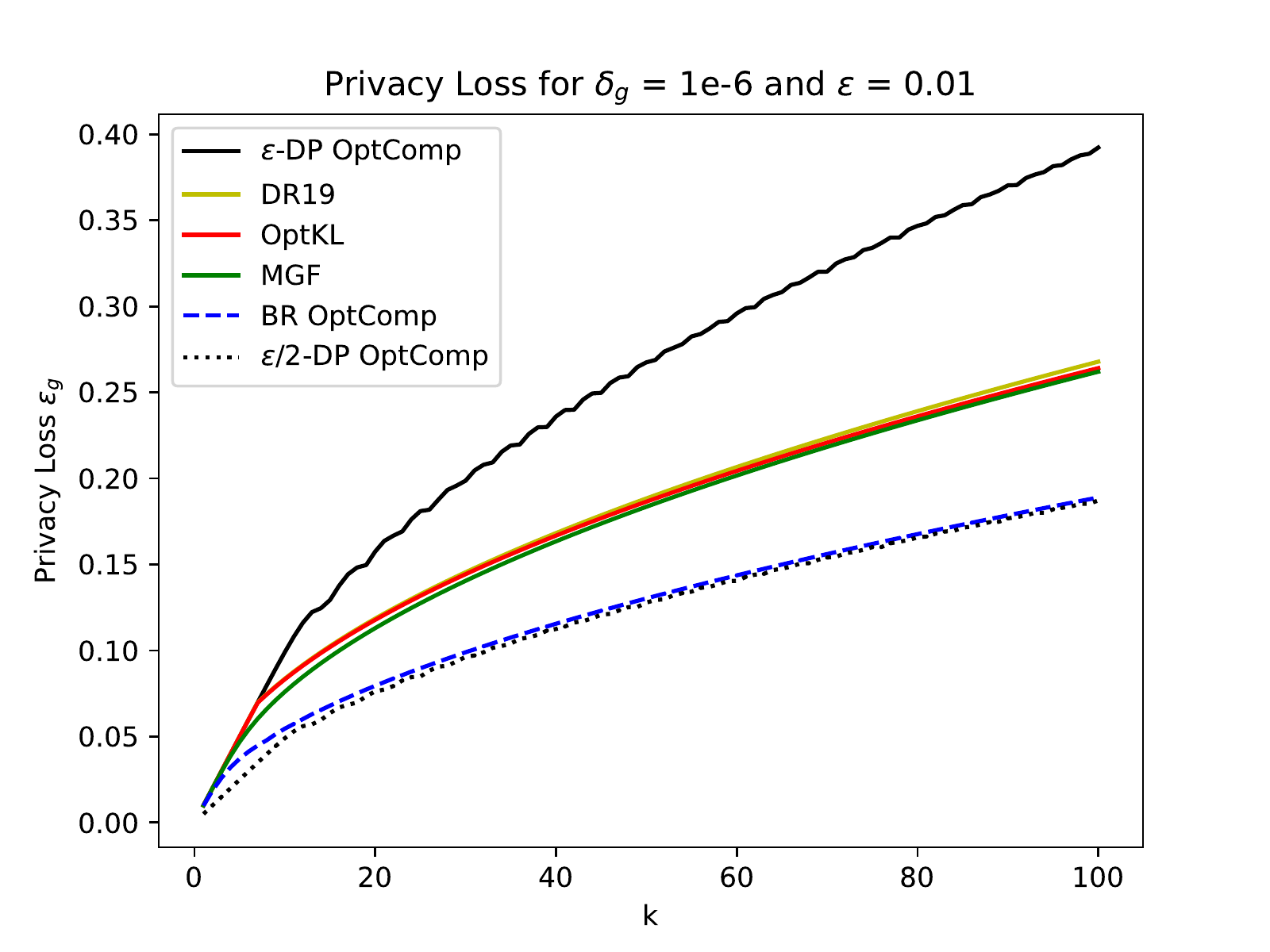}
\includegraphics[width=0.35\textwidth]{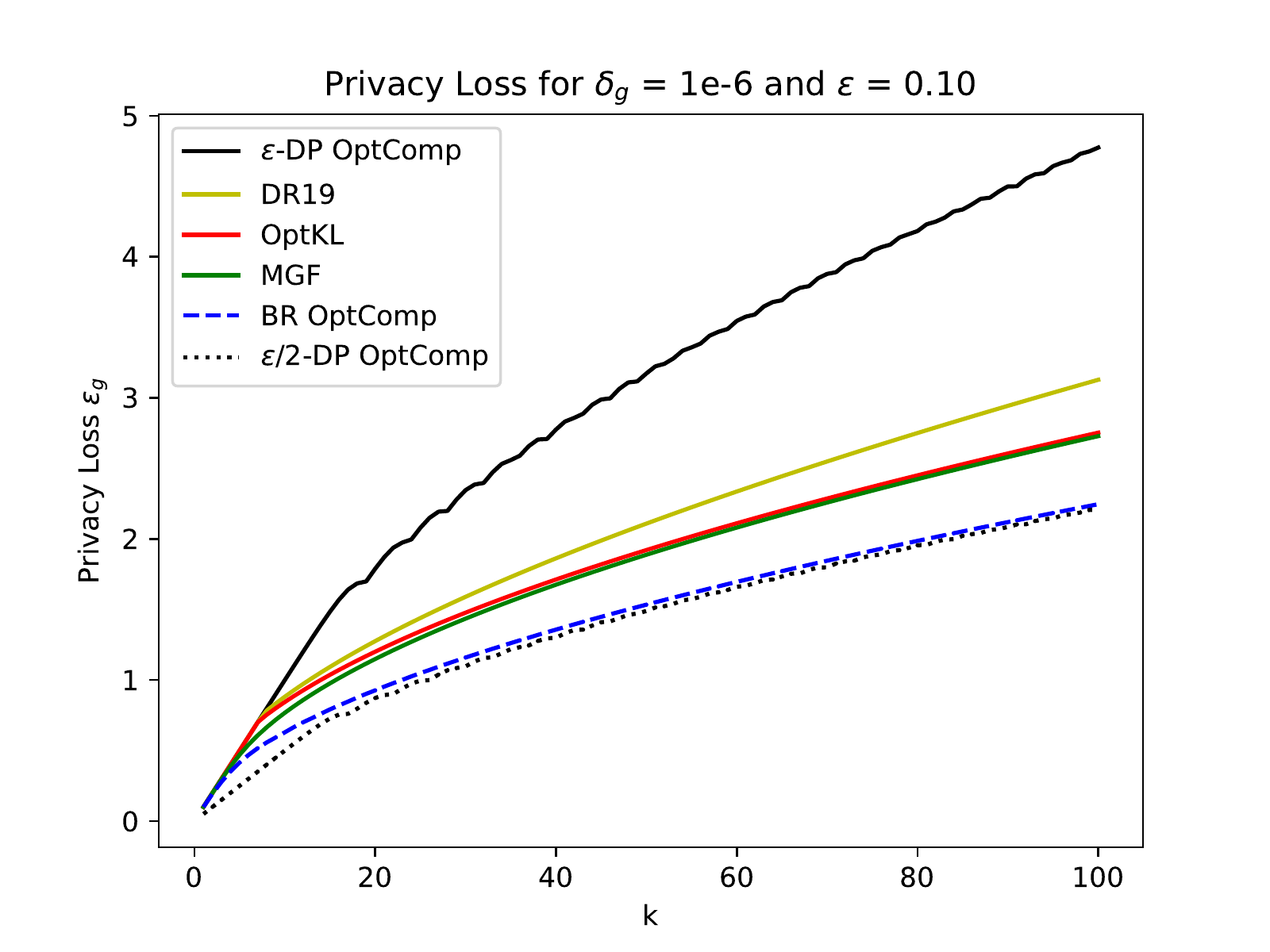}
\includegraphics[width=0.35\textwidth]{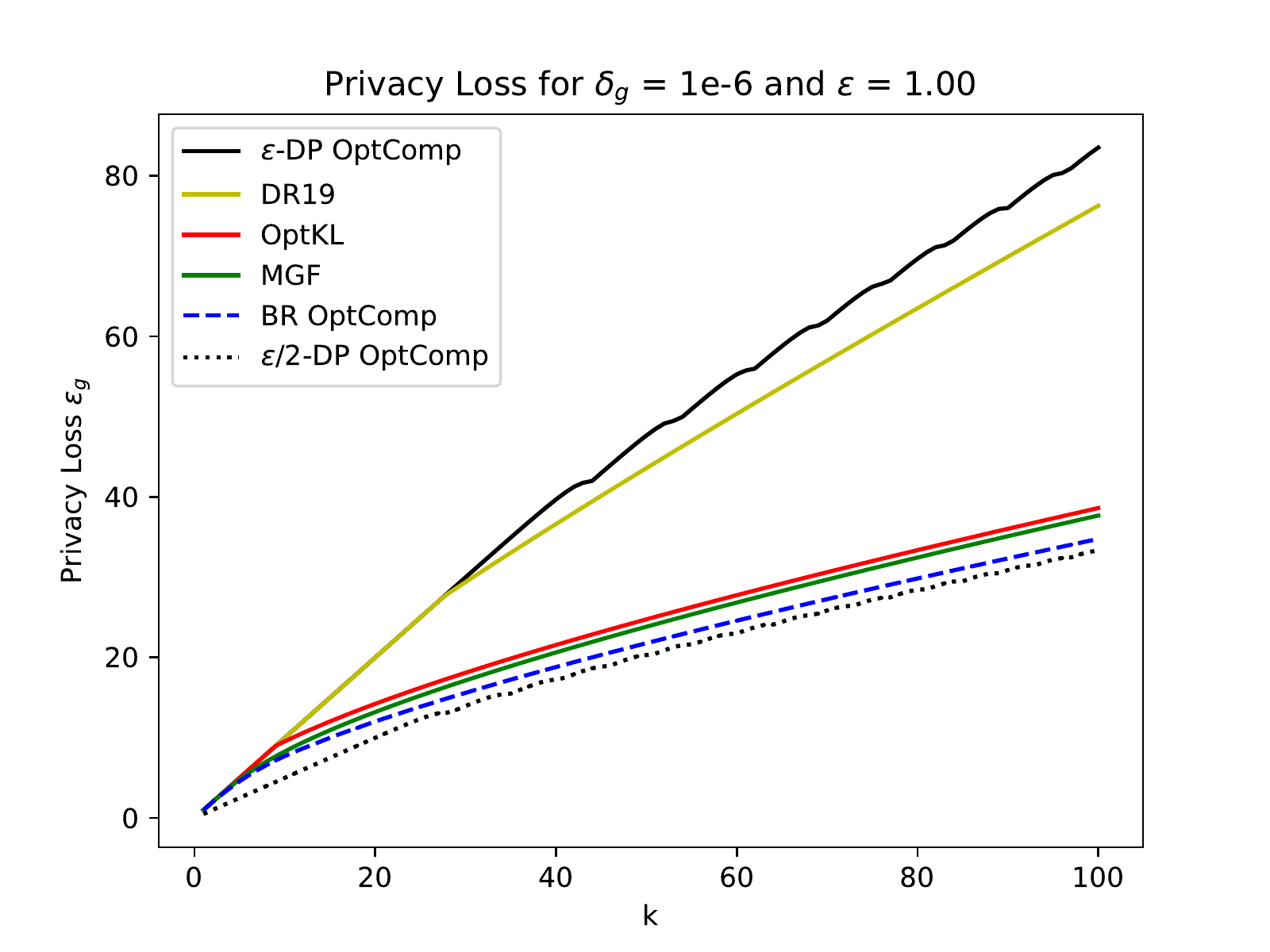}
\caption{Comparison of optimal DP composition with the BR composition bounds in this work and in \citet{DurfeeRo19}.  The dashed curve only applies in the nonadaptive composition setting and the dotted curve uses the existing DP optimal composition bound with half the actual privacy parameter.  We present results for $\delta_g = 10^{-6}$ and $\diffp \in \{ 0.01, 0.1, 1\}$.\label{fig:ALL}}
\end{figure}

\subsection{Discussion of optimal DP composition bounds \label{sec:OptCompVsBR}}
Although $\diffp$-BR implies $\diffp$-DP, and the converse holds up to a factor of 2 in the privacy parameter, it is important to point out that our optimal composition analysis of BR mechanisms does not supersede the optimal composition of DP mechanisms.  More specifically, consider the Laplace mechanism \cite{DworkMcNiSm06}, which adds Laplace noise to a bounded sensitivity statistic.  This mechanism is $\diffp$-DP, but it is also $2 \diffp$-BR yet it has a fixed value $t = \diffp$ for any neighboring datasets.  As we will discuss more rigorously in our analysis, our optimal composition bounds for BR mechanisms follows from maximizing the bound over all sequences of $t$ values.  Hence, utilizing the optimal composition bound over BR mechanisms will result in a larger than necessary bound when considering Laplace mechanisms, and thus the optimal DP composition bounds from \cite{KairouzOhVi17,MurtaghVa16} should be used.  Alternatively, if we are composing exponential mechanisms that we know are $\diffp$-BR, then our composition bounds improves on the optimal composition of $\diffp$-DP mechanisms.

Consider the following example with randomized response.  In this case $M_{\texttt{RR}}: \{0,1\} \to \{0,1\}$ and $M_{\texttt{RR}}(b;\diffp) = b$ with probability $\tfrac{e^\diffp}{e^\diffp + 1}$.  To fit this into the generic exponential mechanism, we require a quality score $u(b,b')$ and we need to calculate its sensitivity, or as we discussed in Proposition~\ref{prop:semantics}, its range.  In this case $u(b,b') = \1{b = b'}$, which has sensitivity$\Delta u = 1$ and also has range $\tilde\Delta u = 2$.  Whether we use the range or the sensitivity of the quality score, the generic exponential mechanism is then written as $M_u(b; \diffp) = \tfrac{e^{\diffp q(b,b)/2}}{e^{\diffp q(b,b)/2} + e^{\diffp q(b,1-b)/2}} = \tfrac{e^{\diffp/2}}{e^{\diffp/2} + 1}$.  Hence, we have $M_u(\cdot;2 \diffp) = M_{\texttt{RR}}(\cdot;\diffp)$.  The fact that randomized response can be written as $M_u(\cdot;2 \diffp)$ implies that it is $2 \diffp$-BR, but we further note that there are only two neighboring databases for randomized response. This then allows for only one value $t \in [0,2\diffp]$ from Corollary~\ref{cor:br}, where we see that $t = \diffp$ implies that this randomized response is also $\diffp$-DP.
Accordingly, if we only knew the generic exponential form with parameter $\diffp$ then our composition bounds would improve over the general optimal DP composition bounds from \cite{MurtaghVa16,KairouzOhVi17}.  However, if it is also known that each individual mechanism is also $\diffp/2$-DP, as is the case for randomized response with parameter $\diffp/2$, then the bounds from \cite{MurtaghVa16,KairouzOhVi17} cannot be improved.

\section{Bounded range and generalized random response}\label{sec:reduction} 

In this section, we show that $k$-fold adaptive composition over the class of BR mechanisms can be reduced to only considering adversaries that select a generalized random response mechanism at each step. First, we show that we can post-process the generalized random response to simulate any BR mechanism on neighboring inputs. For this proof, we will utilize the hypothesis testing interpretation of DP that was similarly used in \cite{KairouzOhVi17} and then extended in  \cite{DongRoSu19}. We defer the analysis to Appendix~\ref{sec:gen-rand-response}.

\begin{lemma}\label{lem:reduc}
Let mechanism $M: \cX \to \cY$ be $\diffp$-BR. For any neighboring databases $x^0,x^1 \in \cX$, there exists some $t=t(M,x^0,x^1) \in [0,\diffp]$ and randomized function $\phi: \{0,1\} \to \cY$ that depends on $M,x^0,x^1$ such that for any $y \in Y$ and $b \in \{0,1\}$ we have the following equivalence in terms of the generalized randomized response mechanism from Definition~\ref{defn:gen_rr}.

\[
\Pr[M(x^b) = y] = \Pr[\phi(\grr{\ep,t}(b)) = y]
\]
\label{lem:pp_rr}
\end{lemma}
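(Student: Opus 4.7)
The plan is to construct the post-processing function $\phi$ explicitly by inverting a two-by-two mixture system at each outcome $y$, and then to verify the resulting kernel is a valid pair of probability distributions using the BR property.

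First I would invoke Corollary~\ref{cor:br} on the neighboring pair $x^0, x^1$ to obtain a shift $t = t(M,x^0,x^1) \in [0,\ep]$ such that $\log\bigl(\Pr[M(x^0)=y]/\Pr[M(x^1)=y]\bigr) \in [t-\ep,\, t]$ for every outcome $y \in \cY$. A direct manipulation of the formulas in Definition~\ref{defn:gen_rr} shows that these two endpoint ratios are exactly what $\grr{\ep,t}$ attains on its two outputs, namely $q_{\ep,t}/p_{\ep,t} = e^t$ and $(1-q_{\ep,t})/(1-p_{\ep,t}) = e^{t-\ep}$. This is what identifies $\grr{\ep,t}$ as the natural worst-case channel at shift $t$ and motivates the whole reduction.

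Writing the required simulation $\Pr[M(x^b)=y] = \Pr[\phi(\grr{\ep,t}(b))=y]$ coordinatewise in $y$ and $b \in \{0,1\}$ yields the linear system
\begin{align*}
\Pr[M(x^0)=y] &= q_{\ep,t}\, a(y) + (1 - q_{\ep,t})\, b(y), \\
\Pr[M(x^1)=y] &= p_{\ep,t}\, a(y) + (1 - p_{\ep,t})\, b(y),
\end{align*}
where $a(y) := \Pr[\phi(0)=y]$ and $b(y) := \Pr[\phi(1)=y]$. Its determinant is $q_{\ep,t} - p_{\ep,t} > 0$ (assuming $\ep > 0$, with the case $\ep = 0$ trivial), so I would solve pointwise in $y$ to obtain $a(y)$ and $b(y)$ as fixed linear combinations of $\Pr[M(x^0)=y]$ and $\Pr[M(x^1)=y]$, and use these as the definition of the post-processing kernel $\phi$.

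The substantive step is verifying that $a$ and $b$ are genuine probability densities on $\cY$. Using the two ratio identities above, the condition $a(y) \geq 0$ reduces exactly to the lower BR bound $\Pr[M(x^0)=y] \geq e^{t-\ep}\Pr[M(x^1)=y]$, and $b(y) \geq 0$ to the upper BR bound $\Pr[M(x^0)=y] \leq e^t \Pr[M(x^1)=y]$; both hold by our choice of $t$. Normalization $\int_{\cY} a = \int_{\cY} b = 1$ then follows immediately by integrating each of the two defining equations over $\cY$ and using that the densities of $M(x^0)$ and $M(x^1)$ each integrate to $1$. The main obstacle is expositional rather than mathematical: keeping the algebraic identities linking $p_{\ep,t}$, $q_{\ep,t}$, and the BR thresholds $e^{t-\ep}, e^t$ transparent enough that the non-negativity step becomes a one-line consequence of BR. An alternative route via hypothesis-testing / Blackwell domination as in \cite{KairouzOhVi17, DongRoSu19} would give the same conclusion by data processing, but this direct construction has the benefit of producing $\phi$ in closed form.
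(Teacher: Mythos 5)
Your construction is correct and it takes a genuinely different route from the paper. The paper proves Lemma~\ref{lem:reduc} through the hypothesis-testing formulation: it shows that the trade-off function $T\big(M(x^0),M(x^1)\big)$ dominates the piecewise-linear function $l_{t,\ep}$, identifies $l_{t,\ep}$ as the trade-off function of $\big(\grr{\ep,t}(0),\grr{\ep,t}(1)\big)$, and then invokes Blackwell's theorem (Theorem~\ref{thm:blackwell}) to conclude that a post-processing map $\phi$ \emph{exists}; the map itself is never written down. You instead solve the $2\times 2$ mixture system pointwise in $y$, which produces $\phi$ in closed form, and your verification is sound: the ratio identities $q_{\ep,t}/p_{\ep,t}=e^{t}$ and $(1-q_{\ep,t})/(1-p_{\ep,t})=e^{t-\ep}$ are exactly right, nonnegativity of the two solved densities reduces precisely to the two BR inequalities from Corollary~\ref{cor:br}, and normalization follows by integrating over $\cY$. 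This is essentially the explicit-simulation style used by \citet{MurtaghVa16} for DP, whereas the paper follows the \citet{KairouzOhVi17}/\citet{DongRoSu19} style; your route is more elementary and self-contained (no Blackwell machinery) and yields a concrete kernel, while the paper's route avoids all algebra and fits the trade-off-function framework it reuses conceptually elsewhere. One small inaccuracy to patch: the determinant is $q_{\ep,t}-p_{\ep,t}=\frac{(1-e^{-t})(1-e^{t-\ep})}{1-e^{-\ep}}$, which vanishes not only when $\ep=0$ but also when $t=0$ or $t=\ep$ with $\ep>0$; in those cases $\grr{\ep,t}$ is a constant channel, but the BR bound with $t\in\{0,\ep\}$ forces $\Pr[M(x^0)=y]$ and $\Pr[M(x^1)=y]$ to agree almost everywhere (both are densities integrating to one with a one-sided pointwise domination), so the simulation is trivial there; one sentence covering this degenerate case completes your argument.
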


We next show that $k$-fold adaptive composition over the class of BR mechanisms is equivalent to considering the class of generalized randomized response mechanisms instead.

\begin{lemma}
Fix parameters $\diffp_1, \cdots, \diffp_k$.  Let $\vcM = (\cM_1, \cdots, \cM_k)$ be such that $\cM_i$ is the class of $\diffp_i$-BR mechanisms, and let $\vcRR = (\cRR_1, \cdots, \cRR_k)$ be the class such that $\cRR_i \defeq \{ \grr{\diffp_i,t_i} : t_i \in [0,\diffp_i] \}$.  We then have that $\vcM$ is $(\diffp_g,\delta_g)$-DP under $k$-fold adaptive composition if and only if $\vcRR$ is $(\diffp_g,\delta_g)$-DP under $k$-fold adaptive composition.
\label{lem:lem1}
\end{lemma}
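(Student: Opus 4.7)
The plan is to establish both directions of the equivalence, with the real content concentrated in the converse (from $\vcRR$-DP to $\vcM$-DP) via a simulation argument relying on Lemma~\ref{lem:reduc}.

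The forward direction is essentially immediate. A direct calculation from Definition~\ref{defn:gen_rr} shows that the two log-likelihood ratios of $\grr{\diffp_i,t_i}$ are exactly $t_i$ and $t_i-\diffp_i$, so by Corollary~\ref{cor:br}, $\grr{\diffp_i,t_i}$ is itself $\diffp_i$-BR. Thus $\cRR_i \subseteq \cM_i$ for every $i$, every adversary that selects mechanisms from $\cRR_i$ is also a valid adversary against $\vcM$, and the assumed $(\diffp_g,\delta_g)$-DP guarantee for $\vcM$ transfers verbatim to $\vcRR$.

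For the converse, I would simulate an arbitrary BR adversary $\cA=(\cR,\cDe)$ by an RR adversary $\cA'=(\cR',\cDe')$ whose view determines the BR view through a deterministic, $b$-independent post-processing. At round $i$, $\cDe'$ reads the simulated history, applies the logic of $\cDe$ to obtain a mechanism $M_i \in \cM_i$ and neighbors $x^{i,0}, x^{i,1}$, invokes Lemma~\ref{lem:reduc} to recover $t_i = t(M_i,x^{i,0},x^{i,1}) \in [0,\diffp_i]$ together with the randomized post-processor $\phi_i$, and then selects the mechanism $\grr{\diffp_i,t_i} \in \cRR_i$ together with the canonical neighboring pair $(0,1)$. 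Upon receiving $B_i^b \sim \grr{\diffp_i,t_i}(b)$, the simulator uses auxiliary randomness stored in $\cR'$ to form $A_i^b = \phi_i(B_i^b)$ and feeds $A_i^b$ back into the simulated BR state before moving to round $i+1$. By Lemma~\ref{lem:reduc}, $\phi_i(\grr{\diffp_i,t_i}(b))$ and $M_i(x^{i,b})$ are identically distributed at every round, so an easy induction on $i$ shows that there is a deterministic function $F$ (depending only on $\cA'$ and not on $b$) with $F\bigl(\AdaComp(\cA',\vcRR,b)\bigr) \stackrel{d}{=} \AdaComp(\cA,\vcM,b)$.

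The DP guarantee then transfers by post-processing: for every measurable set $S$ of BR views,
\[
\Pr[\AdaComp(\cA,\vcM,b) \in S] \;=\; \Pr[\AdaComp(\cA',\vcRR,b) \in F^{-1}(S)],
\]
and applying the assumed $(\diffp_g,\delta_g)$-DP of $\vcRR$ at the set $F^{-1}(S)$ immediately gives the $(\diffp_g,\delta_g)$-DP of $\vcM$ at $S$. The only real obstacle is the bookkeeping around randomness and the degenerate neighbor slot: the $\phi_i$ are randomized while $\cDe'$ must be deterministic, and the RR adversary must still declare a pair of neighbors at each round even though the space $\{0,1\}$ admits only the trivial pair. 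Both are handled cleanly by absorbing all auxiliary randomness into $\cR'$ and hard-wiring $\cDe'$ to output $(0,1)$ for the neighbor slot, after which the rest is standard adversary-decomposition and post-processing closure.
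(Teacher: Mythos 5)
Your proposal is correct and follows essentially the same route as the paper: the nontrivial direction is proved by constructing an RR adversary $\cA'$ that, at each round, invokes Lemma~\ref{lem:reduc} to obtain $t_i$ and the post-processor $\phi_i$, absorbs the randomness of $\phi_i$ and the simulated BR outcome into $\cR'$, and concludes by post-processing closure; the trivial direction from $\cRR_i \subseteq \cM_i$ is left implicit in the paper but your explicit treatment matches. Your note about hard-wiring the neighbor pair $(0,1)$ is a slightly cleaner handling of a detail the paper glosses over, but the argument is otherwise identical.
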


\begin{proof}

Take any $\cA = (\cDe,\cR)$ that selects mechanisms from $\cM_i$ at round $i$ and we will construct $\cA' = (\cDe',\cR')$ that selects mechanisms in $\cRR_i$ in the following way.  Replace the deterministic component $\cDe(r_0,A_1^b, \cdots r_{i-1})$ that selects neighbors $x^{0}_i, x^{1}_i$ and $M_i \in \cM_i$ at each round $i$ with $\cDe'(r_0,B_1^b,A_1^b, \cdots r_{i-1})$ that selects neighbors $x^{0}_i, x^{1}_i$ and $t_i(M_i,x^{0}_i, x^{1}_i)$ where $B_\ell^b = \grr{t(\cM_\ell,x^0_\ell,x^{1}_\ell)}(b)$ and $\ell < i$.

The new analyst $\cA'$ receives $B_i^b = \grr{t(\cM_i,x^{0}_i,x^{1}_i)}(b)$ whereas $\cA$ receives $A_i^b = M_i(x_i^b)$.  We then construct the randomized component of $\cA'$ in the following way.  Rather than sample $r_i \sim \cR(r_1,A_1^b, \cdots r_{i-1},A_{i-1}^b)$, we sample $r_i' = (A_{i}^b , r_i) \sim \cR'(r_1,B_1^b,A_1^b, \cdots, r_{i-1},B_{i}^b)$ where first $A_{i}^b = \phi_{i}(B_{i}^b)$ and $\phi_i$ is the post-processing function described in Lemma~\ref{lem:pp_rr} that depends on $\cM_i,x^{0}_i,x^{1}_i$, then $r_i \sim \cR(r_1,A_1^b, \cdots r_{i-1},A_{i-1}^b)$, as before.

Given any outcome $(r_0,A_1^b, \cdots r_{k-1},A_k^b,r_k)$, we know that there exists a post-processing function $\psi$ such that for $b \in \{0,1\}$

\begin{multline*}
\Pr\left[\AdaComp(\cA,\vcM,b) = \left(r_0,A_1^b, \cdots r_{k-1},A_k^b,r_k\right)\right] \\
= \Pr\left[\psi\left( \AdaComp(\cA',\vcRR,b)  \right) = \left(r_0, A_1^b, \cdots r_{k-1}, A_k^b,r_k\right) \right]
\end{multline*}
\end{proof}

\subsection{Handling convexity for BR composition\label{sec:convex_combo}}

In this section, we discuss a technicality for adaptive composition of BR mechanisms.  As discussed earlier, BR mechanisms are not closed under convex combinations, and this can be easily seen by simply considering a mechanism that has four possible outputs from randomizing over $\grr{\diffp,t_1}$ and $\grr{\diffp,t_2}$ where $t_1 \neq t_2$. This allows adversaries potentially additional power when they can randomize between different BR mechanisms at each round, which is not the case for classes of mechanisms that are closed under convex combinations, such as DP. 

Despite this technicality, we will show that allowing the analyst this adaptive randomness at each step does not increase the privacy loss.  
Consider the same adaptive game in Algorithm~\ref{algo:adaptgame}, but now we take away the adversary's ability to add their own data-independent randomness at each round, which we will denote as $\cA = (\emptyset, \cDe)$. We will show that this has the same level of privacy regardless of the class of randomized algorithms used.  

\begin{definition}[Adaptive Composition without Adversarial Randomness] 
Given classes of randomized algorithms $\vcE = (\cE_1,\cdots \cE_k)$, we say $\vcE$ is $(\diffp_g,\delta_g)$ differentially private under $k$-fold adaptive composition without adversarial randomness if for any adversary $\cA = (\emptyset, \cDe)$ that does not have any randomness of its own and $b \in \{0,1\}$, along with any set $S$ that is a subset of outputs of $\AdaComp((\emptyset,\cDe),\cE,\cdot)$

\[
\Pr[\AdaComp((\emptyset,\cDe),\vcE,b) \in S] \leq e^{\diffp_g}\Pr[\AdaComp((\emptyset,\cDe),\vcE,1-b) \in S] + \delta_g
\]

\end{definition}

\begin{lemma}
Given any class of randomized algorithms  $\vcE = (\cE_1,\cdots, \cE_k)$, $\vcE$ is $(\diffp_g,\delta_g)$-DP under $k$-fold adaptive composition without adversarial randomness if and only if $\vcE$ is $(\diffp_g,\delta_g)$-DP under $k$-fold adaptive composition.
\label{lem:lem2}
\end{lemma}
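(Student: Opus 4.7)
The plan is to prove the two directions separately. The ``if'' direction is trivial: an adversary $(\emptyset,\cDe)$ without its own randomness is a special case of a general adversary $(\cR,\cDe)$ obtained by taking $\cR$ to output a fixed symbol, so the DP guarantee against all randomized adversaries immediately implies the same guarantee against deterministic ones. The ``only if'' direction is the substantive one and I would handle it by the standard ``condition on the random tape'' argument.

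For the converse, fix any adversary $\cA = (\cR,\cDe)$, a bit $b \in \{0,1\}$, and a measurable set $S$ of views. First I would realize $\cR$'s randomness by a single random tape $\rho$ drawn from some distribution $\mu$ that is independent of $b$ and of all mechanism outputs; this is the usual reduction of a randomized algorithm to a deterministic function of a tape. For each realization $\rho$, I would construct a deterministic adversary $\cDe_\rho$ that at round $i$, upon receiving previous mechanism outputs $A_1^b,\ldots,A_{i-1}^b$, first reconstructs $r_0,\ldots,r_{i-1}$ as deterministic functions of $\rho$ and of the prefix $A_1^b,\ldots,A_{i-1}^b$, and then invokes $\cDe(r_0,A_1^b,\ldots,r_{i-1})$ to obtain the neighboring datasets and mechanism $(x^{i,0},x^{i,1},M_i)$. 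The key observation is that the full view $V^b = (R_0,A_1^b,\ldots,R_{k-1},A_k^b,R_k)$ produced by $\cA$ is, conditional on $\rho$, a deterministic function $F_\rho$ of the tuple $(A_1^b,\ldots,A_k^b)$ produced by $\cDe_\rho$, since each $R_i$ is by construction determined by $\rho$ together with $A_1^b,\ldots,A_i^b$.

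Applying the hypothesis to the deterministic adversary $(\emptyset,\cDe_\rho)$ with the set $S_\rho := F_\rho^{-1}(S)$ then gives
\[
\Pr\!\left[V^b \in S \mid \rho\right] = \Pr\!\left[(A_1^b,\ldots,A_k^b) \in S_\rho \mid \rho\right] \leq e^{\diffp_g}\Pr\!\left[(A_1^{1-b},\ldots,A_k^{1-b}) \in S_\rho \mid \rho\right] + \delta_g,
\]
and integrating over $\rho \sim \mu$, whose distribution is the same under both worlds $b$ and $1-b$, yields the desired DP bound on $V^b$. The only real obstacle is notational: the interleaving of $\cR$'s and $\cDe$'s calls in Algorithm~\ref{algo:adaptgame} forces one to be explicit about how $R_i$ arises as a deterministic function of $\rho$ and the prefix of mechanism outputs, and to check that $\cDe_\rho$ indeed falls within the class of adversaries to which the hypothesis applies. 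Conceptually this is nothing more than the standard fact that DP is preserved under post-processing by data-independent randomness together with convex combinations, but setting up this bookkeeping carefully in the adaptive setting is the only step that requires work.
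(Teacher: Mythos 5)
Your proof is correct and follows essentially the same route as the paper, which defers to Lemma 3.4 of \citet{RogersRUV16arxiv}: fix the adversary's internal randomness up front as simulated randomness (your tape $\rho$), apply the deterministic-adversary guarantee for each realization, and average over the realizations, noting the tape's distribution is data-independent. The per-$\rho$ conditioning is exactly how the paper sidesteps the fact that the classes $\cE_i$ need not be closed under convex combinations, so no gap here.
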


\begin{proof}
We largely follow Lemma 3.4 in \citet{RogersRUV16arxiv} which shows the point-wise equivalence between an adversary that has access to internal randomness and with a deterministic adversary who can then post-processes the final result.  This is done by including \emph{simulated} randomness for the deterministic adversary that can be fixed prior to any interaction with the dataset.  One technical difference between our setting and theirs is that for them an adversary can select a DP algorithm, which is then a post-processing function of randomized response, at each round.  This means that even if an adversary could additionally randomize between different DP algorithms at each round, the result is still DP.  In our case, there is a difference between a deterministic adversary and an adversary that can randomize between BR mechanisms at each round, because the resulting mechanism may no longer be BR.  However, we can just include this internal randomness of the adversary at each round in the simulated randomness from the analysis in Lemma 3.4 of \cite{RogersRUV16arxiv}.  Hence, we can analyze the DP guarantees for each realized value of simulated randomness.  Lastly, the DP guarantee does not change under convex combinations of the realized simulated randomness, which shows that it suffices to only consider deterministic adversaries. 
\end{proof}

Using Lemmas~\ref{lem:lem1} and \ref{lem:lem2}, we have the immediate result which shows that without loss of generality, we can consider deterministic adversaries that can select generalized randomized response mechanisms at each round.

\begin{corollary}\label{cor:reduc}
Fix parameters $\diffp_1, \cdots, \diffp_k$.  Let $\vcM = (\cM_1, \cdots, \cM_k)$ be such that $\cM_i$ is the class of $\diffp_i$-BR mechanisms, and let $\vcRR = (\cRR_1, \cdots, \cRR_k)$ be the class such that $\cRR_i \defeq \{ \grr{\diffp_i,t_i} : t_i \in [0,\diffp_i] \}$.  We then have that $\vcM$ is $(\diffp_g,\delta_g)$-DP under $k$-fold adaptive composition if and only if $\vcRR$ is $(\diffp_g,\delta_g)$-DP under $k$-fold adaptive composition without adversarial randomness.
\end{corollary}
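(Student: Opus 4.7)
The plan is to obtain Corollary~\ref{cor:reduc} as a direct two-step composition of the preceding Lemmas~\ref{lem:lem1} and \ref{lem:lem2}, since both are stated in exactly the form needed. First, I would invoke Lemma~\ref{lem:lem1}, which already establishes that $\vcM$ is $(\diffp_g,\delta_g)$-DP under $k$-fold adaptive composition if and only if $\vcRR$ is $(\diffp_g,\delta_g)$-DP under $k$-fold adaptive composition. This gives the reduction from arbitrary BR mechanisms to the generalized randomized response mechanisms $\grr{\diffp_i,t_i}$, at the cost of still allowing the adversary to use internal randomness.

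Second, I would apply Lemma~\ref{lem:lem2} to the class $\vcRR$ specifically, to conclude that $\vcRR$ is $(\diffp_g,\delta_g)$-DP under $k$-fold adaptive composition if and only if $\vcRR$ is $(\diffp_g,\delta_g)$-DP under $k$-fold adaptive composition without adversarial randomness. Note that Lemma~\ref{lem:lem2} is stated for an arbitrary class of randomized algorithms, so it applies to $\vcRR$ without any extra work, even though $\vcRR$ is not closed under convex combinations. Chaining these two equivalences gives the biconditional in the statement of the corollary.

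There is no real obstacle here, since the content of the corollary is entirely absorbed into the two preceding lemmas; the only thing to be careful about is the direction of quantification over adversaries. Specifically, when I write out the proof I would make sure to remark that in Lemma~\ref{lem:lem1} the implication ``$\vcM$ DP $\Rightarrow$ $\vcRR$ DP'' is trivial (since $\cRR_i \subseteq \cM_i$ by Lemma~\ref{lem:pp_rr} and the definition of BR), while the converse direction is the substantive content and uses the post-processing reduction constructed in its proof. Combining with Lemma~\ref{lem:lem2} then lets us restrict attention to deterministic adversaries choosing only generalized randomized response at each round, which is the form the rest of the paper will exploit.
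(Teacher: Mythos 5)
Your proposal matches the paper's proof exactly: the paper obtains Corollary~\ref{cor:reduc} as an immediate consequence of chaining Lemma~\ref{lem:lem1} with Lemma~\ref{lem:lem2} applied to $\vcRR$, just as you describe. Your extra remark about the trivial direction of Lemma~\ref{lem:lem1} (via $\cRR_i \subseteq \cM_i$) is a harmless clarification and does not change the argument.
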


\subsection{Exponential mechanism equivalence to generalized random response\label{sec:exp_to_grr}}
It was shown in \citet{KairouzOhVi17} that the discretized version of the Laplace mechanism, i.e. the geometric mechanism, has the largest privacy degradation under composition.  Similarly, we show that for certain quality scores the exponential mechanism is equal in distribution, up to a data independent post processing function, as the generalized randomized response mechanism.  Among this class of quality scores is the commonly used score for counting queries.  
More specifically, if we run an exponential mechanism, then by post-processing we can achieve the same distribution as $\grr{\diffp,t}$ for some $t$, and likewise if we run $\grr{\diffp,t}$ with the same $t$, then by post-processing we can achieve the same distribution as the exponential mechanism. We first define the exponential mechanism that we will be considering. This mechanism is one of the most common uses of the exponential mechanism where each individual's data is a bit string over some domain, and the mechanism wants to output the maximum count for all individuals over this domain.

\begin{definition}
Let $\cX \equiv \{0,1 \}^{n \times d}$ and $\bbx = (x_{i,j} : i \in [n], j \in [d]) \in \cX$ for some $n \in \mathbb{N}$, and define $M_{CQ}: \cX \to [d]$ to be the $\diffp$-DP exponential mechanism from Definition~\ref{defn:em} with quality score $u(x,j) = \sum_{i =1}^n x_{i,j}$. Neighboring databases will result from the addition or subtraction of a bit string $x_i = \{0,1\}^d$. Note here that $\Delta u = 1$ and that $u$ is also monotonic.

\end{definition}

Similar to the generalized random response mechanism, we then show that for any neighboring databases the log-ratio of the probability mass for any outcome $j \in [d]$ is only at the end points of the range.

\begin{lemma}\label{lem:cq_extremes}
For any neighboring databases $\bbx,\bbx' \in \cX$ there exists some $t \in [0,\ep]$ such that for any outcome $j \in [d]$

\[
\log\left( \frac{\Pr[M_{CQ}(\bbx) = j]}{\Pr[M_{CQ}(\bbx') = j]} \right) \in \left\{t-\ep,t\right\}
\]

\end{lemma}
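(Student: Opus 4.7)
The plan is to directly compute the log-ratio of sampling probabilities on two neighboring databases and exploit that the counting-query score contributes an additive $\{0,1\}$ term per added row, which pins the ratio to exactly two values. Using Proposition~\ref{prop:semantics} together with the monotonicity of $u$ (so $\widetilde\Delta u = \Delta u = 1$), I would realize $M_{CQ}$ as sampling $j \in [d]$ with probability $e^{\ep u(\bbx,j)}/Z(\bbx)$, where $Z(\bbx) := \sum_{j' \in [d]} e^{\ep u(\bbx,j')}$.

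Next, fix neighbors $\bbx,\bbx' \in \cX$ and assume without loss of generality that $\bbx' = \bbx \cup \{x_0\}$ for some $x_0 \in \{0,1\}^d$ (the deletion case follows by swapping roles). Because $u$ is a row-sum, $u(\bbx',j) - u(\bbx,j) = x_{0,j} \in \{0,1\}$, and a direct calculation gives
\[
\log\frac{\Pr[M_{CQ}(\bbx)=j]}{\Pr[M_{CQ}(\bbx')=j]} \;=\; \log\frac{Z(\bbx')}{Z(\bbx)} \;-\; \ep\,x_{0,j}.
\]
Hence, as $j$ varies, the log-ratio takes only the two values $\log(Z(\bbx')/Z(\bbx))$ (when $x_{0,j}=0$) and $\log(Z(\bbx')/Z(\bbx))-\ep$ (when $x_{0,j}=1$).

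Finally, I set $t := \log(Z(\bbx')/Z(\bbx))$ and verify $t \in [0,\ep]$. Term by term, $e^{\ep u(\bbx',j)} = e^{\ep x_{0,j}} \cdot e^{\ep u(\bbx,j)}$ with $e^{\ep x_{0,j}} \in \{1,e^\ep\}$, so summing over $j$ yields $Z(\bbx) \leq Z(\bbx') \leq e^\ep Z(\bbx)$, whence $t \in [0,\ep]$ and the log-ratio lies in $\{t-\ep,\,t\}$. For the deletion case $\bbx = \bbx' \cup \{x_0\}$, the same computation gives a log-ratio in $\{s,\,s+\ep\}$ with $s := \log(Z(\bbx')/Z(\bbx)) \in [-\ep,0]$; reparametrizing by $t := s + \ep$ brings it back into the form $\{t-\ep,\,t\}$ with $t \in [0,\ep]$. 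No real technical obstacle arises --- the lemma reduces to the observation that a single added row of $\{0,1\}$-entries multiplies each term of the partition function by a factor in $\{1,e^\ep\}$, and that two-valued per-term perturbation is exactly what produces the two-valued log-ratio.
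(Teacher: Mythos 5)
Your proposal is correct and follows essentially the same route as the paper: write the sampling probabilities as $e^{\ep u(\cdot,j)}/Z(\cdot)$ (no factor of 2, since $u$ is monotone), set $t$ to be the log-ratio of the normalizing constants, bound $t\in[0,\ep]$ from the per-term relation $u(\bbx,j)\le u(\bbx',j)\le u(\bbx,j)+1$, and observe the log-ratio equals $t-\ep x_{0,j}\in\{t-\ep,t\}$, with the deletion case handled by the same computation up to reparametrizing $t$.
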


\begin{proof}
We first assume that $\bbx' = \bbx + x_i$ where $x_i \in \{0,1\}^d$. We first set 

\[
t = \log \left( \frac{\sum_{j\in [d]} e^{\ep u(\bbx',j)}}{\sum_{j\in [d]} e^{\ep u(\bbx,j)}} \right)
\]

Note that we must have $t \in [0,\ep]$ because $u(\bbx,j) + 1 \geq u(\bbx',j) \geq u(\bbx,j)$ for all $j \in [d]$. We can then reduce our probability log-ratio to

\[
\log\left( \frac{\Pr[M_{CQ}(\bbx) = j]}{\Pr[M_{CQ}(\bbx') = j]} \right) = t + \log \left(\frac{e^{\ep u(\bbx,j)}}{e^{\ep u(\bbx',j)}} \right)
\]

Applying our assumption that $\bbx' = \bbx + x_i$, by the definition of $u$ we have $u(\bbx',j) = u(\bbx,j) + x_{i,j}$, which reduces our expression to

\[
\log\left( \frac{\Pr[M_{CQ}(\bbx) = j]}{\Pr[M_{CQ}(\bbx') = j]} \right) = t - \ep x_{i,j}
\]
and this implies our desired result because $x_{i,j} \in \{0,1\}$. We assumed $\bbx' = \bbx + x_i$ and considering the other case is equivalent to flipping the fraction, where it follows from natural log properties that

\[
\log\left( \frac{\Pr[M_{CQ}(\bbx') = j]}{\Pr[M_{CQ}(\bbx) = j]} \right) = \ep x_{i,j} - t
\]
which also implies our desired result because $\ep - t \in [0,\ep]$.

\end{proof}

This result is exactly why we consider the relation between this mechanism and generalized random response to be analogous to the relation between geometric noise and randomized response. For any outcome in the geometric mechanism, the magnitude of the log-ratio is always $\ep$, but unlike randomized response there are many more than two possible outcomes. Essentially, we can consider geometric noise and this counting query mechanism to split the outcomes of their respective randomized response into many outcomes, which will be the post-processing function.

\begin{corollary}
For any neighboring databases $\bbx^0,\bbx^1$ then there must exist some $t \in [0,\ep]$ and post-processing functions $\phi$ and $\phi'$ such that $M_{CQ}(\bbx^b) \equiv \phi(\grr{\ep,t}(b))$ and $\phi'(M_{CQ}(\bbx^b)) \equiv \grr{\ep,t}(b)$

\end{corollary}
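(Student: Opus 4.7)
The first equivalence $M_{CQ}(\bbx^b) \equiv \phi(\grr{\diffp,t}(b))$ is essentially a specialization of Lemma~\ref{lem:reduc} to the mechanism $M_{CQ}$, with $t$ taken from Lemma~\ref{lem:cq_extremes}. Alternatively, $\phi$ may be written down explicitly as sampling from the conditional distribution of $M_{CQ}(\bbx^0)$ restricted to the appropriate partition block described below. So the substantive step is constructing $\phi'$, matched to the same value of $t$.

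For $\phi'$, I would first assume WLOG that the neighbors are $\bbx^0$ and $\bbx^1 = \bbx^0 + x_i$ for some $x_i \in \{0,1\}^d$; the case $\bbx^1 = \bbx^0 - x_i$ is symmetric with $t$ replaced by $\diffp - t$, as already observed in the proof of Lemma~\ref{lem:cq_extremes}. Then partition $[d]$ according to the two log-ratio values supplied by that lemma: let $S_0 = \{j : x_{i,j}=0\}$ (on which $\log(\Pr[M_{CQ}(\bbx^0)=j]/\Pr[M_{CQ}(\bbx^1)=j]) = t$) and $S_1 = \{j : x_{i,j}=1\}$ (on which the log-ratio equals $t-\diffp$), and take $\phi'$ to be the deterministic coarsening sending $S_0$ to $0$ and $S_1$ to $1$. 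This is the natural choice because all outcomes within a block share the same likelihood ratio, so collapsing them loses no distinguishing power.

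To finish, I would verify $\phi'(M_{CQ}(\bbx^b)) \equiv \grr{\diffp,t}(b)$ by matching the single-bit probabilities $\Pr[\,\cdot = 0\,]$. Setting $A = \Pr[M_{CQ}(\bbx^0) \in S_0]$, Lemma~\ref{lem:cq_extremes} gives $\Pr[M_{CQ}(\bbx^1) \in S_0] = e^{-t} A$ and $\Pr[M_{CQ}(\bbx^1) \in S_1] = e^{\diffp-t}(1-A)$; the normalization $e^{-t}A + e^{\diffp-t}(1-A) = 1$ then pins down $A = (e^\diffp - e^t)/(e^\diffp - 1)$. Clearing denominators in the definitions of $q_{\diffp,t}$ and $p_{\diffp,t}$ by multiplication by $e^\diffp$ shows $A = q_{\diffp,t}$ and $e^{-t} A = p_{\diffp,t}$, exactly the two identities needed. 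The only real obstacle is the bookkeeping required to ensure a single $t$ is valid for both $\phi$ and $\phi'$ simultaneously and to handle the sign reversal in the $\bbx^1 = \bbx^0 - x_i$ case; there is no deeper technical difficulty, and the corollary drops out from the extremal structure of $M_{CQ}$ established in Lemma~\ref{lem:cq_extremes}.
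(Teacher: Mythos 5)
Your proposal is correct and takes essentially the same approach as the paper: partition $[d]$ by the two log-ratio values supplied by Lemma~\ref{lem:cq_extremes}, take $\phi'$ to be the induced two-block coarsening, and match the Bernoulli probabilities to those of $\grr{\diffp,t}$. The paper phrases the partition directly in terms of the log-ratio (so the $\bbx^1 = \bbx^0 - x_i$ case needs no separate handling), and observes that the common log-ratio values together with normalization pin down the probabilities rather than solving for $A$ explicitly, but the argument is the same.
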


\begin{proof}
Applying Lemma~\ref{lem:cq_extremes}, we split the outcome indices in the following way with $b' \in \{0,1\}$

\[
\cI_{b'} = \left\{j \in [d] : \log\left( \frac{\Pr[M_{CQ}(\bbx^0) = j]}{\Pr[M_{CQ}(\bbx^1) = j]} \right) = t - \ep b' \right\}.
\]
It is straightforward to see from Definition~\ref{defn:gen_rr} that we also have 

\[
\log\left( \frac{\Pr[\grr{\ep,t}(0) = b']}{\Pr[\grr{\ep,t}(1) = b']} \right) = t - \ep b'.
\]
Therefore, we must have for any $b \in \{0,1\}$ and $b' \in \{0,1\}$ that

\[
\Pr[\grr{\ep,t}(b) = b'] = \sum_{j \in \cI_{b'}} \Pr[M_{CQ}(\bbx^b) = j]
\]
and our claim follows easily.
\end{proof}

From Corollary~\ref{cor:reduc}, we know that the adaptive composition of BR mechanisms can be reduced to the class of generalized random responses and that this class is parameterized over all $t \in [0,\ep]$. 
In our proof of Lemma~\ref{lem:cq_extremes} we showed that the value $t$ came from the log-ratio of the sum of exponential functions.
For our definition of $\cX$, the number of neighboring databases is countably infinite, so it is technically impossible for there to always exist some neighboring databases with a corresponding $t$ over the uncountably infinite interval $[0,\ep]$. However, we can find neighboring databases that give a log-ratio arbitrarily close to any given $t \in [0,\ep]$,
i.e. the set of possible $t$ values from neighboring databases is dense in $[0,\ep]$, and for all practical purposes we can consider them equivalent.
Therefore, the adaptive composition game with this simple instantiation of the exponential mechanism is equivalent to an adversary being restricted to the class of generalized randomized response mechanisms at each round.  This is comparable to the result in \citet{KairouzOhVi17} that shows that the geometric mechanism achieves the worst case privacy composition bound since it also achieves the same privacy region as the standard randomized response once the neighboring datasets are fixed at each round.

\section{Nonadaptive optimal composition}\label{sec:nonadaptive} 

In this section, we first give the explicit formulation for the optimal composition of nonadaptive BR mechanisms originally stated in Lemma~\ref{lem:non_interactive_GRR}. The majority of the section will then be devoted to reducing this formulation to a simpler formula that can be computed in $O(k^2)$ time for the homogeneous composition case, i.e. all privacy parameters are the same at each round. This will then culminate in a proof of Theorem~\ref{thm:non-adaptive}.

 We will denote $\bbt = (t_1, \cdots, t_k) \in \prod [0,\ep_i]$ where $\prod [0,\ep_i] \defeq [0,\diffp_1] \times \cdots \times [0,\diffp_k]$ and if all $\diffp_i = \diffp$ we will simply write $[0,\ep]^k$.  Recall from \eqref{eq:nonadaptBR}, we will denote the family of nonadaptive BR mechanisms as $\nonadaptBR^k$ for the homogeneous case and $\nonadaptBR^{1:k}$ for the heterogeneous case.  Recall that we defined the optimal privacy parameters by fixing a global $\diffp_g$ and giving a formula for $\delta_\opt$ in terms of $\diffp_g$ as in \eqref{eq:opt_delta}.  Our first formulation follows immediately from Lemma~\ref{lem:pp_rr}. 

\begin{lemma}\label{lem:reduction_to_grr}
\begin{align*}
& \delta_{\opt}(\nonadaptBR^{1:k},\diffp_g)\\
& \qquad = \sup_{\bbt \in \prod [0,\ep_i]} \max_{\bbb \in \{0,1\}^k} \sum_{\bby \in \{0,1\}^k} \max \left\{ \prod_{i=1}^k \Pr[\grr{\diffp_i,t_i}(b_i) = y_i] - e^{\diffp_g} \prod_{i=1}^k \Pr[\grr{\diffp_i,t_i}(1 - b_i) = y_i], 0 \right\}.
\end{align*}
\end{lemma}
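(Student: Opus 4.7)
The plan is to prove both inequalities in the stated equality, leveraging Fact~\ref{fact:delta_opt} together with Lemma~\ref{lem:pp_rr} (and the data-processing inequality for the hockey-stick divergence, which is exactly the quantity inside $\delta_{\opt}$).

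First, I would invoke Fact~\ref{fact:delta_opt} to rewrite
\[
\delta_{\opt}(\nonadaptBR^{1:k},\diffp_g) = \sup_{M \in \nonadaptBR^{1:k}} \sup_{x \sim x'} \sum_{\bby \in \cY_1\times\cdots\times\cY_k} \max\bigl\{\Pr[M(x) = \bby] - e^{\diffp_g}\Pr[M(x') = \bby],\,0\bigr\},
\]
and use the product structure $\Pr[M(x) = \bby] = \prod_i \Pr[M_i(x) = y_i]$ for $M = M_1\times\cdots\times M_k$.

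For the $\le$ direction, fix any $M \in \nonadaptBR^{1:k}$ together with an ordered pair of neighbors $(x^0,x^1)$. Applied coordinate-by-coordinate, Lemma~\ref{lem:pp_rr} hands us, for each $i$, a value $t_i = t(M_i,x^0,x^1) \in [0,\diffp_i]$ and a post-processing $\phi_i$ with $M_i(x^b) \stackrel{d}{=} \phi_i(\grr{\diffp_i,t_i}(b))$ for $b\in\{0,1\}$. Hence the joint law of $M(x^b)$ is the image of the product $\prod_i \grr{\diffp_i,t_i}(b)$ under the single (data-independent) post-processing $\Phi = \phi_1\times\cdots\times\phi_k$. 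Because the functional $D_{e^{\diffp_g}}(P,Q) := \sum_\bbz\max\{P(\bbz) - e^{\diffp_g}Q(\bbz),0\}$ is a hockey-stick divergence and therefore monotone under post-processing, the summand for $M$ on $(x^0,x^1)$ is bounded above by the $\bbb=(0,\ldots,0)$ term of the right-hand side at this $\bbt$. Taking the supremum over $M$ and over ordered pairs of neighbors yields the $\le$ inequality.

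For the $\ge$ direction, I would exhibit an explicit product BR mechanism realizing each $(\bbt,\bbb)$. Fix $\bbt \in \prod_i[0,\diffp_i]$ and $\bbb\in\{0,1\}^k$, take $\cX=\{x^0,x^1\}$ with $x^0\sim x^1$, and define $M_i$ by $M_i(x^0) \stackrel{d}{=} \grr{\diffp_i,t_i}(b_i)$ and $M_i(x^1) \stackrel{d}{=} \grr{\diffp_i,t_i}(1-b_i)$. Each $M_i$ is just $\grr{\diffp_i,t_i}$ with its two input labels possibly swapped, so it inherits the $\diffp_i$-BR property, and $M_1\times\cdots\times M_k \in \nonadaptBR^{1:k}$. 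The $(x^0,x^1)$-summand for this product mechanism is precisely the expression inside the $\max_\bbb\sup_\bbt$ on the right-hand side, so passing to the supremum and maximum gives the reverse inequality.

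The only real subtlety in this argument is the monotonicity of $D_{e^{\diffp_g}}$ under post-processing, which is a standard $f$-divergence fact that I would cite (or verify in one line using Jensen/convexity of $z\mapsto \max\{z,0\}$); everything else is a direct book-keeping exercise on top of the reduction Lemma~\ref{lem:pp_rr}. Note also that the $\max_\bbb$ in the statement is essentially cosmetic: it can be absorbed into $\sup_\bbt$ via the symmetry $t_i\leftrightarrow \diffp_i - t_i$ that swaps the two inputs of $\grr{\diffp_i,t_i}$ up to an output relabeling, but keeping $\bbb$ explicit makes the $\ge$ construction more transparent.
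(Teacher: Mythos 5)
Your proposal is correct and follows essentially the same route as the paper's (much terser) proof: reduce to $\grr{\diffp_i,t_i}$ via Lemma~\ref{lem:pp_rr} together with closure under post-processing, and then apply Definition~\ref{def:optimal_delta} and Fact~\ref{fact:delta_opt}. You simply spell out both inequalities explicitly—including the two-point-universe construction showing each $(\bbt,\bbb)$ is realized by a genuine product of $\diffp_i$-BR mechanisms—which the paper leaves implicit.
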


\begin{proof}
We know that DP is closed under post-processing, so from Lemma~\ref{lem:pp_rr} we can restrict our consideration to $\grr{\diffp_i,t_i}$ for $t_i \in [0,\diffp_i]$, along with $b_i \in \{0,1\}$. The formulation then follows from Definition~\ref{def:optimal_delta} and Fact~\ref{fact:delta_opt}.
\end{proof}

We have the following symmetry result for the generalized randomized response mechanism, which will be useful in our analysis.

\begin{claim}\label{claim:symmetric}
For any $b \in \{0,1\}$ along with $\diffp \geq 0$ and $t \in [0,\diffp]$ we have
\[
\Pr[\grr{\diffp,t}(b) = b] = \Pr[\grr{\diffp,\diffp-t}(1-b) = 1-b].
\]
\end{claim}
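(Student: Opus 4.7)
The plan is to verify the claim by direct substitution into Definition~\ref{defn:gen_rr}, treating the two values of $b$ separately. Since the statement is a pure identity between probabilities that are explicitly given by closed-form expressions in $\diffp$ and $t$, no conceptual machinery is needed: everything reduces to reading off values from the definition and simplifying.

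First I would handle the case $b = 0$. On the left, $\Pr[\grr{\diffp,t}(0)=0] = q_{\diffp,t} = \frac{1-e^{t-\diffp}}{1-e^{-\diffp}}$. On the right, $\Pr[\grr{\diffp,\diffp-t}(1)=1] = 1 - p_{\diffp,\diffp-t} = \frac{1 - e^{-(\diffp-t)}}{1-e^{-\diffp}} = \frac{1 - e^{t-\diffp}}{1-e^{-\diffp}}$, and the two sides agree.

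Next I would handle $b=1$. On the left, $\Pr[\grr{\diffp,t}(1)=1] = 1 - p_{\diffp,t} = \frac{1-e^{-t}}{1-e^{-\diffp}}$. On the right, $\Pr[\grr{\diffp,\diffp-t}(0)=0] = q_{\diffp,\diffp-t} = \frac{1-e^{(\diffp-t)-\diffp}}{1-e^{-\diffp}} = \frac{1-e^{-t}}{1-e^{-\diffp}}$, again agreeing.

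There is essentially no main obstacle; the only thing to note is that the transformation $t \mapsto \diffp - t$ swaps the roles of $q$ and $1-p$ (and of $1-q$ and $p$), which is precisely the symmetry that makes the identity hold. I would conclude by observing, as a sanity check, that this symmetry corresponds to the fact that relabeling the input bit in the generalized randomized response is equivalent to replacing $t$ by $\diffp - t$, so the claim is really just the statement that $\grr{\diffp,t}$ and $\grr{\diffp,\diffp-t}$ are related by swapping the input coordinate.
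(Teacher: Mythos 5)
Your proof is correct and is exactly the routine verification the paper intends (the paper states the claim without proof and later notes the same identities $q_{\diffp,\diffp-t}=1-p_{\diffp,t}$ and $p_{\diffp,\diffp-t}=1-q_{\diffp,t}$ in Lemma~\ref{lem:MGFexpression}). Your two case computations from Definition~\ref{defn:gen_rr} check out, and your closing remark correctly identifies the underlying symmetry of swapping the input bit with replacing $t$ by $\diffp-t$.
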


We then use this symmetry property to show that the choice of $b_i$ is irrelevant.

\begin{corollary}\label{cor:symmetry_grr}
For any $\bbt \in \prod [0,\ep_i]$ and $\bbb \in \{0,1\}^k$,  and some fixed $b \in \{0,1\}$, there exists $\bbt' \in \prod [0,\ep_i]$ such that 
\begin{multline*}
\sum_{\bby \in \{0,1\}^k} \max \left\{ \prod_{i=1}^k \Pr[\grr{\diffp_i,t_i}(b_i) = y_i] - e^{\diffp_g} \prod_{i=1}^k \Pr[\grr{\diffp_i,t_i}(1 - b_i) = y_i], 0 \right\} \\
=
\sum_{\bby \in \{0,1\}^k} \max \left\{ \prod_{i=1}^k \Pr[\grr{\diffp_i,t'_i}(b) = y_i] - e^{\diffp_g} \prod_{i=1}^k \Pr[\grr{\diffp_i,t'_i}(1 - b) = y_i], 0 \right\} 
\end{multline*}

\end{corollary}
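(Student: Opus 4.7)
My plan is to exploit a slightly stronger symmetry than the one stated in Claim~\ref{claim:symmetric}, namely the \emph{full} symmetry of the generalized random response: for every $c,y\in\{0,1\}$, $\diffp\geq 0$, and $t\in[0,\diffp]$,
\[
\Pr[\grr{\diffp,t}(c)=y]\;=\;\Pr[\grr{\diffp,\diffp-t}(1-c)=1-y].
\]
This identity is a direct calculation from the four probabilities in Definition~\ref{defn:gen_rr}: for instance $\Pr[\grr{\diffp,t}(0)=0]=q_{\diffp,t}=\frac{1-e^{t-\diffp}}{1-e^{-\diffp}}$, while $\Pr[\grr{\diffp,\diffp-t}(1)=1]=1-p_{\diffp,\diffp-t}=\frac{1-e^{-(\diffp-t)}}{1-e^{-\diffp}}$, and the remaining three cases are analogous. (In particular, setting $c=y=b$ recovers Claim~\ref{claim:symmetric}.)

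Given this, I would construct $\bbt'$ and an index-wise relabeling of $\bby$ that makes the two sums in the corollary equal term-by-term. Concretely, for each $i\in[k]$ define
\[
t_i' \;=\; \begin{cases} t_i & \text{if } b_i = b,\\ \diffp_i - t_i & \text{if } b_i = 1-b,\end{cases}
\]
which lies in $[0,\diffp_i]$ in either case. Then define the involution $\sigma:\{0,1\}^k\to\{0,1\}^k$ coordinate-wise by $\sigma(\bby)_i = y_i$ if $b_i=b$ and $\sigma(\bby)_i = 1-y_i$ if $b_i=1-b$. The map $\sigma$ is clearly a bijection.

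Applying the full symmetry identity above with $(c,y)=(b_i,y_i)$ when $b_i=1-b$ (and trivially when $b_i=b$), I get $\Pr[\grr{\diffp_i,t_i}(b_i)=y_i]=\Pr[\grr{\diffp_i,t_i'}(b)=\sigma(\bby)_i]$, and similarly $\Pr[\grr{\diffp_i,t_i}(1-b_i)=y_i]=\Pr[\grr{\diffp_i,t_i'}(1-b)=\sigma(\bby)_i]$. Taking products over $i$, the summand indexed by $\bby$ on the left equals the summand indexed by $\sigma(\bby)$ on the right (both inside the $\max\{\cdot,0\}$). Since $\sigma$ is a bijection on $\{0,1\}^k$, reindexing the sum on the right via $\bby\mapsto\sigma(\bby)$ yields equality of the two expressions.

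There is no real obstacle here beyond verifying the symmetry identity cleanly and keeping careful track of the two independent ``flips'' (flipping the input argument $b_i\leftrightarrow 1-b_i$ and flipping the outcome $y_i\leftrightarrow 1-y_i$). The only thing worth flagging is that $\sigma$ acts by flipping exactly those coordinates where $b_i\neq b$, which is the same coordinate set on which $t_i$ is replaced by $\diffp_i-t_i$; this consistency is what makes the matching work for \emph{both} factors in the difference $\prod \Pr[\grr{}(b_i)=y_i] - e^{\diffp_g}\prod\Pr[\grr{}(1-b_i)=y_i]$ simultaneously.
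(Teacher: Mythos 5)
Your proof is correct and takes essentially the same route as the paper: the paper also sets $t_i'=t_i$ when $b_i=b$ and $t_i'=\diffp_i-t_i$ when $b_i\neq b$ and invokes Claim~\ref{claim:symmetric}, leaving the outcome-relabeling implicit. Your explicit four-case symmetry identity and the coordinate-flipping bijection $\sigma$ simply spell out the details the paper's one-line argument omits.
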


\begin{proof}

If $b_i = b$, then we can simply set $t'_i = t_i$. If $b_i \neq b$, then from Claim~\ref{claim:symmetric} we can set $t'_i = \ep_i - t_i$ and the value of the summation will not change.
\end{proof}

It then follows that we can fix $b\in \{0,1\}$ to give a simpler expression, and this expression is also a generalization of the optimal composition bound in Theorem~\ref{thm:MVcomp}, where instead of the $\sup$ term over $\bbt \in \prod [0,\ep_i]$, we can set each $t_i = \diffp_i/2$, and this becomes the optimal composition of $\frac{\ep_i}{2}$-DP mechanisms.

\nonadaptivehet*

\begin{proof}
Follows immediately from applying Corollary~\ref{cor:symmetry_grr} with $b = 0$ to Lemma~\ref{lem:reduction_to_grr}.
\end{proof}

\subsection{Simplifying the optimal composition bound for the homogeneous case}
Although we have a formula for the optimal composition bound over BR mechanisms, it is intractable to compute for even modest values of $k$.  To help simplify things, we will now restrict our consideration to the homogeneous case, where all $\diffp_i = \diffp \geq 0$, and we will drop the $\diffp$ from our notation, e.g. $p_{\diffp,t_i} \equiv p_{t_i}$.  We conjecture that the heterogeneous case has a similar hardness result to compute as the result in \citet{MurtaghVa16}, but we leave that as an open problem.  

Since we have shown that $\delta_\opt(\nonadaptBR^k,\diffp_g)$ can be written as a $\sup$ over $\bbt \in [0,\ep]^k$, we will define the function $\delta: [0,\ep]^k \times \R \to [0,1]$ as the following

\begin{equation}
\delta(\bbt,\diffp_g) \defeq \sum_{S \subseteq \{1,...,k\}} \max\left\{\prod_{i \notin S} q_{t_i} \prod_{i \in S} (1 - q_{t_i}) - e^{\diffp_g}\prod_{i \notin S}p_{t_i} \prod_{i \in S} (1 - p_{t_i}), 0 \right\}.
\label{eq:delta_t}
\end{equation}
Written in this way, we have $\delta_\opt(\nonadaptBR^k,\diffp_g) = \sup_{\bbt \in [0,\ep]^k} \delta(\bbt,\diffp_g)$.  We first show that when $\diffp_g\notin (-k\diffp,k\diffp)$, then the choice of $\delta(\bbt,\diffp_g)$ does not depend on $\bbt \in [0,\ep]^k$.  However, this region for $\diffp_g$ is not typically interesting in most DP applications, since $\diffp_g = k\diffp$ is simply applying \emph{basic composition} from \citet{DworkMcNiSm06}.

\begin{lemma}\label{lem:ep_g_edge_case}
For any $\bbt \in [0,\ep]^k$, if $\diffp_g \leq -k\diffp$ then $\delta(\bbt,\diffp_g) = 1 - e^{\diffp_g}$, and if $\diffp_g \geq k\diffp$ then $\delta(\bbt,\diffp_g) = 0$.
\end{lemma}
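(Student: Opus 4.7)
The plan is to reduce both cases to the single algebraic identity
\[
q_{t_i}/p_{t_i} = e^{t_i}, \qquad (1-q_{t_i})/(1-p_{t_i}) = e^{t_i - \diffp},
\]
which follows directly from the definitions of $p_{\diffp,t_i}$ and $q_{\diffp,t_i}$ in Definition~\ref{defn:gen_rr}. Using these identities, every summand in \eqref{eq:delta_t} can be rewritten by factoring out a common exponential: setting $T \defeq \sum_{i=1}^k t_i$, one gets
\[
\prod_{i\notin S} q_{t_i}\prod_{i\in S}(1-q_{t_i}) = e^{T - |S|\diffp}\prod_{i\notin S} p_{t_i}\prod_{i\in S}(1-p_{t_i}),
\]
so each term in the sum becomes $\max\bigl\{(e^{T-|S|\diffp}-e^{\diffp_g})\prod_{i\notin S}p_{t_i}\prod_{i\in S}(1-p_{t_i}),\,0\bigr\}$. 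The sign of the bracketed quantity is controlled entirely by whether $T-|S|\diffp$ exceeds $\diffp_g$.

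For the first case $\diffp_g \geq k\diffp$, since each $t_i\in[0,\diffp]$ we have $T\leq k\diffp$, and hence $T-|S|\diffp \leq T \leq k\diffp \leq \diffp_g$ for every $S\subseteq\{1,\dots,k\}$. Thus $e^{T-|S|\diffp}-e^{\diffp_g}\leq 0$ for every $S$, every max equals zero, and $\delta(\bbt,\diffp_g)=0$.

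For the second case $\diffp_g \leq -k\diffp$, we have $T-|S|\diffp \geq 0 - k\diffp = -k\diffp \geq \diffp_g$, so $e^{T-|S|\diffp}-e^{\diffp_g}\geq 0$ for every $S$. Therefore each max drops, and
\[
\delta(\bbt,\diffp_g) = \sum_{S} \Bigl(\prod_{i\notin S}q_{t_i}\prod_{i\in S}(1-q_{t_i}) - e^{\diffp_g}\prod_{i\notin S}p_{t_i}\prod_{i\in S}(1-p_{t_i})\Bigr).
\]
Each of the two sums telescopes via the identity $\sum_{S\subseteq[k]}\prod_{i\notin S}a_i\prod_{i\in S}(1-a_i) = \prod_{i=1}^k\bigl(a_i+(1-a_i)\bigr)=1$, applied once with $a_i = q_{t_i}$ and once with $a_i=p_{t_i}$. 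This yields $\delta(\bbt,\diffp_g) = 1 - e^{\diffp_g}$.

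There is no real obstacle here; the only thing to verify carefully is the ratio identity $q_{t_i}=e^{t_i}p_{t_i}$ (a one-line computation clearing the common denominator $1-e^{-\diffp}$) and the endpoint $\diffp_g=-k\diffp$, where the argument of the max may equal zero for some $S$ but the identity still holds since adding a zero term does not change the sum.
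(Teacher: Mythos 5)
Your proof is correct and follows essentially the same route as the paper's: rewrite each summand using $q_{t_i}=e^{t_i}p_{t_i}$ and $1-q_{t_i}=e^{t_i-\diffp}(1-p_{t_i})$, observe that the sign of $e^{\sum t_i - |S|\diffp}-e^{\diffp_g}$ is uniform over all $S$ in each regime, and in the $\diffp_g \leq -k\diffp$ case sum the probabilities to $1$. No issues.
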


\begin{proof}
Using the fact that $q_{t} = e^t p_t$ and $(1 - q_t) = e^{t-\diffp} (1 - p_t)$, we equivalently have 

\[
\delta(\bbt,\diffp_g)  = \sum_{S \subseteq \{1,...,k\}} \prod_{i \notin S}p_{t_i} \prod_{i \in S} (1 - p_{t_i}) \max\left\{e^{\sum t_i - |S|\diffp} - e^{\diffp_g}, 0 \right\}
\]

If $\diffp_g \geq k\diffp$ then $\max\{e^{\sum t_i - |S|\diffp} - e^{\diffp_g}, 0 \} = 0$ for any $S \subseteq \{1,\ldots,k\}$. Similarly, if $\diffp_g \leq -k\diffp$ then $\max\{e^{\sum t_i - |S|\diffp} - e^{\diffp_g}, 0 \} = e^{\sum t_i - |S|\diffp} - e^{\diffp_g}$ for any $S \subseteq \{1,\ldots,k\}$ and we get 
\[
\delta(\bbt,\diffp_g)  = \sum_{S \subseteq \{1,...,k\}} \left( \prod_{i \notin S} q_{t_i} \prod_{i \in S} (1 - q_{t_i}) - e^{\diffp_g}\prod_{i \notin S}p_{t_i} \prod_{i \in S} (1 - p_{t_i})\right) = 1 - e^{\diffp_g} 
\]
\end{proof}

For the remainder of our analysis, we will focus on the interesting setting where $\diffp_g \in (-k\diffp,k\diffp)$.  Despite the large domain $[0,\ep]^k$ of values to choose from in the $\sup_\bbt$ for $\delta_\opt$, we show that it suffices to consider the much smaller domain where each $t_i = t^*$ for some $t^*$ for each $i \in [k]$.  This result is crucial in determining a formula that can be computed efficiently for $\delta_\opt$.  
We first give an easy condition on what the $t_i$ must satisfy to optimize the $\delta$ parameter which will be important for proving a strict inequality in the subsequent claim. 

\begin{lemma}\label{lem:bound_sum_t}
If $\diffp_g \in (-k\diffp,k\diffp)$ then for any $\bbt \in [0,\ep]^k$ such that $\delta(\bbt,\diffp_g) = \delta_{\opt}(\nonadaptBR^k,\diffp_g)$, we must have 
\[
\diffp_g < \sum_{i=1}^k t_i < \diffp_g + k\diffp
\]

\end{lemma}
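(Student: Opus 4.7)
\medskip

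\textbf{Proof plan.} The plan is to work from the equivalent form of $\delta(\bbt,\diffp_g)$ already derived inside the proof of Lemma~\ref{lem:ep_g_edge_case},
\[
\delta(\bbt,\diffp_g) \;=\; \sum_{S \subseteq [k]} \prod_{i\notin S} p_{t_i}\prod_{i\in S}(1-p_{t_i})\,\max\!\left\{e^{\sum_i t_i - |S|\diffp}-e^{\diffp_g},\,0\right\},
\]
which obtained by absorbing $e^{t_i}$ and $e^{t_i - \diffp}$ into $q_{t_i}=e^{t_i}p_{t_i}$ and $(1-q_{t_i})=e^{t_i-\diffp}(1-p_{t_i})$.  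I will first show that the optimal $\delta_{\opt}(\nonadaptBR^k,\diffp_g)$ is strictly positive whenever $\diffp_g\in(-k\diffp,k\diffp)$: take $\bbt^*=(t^*,\ldots,t^*)$ with $t^*\in(\max\{0,\diffp_g/k\},\diffp)$, which is nonempty since $\diffp_g<k\diffp$; then the $S=\emptyset$ term contributes $p_{t^*}^{k}(e^{k t^*}-e^{\diffp_g})>0$. This strict lower bound on $\delta_{\opt}$ will drive both inequalities.

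\emph{Lower bound $\sum t_i>\diffp_g$.} If $\sum_i t_i\le \diffp_g$ then for every $S$, $\sum_i t_i-|S|\diffp\le \diffp_g$, so every max term vanishes and $\delta(\bbt,\diffp_g)=0<\delta_{\opt}$. Hence such $\bbt$ cannot achieve the supremum. (Note the condition is automatic when $\diffp_g<0$ since $\sum t_i\ge 0$; only the case $\diffp_g\ge 0$ needs this argument.)

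\emph{Upper bound $\sum t_i<\diffp_g+k\diffp$.} If $\sum_i t_i\ge \diffp_g+k\diffp$, then for every $S\subseteq[k]$ we have $\sum_i t_i-|S|\diffp\ge \diffp_g$, so every max equals its argument and the two resulting sums of products of probabilities each telescope to $1$, giving $\delta(\bbt,\diffp_g)=1-e^{\diffp_g}$. When $\diffp_g\ge 0$, this is $\le 0<\delta_{\opt}$, finishing the case. When $\diffp_g\in(-k\diffp,0)$, $1-e^{\diffp_g}$ is positive and the bound $\delta\ge 0$ is not enough; the key step is to exhibit a strictly better $\bbt'$. Take $\bbt'=(t^\sharp,\ldots,t^\sharp)$ with $k t^\sharp=\diffp_g+k\diffp-\eta$ for $\eta\in(0,\diffp)$ small, so $t^\sharp\in(0,\diffp)$ and $\sum t^\sharp$ lies in $\bigl(\diffp_g+(k-1)\diffp,\,\diffp_g+k\diffp\bigr)$. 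In this regime only the $S=[k]$ max vanishes, and a direct calculation using $(1-q_{t^\sharp})=e^{t^\sharp-\diffp}(1-p_{t^\sharp})$ yields
\[
\delta(\bbt',\diffp_g) \;=\; (1-e^{\diffp_g})\;+\;\prod_{i}(1-p_{t^\sharp})\cdot\bigl[e^{\diffp_g}-e^{\sum_i t^\sharp-k\diffp}\bigr],
\]
where the bracket equals $e^{\diffp_g}-e^{\diffp_g-\eta}>0$, contradicting optimality of any $\bbt$ with $\sum t_i\ge \diffp_g+k\diffp$.

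\emph{Main obstacle.} The only nontrivial piece is the upper bound in the regime $\diffp_g\in(-k\diffp,0)$: there $1-e^{\diffp_g}$ is genuinely positive, so one cannot just invoke $\delta_{\opt}>0$, and the specific near-boundary construction $kt^\sharp=\diffp_g+k\diffp-\eta$ is needed so that exactly one subset ($S=[k]$) falls below the threshold while the rest still contribute; the computation then delivers a strict improvement of magnitude $\prod_i(1-p_{t^\sharp})\cdot(e^{\diffp_g}-e^{\diffp_g-\eta})$.
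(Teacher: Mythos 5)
Your proposal is correct and follows essentially the same route as the paper: both rewrite $\delta(\bbt,\diffp_g)$ using $q_t=e^tp_t$ and $1-q_t=e^{t-\diffp}(1-p_t)$, observe that $\delta=0$ when $\sum t_i\le\diffp_g$ and $\delta=1-e^{\diffp_g}$ when $\sum t_i\ge\diffp_g+k\diffp$, and then show $\delta_{\opt}$ strictly exceeds these values by exhibiting interior points where a clipped (negative-argument) term forces a strict gain. Your only deviations are cosmetic: the paper argues in one stroke that any $\bbt$ with all $t_i\in(0,\diffp)$ and $\sum t_i<\diffp_g+k\diffp$ strictly beats $1-e^{\diffp_g}$, whereas you split on the sign of $\diffp_g$ and make the near-boundary witness $kt^\sharp=\diffp_g+k\diffp-\eta$ explicit.
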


\begin{proof}
Using the fact that $q_{t} = e^t p_t$ and $(1 - q_t) = e^{t-\diffp} (1 - p_t)$, we equivalently have 

\[
\delta(\bbt,\diffp_g)  = \sum_{S \subseteq \{1,...,k\}} \prod_{i \notin S}p_{t_i} \prod_{i \in S} (1 - p_{t_i}) \max\left\{e^{\sum t_i - |S|\diffp} - e^{\diffp_g}, 0 \right\}
\]

It then follows that if $\sum t_i \leq \diffp_g$ we must have
\[
\max\left\{e^{\sum t_i - |S|\diffp} - e^{\diffp_g}, 0 \right\} = 0
\]
for any $S$ and so $\delta(\bbt,\diffp_g) = 0$. However, if $\diffp_g < k\diffp$, then there must exist $\bbt$ such that $t_i < \diffp$ for each $i$ and $\sum t_i > \diffp_g$. Setting $S = \emptyset$ we must have $p_{t_i} > 0$ for all $i$ and $\max\{ e^{\sum t_i} - e^{\diffp_g} , 0\} > 0$. Therefore, $\delta_{\opt}(\nonadaptBR^k,\diffp_g) > 0$ and if $\sum t_i \leq \diffp_g$ we must have $\delta(\bbt,\diffp_g) < \delta_{\opt}(\nonadaptBR^k,\diffp_g)$.

Similarly, if $\sum t_i \geq \diffp_g + k\diffp$ we must have the following for any subset $S$
\[
\max\left\{e^{\sum t_i - |S|\diffp} - e^{\diffp_g}, 0 \right\} = e^{\sum t_i - |S|\diffp} - e^{\diffp_g}
\]
We then have the following,
\begin{multline*}
\delta(\bbt,\diffp_g) = \sum_{S \subseteq \{1,...,k\}} \prod_{i \notin S}p_{t_i} \prod_{i \in S} (1 - p_{t_i}) \left(e^{\sum t_i - |S|\diffp} - e^{\diffp_g}\right) 
\\
= \sum_{S \subseteq \{1,...,k\}} \prod_{i \notin S} q_{t_i} \prod_{i \in S} (1 - q_{t_i}) - e^{\diffp_g}\prod_{i \notin S}p_{t_i} \prod_{i \in S} (1 - p_{t_i}) = 1 - e^{\diffp_g}
\end{multline*}

By the same reasoning, we have $\delta(\bbt,\diffp_g) > 1 - e^{\diffp_g}$ if $e^{\sum t_i - |S|\diffp} - e^{\diffp_g} < 0$ for some $S \subseteq \{1,\cdots,k\}$ and all $t_i \in (0,\diffp)$, which implies $p_{t_i} \in (0,1)$ for all $i$. Accordingly, we have $\delta(\bbt,\diffp_g) > 1 - e^{\diffp_g}$ if $\sum t_i < \diffp_g + k \diffp$, and if $\diffp_g > -k\diffp$, there must exist positive $t_i$ such that $\sum t_i < \diffp_g + k \diffp$. Therefore  if $\sum t_i \geq \diffp_g + k\diffp$, we must have $\delta(\bbt,\diffp_g) < \delta_{\opt}(\nonadaptBR^k,\diffp_g)$.
\end{proof}

The next lemma shows that taking the average of some $t_i,t_j$ can only increase the value of $\delta(\bbt,\diffp_g)$. Further, this will strictly increase the $\delta$ when the $t_i$ satisfy the condition of the lemma above. We will be able to easily conclude from this that $\delta$ cannot be optimal if $t_i \neq t_j$ for some $i,j$

\begin{lemma}\label{lem:gen_convexity} For any $\diffp_g \in \R$ and $\bbt \in [0,\ep]^k$,
\[
\delta(\bbt,\diffp_g) \leq \delta\left( \left(\frac{t_1 + t_2}{2}, \frac{t_1 + t_2}{2}, t_3,...,t_k\right), \diffp_g\right)
\]
Further, the inequality is strict whenever $\diffp_g< \sum t_i <\diffp_g + k\diffp $ and $t_1 \neq t_2$.
\end{lemma}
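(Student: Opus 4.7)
The strategy is to exploit the fact that $t_1, t_2$ enter the $\max$ terms of $\delta(\bbt, \ep_g)$ only through their sum $t_1+t_2$, which is preserved under centering $(t_1, t_2) \mapsto (\bar t, \bar t)$ with $\bar t := (t_1+t_2)/2$. Grouping the $2^k$ subsets in the defining sum by $S' := S \cap \{3,\ldots,k\}$, the contribution from each $S'$ factors as
\[
A(S')\sum_{j=0}^{2} W_j(t_1,t_2)\,\max\!\big\{B(S')\,e^{t_1+t_2-j\ep} - e^{\ep_g},\,0\big\},
\]
where $A(S') := \prod_{i\in\{3,\ldots,k\}\setminus S'} p_{t_i}\prod_{i\in S'}(1-p_{t_i})$, $B(S') := \exp\!\big(\sum_{i\geq 3} t_i - |S'|\ep\big)$, and $W_0 = p_{t_1}p_{t_2}$, $W_1 = p_{t_1}(1-p_{t_2}) + (1-p_{t_1})p_{t_2}$, $W_2 = (1-p_{t_1})(1-p_{t_2})$. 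Only the $W_j$'s are affected by centering.

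Writing $p_t = a e^{-t} - b$ with $a := 1/(1-e^{-\ep})$ and $b := e^{-\ep}/(1-e^{-\ep})$, and parameterizing $t_1 = \bar t + s,\ t_2 = \bar t - s$, direct computation gives
\[
W_0(0) - W_0(s) = 2ab\,e^{-\bar t}(\cosh s - 1) =: \gamma \geq 0,
\]
and similarly $W_2(0) - W_2(s) = 2a^2 e^{-\bar t}(\cosh s - 1) = e^{\ep}\gamma$ using $a/b = e^{\ep}$, which forces $W_1(0) - W_1(s) = -(1+e^{\ep})\gamma$. Summing over $S'$, the difference $\delta(\bar t, \bar t, t_3,\ldots; \ep_g) - \delta(t_1, t_2, t_3, \ldots; \ep_g)$ equals $\gamma \sum_{S'} A(S')\,\Psi(z_{S'})$, where $z_{S'} := B(S')\, e^{2\bar t - \ep_g}$ and
\[
\Psi(z) := e^{\ep_g}\Big[\max\{z-1,0\} + e^{\ep}\max\{ze^{-2\ep}-1,0\} - (1+e^{\ep})\max\{ze^{-\ep}-1,0\}\Big].
\]

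The weak inequality then reduces to $\Psi(z) \geq 0$ for all $z \geq 0$, verified by short casework on the four regions $[0,1],\ (1,e^{\ep}],\ (e^{\ep}, e^{2\ep}],\ (e^{2\ep}, \infty)$, where $\Psi$ evaluates respectively to $0$, $e^{\ep_g}(z-1)$, $e^{\ep_g+\ep}(1-ze^{-2\ep})$, and $0$. Hence $\Psi(z) \geq 0$ everywhere, with strict positivity iff $1 < z < e^{2\ep}$. Combined with $\gamma \geq 0$ (equality iff $t_1 = t_2$) and $A(S') \geq 0$, this gives the weak inequality.

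For strict inequality under $t_1 \neq t_2$ (so $\gamma > 0$) and $\ep_g < \sum t_i < \ep_g + k\ep$, set $c := (\sum t_i - \ep_g)/\ep \in (0, k)$. The condition $1 < z_{S'} < e^{2\ep}$ is equivalent to $|S'| \in (c-2, c)$; this open interval of length $2$ always contains an integer in $\{0,1,\ldots,k-2\}$ for $c \in (0, k)$, and one first absorbs any degenerate coordinates $i \geq 3$ with $t_i \in \{0,\ep\}$ (for which $\grr{\ep, t_i}$ is deterministic) into a reduction so that every remaining $p_{t_i} \in (0,1)$ and $A(S') > 0$ for the chosen $S'$. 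The main obstacle is the casework establishing $\Psi(z) \geq 0$; it is elementary but requires care at the break points $z = e^{\ep}$ and $z = e^{2\ep}$, with the degenerate-$t_i$ bookkeeping a minor secondary technicality.
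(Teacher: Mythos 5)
Your proof of the weak inequality is correct and is essentially the paper's own argument in a more compact packaging: grouping the $2^k$ subsets by $S'=S\cap\{3,\ldots,k\}$ is exactly the paper's Claim~\ref{claim:sum_fixed_sets}; your four-region casework for $\Psi$ is precisely the four cases of the two-coordinate Lemma~\ref{lem:2_convexity} (the regions of $z$ correspond to where $t_1+t_2$ sits relative to the shifted global parameter); $\gamma\ge 0$ is the same convexity fact $2e^{-(t_1+t_2)/2}\le e^{-t_1}+e^{-t_2}$; and choosing an integer $|S'|\in(c-2,c)$ is the paper's choice of $\ell\in\{0,\ldots,k-2\}$ with $\diffp_g+\ell\diffp<\sum_i t_i<\diffp_g+(\ell+2)\diffp$. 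Fusing the reduction (Lemma~\ref{lem:set_fixed_convexity}) and the two-fold case into the single quantity $\gamma\sum_{S'}A(S')\Psi(z_{S'})$ is a tidy reorganization, not a different route.

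The one substantive issue is the strictness step, and it is the point you flagged but did not resolve: the ``absorb the degenerate coordinates so that $A(S')>0$ for the chosen $S'$'' fix cannot always be carried out. Take $k=4$, $\diffp=1$, $\diffp_g=2$, $\bbt=(0.3,0.1,1,1)$: the hypotheses $\diffp_g<\sum_i t_i<\diffp_g+k\diffp$ and $t_1\neq t_2$ hold, yet the only subset size with $A(S')>0$ is $|S'|=2$ (both spectator coordinates sit at $\diffp$, hence must lie inside $S'$), while $(c-2,c)=(-1.6,0.4)$ only admits $|S'|=0$. In this configuration both sides of the claimed strict inequality are in fact equal (both are $0$, since the only active max term, $S=\emptyset$, carries the factor $p_{\diffp}=0$), so no bookkeeping can rescue the argument: the strict part of the statement is simply false when spectator coordinates sit at the endpoints in this blocking way, and your calculation actually characterizes strictness exactly (it holds iff $t_1\neq t_2$ and some $S'$ has $A(S')>0$ with $|S'|\in(c-2,c)$, which is guaranteed whenever $t_i\in(0,\diffp)$ for all $i\ge 3$). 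Be aware, though, that the paper's own proof has the identical blind spot --- Lemma~\ref{lem:set_fixed_convexity} cancels the common prefactor without checking it is nonzero, and the chosen $S$ with $|S|=\ell$ is never checked to have positive weight --- so this is best regarded as a small, repairable defect of the lemma statement (add the interiority proviso, or note that the lemma is only ever invoked, via Corollary~\ref{cor:all_t_equal}, at purported maximizers) rather than a defect specific to your method.
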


The proof of this lemma will require quite a bit of technical detail which we relegate to Appendix~\ref{sec:nonadaptiveapp}. 
We then have the immediate corollary.

\begin{corollary}\label{cor:all_t_equal}
For any $\diffp_g \in (-k\diffp,k\diffp)$ we must have the following for any $\bbt \in [0,\ep]^k$ such that there exists some $t_i \neq t_j$
\[
\delta(\bbt,\diffp_g) < \delta_{\opt}(\nonadaptBR^k,\diffp_g).
\]
\end{corollary}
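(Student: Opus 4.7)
The corollary is a short consequence of Lemma~\ref{lem:bound_sum_t} and Lemma~\ref{lem:gen_convexity} via a case split on whether $\sum_{i=1}^{k} t_i$ lies in the open interval $(\diffp_g,\diffp_g+k\diffp)$. Throughout, I will use that the expression \eqref{eq:delta_t} defining $\delta(\bbt,\diffp_g)$ is symmetric under permutation of the coordinates of $\bbt$, which is immediate since the sum ranges over all subsets $S\subseteq\{1,\ldots,k\}$.

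Case 1: suppose $\sum_{i=1}^{k} t_i \notin (\diffp_g,\diffp_g+k\diffp)$. In this regime I invoke (the proof of) Lemma~\ref{lem:bound_sum_t}. That proof shows not only the stated contrapositive, but the stronger quantitative statement that $\delta(\bbt,\diffp_g)=0$ when $\sum_i t_i\leq \diffp_g$ and $\delta(\bbt,\diffp_g)=1-e^{\diffp_g}$ when $\sum_i t_i\geq \diffp_g+k\diffp$, while producing an explicit competing $\bbt'' \in [0,\ep]^k$ with $\delta(\bbt'',\diffp_g)$ strictly larger (using $\diffp_g\in(-k\diffp,k\diffp)$). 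Hence $\delta(\bbt,\diffp_g)<\delta_{\opt}(\nonadaptBR^k,\diffp_g)$, and this case does not require the hypothesis $t_i\neq t_j$.

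Case 2: suppose $\sum_{i=1}^{k} t_i \in (\diffp_g,\diffp_g+k\diffp)$ and $t_i\neq t_j$ for some $i,j$. By the coordinate symmetry of $\delta(\bbt,\diffp_g)$ noted above, I may relabel so that $t_1\neq t_2$ without changing the value of $\delta(\bbt,\diffp_g)$. The strict-inequality clause of Lemma~\ref{lem:gen_convexity} then applies (both hypotheses $t_1\neq t_2$ and $\diffp_g<\sum_i t_i<\diffp_g+k\diffp$ hold), giving
\[
\delta(\bbt,\diffp_g) \;<\; \delta\!\left(\left(\tfrac{t_1+t_2}{2},\tfrac{t_1+t_2}{2},t_3,\ldots,t_k\right),\diffp_g\right) \;\leq\; \delta_{\opt}(\nonadaptBR^k,\diffp_g),
\]
where the final inequality is simply the definition of $\delta_{\opt}$ as a supremum over $[0,\diffp]^k$ (the averaged vector still lies in $[0,\diffp]^k$).

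Combining the two cases yields the claim. I do not anticipate any real obstacle, since both Lemma~\ref{lem:bound_sum_t} and Lemma~\ref{lem:gen_convexity} do the substantive work; the only items needing care are (i) pulling the strict inequality out of the proof of Lemma~\ref{lem:bound_sum_t} rather than from its bare statement, and (ii) invoking coordinate symmetry to justify applying Lemma~\ref{lem:gen_convexity} to an arbitrary unequal pair $(t_i,t_j)$ rather than specifically to $(t_1,t_2)$.
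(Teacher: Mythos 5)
Your proof is correct and uses the same ingredients as the paper's: the permutation symmetry of $\delta(\bbt,\diffp_g)$ to reduce to $t_1 \neq t_2$, Lemma~\ref{lem:bound_sum_t} to handle (or exclude) the regime $\sum_i t_i \notin (\diffp_g,\diffp_g+k\diffp)$, and the strict-inequality clause of Lemma~\ref{lem:gen_convexity} combined with $\delta(\cdot,\diffp_g)\leq\delta_{\opt}$. The only difference is presentational — you argue directly via a case split where the paper argues by contradiction (assuming $\delta(\bbt,\diffp_g)=\delta_{\opt}$ and invoking Lemma~\ref{lem:bound_sum_t} to get the sum constraint); note also that in your Case 1 the bare statement of Lemma~\ref{lem:bound_sum_t}, together with $\delta(\bbt,\diffp_g)\leq\delta_{\opt}$, already suffices without reopening its proof.
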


\begin{proof}
We will prove by contradiction and suppose $\delta(\bbt,\diffp_g) = \delta_{\opt}(\nonadaptBR^k,\diffp_g) $ and $t_i \neq t_j$ for some pair of indices. Note that $\delta(\bbt,\diffp_g)$ is equal under permutation of the indices in $\bbt$, so without loss of generality, we let $t_1 \neq t_2$. From Lemma~\ref{lem:bound_sum_t}, we must have $\diffp_g < \sum t_i < \diffp_g+ k\diffp$. We then apply Lemma~\ref{lem:gen_convexity} to get our contradiction
\[
\delta(\bbt,\diffp_g) < \delta\left(\frac{t_1 + t_2}{2}, \frac{t_1 + t_2}{2}, t_3,...,t_k\right) \leq \delta_{\opt}(\cM_{\text{BR}}^k,\diffp_g) 
\]
\end{proof}

We now prove the simplified formula for the optimal privacy parameters for the family $\nonadaptBR^k$ of $\diffp$-BR mechanisms, although in the next subsection, we show that we can restrict the range $[0,\diffp]$ that the $\sup$ is over  a smaller set.

\begin{lemma}\label{lem:bin_reduction_delta}
For any $\diffp_g \in \R$ and $\diffp \geq 0$
\begin{equation}
\delta_{\opt}(\nonadaptBR^k,\diffp_g) = \sup_{t \in [0,\diffp]} \sum_{i = 0}^k {k \choose i} p_{t}^{k-i}(1 - p_{t})^{i} \max\left\{\left( e^{kt - i\diffp} - e^{\diffp_g} \right), 0 \right\} 
\label{eq:opt_sup_t}
\end{equation}
\end{lemma}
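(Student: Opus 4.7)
The plan is to split into the trivial boundary cases $\diffp_g\notin(-k\diffp,k\diffp)$ and the interior case $\diffp_g\in(-k\diffp,k\diffp)$, and then in the interior case use the ``all $t_i$ equal'' reduction from Corollary~\ref{cor:all_t_equal} to collapse the sum over subsets $S\subseteq[k]$ into a binomial sum.

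For the interior case, I would first note that the function $\bbt\mapsto\delta(\bbt,\diffp_g)$ defined in \eqref{eq:delta_t} is continuous on the compact cube $[0,\diffp]^k$ (it is a finite sum of maxima of continuous functions), so the supremum is attained at some $\bbt^*$. By Corollary~\ref{cor:all_t_equal}, any maximizer must satisfy $t^*_i=t^*_j$ for all $i,j$, so there is some $t^*\in[0,\diffp]$ with $\bbt^*=(t^*,\ldots,t^*)$. Hence $\delta_{\opt}(\nonadaptBR^k,\diffp_g)=\sup_{t\in[0,\diffp]}\delta((t,\ldots,t),\diffp_g)$.

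Next I would evaluate $\delta((t,\ldots,t),\diffp_g)$. When all $t_i=t$, the summand in \eqref{eq:delta_t} depends on $S$ only through $|S|$, so grouping by $i=|S|$ gives
\[
\delta((t,\ldots,t),\diffp_g)=\sum_{i=0}^k\binom{k}{i}\max\!\left\{q_t^{k-i}(1-q_t)^i-e^{\diffp_g}p_t^{k-i}(1-p_t)^i,\,0\right\}.
\]
Using the identities $q_t=e^t p_t$ and $1-q_t=e^{t-\diffp}(1-p_t)$ (which follow directly from Definition~\ref{defn:gen_rr}), the first term factors as $q_t^{k-i}(1-q_t)^i=e^{kt-i\diffp}\,p_t^{k-i}(1-p_t)^i$, so I can pull the common $p_t^{k-i}(1-p_t)^i$ out of the max to obtain exactly the RHS of \eqref{eq:opt_sup_t}.

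For the boundary cases, I would directly check that the RHS of \eqref{eq:opt_sup_t} agrees with Lemma~\ref{lem:ep_g_edge_case}. If $\diffp_g\geq k\diffp$, then $kt-i\diffp\leq k\diffp\leq\diffp_g$ for every $t\in[0,\diffp]$ and $i\geq 0$, making every inner max equal to $0$, matching $\delta_{\opt}=0$. If $\diffp_g\leq-k\diffp$, then $kt-i\diffp\geq-k\diffp\geq\diffp_g$ for every $t$ and $i\leq k$, so the inner max is always $e^{kt-i\diffp}-e^{\diffp_g}$, and the same $q_t=e^tp_t$, $1-q_t=e^{t-\diffp}(1-p_t)$ substitution lets me recognize the binomial sum as $(q_t+(1-q_t))^k-e^{\diffp_g}(p_t+(1-p_t))^k=1-e^{\diffp_g}$ for any $t$, again matching Lemma~\ref{lem:ep_g_edge_case}.

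I do not expect any real obstacle here: all the heavy lifting (reducing the $\sup$ to the diagonal) has already been done in Corollary~\ref{cor:all_t_equal} via Lemma~\ref{lem:gen_convexity}. The only care needed is to justify the attainment of the supremum before applying the corollary (via compactness/continuity), and to handle the two boundary regimes for $\diffp_g$ separately since Corollary~\ref{cor:all_t_equal} is stated only for the interior.
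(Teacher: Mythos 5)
Your proposal is correct and follows essentially the same route as the paper: reduce to the diagonal via Corollary~\ref{cor:all_t_equal}, group subsets by size using $q_t = e^t p_t$ and $1-q_t = e^{t-\diffp}(1-p_t)$, and dispose of $\diffp_g \notin (-k\diffp, k\diffp)$ via Lemma~\ref{lem:ep_g_edge_case}. Your explicit compactness/continuity argument for attainment of the supremum before invoking the corollary is a small extra care the paper leaves implicit, but it does not change the structure of the argument.
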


\begin{proof}
By Lemma~\ref{lem:non_interactive_GRR} and our definition for $\delta(\bbt,\diffp_g)$ given in \eqref{eq:delta_t},
$
\delta_{\opt}(\nonadaptBR^k,\diffp_g) = \sup_{\bbt \in [0,\ep]^k} \delta(\bbt,\diffp_g).
$
From Corollary~\ref{cor:all_t_equal} we know that for $\diffp_g \in (-k\diffp,k\diffp)$,
\[
\delta_{\opt}(\nonadaptBR^k,\diffp_g) = \sup_{t \in [0,\diffp]} \delta(t,\ldots,t,\diffp_g).
\]
Furthermore, we know if $\diffp_g \geq k\diffp$ then $\delta(\bbt,\diffp_g) = 0$ for any $\bbt \in [0,\ep]^k$, and also if $\diffp_g \leq -k\diffp$ then $\delta(\bbt,\diffp_g) = 1 - e^{\diffp_g}$ for any $\bbt \in [0,\ep]^k$. Therefore,
\begin{multline*}
\delta_{\opt}(\nonadaptBR^k,\diffp_g) = \sup_{t \in [0,\diffp]} \sum_{S \subseteq \{1,...,k\}} \prod_{i \notin S}p_{t} \prod_{i \in S} (1 - p_{t}) \max\left\{e^{kt - |S|\diffp} - e^{\diffp_g}, 0 \right\} 
\\
= \sup_{t \in [0,\diffp]} \sum_{S \subseteq \{1,...,k\}} p_{t}^{k - |S|}  (1 - p_{t})^{|S|} \max\left\{e^{kt - |S|\diffp} - e^{\diffp_g}, 0 \right\}
\end{multline*}
For each $i \in \{0, 1, \cdots, k\}$ there are ${k \choose i}$ subsets $S \subseteq \{1,\ldots,k\}$ such that $|S| = i$, and grouping these together gives our desired equality.
\end{proof}

\subsection{Efficiently computing the optimal composition bound}

Now that we have a much simpler formulation of the optimal composition for BR mechanisms in \eqref{eq:opt_sup_t}, we will solve for the $t \in [0,\diffp]$ that maximizes this expression. Ultimately, we will show that there are only $k$ different candidate values of $t$ that maximizes $\delta((t,t, \cdots, t), \diffp_g)$, and give explicit expressions for these candidate values of $t$. These explicit expressions will also be necessary in later sections when we show that there is a difference between the adaptive and nonadaptive setting.

Since we no longer need to consider any $\bbt \in [0,\ep]^k$ where $\bbt$ is not a scalar times the all ones vector, we will simplify our notation to be
\begin{equation}
\delta^k(t,\ep_g) \defeq 
\sum_{i = 0}^k {k \choose i} p_{t}^{k-i}(1 - p_{t})^{i} \max\left\{\left( e^{kt - i\diffp} - e^{\diffp_g} \right), 0 \right\}.
\label{eq:delta_k}
\end{equation}

Given that we want to find the $t$ which maximizes this expression, our goal will be to take the partial derivative of this function with respect to $t$. The maximization within the expression will make this more difficult, however, because the maximization is over a variable term and zero, we will always be able to write $\delta_\opt$ in terms of the following function $F_\ell$ for some $\ell \in \{0, \cdots, k\}$ that will depend on $t$.

\begin{equation}
F_{\ell}(t,\diffp_g) \defeq \sum_{i = 0}^{\ell} {k \choose i} p_{t}^{k-i}(1 - p_{t})^{i} \left( e^{kt - i\diffp} - e^{\diffp_g} \right).
\label{eq:F_t}
\end{equation}
This function is differentiable and we show its relation to $\delta^k(t,\ep_g)$.

\begin{lemma}\label{lem:f_t_equal_delta}
For any $\ep_g \in \R$, $\diffp \geq 0$, and $t \in [0,\ep]$, there must exist some $\ell \in [k]$ such that 

\[
\delta^k(t,\ep_g) = F_{\ell}(t,\ep_g).
\]

\end{lemma}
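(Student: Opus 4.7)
The plan is to exploit the monotonicity in $i$ of the quantity $e^{kt-i\diffp} - e^{\diffp_g}$ appearing inside the $\max\{\cdot,0\}$ in the definition of $\delta^k(t,\ep_g)$. Since $\diffp \ge 0$, this quantity is non-increasing in $i$ (strictly decreasing when $\diffp>0$; the case $\diffp=0$ is trivial because $\grr{0,0}$ is deterministic and both $\delta^k$ and $F_\ell$ reduce to the same boundary expression). Therefore, as $i$ traverses $\{0,1,\ldots,k\}$, the set $\bigl\{i : e^{kt-i\diffp} \ge e^{\diffp_g}\bigr\}$ is an initial segment of $\{0,1,\ldots,k\}$. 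I will take $\ell = \ell(t,\ep_g)$ to be its largest element, which one can write explicitly as $\ell = \min\!\bigl(k,\,\lfloor (kt-\ep_g)/\diffp\rfloor\bigr)$ when $kt \ge \ep_g$, and as $\ell = -1$ (with the empty-sum convention $F_{-1}\equiv 0$) when $kt < \ep_g$.

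With this choice of $\ell$, for every $i \in \{0,1,\ldots,k\}$ the identity
\[
\max\!\bigl\{e^{kt-i\diffp} - e^{\diffp_g},\;0\bigr\} \;=\; \bigl(e^{kt-i\diffp} - e^{\diffp_g}\bigr)\cdot \1{i \le \ell}
\]
holds by construction of $\ell$. Substituting this into the definition of $\delta^k(t,\ep_g)$ in \eqref{eq:delta_k} and truncating the summation range to $i \in \{0,1,\ldots,\ell\}$ gives
\[
\delta^k(t,\ep_g) \;=\; \sum_{i=0}^{\ell}\binom{k}{i}\,p_t^{\,k-i}(1-p_t)^i\bigl(e^{kt-i\diffp} - e^{\diffp_g}\bigr) \;=\; F_\ell(t,\ep_g),
\]
which is the desired conclusion.

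There is no substantive obstacle: the entire argument amounts to the fact that the positive part operator truncates a monotone sequence at a single threshold, so the pointwise $\max$ inside the sum can be removed by restricting the range of summation. The only bookkeeping point is the degenerate case $kt<\ep_g$, where every summand is $\le 0$ and hence $\delta^k(t,\ep_g)=0$; this is matched by the empty-sum convention at $\ell=-1$, or if one insists on $\ell\in\{0,1,\ldots,k\}$ the case can be folded into $\ell=0$ whenever it is consistent (for example when $kt=\ep_g$ exactly, in which case $F_0(t,\ep_g)=p_t^k(e^{kt}-e^{\ep_g})=0=\delta^k(t,\ep_g)$). Either way, the lemma follows from a single monotonicity observation and a split of the sum at the threshold index.
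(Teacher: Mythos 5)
Your proof is correct and follows essentially the same route as the paper's: both rely on the monotonicity of $e^{kt-i\diffp}-e^{\diffp_g}$ in $i$ to identify a single threshold index $\ell$ at which the positive-part operator switches off, and then drop the $\max$ by truncating the sum at $\ell$. Your extra bookkeeping on the degenerate case $kt<\ep_g$ (with $t<\ep$, so $p_t>0$) is a genuine observation the paper glosses over: there $\delta^k(t,\ep_g)=0$ while $F_\ell(t,\ep_g)<0$ for every $\ell\in\{0,\ldots,k\}$, so the lemma as literally stated needs the $\ell=-1$ convention (or a restriction to $kt\ge\ep_g$); this never affects the downstream result since such $t$ are never optimizers of $\delta^k(\cdot,\ep_g)$.
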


\begin{proof}
Note that $e^{kt - i\diffp} - e^{\diffp_g}$ decreases as $i$ increases, which implies that for any $t \in [0,\diffp]$ there must exist some $\ell$ such that $\max\{e^{kt - i\diffp} - e^{\diffp_g}, 0\} = e^{kt - i\diffp} - e^{\diffp_g}$ for all $i \leq \ell$ and $\max\{e^{kt - i\diffp} - e^{\diffp_g}, 0\} = 0$ for all $i > \ell$. Therefore, because $p_t$ and $(1 - p_t)$ are non-negative we have

\[
\delta^k(t,\diffp_g) = F_{\ell}(t,\diffp_g).
\]

\end{proof}

It then follows that optimizing over $t \in [0,\ep]$ for $\delta^k(t,\ep_g)$ can be reduced to optimizing over $t \in [0,\ep]$ for each $F_{\ell}(t,\ep_g)$.

\begin{corollary}\label{cor:f_t_max_over_all}
	For any $\diffp_g \in \R$ and $\diffp \geq 0$,
	
	\[
	\delta_{\opt}(\nonadaptBR^k,\diffp_g)
	= \max_{0 \leq \ell \leq k} \{ \sup_{t \in [0,\diffp]} F_{\ell}(t,\diffp_g) \}.
	\]

\end{corollary}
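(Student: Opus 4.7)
The plan is to prove the two inequality directions separately, which should be routine given the two preceding lemmas. The starting point is Lemma~\ref{lem:bin_reduction_delta}, which rewrites $\delta_{\opt}(\nonadaptBR^k,\diffp_g)$ as $\sup_{t\in[0,\diffp]}\delta^k(t,\diffp_g)$ for $\delta^k$ defined in \eqref{eq:delta_k}. The only nontrivial ingredient beyond this is the term-by-term relationship between $\delta^k$ and the family $\{F_\ell\}_{\ell=0}^k$ from \eqref{eq:F_t}.

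For the $\leq$ direction, I would establish the pointwise bound $F_\ell(t,\diffp_g)\leq \delta^k(t,\diffp_g)$ for every $\ell\in\{0,\dots,k\}$ and every $t\in[0,\diffp]$. This follows by comparing the summands indexed by $i$: the binomial-type weights $\binom{k}{i}p_t^{k-i}(1-p_t)^i$ are nonnegative, and for $i\leq\ell$ we have $e^{kt-i\diffp}-e^{\diffp_g}\leq\max\{e^{kt-i\diffp}-e^{\diffp_g},0\}$, while for $i>\ell$ the term in $F_\ell$ is absent and the corresponding term in $\delta^k$ is nonnegative. Taking $\sup_{t\in[0,\diffp]}$ and then $\max_{0\leq\ell\leq k}$ on both sides and invoking Lemma~\ref{lem:bin_reduction_delta} gives $\max_\ell\sup_t F_\ell(t,\diffp_g)\leq\delta_{\opt}(\nonadaptBR^k,\diffp_g)$.

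For the $\geq$ direction, I would apply Lemma~\ref{lem:f_t_equal_delta} pointwise: for each $t\in[0,\diffp]$ there exists some index $\ell(t)\in\{0,\dots,k\}$ such that $\delta^k(t,\diffp_g)=F_{\ell(t)}(t,\diffp_g)$, and clearly $F_{\ell(t)}(t,\diffp_g)\leq \sup_{s\in[0,\diffp]}F_{\ell(t)}(s,\diffp_g)\leq\max_{0\leq\ell\leq k}\sup_{s\in[0,\diffp]}F_\ell(s,\diffp_g)$. Taking the supremum over $t$ on the left and using Lemma~\ref{lem:bin_reduction_delta} closes the loop. Combining the two inequalities yields the claimed identity; since $\ell$ ranges over a finite set, the outer operation is genuinely a maximum. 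I do not anticipate a real obstacle here: the substance lives in Lemmas~\ref{lem:bin_reduction_delta} and~\ref{lem:f_t_equal_delta}, and this corollary is just the packaging step that separates the nonsmooth $\max\{\cdot,0\}$ inside $\delta^k$ into a finite collection of smooth candidates $F_\ell$, which is what makes the subsequent derivative-based optimization over $t$ tractable.
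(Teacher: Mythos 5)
Your argument is correct and follows essentially the same route the paper intends: the $\geq$ direction comes from Lemma~\ref{lem:f_t_equal_delta} (each $\delta^k(t,\diffp_g)$ is realized as $F_{\ell(t)}(t,\diffp_g)$ for some $\ell(t)$), and the $\leq$ direction comes from the pointwise bound $F_{\ell}(t,\diffp_g)\leq\delta^k(t,\diffp_g)$ together with Lemma~\ref{lem:bin_reduction_delta}. One small remark: the paper's one-line proof asserts $F_{\ell}(t,\diffp_g)\geq\delta^k(t,\diffp_g)$, which appears to be a typo --- the direction you state ($\leq$) is the correct one, as $\delta^k$ both keeps the nonnegative tail terms that $F_{\ell}$ drops and replaces each retained term $e^{kt-i\diffp}-e^{\diffp_g}$ by its nonnegative part.
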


\begin{proof}
Follows immediately from Lemma~\ref{lem:f_t_equal_delta} and because for any $\ep_g$ and $t\in [0,\ep]$, by definition $F_{\ell}(t,\ep_g) \geq \delta^k(t,\ep_g)$ for all $\ell$.

\end{proof}

We will now individually solve each $\sup_{t \in [0,\diffp]} F_{\ell}(t,\diffp_g)$, which does not contain a maximization term and is differentiable. Our ultimate goal will be to solve $\frac{\partial F_{\ell}(t,\diffp_g)}{\partial t} = 0$, and we want explicit expressions for $t$, which will require a simple formulation of the partial derivate with respect to $t$. These explicit expressions will also be necessary for proving that there is a gap between the nonadaptive and adaptive settings. 
The proof for this will become quite involved with some surprisingly nice cancellation, and we relegate the details to Appendix~\ref{sec:nonadaptiveapp}.

\begin{lemma}\label{lem:partial_derivative_full}
For $\diffp_g \in \R$, $\diffp \geq 0$, and $0 \leq \ell \leq k$
\[
\frac{\partial F_{\ell}(t,\diffp_g)}{\partial t} = (k - \ell) {k \choose \ell} p_{t}^{k - 1 - \ell} (1 - p_{t})^{\ell} \frac{1}{1 - e^{-\diffp}} \left(e^{\diffp_g - t} - e^{kt - (\ell + 1)\diffp} \right).
\]

\end{lemma}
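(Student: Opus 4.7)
The plan is to recognize that $F_\ell(t,\diffp_g)$ is essentially a difference of two binomial CDF-type sums, one evaluated at $p_t$ and another at $q_t$, after which the result will follow from a classical telescoping identity for the derivative of a partial binomial sum combined with the chain rule.

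First I would rewrite the summand of $F_\ell$ using the identities $q_t = e^t p_t$ and $1-q_t = e^{t-\diffp}(1-p_t)$ from Definition~\ref{defn:gen_rr}. These give
\[
\binom{k}{i} p_t^{k-i}(1-p_t)^i\, e^{kt - i\diffp} \;=\; \binom{k}{i}(e^t p_t)^{k-i}(e^{t-\diffp}(1-p_t))^i \;=\; \binom{k}{i} q_t^{k-i}(1-q_t)^i.
\]
Hence, defining the partial binomial sum $B(\ell,k,p) := \sum_{i=0}^{\ell}\binom{k}{i} p^{k-i}(1-p)^i$, we obtain the clean representation
\[
F_\ell(t,\diffp_g) \;=\; B(\ell,k,q_t) \;-\; e^{\diffp_g}\, B(\ell,k,p_t).
\]

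Next I would prove the telescoping identity
\[
\frac{\partial B(\ell,k,p)}{\partial p} \;=\; (k-\ell)\binom{k}{\ell} p^{k-\ell-1}(1-p)^\ell.
\]
This follows by differentiating termwise, using $i\binom{k}{i}=k\binom{k-1}{i-1}$ and $(k-i)\binom{k}{i}=k\binom{k-1}{i}$, and observing that the two resulting sums are index-shifts of one another so all but the boundary term at $i=\ell$ cancel. Combining with the chain rule and the derivatives
\[
p_t'(t) \;=\; \frac{-e^{-t}}{1-e^{-\diffp}}, \qquad q_t'(t) \;=\; \frac{-e^{t-\diffp}}{1-e^{-\diffp}},
\]
gives $\frac{\partial F_\ell}{\partial t}$ as the sum of two terms, one with factor $-e^{\diffp_g}$ coming from the $p_t$-sum and one coming from the $q_t$-sum.

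Finally I would simplify the $q_t$-contribution by pushing the factors $e^t$ and $e^{t-\diffp}$ out of $q_t^{k-\ell-1}(1-q_t)^\ell$: the exponent collects as $(k-\ell-1)t + \ell(t-\diffp) + (t-\diffp) = kt - (\ell+1)\diffp$. The $p_t$-contribution simplifies immediately to a multiple of $e^{\diffp_g - t}$. Factoring the common prefactor $(k-\ell)\binom{k}{\ell} p_t^{k-\ell-1}(1-p_t)^\ell / (1-e^{-\diffp})$ out of both terms yields exactly the claimed formula.

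The only subtle step is the telescoping argument for $\partial B/\partial p$; everything else is bookkeeping. One must also handle the edge case $\ell=k$ (where the stated formula evaluates to $0$), which is automatic since in that case $F_k(t,\diffp_g) = 1 - e^{\diffp_g}$ is constant in $t$, matching the $(k-\ell)$ factor.
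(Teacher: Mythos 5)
Your proof is correct, and it takes a genuinely different route from the paper. The paper proves the formula by induction on $\ell$: it differentiates the base case $F_0(t,\diffp_g)=p_t^k(e^{kt}-e^{\diffp_g})$ directly, then differentiates each top term $f_\ell(t,\diffp_g)={k \choose \ell}p_t^{k-\ell}(1-p_t)^\ell(e^{kt-\ell\diffp}-e^{\diffp_g})$ via the product rule and combines it with the inductive hypothesis, relying on a fairly lengthy algebraic cancellation together with the identity $(k-\ell+1){k \choose \ell-1}=\ell{k\choose\ell}$. You instead observe that ${k\choose i}p_t^{k-i}(1-p_t)^i e^{kt-i\diffp}={k\choose i}q_t^{k-i}(1-q_t)^i$, so that $F_\ell(t,\diffp_g)=B(\ell,k,q_t)-e^{\diffp_g}B(\ell,k,p_t)$ for the partial binomial sum $B$, and then all of the cancellation is packaged into the single classical telescoping identity $\partial B(\ell,k,p)/\partial p=(k-\ell){k\choose\ell}p^{k-\ell-1}(1-p)^\ell$, after which the chain rule with $p_t'=-e^{-t}/(1-e^{-\diffp})$ and $q_t'=-e^{t-\diffp}/(1-e^{-\diffp})$ and the exponent bookkeeping $(k-\ell-1)t+\ell(t-\diffp)+(t-\diffp)=kt-(\ell+1)\diffp$ give the claim; I checked these steps and they are all valid, including the $\ell=k$ edge case you flag. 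The paper's induction needs no structural insight but pays for it with more computation, while your decomposition makes the mechanism of the ``surprisingly nice cancellation'' transparent (it is exactly the derivative-of-a-binomial-CDF identity), reuses the same $q_t=e^tp_t$, $1-q_t=e^{t-\diffp}(1-p_t)$ rewriting that the paper already employs elsewhere (e.g.\ in Lemma~\ref{lem:ep_g_edge_case} and Corollary~\ref{cor:non_adap_recursion}), and yields a shorter, more conceptual argument.
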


In order to prove that there is a gap between composition of adaptive and nonadaptive BR mechanisms, we will further utilize this exact characterization of the partial derivative to give a strict interpretation of the set of $t$ that can achieve a maximization of our full expression. However, for giving an efficiently computable expression for optimal composition, the following simple corollary will suffice.

\begin{corollary}\label{cor:vals_of_t}
For $\diffp_g \in \R$, $\diffp \geq 0$, and $0 \leq \ell \leq k$ 
\[
\arg\sup_{t \in [0,\diffp]}  F_{\ell}(t,\diffp_g)  \in \left\{0,\diffp, \frac{\diffp_g + (\ell + 1) \diffp}{k + 1} \right\}.
\]
\end{corollary}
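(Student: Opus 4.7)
The plan is to reduce the problem to solving $\partial F_\ell / \partial t = 0$ using the clean factored derivative from Lemma~\ref{lem:partial_derivative_full}. Since $F_\ell(\cdot,\diffp_g)$ is continuous on the compact interval $[0,\diffp]$, the supremum is attained, and any maximizer is either an endpoint $\{0,\diffp\}$ or an interior critical point. So I only need to identify interior zeros of the derivative.

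By Lemma~\ref{lem:partial_derivative_full},
\[
\frac{\partial F_\ell(t,\diffp_g)}{\partial t} = (k-\ell)\binom{k}{\ell}\frac{p_t^{k-1-\ell}(1-p_t)^\ell}{1-e^{-\diffp}}\Bigl(e^{\diffp_g-t}-e^{kt-(\ell+1)\diffp}\Bigr).
\]
On the open interval $(0,\diffp)$ one has $p_t\in(0,1)$, so when $\ell<k$ the prefactor $(k-\ell)\binom{k}{\ell}p_t^{k-1-\ell}(1-p_t)^\ell/(1-e^{-\diffp})$ is strictly positive and the sign of the derivative is governed entirely by the bracketed factor. Setting $e^{\diffp_g-t}=e^{kt-(\ell+1)\diffp}$ and taking logarithms yields the linear equation $\diffp_g-t=kt-(\ell+1)\diffp$, whose unique solution is $t=\tfrac{\diffp_g+(\ell+1)\diffp}{k+1}$. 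Hence when $\ell<k$, the only interior critical point is this value (when it lies in $(0,\diffp)$), and otherwise the derivative has constant sign on $(0,\diffp)$, forcing the sup to occur at an endpoint. In all cases the arg-sup lies in $\{0,\diffp,\tfrac{\diffp_g+(\ell+1)\diffp}{k+1}\}$.

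It remains to handle the degenerate case $\ell=k$, where the prefactor $(k-\ell)$ vanishes and Lemma~\ref{lem:partial_derivative_full} only tells us $\partial F_k/\partial t=0$. A short direct calculation shows $F_k(\cdot,\diffp_g)$ is in fact constant: using the identity $p_t+(1-p_t)e^{-\diffp}=e^{-t}$ (which follows immediately from $p_t=(e^{-t}-e^{-\diffp})/(1-e^{-\diffp})$), the binomial theorem gives $\sum_{i=0}^k\binom{k}{i}p_t^{k-i}(1-p_t)^i e^{kt-i\diffp}=e^{kt}(e^{-t})^k=1$, so $F_k(t,\diffp_g)=1-e^{\diffp_g}$ identically. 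Thus every $t\in[0,\diffp]$ is an arg-sup, and in particular the three candidate values qualify.

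I do not expect any real obstacle here: the heavy lifting is entirely inside Lemma~\ref{lem:partial_derivative_full}, whose factored form makes the critical point equation linear in $t$. The only subtlety is the $\ell=k$ edge case, which is handled by the one-line computation above.
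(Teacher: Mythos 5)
Your proof is correct and follows essentially the same route as the paper: both rely on the factored derivative from Lemma~\ref{lem:partial_derivative_full}, so that any maximizer is an endpoint of $[0,\diffp]$ or an interior zero of the bracketed factor, and the critical-point equation $\diffp_g - t = kt-(\ell+1)\diffp$ gives $t=\tfrac{\diffp_g+(\ell+1)\diffp}{k+1}$. Your explicit treatment of the degenerate case $\ell=k$ (where the prefactor vanishes identically and $F_k(t,\diffp_g)\equiv 1-e^{\diffp_g}$, so every $t$ is a maximizer) is a minor refinement the paper leaves implicit, but the argument is otherwise the same.
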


\begin{proof}
Note that $p_t = 1$ when $t = 0$ and $p_t = 0$ when $t = \diffp$. Therefore $\frac{\partial F_{\ell}(t,\diffp_g)}{\partial t} = 0$ when $t \in \{0,\diffp\}$ or when $\diffp_g - t = kt - (\ell + 1)\diffp$ which evaluates to $t = \frac{\diffp_g + (\ell + 1)\diffp}{k + 1}$.
\end{proof}

We can now prove our main theorem for this section that gives an efficient computation of optimal composition in the non-adaptive setting, which we restate here.

\nonadaptive*

\begin{proof}
From Lemma~\ref{lem:bin_reduction_delta} we have

\[
\delta_{\opt}(\nonadaptBR^k,\ep_g) = \sup_{t \in [0,\ep]} \delta^k(t,\ep_g)
\]

From Lemma~\ref{lem:f_t_equal_delta} and Corollary~\ref{cor:f_t_max_over_all} we can restrict our consideration to values of $t \in [0,\ep]$ that maximize $F_{\ell}(t,\ep_g)$ for some $\ell \in [k]$. Applying Corollary~\ref{cor:vals_of_t} we can then restrict our consideration to $t_{\ell}$ for all $\ell \in [k]$, along with $0$ and $\ep$.
Note that $p_t = 1$ when $t = 0$ and $p_t = 0$ when $t = \ep$, so it is straightforward to verify that $\delta^k(0,\ep_g) = \delta^k(\ep,\ep_g) = \max\{1-e^{\ep_g},0\}$ for any $\ep_g$. In the proof of Lemma~\ref{lem:bound_sum_t}, we showed that $\delta_{\opt}(\nonadaptBR^k,\ep_g) > 0$ and $\delta_{\opt}(\nonadaptBR^k,\ep_g) > 1 - e^{\ep_g}$ when $\ep_g \in (-k\ep,k\ep)$, so it is irrelevant whether we include $0,\ep$ in this setting. Finally, if $\ep_g \notin (-k\ep,k\ep)$, then from Lemma~\ref{lem:ep_g_edge_case} we have $\delta^k(t,\ep_g) = \delta_{\opt}(\mbr^k,\ep_g)  = \max\{1 - e^{\ep_g},0\}$ for any $t$.

For the running time, first note that for any $t$ we can compute $p_t^k(e^{kt}-e^{\ep_g})$ in $O(k)$ time. Further, for any $t$, if we are given the values ${k \choose i} p_t^{k-i} (1 - p_t)^i$ and $e^{kt - i\ep} $, then we can compute ${k \choose i + 1} p_t^{k-(i + 1)} (1 - p_t)^{i+1}$ and $e^{kt - (i + 1)\ep} $ in $O(1)$ time. Our running time of $O(k^2)$ then immediately follows.
\end{proof}

\section{Adaptive optimal composition}\label{sec:gap} 

In this section, we give the formulation for the optimal composition of BR mechanisms that can be chosen adaptively, which will be recursively defined and intractable even for reasonable $k$. We see no way to simplify this formulation and believe that exact computation (or even approximate) is hard, but we leave that for future work.
We further show that there is in fact a gap between the optimal composition bound in the adaptive and the nonadaptive cases for all $k \geq 2$, and that this gap exists for almost all non-trivial $\diffp_g$.

We will set up some notation that is similar to what we presented in Section~\ref{sect:added_power}, although we extend it here to the heterogeneous case, where $\diffp_1, \cdots, \diffp_k$ need not be the same. Given some fixed $\diffp_1,\ldots,\diffp_k$, and mechanisms $(\cM_1,\ldots,\cM_k)$ be such that $\cM_i$ is the class of $\diffp_i$-BR mechanisms. We then define the following family of mechanisms, which generalizes the homogeneous case $\abr^k$ given in \eqref{eq:class_adaptive_comp},

\begin{equation}
\abr^{1:k} \defeq \{\AdaComp(\cA,(\cM_1, \cdots, \cM_k),\cdot): \text{ adversary } \cA\}.
\label{eq:class_gen_adaptive_comp}
\end{equation}

The formulations and proofs in this section will rely upon recursive definitions, and it then becomes necessary to define the adaptive composition for different families of mechanisms, i.e. $\abr^{\ell:k} \defeq \{\AdaComp(\cA, (\cM_\ell,\ldots,\cM_k),\cdot): \text{ adversary } \cA\}$ for $\ell \in [k]$.

These definitions will then allow us to give an explicit recursive formulation of the optimal composition bounds for the $k$-fold adaptive composition of BR mechanisms. This formulation will follow from Corollary~\ref{cor:reduc} which allows us to restrict our consideration to deterministically choosing $t_i$ for our generalized random response, where this choice is conditional upon the previous outcomes. The proof will be straightforward, but notationally heavy.

\begin{lemma} \label{lem:adap_recursion_hetero}
Let $\abr^{1:k}$ be the class of adaptive $k$-fold composition of $\diffp_i$-BR mechanisms given in \eqref{eq:class_gen_adaptive_comp}, then for any $\diffp_g \in \R$ and setting
$
\delta_{\opt}(\abr^{k+1:k},\diffp_g) = \max\{1 - e^{\diffp_g}, 0 \}
$
we have,

\[
\delta_{\opt}(\abr^{1:k},\diffp_g) = \sup_{t_1 \in [0,\diffp_1]} \left\{ q_{\diffp_1,t_1} \delta_{\opt}(\abr^{2:k}, \diffp_g - t_1) + (1 - q_{\diffp_1,t_1})  \delta_{\opt}(\abr^{2:k}, \diffp_g + \diffp_1 - t_1) \right\}
\]
\end{lemma}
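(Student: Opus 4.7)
The plan is a two-level decomposition: first reduce to generalized random response, then condition on the first round's output so that the remaining $k-1$ rounds form an instance of $\abr^{2:k}$ at a shifted global budget. By Corollary~\ref{cor:reduc}, I may restrict attention to deterministic adversaries that at each round $i$ pick $t_i \in [0, \diffp_i]$ and an input bit for $\grr{\diffp_i, t_i}$, both depending only on prior outputs. Using the symmetry of Claim~\ref{claim:symmetric} at round one, I can fix the convention that hypothesis $b=0$ feeds $0$ into $\grr{\diffp_1, t_1}$ and hypothesis $b=1$ feeds $1$, since the opposite sign convention is already captured by replacing $t_1$ with $\diffp_1 - t_1$ in the outer supremum.

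Applying Fact~\ref{fact:delta_opt},
\[
\delta_{\opt}(\abr^{1:k}, \diffp_g) = \sup_{\cA} \sum_{y \in \{0,1\}^k} \max\bigl\{P_{\cA}(y) - e^{\diffp_g} Q_{\cA}(y),\ 0\bigr\},
\]
where $P_{\cA}, Q_{\cA}$ are the full transcript distributions under the two hypotheses. I split the outer sum by $y_1 \in \{0, 1\}$. Definition~\ref{defn:gen_rr} yields the identities
\[
q_{\diffp_1, t_1} = e^{t_1}\,p_{\diffp_1, t_1}, \qquad 1 - q_{\diffp_1, t_1} = e^{t_1 - \diffp_1}\,(1 - p_{\diffp_1, t_1}),
\]
which let me pull the round-one probability outside the positive part. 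On the $y_1 = 0$ slice this gives
\[
\max\bigl\{q_{\diffp_1, t_1} P' - e^{\diffp_g} p_{\diffp_1, t_1} Q',\ 0\bigr\} = q_{\diffp_1, t_1}\,\max\bigl\{P' - e^{\diffp_g - t_1} Q',\ 0\bigr\},
\]
with $P', Q'$ the conditional transcript distributions of $y_{2:k}$ under the continuation adversary induced by observing $y_1 = 0$; the $y_1 = 1$ slice factors analogously with prefactor $1 - q_{\diffp_1, t_1}$ and shifted budget $\diffp_g + \diffp_1 - t_1$.

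Summing each slice over $y_{2:k}$ turns the inner expressions into $\delta$-values for $(k-1)$-round adaptive games starting at round two. Since the continuation strategy at rounds $2, \ldots, k$ is itself a deterministic adaptive strategy over the $\diffp_i$-BR classes $\cM_2, \ldots, \cM_k$ and can be chosen independently on the two slices $y_1 \in \{0,1\}$, the overall $\sup_{\cA}$ decomposes as $\sup_{\cA_0}(\cdot) + \sup_{\cA_1}(\cdot)$; by Corollary~\ref{cor:reduc} each inner supremum equals $\delta_{\opt}(\abr^{2:k}, \cdot)$ at the corresponding shifted budget. Taking the remaining outer supremum over $t_1 \in [0, \diffp_1]$ gives the recursion. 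The base case $\delta_{\opt}(\abr^{k+1:k}, \diffp_g) = \max\{1 - e^{\diffp_g}, 0\}$ is just Fact~\ref{fact:delta_opt} applied to the empty composition, for which $P_{\cA} = Q_{\cA}$.

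The main subtlety is the decoupling of the two slices: the adversary only commits to its rounds-$2$-through-$k$ strategy after observing $y_1$, so the two continuations are genuinely independent strategies and the sup factorizes across the slices. Once that is in hand, the rest is algebraic bookkeeping with the $q, p$ identities above.
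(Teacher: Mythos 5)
Your proposal is correct and follows essentially the same route as the paper's proof: reduce via Corollary~\ref{cor:reduc} to deterministic adversaries choosing generalized randomized response, fix the round-one bit by the symmetry of Claim~\ref{claim:symmetric}, split on $y_1$, and use $q_{\diffp_1,t_1}=e^{t_1}p_{\diffp_1,t_1}$, $1-q_{\diffp_1,t_1}=e^{t_1-\diffp_1}(1-p_{\diffp_1,t_1})$ to pull the first-round factor out of the positive part so the inner suprema become $\delta_{\opt}(\abr^{2:k},\cdot)$ at the shifted budgets. The only cosmetic difference is that the paper packages this peeling step as an induction on $k$, while you state the slice-wise decoupling of the continuation adversary explicitly, which is the same key observation the paper uses implicitly.
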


\begin{proof}

We will prove this claim by induction, where the key will be to apply Corollary~\ref{cor:reduc} which gives that we can equivalently restrict our consideration to adversaries without their own randomness and only consider mechanisms in the generalized randomized response class.

For our base case of $k = 1$, we have from Fact~\ref{fact:delta_opt}, Corollary~\ref{cor:reduc}, and using $\abr^{1:1}$ with privacy parameter $\diffp_1$,
\[
\delta_{\opt}(\abr^{1:1},\diffp_g) = \sup_{t_1 \in [0,\diffp_1]} \sup_{b_1 \in \{0,1\}} \sum_{y_1 \in \{0,1\}} \max \left\{ \Pr[\grr{\diffp_1,t_1}(b_1) = y_1] - e^{\diffp_g}\Pr[\grr{\diffp_1,t_1}(1 - b_1) = y_1], 0 \right\}.
\]

The symmetry of generalized random response from Claim~\ref{claim:symmetric} implies that we can fix $b_1 = 0$, and this reduces to 
\[
\delta_{\opt}(\abr^{1:1},\diffp_g) = \sup_{t_1 \in [0,\diffp_1]} \big\{ \max \left\{ q_{\diffp_1,t_1} - e^{\diffp_g}p_{\diffp_1,t_1}, 0 \right\} + \max \left\{ (1 - q_{\diffp_1,t_1}) - e^{\diffp_g}(1 - p_{\diffp_1,t_1}), 0 \right\} \big\}.
\]

Using the fact that $q_{\diffp_1,t_1} = e^{t_1}p_{\diffp_1,t_1}$ and $(1 - q_{\diffp_1,t_1}) = e^{t_1 - \diffp_1}(1 - p_{\diffp_1,t_1})$, this reduces to our desired equality. We then assume for $k-1$, and again applying Fact~\ref{fact:delta_opt} and Corollary~\ref{cor:reduc} we have the following  for the deterministic adversary $\cA = (\cDe,\emptyset)$ without its own source of randomness and letting $\vcRR = (\cRR_1, \ldots,\cRR_k)$ be the class such that $\cRR_i \defeq \{\grr{\diffp_i,t_i} : t_i \in [0,\diffp_i]\}$, 
\begin{multline*}
\delta_{\opt}(\abr^{1:k},\diffp_g) = 
\\
\sup_{ \cA = (\cDe, \emptyset)} 
\sum_{\bby \in \{0,1\}^k} \max\left\{ \Pr[\AdaComp(\cA,\vcRR,b) = \bby] - e^{\diffp_g}\Pr[\AdaComp(\cA,\vcRR,1-b) = \bby],0\right\}.
\end{multline*}
We will expand this term by considering the first round where some $t_1 \in [0,\diffp_1]$ is chosen deterministically. Once again, we use the symmetry of generalized random response from Claim~\ref{claim:symmetric} to simply set $b_1 = 0$. The next choices are then dependent on this outcome, so the full expression becomes 
\begin{multline*}
\delta_{\opt}(\abr^{1:k},\diffp_g) = 
\\
\sup_{t_1 \in [0,\diffp_1]} \sum_{y_1 \in \{0,1\}}
\sup_{ \cA = (\cDe,\emptyset)} \bigg\{
\sum_{\bby \in \{0,1\}^{k-1}} 
\max\big\{\Pr[\grr{\diffp_1,t_1}(0) = y_1] \Pr[\AdaComp(\cA,(\cRR_{2},\cdots, \cRR_k),b) = \bby] 
\\
- e^{\diffp_g}\Pr[\grr{\diffp_1,t_1}(1) = y_1]\Pr[\AdaComp(\cA,(\cRR_{2},\cdots, \cRR_k),1-b) = \bby],0\big\} \bigg\}.
\end{multline*}

Again, we use the fact that $q_{\diffp_1,t_1} = e^{t_1}p_{\diffp_1,t_1}$ and $(1 - q_{\diffp_1,t_1}) = e^{t_1 - \diffp_1}(1 - p_{\diffp_1,t_1})$ to pull them outside of the maximum in the expression, so that for $y_1 = 0$ the inner term then reduces to  
\begin{multline*}
q_{\diffp_1,t_1} \Bigg( 
\sup_{ \cA=(\cDe,\emptyset)} \bigg\{
\sum_{\bby \in \{0,1\}^{k-1}} 
\max\big\{ \Pr[\AdaComp(\cA,(\cRR_2, \cdots, \cRR_{k}),b) = \bby] 
\\
- e^{\diffp_g - t_1}\Pr[\AdaComp(\cA,(\cRR_2, \cdots, \cRR_{k}),1-b) = \bby],0\big\} \bigg\} \Bigg)
\\
= q_{\diffp_1,t_1} \cdot \delta_{\opt}(\abr^{2:k},\diffp_g - t_1).
\end{multline*}

This similarly follows for $y_1 = 1$, and we have our desired claim.
\end{proof}

Unfortunately, straightforward computation of this formulation is intractable, and we conjecture that it has a similar hardness result as in \citet{MurtaghVa16}, even in the homogenous setting. In later sections, we give improved bounds on adaptive composition for BR mechanisms, but for this section we instead focus on proving that there is indeed a gap between this optimal formulation and our formulation for the nonadaptive setting given in Theorem~\ref{thm:non-adaptive}. Further, we show that this gap exists in the homogenous setting for all $k$ and almost all choices of $\diffp_g$. 

We now state the main result of this section, where we prove each claim in Lemmas~\ref{lem:gapLEM} and \ref{lem:noGap}, respectively.  

\gap*

\subsection{Gap between adaptive and nonadaptive optimal composition}\label{subsec:gap}
In this section we show that there is a gap in the privacy loss between the adaptive and nonadaptive setting for BR mechanisms. Furthermore, we want to prove that this gap exists for all $k\geq 2$ and most $\diffp_g$. In fact, the only values of $\diffp_g$ in which the privacy loss is equivalent is when $\diffp_g$ is almost the bound from basic composition. 

The general idea for proving the gap will be to also give the recursive definition for the nonadaptive optimal composition that must fix $t$ for each recursive call. The goal will then be to show that at some point within this recursion the summation will strictly increase if the value for $t$ is changed. This will require that we first fully characterize the possible values of $t$ for the nonadaptive optimal composition. Fortunately, most of the heavy lifting in this regard was done in the previous section. With this characterization, we show that there is a gap when $k = 2$, and then further show that we can apply this gap for $k \geq 2$.

We will restrict our consideration to the simpler homogenous setting in which $\diffp_i = \diffp$ for all $i$, and use $\abr^k$ as defined in \eqref{eq:class_adaptive_comp} and $\nonadaptBR^k$ is the class of nonadaptive composed $\diffp$-BR mechanisms as in \eqref{eq:nonadaptBR}.  We know that we can instead just restrict our consideration to the class of generalized random response, and the key to our the proof will be that we will be able to specify exactly which values of $t_1,\ldots,t_k$ maximize the privacy loss for the nonadaptive setting. We define this set as in terms of $\delta(\bbt, \diffp_g)$ from \eqref{eq:delta_t},

\[
t_{\opt}(\nonadaptBR^k,\diffp_g) \defeq \{\bbt \in [0,\diffp]^k: \delta(\bbt,\diffp_g) = \delta_{\opt}(\nonadaptBR^k,\diffp_g)\}.
\]

From Corollary~\ref{cor:all_t_equal}, we know that this set cannot contain any $\bbt\in [0,\diffp]^k$ such that $t_i \neq t_j$ in the interesting setting where $\diffp_g \in (-k\diffp,k\diffp)$.
For the remainder of this section, we instead consider the definition to equivalently be

\[
t_{\opt}(\nonadaptBR^k,\diffp_g) = \{t \in [0,\diffp]: \delta^k(t,\diffp_g) = \delta_{\opt}(\nonadaptBR^k,\diffp_g)\}
\]
because when $\diffp_g \notin (-k\diffp,k\diffp)$ then there is not a gap between adaptivity and non-adaptivity, so we ignore this setting. We will further utilize our proofs from the previous section to show that we can further restrict this set.

\begin{lemma}\label{lem:set_of_opt_t}
Let $\diffp \geq 0$.  If $\diffp_g \in (-k\diffp,k\diffp)$, then 
\[
t_{\opt}(\nonadaptBR^k,\diffp_g) \subseteq \left\{ \frac{\diffp_g + (\ell + 1)\diffp}{k + 1} : \ell \in \{0, \cdots, k-1\} \right\}  \cap (0,\diffp).
\]
\end{lemma}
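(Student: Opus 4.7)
The plan is to combine the finite candidate set already identified in Section~\ref{sec:nonadaptive} with three elementary exclusion arguments. From Corollary~\ref{cor:f_t_max_over_all} we have $\delta_{\opt}(\nonadaptBR^k, \diffp_g) = \max_{0 \leq \ell \leq k} \sup_{t \in [0,\diffp]} F_\ell(t, \diffp_g)$, and Corollary~\ref{cor:vals_of_t} shows that each inner supremum is attained in the set $\{0, \diffp, t_\ell^*\}$, where $t_\ell^* = \frac{\diffp_g + (\ell+1)\diffp}{k+1}$. Hence any $t \in t_{\opt}(\nonadaptBR^k, \diffp_g)$ must lie in the finite set $\{0, \diffp\} \cup \{t_\ell^* : 0 \leq \ell \leq k\}$. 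It therefore suffices to rule out (a) the boundary candidates $t \in \{0, \diffp\}$, and (b) the case $\ell = k$.

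For (b), I would show that $F_k(\cdot, \diffp_g) \equiv 1 - e^{\diffp_g}$ is constant in $t$. Using the identities $q_t = e^t p_t$ and $1 - q_t = e^{t-\diffp}(1 - p_t)$, which follow directly from Definition~\ref{defn:gen_rr}, one rewrites $p_t^{k-i}(1-p_t)^i e^{kt - i\diffp} = q_t^{k-i}(1-q_t)^i$. The binomial theorem then gives $\sum_{i=0}^k \binom{k}{i} q_t^{k-i}(1-q_t)^i = 1$ and $\sum_{i=0}^k \binom{k}{i} p_t^{k-i}(1-p_t)^i = 1$, collapsing $F_k(t, \diffp_g)$ to $1 - e^{\diffp_g}$.

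For (a), the boundary cases are immediate: $p_0 = 1$ makes $(1 - p_0)^i$ vanish for every $i \geq 1$, so only the $i = 0$ summand of $\delta^k(0, \diffp_g)$ survives and equals $\max\{1 - e^{\diffp_g}, 0\}$. Symmetrically $p_\diffp = 0$ kills all summands with $i < k$, leaving only the $i = k$ summand, which again evaluates to $\max\{1 - e^{\diffp_g}, 0\}$. Thus $\delta^k(0, \diffp_g) = \delta^k(\diffp, \diffp_g) = \max\{1 - e^{\diffp_g}, 0\}$, and the supremum of $F_k$ is the same quantity.

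To close the argument I would invoke the strict lower bound $\delta_{\opt}(\nonadaptBR^k, \diffp_g) > \max\{1 - e^{\diffp_g}, 0\}$, valid for $\diffp_g \in (-k\diffp, k\diffp)$. This is precisely the content of the two strict inequalities established inside the proof of Lemma~\ref{lem:bound_sum_t} (where one exhibits concrete $\bbt$ with $\delta(\bbt, \diffp_g) > 0$ and with $\delta(\bbt, \diffp_g) > 1 - e^{\diffp_g}$, respectively). Combined with the three exclusions above, this forces any optimizer to coincide with some $t_\ell^*$ for $\ell \in \{0, \ldots, k-1\}$, and since $t = 0$ and $t = \diffp$ are themselves excluded, the corresponding $t_\ell^*$ must lie in $(0, \diffp)$, giving the claimed inclusion. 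No step in this plan is a genuine obstacle; the argument is essentially bookkeeping on the derivative computation already carried out for Lemma~\ref{lem:partial_derivative_full} together with the strict suboptimality of the trivial value $\max\{1 - e^{\diffp_g}, 0\}$ from Lemma~\ref{lem:bound_sum_t}.
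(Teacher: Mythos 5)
Your proposal is correct and follows essentially the same route as the paper: obtain the finite candidate set from Corollaries~\ref{cor:f_t_max_over_all} and~\ref{cor:vals_of_t}, then discard $0$, $\diffp$, and the $\ell=k$ candidate using the strict inequality $\delta_{\opt}(\nonadaptBR^k,\diffp_g) > \max\{1-e^{\diffp_g},0\}$ established in the proof of Lemma~\ref{lem:bound_sum_t}. The only cosmetic difference is that you eliminate the last candidate by showing $F_k(\cdot,\diffp_g)\equiv 1-e^{\diffp_g}$ via the binomial identity, whereas the paper evaluates $\delta^k(t_k^*,\diffp_g)=1-e^{\diffp_g}$ directly (splitting on the sign of $\diffp_g$); both hinge on the same telescoping observation.
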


\begin{proof}
From Lemma~\ref{lem:f_t_equal_delta} and Corollary~\ref{cor:f_t_max_over_all} we can restrict our consideration to values of $t \in [0,\diffp]$ that maximize $F_{\ell}(t,\diffp_g)$ for some $\ell \in [k]$. Furthermore, $F_{\ell}(t,\diffp_g)$ can only maximized at the endpoints of the interval or whenever $\frac{\partial F_{\ell}(t,\diffp_g)}{\partial t} = 0$. Thus, from Corollary~\ref{cor:vals_of_t} we have

\[
t_{\opt}(\nonadaptBR^k,\diffp_g) \subseteq \left\{t_{\ell}^* = \frac{\diffp_g + (\ell + 1)\diffp}{k + 1} : \ell \in \{0,k\} \right\}  \cup \{0,\diffp\}.
\]
By definition, we can remove all values outside of $[0,\diffp]$, so it then suffices to show that we can also remove $\{0,\diffp,t_k^*\}$. Note that $p_t = 1$ when $t = 0$ and $p_t = 0$ when $t = \diffp$ and recall $\delta^k(t,\diffp_g)$ from \eqref{eq:delta_k}, so it is straightforward to verify that $\delta^k(0,\diffp_g) = \delta^k(\diffp,\diffp_g) = \max\{1-e^{\diffp_g},0\}$ for any $\diffp_g$. In the proof of Lemma~\ref{lem:bound_sum_t}, we showed that $\delta_{\opt}(\nonadaptBR^k,\diffp_g) > 0$ and $\delta_{\opt}(\nonadaptBR^k,\diffp_g) > 1 - e^{\diffp_g}$ when $\diffp_g \in (-k\diffp,k\diffp)$, which implies $0,\diffp \notin t_{\opt}(\nonadaptBR^k,\diffp_g) $. 

It then suffices to show $t_k^* \notin t_{\opt}(\nonadaptBR^k,\diffp_g) $. If $\diffp_g > 0$, then $t_k^* > \diffp$, so we only need to consider $\diffp_g \leq 0$. Note that $kt_k^* = k(\frac{\diffp_g}{k + 1} + \diffp)$, so for any $i \leq k$ we have $kt_k^* - i\diffp \geq \frac{k}{k+1}\diffp_g$ which implies

\[
\max\left\{e^{kt_k^* - i\diffp} - e^{\diffp_g}, 0 \right\} = e^{kt_k^* - i\diffp} - e^{\diffp_g}.
\]

Therefore, $\delta^k(t_k^*,\diffp_g) = 1 - e^{\diffp_g}$ and from above we know $\delta_{\opt}(\nonadaptBR^k,\diffp_g) > 1 - e^{\diffp_g}$ when $\diffp_g \in (-k\diffp,k\diffp)$, which implies  $t_k^* \notin t_{\opt}(\nonadaptBR^k,\diffp_g) $ as desired.
\end{proof}

We now want to show that we can write the optimal nonadaptive composition in a similar form as the adaptive composition. This recursive formulation will then fix a value $t$ throughout the recursion and $\delta_{\opt}(\nonadaptBR^k,\diffp_g)$ is then just the maximum value of this recursion over all $t \in [0,\diffp]$.

\begin{corollary}\label{cor:non_adap_recursion}
For $k \geq 1$ and for $\delta^k(t,\diffp_g)$ from \eqref{eq:delta_k}, we have  $\delta^0(t,\diffp_g) = \max\{1 - e^{\diffp_g}, 0\}$ and
\[
\delta^k(t,\diffp_g) = q_t\delta^{k-1}(t,\diffp_g - t) + (1 - q_t)\delta^{k-1}(t,\diffp_g + \diffp - t).
\]
\end{corollary}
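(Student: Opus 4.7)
The plan is to prove the identity by direct manipulation of the explicit binomial sum that defines $\delta^k(t,\diffp_g)$ in equation \eqref{eq:delta_k}, using Pascal's identity together with the two algebraic relations $q_t = e^t p_t$ and $1 - q_t = e^{t-\diffp}(1 - p_t)$. These two relations are exactly what makes the recursion ``close,'' since they let us absorb an exponential factor into the $\max\{\cdot,0\}$ term (it is non-negative, so scaling both arguments by a positive constant commutes with the $\max$).

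First, I would split $\binom{k}{i}$ via Pascal's identity $\binom{k}{i} = \binom{k-1}{i} + \binom{k-1}{i-1}$, writing
\[
\delta^k(t,\diffp_g) = \sum_{i=0}^{k-1}\binom{k-1}{i} p_t^{k-i}(1-p_t)^i \max\{e^{kt-i\diffp} - e^{\diffp_g},0\} + \sum_{i=1}^{k}\binom{k-1}{i-1} p_t^{k-i}(1-p_t)^i \max\{e^{kt-i\diffp} - e^{\diffp_g},0\}.
\]
In the first sum I factor out a single $p_t$ from $p_t^{k-i}$; in the second sum I reindex $j=i-1$ and factor out a single $(1-p_t)$ from $(1-p_t)^{j+1}$.

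Next, I handle the first sum by matching it to $q_t\,\delta^{k-1}(t,\diffp_g - t)$. Writing out the target,
\[
q_t\,\delta^{k-1}(t,\diffp_g - t) = e^t p_t \sum_{i=0}^{k-1}\binom{k-1}{i} p_t^{k-1-i}(1-p_t)^i \max\{e^{(k-1)t - i\diffp} - e^{\diffp_g - t},0\},
\]
and pulling the factor $e^t$ inside the $\max$ (valid because $e^t > 0$) turns the two arguments into $e^{kt - i\diffp}$ and $e^{\diffp_g}$, which matches the first sum exactly. The second sum is handled symmetrically against $(1-q_t)\,\delta^{k-1}(t,\diffp_g + \diffp - t)$: using $1 - q_t = e^{t-\diffp}(1-p_t)$, the factor $e^{t-\diffp}$ absorbed into the $\max$ converts $e^{(k-1)t - j\diffp}$ into $e^{kt - (j+1)\diffp}$ and $e^{\diffp_g + \diffp - t}$ into $e^{\diffp_g}$, again matching the reindexed sum term by term.

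Adding the two matched pieces yields exactly the claimed recursion. The base case $\delta^0(t,\diffp_g) = \max\{1-e^{\diffp_g},0\}$ is a direct evaluation of the empty-sum convention in \eqref{eq:delta_k} (only the $i=0$ term survives with $\binom{0}{0}=1$ and all probability factors equal to $1$). There is no genuine obstacle; the only thing to watch is the bookkeeping of the two boundary terms $i=0$ in the first sum and $i=k$ in the second sum when applying Pascal's identity, which vanish under the conventions $\binom{k-1}{k}=\binom{k-1}{-1}=0$ and therefore do not cause a discrepancy in the index ranges.
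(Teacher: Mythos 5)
Your proof is correct and uses essentially the same ingredients as the paper's: Pascal's identity together with the relations $q_t = e^t p_t$ and $1-q_t = e^{t-\diffp}(1-p_t)$, with the positive exponential factors absorbed into the $\max\{\cdot,0\}$ terms. The paper presents the same computation slightly differently (it first rewrites the sum in terms of $q_t$ and phrases the argument as an induction on $k$), but your direct term-by-term matching after splitting $\binom{k}{i}$ is an equally valid, and arguably cleaner, rendering of the same argument.
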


We relegate the proof of this corollary to Appendix~\ref{sec:gap-proofs}. Now that the formulations are similar, we show the intuitive fact that if at any point in the recursion either it is the case that either 1) switching the value of $t$, or 2) switching to the adaptive setting, will strictly increase that $\delta_{\opt}$ then there must be a gap between the nonadaptive and adaptive setting.

\begin{lemma}\label{lem:recursive_gap}
Fix the individual privacy parameter $\diffp> 0$, some global privacy parameter $\diffp_g\in \R$ and $k \geq 2$, along with some $t \in t_{\opt}(\nonadaptBR^k,\diffp_g)$, if there exist $0 \leq \ell' \leq \ell < k$ such that either $
\delta_{\opt}(\nonadaptBR^{k-\ell},\diffp_g - \ell t + \ell'\diffp) < \delta_{\opt}(\abr^{k-\ell},\diffp_g - \ell t + \ell'\diffp) 
$
or $t \notin t_{\opt}(\nonadaptBR^{k-\ell},\diffp_g - \ell t + \ell'\diffp)$, then we must have 
\[
\delta_{\opt}(\nonadaptBR^k,\diffp_g) < \delta_{\opt}(\abr^{k}, \diffp_g).
\]
\end{lemma}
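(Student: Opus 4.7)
The plan is to unroll both recursions for $\ell$ steps using the same fixed $t \in t_{\opt}(\nonadaptBR^k,\diffp_g)$. Applying Corollary~\ref{cor:non_adap_recursion} iteratively to $\delta^k(t,\diffp_g)$ for $\ell$ rounds and tracking the induced random walk (step $-t$ with probability $q_t$, step $+(\diffp - t)$ with probability $1 - q_t$), a straightforward induction yields the binomial identity
\[
\delta^k(t,\diffp_g) \;=\; \sum_{\ell''=0}^{\ell} \binom{\ell}{\ell''} q_t^{\ell - \ell''}(1-q_t)^{\ell''}\, \delta^{k-\ell}\!\big(t,\, \diffp_g - \ell t + \ell''\diffp\big).
\]
On the adaptive side, Lemma~\ref{lem:adap_recursion_hetero} takes a supremum over $t_1$ at each round. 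Using the specific (suboptimal but feasible) adaptive strategy $t_1 = t_2 = \cdots = t_\ell = t$ for the first $\ell$ rounds gives the matching lower bound
\[
\delta_{\opt}(\abr^k,\diffp_g) \;\geq\; \sum_{\ell''=0}^{\ell} \binom{\ell}{\ell''} q_t^{\ell - \ell''}(1-q_t)^{\ell''}\, \delta_{\opt}\!\big(\abr^{k-\ell},\, \diffp_g - \ell t + \ell''\diffp\big).
\]

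Next I would compare the two expressions term by term. Writing $\diffp_g^{\ell''} := \diffp_g - \ell t + \ell''\diffp$, for every index $\ell'' \in \{0,\ldots,\ell\}$ we automatically have
\[
\delta^{k-\ell}(t,\diffp_g^{\ell''}) \;\leq\; \delta_{\opt}(\nonadaptBR^{k-\ell},\diffp_g^{\ell''}) \;\leq\; \delta_{\opt}(\abr^{k-\ell},\diffp_g^{\ell''}),
\]
where the first inequality is by definition of $t_{\opt}(\nonadaptBR^{k-\ell},\cdot)$ and the second holds because any nonadaptive composition is a feasible adaptive strategy. The hypothesis of the lemma asserts that at the distinguished index $\ell'$, at least one of these two inequalities is strict. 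To promote this single strict term into a strict inequality for the whole sum, I need the binomial weight $\binom{\ell}{\ell'} q_t^{\ell - \ell'}(1-q_t)^{\ell'}$ at that index to be strictly positive. Since $t \in t_{\opt}(\nonadaptBR^k,\diffp_g)$, Lemma~\ref{lem:set_of_opt_t} forces $t \in (0,\diffp)$ in the nontrivial regime $\diffp_g \in (-k\diffp,k\diffp)$; in the complementary regime Lemma~\ref{lem:ep_g_edge_case} collapses both optimal $\delta$'s to $\max\{1-e^{\diffp_g},0\}$ so that the hypothesis of the lemma is vacuous. Hence $q_t,\, 1-q_t \in (0,1)$ and every binomial weight is strictly positive.

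Chaining the two unrolled expressions with the term-by-term comparison yields $\delta^k(t,\diffp_g) < \delta_{\opt}(\abr^k,\diffp_g)$, and since $t$ was chosen in $t_{\opt}(\nonadaptBR^k,\diffp_g)$ the left side equals $\delta_{\opt}(\nonadaptBR^k,\diffp_g)$, giving the conclusion. The main obstacle is the bookkeeping for the unrolling identity: one must carefully verify that the signed contributions $-\ell t$ and $+\ell''\diffp$ accumulate as claimed and that the probabilities multiply into the binomial weight. Once the random-walk interpretation is set up, the remainder is a direct comparison of two nonnegative series sharing a common set of strictly positive weights, and the strictness at the single distinguished term propagates to the full sum.
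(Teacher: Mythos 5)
Your argument is correct, but it takes a genuinely different route from the paper's. The paper proves the lemma by induction on $k$: it expands a single level of the two recursions (Corollary~\ref{cor:non_adap_recursion} on the nonadaptive side, Lemma~\ref{lem:adap_recursion_hetero} on the adaptive side) and, in the inductive step, does a case analysis on how the pair $(\ell,\ell')$ shifts to $(\ell-1,\ell')$ or $(\ell-1,\ell'-1)$ at the two children $\diffp_g-t$ and $\diffp_g-t+\diffp$, then invokes the inductive hypothesis there. You instead unroll both recursions $\ell$ steps at once with the same fixed $t$, getting the binomial-mixture identity for $\delta^k(t,\diffp_g)$ and the matching lower bound for $\delta_{\opt}(\abr^k,\diffp_g)$ from the constant-$t$ adaptive strategy, and then promote the strict inequality at the single index $\ell'$ using strict positivity of the binomial weights. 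Both proofs rest on the same two recursions, but yours avoids the index bookkeeping of the paper's induction, and it makes explicit a point the paper's proof leaves implicit: strictness only propagates through the weights $q_t$ and $1-q_t$ when they are nonzero, which you correctly extract from $t\in(0,\diffp)$ via Lemma~\ref{lem:set_of_opt_t} in the regime $\diffp_g\in(-k\diffp,k\diffp)$.

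One small repair is needed in your dismissal of the complementary regime: you cite Lemma~\ref{lem:ep_g_edge_case} at the top level, but the lemma's hypothesis concerns the $(k-\ell)$-fold subproblems at parameter $\diffp_g-\ell t+\ell'\diffp$. Since $t\in[0,\diffp]$ and $0\le\ell'\le\ell$, this parameter is at least $(k-\ell)\diffp$ when $\diffp_g\ge k\diffp$, and at most $-(k-\ell)\diffp$ when $\diffp_g\le -k\diffp$, so Lemma~\ref{lem:ep_g_edge_case} together with Lemma~\ref{lem:equal_in_trivial} (the adaptive trivial-regime statement) give $\delta_{\opt}(\nonadaptBR^{k-\ell},\cdot)=\delta_{\opt}(\abr^{k-\ell},\cdot)$ and $t_{\opt}(\nonadaptBR^{k-\ell},\cdot)=[0,\diffp]$ there, so neither disjunct of the hypothesis can hold. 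With that sentence added, the vacuousness claim, and hence the whole proof, is airtight.
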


This lemma will actually require quite a bit of technical detail, so we instead give a proof in Appendix~\ref{sec:gap-proofs}.
With this property and our characterization of $t_{\opt}(\abr^{k},\diffp_g)$, we now show that there is a gap for the base case of $k = 2$.

\begin{lemma}\label{lem:gap_base_case}
For any $\diffp_g \in (-\diffp/2,\diffp/2)$ we have 
\[
\delta_{\opt}(\nonadaptBR^2,\diffp_g) < \delta_{\opt}(\abr^2,\diffp_g) 
\]
\end{lemma}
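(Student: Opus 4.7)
The plan is to apply Lemma~\ref{lem:recursive_gap} with $\ell = 1$, using the second disjunct that says: if the candidate optimizer $t$ for the nonadaptive 2-round problem is not itself the unique optimizer of the induced 1-round subproblem, then a strict gap must appear. First I would fix $\diffp_g \in (-\diffp/2, \diffp/2)$ and observe that $\delta^2(\cdot, \diffp_g)$ is continuous on the compact set $[0,\diffp]$, so the supremum defining $\delta_{\opt}(\nonadaptBR^2, \diffp_g)$ is attained and $t_{\opt}(\nonadaptBR^2, \diffp_g)$ is non-empty. Applying Lemma~\ref{lem:set_of_opt_t} with $k = 2$ (and noting $\diffp_g \in (-2\diffp, 2\diffp)$), every maximizer lies in $\{t_0^*, t_1^*\}$ with $t_0^* = (\diffp_g + \diffp)/3$ and $t_1^* = (\diffp_g + 2\diffp)/3$; the hypothesis $\diffp_g \in (-\diffp/2, \diffp/2)$ is enough to place both values strictly inside $(0, \diffp)$.

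Next I would pick any $t \in t_{\opt}(\nonadaptBR^2, \diffp_g)$ and split into two cases. Recall from Lemma~\ref{lem:set_of_opt_t} that for any $\diffp_g' \in (-\diffp, \diffp)$ the singleton $(\diffp_g' + \diffp)/2$ is the only candidate for $t_{\opt}(\nonadaptBR^1, \diffp_g')$. In Case 1, $t = t_0^*$, choose $\ell' = 1$, so the shifted global parameter is $\diffp_g - t + \diffp = (2\diffp_g + 2\diffp)/3 \in (\diffp/3, \diffp)$, hence inside $(-\diffp, \diffp)$; the unique candidate optimizer there is $(2\diffp_g + 5\diffp)/6$, which differs from $t_0^* = (2\diffp_g + 2\diffp)/6$. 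In Case 2, $t = t_1^*$, choose $\ell' = 0$, so the shifted parameter is $\diffp_g - t = (2\diffp_g - 2\diffp)/3 \in (-\diffp, -\diffp/3)$, again inside $(-\diffp, \diffp)$; the unique candidate optimizer is $(2\diffp_g + \diffp)/6$, which differs from $t_1^* = (2\diffp_g + 4\diffp)/6$. In either case, $t \notin t_{\opt}(\nonadaptBR^1, \diffp_g - t + \ell'\diffp)$, so Lemma~\ref{lem:recursive_gap} immediately delivers the strict inequality.

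The underlying intuition is that in the nonadaptive setting one must commit to the \emph{same} $t$ for both rounds, whereas after a single step of the recursion the remaining one-round problem is centered at an asymmetric $\diffp_g' = \diffp_g - t + \ell'\diffp$, whose unique optimal parameter is a different point of $[0,\diffp]$; an adaptive adversary exploits exactly this mismatch. The only routine obstacle is making sure each shifted $\diffp_g'$ lands strictly inside $(-\diffp, \diffp)$, since Lemma~\ref{lem:set_of_opt_t} is only informative on the open range where the endpoints $0$ and $\diffp$ can be excluded; checking the boundary case $\diffp_g = \pm\diffp/2$ reveals that $\diffp_g'$ would touch $\mp\diffp$, which is precisely why the statement restricts to the open interval. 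All remaining verifications are elementary linear arithmetic.
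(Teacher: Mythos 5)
Your proposal is correct and follows essentially the same route as the paper: both arguments take an optimizer $t_\ell \in t_{\opt}(\nonadaptBR^2,\diffp_g)$ guaranteed by Lemma~\ref{lem:set_of_opt_t}, verify that the shifted parameters land in $(-\diffp,\diffp)$ so that the one-round candidate set is the singleton $(\diffp_g'+\diffp)/2$, observe that $t_\ell$ cannot be that candidate, and invoke Lemma~\ref{lem:recursive_gap}. The only cosmetic difference is that the paper checks both shifts simultaneously and argues the two one-round optimizers differ from each other, whereas you pick one specific $\ell'$ per case and check the mismatch directly; the arithmetic and the conclusion are the same.
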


\begin{proof}
From Lemma~\ref{lem:set_of_opt_t}, we know that there exists and $\ell \in \{ 0,1\}$ such that $t_{\ell} =  \frac{\diffp_g + (\ell + 1)\diffp}{3} \in t_{\opt}(\nonadaptBR^2,\diffp_g)$. Furthermore, if both $\diffp_g - t_{\ell}$ and $\diffp_g - t_{\ell} + \diffp$ are in $(-\diffp,\diffp)$, then we also must have $t_{\opt}(\nonadaptBR^1,\diffp_g - t_{\ell}) = \frac{\diffp_g - t_{\ell} + \diffp}{2}$ and  $t_{\opt}(\nonadaptBR^1,\diffp_g - t_{\ell} + \diffp ) = \frac{\diffp_g - t_{\ell} + 2\diffp}{2}$ which implies $t_{\opt}(\nonadaptBR^1,\diffp_g - t_{\ell}) \neq t_{\opt}(\nonadaptBR^1,\diffp_g - t_{\ell} + \diffp )$.

Therefore, by Lemma~\ref{lem:recursive_gap} it suffices to show that both $\diffp_g - t_{\ell}$ and $\diffp_g - t_{\ell} + \diffp$ are in $(-\diffp,\diffp)$, which is equivalent to showing $\diffp_g - t_{\ell} \in (-\diffp,0)$. Plugging in for $t_{\ell}$ we then have

\[
\diffp_g - \frac{\diffp_g + (\ell + 1)\diffp}{3} \in (-\diffp,0) 
\qquad \Leftrightarrow \qquad
\diffp_g \in \left(\frac{(\ell - 2)\diffp}{2}, \frac{(\ell + 1)\diffp}{2}\right)
\]

which holds for $\ell \in \{0,1\}$ by our assumption that $\diffp_g \in (-\diffp/2,\diffp/2)$.
\end{proof}

We will then apply this base case to the more general case for certain conditions by applying Lemma~\ref{lem:recursive_gap}.

\begin{lemma}\label{lem:use_2_case}
Given some $\diffp_g \in (-k\diffp,k\diffp)$ and $t \in t_{\opt}(\nonadaptBR^k,\diffp_g)$. For $k \geq 4$, if $\diffp_g - (k - 2) t < \diffp/2$ and $\diffp_g - (k-2)t + (k-2)\diffp > - \diffp/2$, then

\[
\delta_{\opt}(\nonadaptBR^k,\diffp_g) < \delta_{\opt}(\abr^{k},\diffp_g).
\]
\end{lemma}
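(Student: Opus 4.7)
The plan is to apply Lemma~\ref{lem:recursive_gap} with $\ell = k-2$, reducing the claim to exhibiting some $\ell' \in \{0, 1, \ldots, k-2\}$ such that either
\[
\delta_{\opt}(\nonadaptBR^{2}, \diffp_g - (k-2)t + \ell'\diffp) < \delta_{\opt}(\abr^{2}, \diffp_g - (k-2)t + \ell'\diffp),
\]
or $t \notin t_{\opt}(\nonadaptBR^{2}, \diffp_g - (k-2)t + \ell'\diffp)$. The base case Lemma~\ref{lem:gap_base_case} supplies the first alternative as soon as the shifted argument lies in $(-\diffp/2, \diffp/2)$, so the heart of the proof is locating such an $\ell'$.

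Write $x := \diffp_g - (k-2)t$, so that the two hypotheses become $x < \diffp/2$ and $x + (k-2)\diffp > -\diffp/2$. I would take $\ell' := \max\{j \in \{0, \ldots, k-2\} : x + j\diffp < \diffp/2\}$, which is well-defined since $j = 0$ qualifies. A short discrete intermediate-value check using the common step size $\diffp$ (which exactly matches the length of the target open interval $(-\diffp/2, \diffp/2)$) shows that $x + \ell'\diffp \in [-\diffp/2, \diffp/2)$: either $\ell' = k-2$, in which case the strict right-endpoint hypothesis forces $x + \ell'\diffp > -\diffp/2$, or else maximality yields $x + (\ell'+1)\diffp \geq \diffp/2$ and hence $x + \ell'\diffp \geq \diffp/2 - \diffp = -\diffp/2$. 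In the generic sub-case $x + \ell'\diffp \in (-\diffp/2, \diffp/2)$ and Lemma~\ref{lem:gap_base_case} immediately delivers the strict inequality needed in Lemma~\ref{lem:recursive_gap}, completing the proof.

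The main obstacle is the boundary sub-case $x + \ell'\diffp = -\diffp/2$, where Lemma~\ref{lem:gap_base_case} does not directly apply. Note that this forces $\ell' < k-2$ and, by construction, $x + (\ell'+1)\diffp = \diffp/2$, with $\ell' + 1 \in \{0, \ldots, k-2\}$. I would then switch to the second clause of Lemma~\ref{lem:recursive_gap}: from Lemma~\ref{lem:set_of_opt_t},
\[
t_{\opt}(\nonadaptBR^{2}, -\diffp/2) \subseteq \{\diffp/6,\, \diffp/2\}, \qquad t_{\opt}(\nonadaptBR^{2}, \diffp/2) \subseteq \{\diffp/2,\, 5\diffp/6\},
\]
so unless $t = \diffp/2$ (the only common element) at least one of the two choices $\ell'$ or $\ell'+1$ satisfies $t \notin t_{\opt}(\nonadaptBR^{2}, \cdot)$ and Lemma~\ref{lem:recursive_gap} finishes the argument. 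The residual degenerate case $t = \diffp/2$ combined with $x + \ell'\diffp = -\diffp/2$ pins $\diffp_g$ via the characterization of $t$ in Lemma~\ref{lem:set_of_opt_t} to an isolated arithmetic set of values, which I would handle either by a direct computation of $\delta^k$ at those isolated parameters or by repeating the argument with $\ell = k-3$ in Lemma~\ref{lem:recursive_gap} and invoking the $k=3$ version of the gap (referenced after Theorem~\ref{thm:gap}).
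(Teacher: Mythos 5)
Your overall strategy matches the paper's: apply Lemma~\ref{lem:recursive_gap} with $\ell = k-2$, locate an index $\ell'$ so that $x + \ell'\diffp$ falls in $(-\diffp/2,\diffp/2)$, and invoke Lemma~\ref{lem:gap_base_case}. The step where you part ways is the boundary case $x + \ell'\diffp = -\diffp/2$. You propose to use the pair of indices $(\ell', \ell'+1)$, i.e.\ the arguments $-\diffp/2$ and $\diffp/2$, together with the second clause of Lemma~\ref{lem:recursive_gap}. But the candidate sets from Lemma~\ref{lem:set_of_opt_t}, namely $t_{\opt}(\nonadaptBR^2, -\diffp/2) \subseteq \{\diffp/6, \diffp/2\}$ and $t_{\opt}(\nonadaptBR^2, \diffp/2) \subseteq \{\diffp/2, 5\diffp/6\}$, share $\diffp/2$, so that pair alone cannot rule out $t$ lying in both. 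Neither of your two fallbacks for $t = \diffp/2$ is on firm ground: a ``direct computation of $\delta^k$ at those isolated parameters'' still requires evaluating the adaptive recursion $\delta_{\opt}(\abr^k,\diffp_g)$ and proving a strict inequality, which is precisely the hard part; and the $\ell = k-3$ route needs a $k=3$ analogue of Lemma~\ref{lem:gap_base_case}, which the paper alludes to after Theorem~\ref{thm:gap} but never actually proves. As written, the residual case is a genuine gap.

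The paper's resolution (Lemma~\ref{lem:edge_2_case}) is to avoid the ambiguous pair $(-\diffp/2,\diffp/2)$ by shifting one of the two indices an extra step: when the edge occurs at $j > 0$ it uses $(j-1, j+1)$, giving the arguments $(-3\diffp/2, \diffp/2)$, and when $j = 0$ it uses $(j, j+2)$, giving $(-\diffp/2, 3\diffp/2)$; feasibility uses $k\geq 4$. At $\pm 3\diffp/2$ the candidate set from Lemma~\ref{lem:set_of_opt_t} collapses to the singleton $\{\diffp/6\}$ or $\{5\diffp/6\}$ because the other candidate $\tfrac{\diffp_g + (\ell+1)\diffp}{3}$ leaves $(0,\diffp)$, so the two candidate sets in each pair are disjoint and no residual case arises. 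You could repair your argument by exactly this shift: if $\ell' \geq 1$ take $(\ell'-1,\ell'+1)$; if $\ell' = 0$ take $(\ell',\ell'+2)$, noting $\ell'+2 \leq k-2$ since $\ell' < k-2$ and $k \geq 4$.
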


We relegate the proof of this lemma to Appendix~\ref{sec:gap-proofs} and will use this to show our desired result.

\begin{lemma}\label{lem:gapLEM}
For any $\diffp_g \in [-(k-3)\diffp, (k-3)\diffp]$ and $k \geq 4$ we have 
\[
 \delta_{\opt}(\nonadaptBR^k,\diffp_g) < \delta_{\opt}(\abr^{k},\diffp_g) .
\]
\end{lemma}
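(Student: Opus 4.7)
The plan is to invoke Lemma~\ref{lem:use_2_case}. That lemma already does the heavy lifting of propagating a $k = 2$ gap to general $k$; all that remains is to produce, for the given $\diffp_g$, some $t \in t_{\opt}(\nonadaptBR^k,\diffp_g)$ satisfying
\[
\diffp_g - (k-2)t \;\in\; \left(-\tfrac{(2k-3)\diffp}{2},\ \tfrac{\diffp}{2}\right).
\]
By Lemma~\ref{lem:set_of_opt_t}, any such $t$ equals $t_\ell^* = (\diffp_g + (\ell+1)\diffp)/(k+1)$ for some $\ell \in \{0, \dots, k-1\}$ with $t_\ell^* \in (0, \diffp)$, and substituting writes the two conditions as an $\ell$-dependent window in $x := \diffp_g/\diffp$:
\[
L(\ell) := \frac{2(k-2)(\ell+1) - (2k-3)(k+1)}{6} \;<\; x \;<\; \frac{2(k-2)(\ell+1) + (k+1)}{6} =: U(\ell).
\]
So the task reduces to: for every $x \in [-(k-3), k-3]$, exhibit an optimal $\ell$ with $L(\ell) < x < U(\ell)$.

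I would proceed in three steps. First, swapping the roles of the two neighboring datasets yields the identity $\delta_\opt(\cdot, \diffp_g) = 1 - e^{\diffp_g} + e^{\diffp_g} \delta_\opt(\cdot, -\diffp_g)$ for both $\nonadaptBR^k$ and $\abr^k$, so the gap at $\diffp_g$ equals $e^{\diffp_g}$ times the gap at $-\diffp_g$; this reduces us to $x \in [0, k-3]$, where $L(\ell) < x$ is automatic for every admissible $\ell$ and only $x < U(\ell)$ remains. Second, a direct arithmetic check shows $U(\ell) > k-3$ is equivalent to $\ell+1 > (5k-19)/(2(k-2))$, which holds for every $\ell \geq 0$ when $k \in \{4,5\}$ and for every $\ell \geq 1$ when $k \geq 6$. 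Third, I would verify the existence of an optimal $\ell$ above that threshold: for $k \in \{4,5\}$ this is vacuous since Lemma~\ref{lem:set_of_opt_t} already guarantees an optimal $\ell \geq 0$, and for $k \geq 6$ I would rule out $\ell = 0$ as the unique optimum by comparing $F_0(t_0^*, \diffp_g)$ with $F_1(t_1^*, \diffp_g)$. Substituting $t = t_\ell^*$ into \eqref{eq:F_t} and applying the KKT identity $e^{kt_\ell^* - (\ell+1)\diffp} = e^{\diffp_g - t_\ell^*}$ (visible from the derivation in Lemma~\ref{lem:partial_derivative_full}) collapses each summand into an explicit product of $p_{t_\ell^*}$, $(1-p_{t_\ell^*})$, and a simple exponential, making the ratio $F_1(t_1^*)/F_0(t_0^*)$ a tractable expression whose sign can be read off for $x \in [0, k-3]$.

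The main obstacle is step three, especially at the boundary $x = k-3$ where the strict inequality $x < U(\ell)$ becomes tight. The range $[-(k-3)\diffp, (k-3)\diffp]$ is calibrated precisely so that an optimal $\ell$ above the threshold always exists: pushing $|x|$ past $k-3$ either eliminates the candidate $\ell$'s with $t_\ell^* \in (0, \diffp)$ needed to clear the threshold or shifts $U(\ell)$ away from $x$, closing off the Lemma~\ref{lem:use_2_case} route. Carrying out the binomial-type comparison between $F_0(t_0^*)$ and $F_1(t_1^*)$ in the stated regime, which exploits the KKT cancellation to reduce everything to polynomials in $e^{-\diffp/(k+1)}$, is the technical heart of the proof.
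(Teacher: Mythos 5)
Your high-level strategy diverges from the paper's in a way that opens a real gap. The paper's Case~I does exactly what you propose for $\ell \geq 2$ --- it verifies $\diffp_g - (k-2)t_\ell < \diffp/2$ by the same kind of arithmetic --- but then treats $\ell = 0$ and $\ell = 1$ by an entirely different mechanism, namely direct application of Lemma~\ref{lem:recursive_gap}: it exhibits a depth $\ell$ and shift $\ell'$ in the recursion at which $t_\ell$ fails to remain optimal, by comparing $t_\ell$ against the set $t_{\opt}(\nonadaptBR^{k-\ell}, \cdot)$ from Lemma~\ref{lem:set_of_opt_t}. You instead want to always route through Lemma~\ref{lem:use_2_case} by arguing that some optimizer $t_\ell^*$ with $\ell$ above the threshold exists. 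Two problems.

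First, the arithmetic claim in your step two is off. The condition $U(\ell) > k - 3$, i.e. $\ell + 1 > (5k-19)/(2(k-2))$, gives: for $k=5$ the RHS equals $1$ exactly, so $\ell = 0$ does \emph{not} clear it strictly; and more importantly, $\ell = 1$ clears the threshold only for $k \leq 10$, since $2 > (5k-19)/(2(k-2)) \Leftrightarrow k < 11$. For $k \geq 11$ you would need an optimizer with $\ell \geq 2$. So the comparison of $F_0(t_0^*)$ with $F_1(t_1^*)$ you sketch in step three could at best rule out $\ell = 0$ as unique optimum; for $k \geq 11$ you would also need to rule out $\ell = 1$, which is a further comparison you have not planned for. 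Second, and more fundamentally, the premise of step three --- that when $\ell = 0$ (or $\ell \in \{0,1\}$) falls below the threshold, some higher $\ell$ must also be optimal --- is an unproven claim about the structure of $t_{\opt}$, and it is not obviously true. Lemma~\ref{lem:set_of_opt_t} only says $t_{\opt}$ is contained in the finite set $\{t_\ell^*\}$; it does not prevent a unique optimizer at small $\ell$. The paper's proof avoids this entirely by handling each $t_\ell \in t_{\opt}$ unconditionally: if $\ell \geq 2$ apply Lemma~\ref{lem:use_2_case}; if $\ell \in \{0,1\}$ apply Lemma~\ref{lem:recursive_gap} after computing that $t_\ell$ is no longer in $t_{\opt}$ at a suitable depth (one step for $\ell = 0$, two steps for $\ell = 1$). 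That case split is exhaustive regardless of which $t_\ell$ happens to be optimal, which is exactly the guarantee your plan is missing. To salvage your route you would either need to prove the structural claim about $t_{\opt}$ (likely as hard as the original problem) or fall back to the paper's per-case recursion argument for small $\ell$.
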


\begin{proof}
We will prove for $\diffp_g \in [0,(k-3)\diffp]$ and the case of $\diffp_g \in [-(k-3)\diffp,0]$ follows symmetrically. From Lemma~\ref{lem:set_of_opt_t} we know that for any $t \in t_{\opt}(\nonadaptBR^k,\diffp_g)$ we must have $t = \frac{\diffp_g + (\ell + 1)\diffp}{k + 1}$ for some $0 \leq \ell \leq k-1$. The general idea will then be to show that for any $t_{\ell} = \frac{\diffp_g + (\ell + 1)\diffp}{k + 1}$, if $t_{\ell} \in t_{\opt}(\nonadaptBR^k,\diffp_g)$, then $\delta_{\opt}(\abr^{k},\diffp_g) > \delta_{\opt}(\nonadaptBR^k,\diffp_g)$. We will split this into three cases.

\paragraph{Case I: ($\ell \geq 2$)}

For this setting, we want to show that we can apply Lemma~\ref{lem:use_2_case} where we know $\diffp_g + (k-2)(\diffp - t) \geq -\diffp/2$ for any $t$ because we are assuming $\diffp_g \geq 0$. It then suffices to show that $\diffp_g - (k-2)t_{\ell} < \diffp/2$. Plugging in for $t_{\ell}$ we have

\[
\diffp_g - (k-2)t_{\ell} < \diffp/2 \qquad \Leftrightarrow \qquad
6\diffp_g < \left(2(k-2)(\ell + 1) + (k + 1)\right) \diffp.
\]

By assumption, we know $\diffp_g \leq (k-3)\diffp$, so for $\ell \geq 2$, we have

\[
6\diffp_g \leq 6(k-3)\diffp < (7k - 11)\diffp \leq \left(2(k-2)(\ell + 1) + (k + 1)\right) \diffp.
\]
and therefore $\delta_{\opt}(\abr^{k},\diffp_g) > \delta_{\opt}(\nonadaptBR^k,\diffp_g)$ by Lemma~\ref{lem:use_2_case}. 

\paragraph{Case II: ($\ell = 0$)}

For this setting we have $t_0 = \frac{\diffp_g + \diffp}{k + 1}$. By our assumption that $\diffp_g \in [-(k-3)\diffp, (k-3)\diffp]$, we must have $\diffp_g + \diffp - t_0 \in (-(k-1)\diffp,(k-1)\diffp)$. 
From Lemma~\ref{lem:set_of_opt_t} we then know $t_{\opt}(\nonadaptBR^{k-1},\diffp_g + \diffp - t_0) \subseteq \left\{\tfrac{\diffp_g + \diffp - t_0 + (\ell' + 1)\diffp}{k}: \ell' \in \{ 0,k-2\}\right\}$. 
We further see that for any $\ell' \geq 0$,
\[
 \frac{\diffp_g + \diffp}{k + 1} < \frac{\diffp_g + \diffp}{k} \leq \frac{\diffp_g + \diffp - t_0 + (\ell' + 1)\diffp}{k}.
\]
This implies $t_0 \notin t_{\opt}(\nonadaptBR^{k-1},\diffp_g + \diffp - t_0)$ and so $\delta_{\opt}(\abr^{k},\diffp_g) > \delta_{\opt}(\nonadaptBR^k,\diffp_g)$ by Lemma~\ref{lem:recursive_gap}.

\paragraph{Case III: ($\ell = 1$)}

This will follow from the same argument as the previous case. For this setting we have $t_1 = \frac{\diffp_g + 2\diffp}{k+1}$. Once again, we use our more restrictive assumption that $\diffp_g \in [0,(k-3)\diffp]$, and therefore $\diffp_g + 2(\diffp - t_1) \geq -(k-2)\diffp$. Furthermore, we have

\[
\diffp_g + 2\left(\diffp - \frac{\diffp_g + 2\diffp}{k  + 1}\right) = \frac{k-1}{k+1}\left(\diffp_g + 2\diffp\right) \leq \frac{(k-1)^2}{k+1}\diffp < (k-2)\diffp
\]
where the last step follows because $(k-1)^2 < (k + 1)(k-2)$ for $k > 1$.
Thus $\diffp_g + 2(\diffp - t_1) \in (-(k-2)\diffp,(k-2)\diffp)$ and by Lemma~\ref{lem:set_of_opt_t}, $t_{\opt}(\nonadaptBR^{k-2},\diffp_g + 2(\diffp - t_1)) \subseteq \left\{\tfrac{\diffp_g + 2(\diffp - t_1) + (\ell'' + 1)\diffp}{k-1} : \ell'' \in \{ 0,k-3\} \right\}$. It then follows that 

\[
\frac{\diffp_g + 2\diffp}{k + 1} = \frac{\diffp_g + 2(\diffp - t_1)}{k-1} < \frac{\diffp_g + 2(\diffp - t_1) + (\ell'' + 1)\diffp}{k-1} 
\]
for any $\ell'' \geq 0$. This implies $t_1 \notin t_{\opt}(\nonadaptBR^{k-2},\diffp_g + 2(\diffp - t_1))$ and so $\delta_{\opt}(\abr^{k},\diffp_g) > \delta_{\opt}(\nonadaptBR^k,\diffp_g)$ by Lemma~\ref{lem:recursive_gap}.

\end{proof}

\subsection{Settings for equivalent adaptive and nonadaptive optimal composition}\label{subsec:nogap}

We also want to show that there is not a gap between adaptive and nonadaptive composition even in the non-trivial setting. More specifically, we will show that there is a gap not only when $\diffp_g \notin (-k\diffp,k\diffp)$ and basic composition can be applied.
We first show that there is no gap for the trivial setting and then will extend this a bit.

\begin{lemma}\label{lem:equal_in_trivial}
For any $\diffp > 0$ and $\diffp_g \geq k\diffp$, we have $\delta_{\opt}(\abr^{k},\diffp_g) = 0$, and for any $\diffp_g \leq -k\diffp$, we have $\delta_{\opt}(\abr^{k},\diffp_g) = 1 - e^{\diffp_g}$.

\end{lemma}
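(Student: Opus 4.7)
The plan is to prove both equalities by induction on $k$ using the recursive formula from Lemma~\ref{lem:adap_recursion_hetero}, namely
\[
\delta_{\opt}(\abr^k,\diffp_g) = \sup_{t_1 \in [0,\diffp]} \left\{ q_{\diffp,t_1}\,\delta_{\opt}(\abr^{k-1},\diffp_g - t_1) + (1-q_{\diffp,t_1})\,\delta_{\opt}(\abr^{k-1},\diffp_g + \diffp - t_1) \right\},
\]
with the base convention $\delta_{\opt}(\abr^0,\diffp_g) = \max\{1-e^{\diffp_g},0\}$. The base case $k=0$ is immediate from this convention.

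For the inductive step in the first claim, if $\diffp_g \geq k\diffp$ then for every $t_1 \in [0,\diffp]$ both shifted arguments satisfy $\diffp_g - t_1 \geq (k-1)\diffp$ and $\diffp_g + \diffp - t_1 \geq (k-1)\diffp$. By the inductive hypothesis both inner terms vanish, so the supremum is $0$, as desired.

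For the second claim, if $\diffp_g \leq -k\diffp$ then for every $t_1 \in [0,\diffp]$ we have $\diffp_g - t_1 \leq -(k-1)\diffp$ and $\diffp_g + \diffp - t_1 \leq -(k-1)\diffp$, so by induction both inner terms equal $1 - e^{\diffp_g - t_1}$ and $1 - e^{\diffp_g + \diffp - t_1}$ respectively. The expression inside the supremum therefore becomes
\[
1 - e^{\diffp_g - t_1}\left[q_{\diffp,t_1} + (1-q_{\diffp,t_1})\, e^{\diffp}\right].
\]
The key computation is then to verify that the bracketed factor equals $e^{t_1}$, which makes the whole expression equal $1 - e^{\diffp_g}$ independently of $t_1$, so the supremum is again $1 - e^{\diffp_g}$.

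The only technical step is this algebraic cancellation. Substituting the definitions $q_{\diffp,t_1} = (1 - e^{t_1-\diffp})/(1-e^{-\diffp})$ and $1 - q_{\diffp,t_1} = (e^{t_1-\diffp}-e^{-\diffp})/(1-e^{-\diffp})$, the numerator simplifies to $(1 - e^{t_1-\diffp}) + (e^{t_1} - 1) = e^{t_1}(1-e^{-\diffp})$, which cancels against the common denominator $1 - e^{-\diffp}$. This is routine; no obstacle beyond careful bookkeeping.
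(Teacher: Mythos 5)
Your proposal is correct and follows essentially the same route as the paper: induction on $k$ via the recursion of Lemma~\ref{lem:adap_recursion_hetero}, observing that the shifted arguments $\diffp_g - t_1$ and $\diffp_g + \diffp - t_1$ stay in the respective regimes, and then the algebraic identity $q_{\diffp,t_1} + (1-q_{\diffp,t_1})e^{\diffp} = e^{t_1}$ making the inner expression equal $1 - e^{\diffp_g}$ independently of $t_1$. This matches the paper's computation exactly.
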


We leave the proof of the trivial setting to Appendix~\ref{sec:gap-proofs}. The basic idea for extending this interval a bit further will simply be to consider the case in which only one outcome can produce a positive probability, or equivalently, all but one outcome can produce a positive probability.

\begin{lemma}\label{lem:upper_no_gap}

For any $\diffp > 0$ and $\diffp_g \geq (k-1)\diffp$ for $k \geq 1$, we have

\[
\delta_{\opt}(\abr^{k},\diffp_g) = \sup_{\bbt \in [0,\diffp]^k} \left\{ \left( \prod_{i=1}^k q_{t_i} \right) \max\{1 - e^{\diffp_g - \sum t_i}, 0\}  \right\}
\]

\end{lemma}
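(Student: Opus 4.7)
The plan is to use induction on $k$, exploiting the recursive formulation of $\delta_{\opt}(\abr^k, \diffp_g)$ from Lemma~\ref{lem:adap_recursion_hetero} specialized to the homogeneous case. The key structural observation is that when $\diffp_g \geq (k-1)\diffp$, the second summand in the recursion---the one with the shifted global parameter $\diffp_g + \diffp - t_1$---collapses identically to zero regardless of the remaining choices $t_2, \ldots, t_k$, so only the ``positive'' branch survives and accumulates into the product $\prod q_{t_i}$ multiplied by a single $\max\{1 - e^{(\cdot)}, 0\}$ factor.

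For the base case $k=1$, the recursion (with the convention $\delta_{\opt}(\abr^{k+1:k}, x) = \max\{1 - e^x, 0\}$) yields
\[
\delta_{\opt}(\abr^1, \diffp_g) = \sup_{t_1 \in [0,\diffp]} \bigl\{ q_{t_1} \max\{1 - e^{\diffp_g - t_1}, 0\} + (1 - q_{t_1}) \max\{1 - e^{\diffp_g + \diffp - t_1}, 0\} \bigr\}.
\]
Since $\diffp_g \geq 0 = (1-1)\diffp$ and $t_1 \in [0,\diffp]$, we have $\diffp_g + \diffp - t_1 \geq 0$, so the second $\max$ vanishes and the expression reduces to the claim.

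For the inductive step, assume the lemma for $k-1$. Applying the recursion, I would write
\[
\delta_{\opt}(\abr^k, \diffp_g) = \sup_{t_1 \in [0,\diffp]} \bigl\{ q_{t_1} \delta_{\opt}(\abr^{k-1}, \diffp_g - t_1) + (1-q_{t_1}) \delta_{\opt}(\abr^{k-1}, \diffp_g + \diffp - t_1) \bigr\}.
\]
Because $\diffp_g \geq (k-1)\diffp$ and $t_1 \in [0,\diffp]$, both shifted global parameters exceed $(k-2)\diffp$, so the inductive hypothesis applies to both summands. For the second summand, the inductive formula expresses $\delta_{\opt}(\abr^{k-1}, \diffp_g + \diffp - t_1)$ as a supremum over $(t_2, \ldots, t_k) \in [0,\diffp]^{k-1}$ of a nonnegative product times $\max\{1 - e^{\diffp_g + \diffp - \sum_{i=1}^k t_i}, 0\}$; since $\sum_{i=1}^k t_i \leq k\diffp \leq \diffp_g + \diffp$, the exponent is nonnegative for every choice, so this supremum is zero. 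For the first summand, the inductive hypothesis supplies exactly the desired product form with exponent $\diffp_g - \sum_{i=1}^k t_i$, and then moving $q_{t_1}$ inside the inner supremum and merging with the outer supremum over $t_1$ produces the single supremum over $\bbt \in [0,\diffp]^k$ required by the lemma.

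The main obstacle is the collapse of the negative branch $\delta_{\opt}(\abr^{k-1}, \diffp_g + \diffp - t_1)$: this is exactly where the hypothesis $\diffp_g \geq (k-1)\diffp$ is used, since it guarantees that even a worst-case sequence of $k-1$ further adversarial losses of size $\diffp$ cannot drive the remaining budget negative, so the $(1 - e^{(\cdot)})^+$ piece is identically zero throughout the inner supremum. Once this collapse is established, the rest is routine bookkeeping of suprema of nonnegative quantities.
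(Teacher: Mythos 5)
Your proof is correct and takes essentially the same approach as the paper: induction on $k$ through the recursion of Lemma~\ref{lem:adap_recursion_hetero}, with the identical base case and the same merging of suprema for the surviving branch. The only cosmetic difference is that the paper collapses the shifted branch $\delta_{\opt}(\abr^{k-1},\diffp_g+\diffp-t_1)$ by citing Lemma~\ref{lem:equal_in_trivial}, whereas you apply the inductive hypothesis to that branch and note that $\diffp_g+\diffp-\sum_{i}t_i\geq 0$ forces the $\max$ to vanish; both are valid.
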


\begin{proof}
We show this inductively. For $k = 1$, if $\diffp_g \geq 0$, then for any $t \in [0,\diffp]$, we must have $\diffp_g + \diffp - t \geq 0$ and $\max\{1 - e^{\diffp_g + \diffp - t},0\} = 0$. This then implies

\[
\delta_{\opt}(\abr^1,\diffp_g) =   \sup_{t \in [0,\diffp]}  q_{t} \max\{1-e^{\diffp_g - t},0\}.
\]

The inductive step for $k \geq 2$ follows equivalently, where for any $t \in [0,\diffp]$, we must have $\diffp_g + \diffp - t \geq (k-1)\diffp$, so from Lemma~\ref{lem:equal_in_trivial}, we have 

\[
\delta_{\opt}(\abr^{k},\diffp_g) = \sup_{t \in [0,\diffp]} q_{t} \delta_{\opt}(\abr^{k-1}, \diffp_g - t) 
\]
and we can then apply our inductive hypothesis because $k - 1\geq 1$ and $\diffp_g - t \geq (k-2)\diffp$, which then gives the desired claim.
\end{proof}

For completeness, we also consider the symmetric case where $\diffp_g$ can be negative, but leave the proof to Appendix~\ref{sec:gap-proofs}.

\begin{lemma}\label{lem:lower_no_gap}
For any $\diffp > 0$ and $\diffp_g \leq -(k-1)\diffp$ with $k \geq 1$, we have
\[
\delta_{\opt}(\abr^{k},\diffp_g) = 1 - e^{\diffp_g} + \sup_{\bbt \in [0,\diffp]^k}\left\{ \left( \prod_{i=1}^k \left(1 - q_{t_i}\right) \right) \left(e^{\diffp_g + k\diffp - \sum t_i } - 1\right) \right\}.
\]
\end{lemma}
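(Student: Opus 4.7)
The plan is to mirror the induction used for Lemma~\ref{lem:upper_no_gap}, but with the roles of the two recursive branches swapped: when $\diffp_g \leq -(k-1)\diffp$, the ``$q_t$'' branch lands in the trivial regime of Lemma~\ref{lem:equal_in_trivial} and the ``$(1-q_t)$'' branch stays within the inductive hypothesis. The induction is on $k$.

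For the base case $k=1$, I would unfold $\delta_{\opt}(\abr^1,\diffp_g)$ via Lemma~\ref{lem:adap_recursion_hetero} using the convention $\delta_{\opt}(\abr^0,x) = \max\{1-e^x,0\}$. Since $\diffp_g \leq 0$, the first branch always gives $1-e^{\diffp_g - t}$, while the second is $\max\{1 - e^{\diffp_g + \diffp - t}, 0\}$. The key algebraic fact is the identity
\[
q_{\diffp,t}\cdot e^{\diffp_g - t} + (1-q_{\diffp,t})\cdot e^{\diffp_g + \diffp - t} = e^{\diffp_g},
\]
which follows directly from $q_{\diffp,t} = \tfrac{1-e^{t-\diffp}}{1-e^{-\diffp}}$. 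In the regime $t \geq \diffp_g + \diffp$ the second branch has no max, and the identity collapses the sum to $1 - e^{\diffp_g}$; in the regime $t \leq \diffp_g + \diffp$, the expression equals $q_t(1-e^{\diffp_g-t})$, which the same identity rewrites as $1 - e^{\diffp_g} + (1-q_t)(e^{\diffp_g+\diffp-t}-1)$. Since $(1-q_t)(e^{\diffp_g+\diffp-t}-1) \geq 0$ exactly in this regime and vanishes at $t=0$, the supremum over $t \in [0,\diffp]$ of the full expression matches the claimed formula.

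For the inductive step with $k \geq 2$, apply Lemma~\ref{lem:adap_recursion_hetero} to write $\delta_{\opt}(\abr^k,\diffp_g)$ as a supremum over $t_1 \in [0,\diffp]$. For any such $t_1$, the shifted parameter $\diffp_g - t_1 \leq -(k-1)\diffp$ lies in the trivial regime, so Lemma~\ref{lem:equal_in_trivial} gives $\delta_{\opt}(\abr^{k-1},\diffp_g - t_1) = 1 - e^{\diffp_g - t_1}$. The other shifted parameter satisfies $\diffp_g + \diffp - t_1 \leq -(k-2)\diffp$, so the inductive hypothesis applies, producing $1 - e^{\diffp_g+\diffp-t_1}$ plus a supremum over $(t_2,\ldots,t_k) \in [0,\diffp]^{k-1}$ of $\prod_{i\geq 2}(1-q_{t_i})(e^{\diffp_g+\diffp-t_1+(k-1)\diffp-\sum_{i\geq 2}t_i}-1)$. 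Substituting and applying the identity above to collapse $q_{t_1}e^{\diffp_g-t_1} + (1-q_{t_1})e^{\diffp_g+\diffp-t_1} = e^{\diffp_g}$ leaves
\[
1 - e^{\diffp_g} + (1-q_{t_1})\sup_{(t_2,\ldots,t_k)}\Big\{\prod_{i=2}^k(1-q_{t_i})\big(e^{\diffp_g+k\diffp-\sum_{i=1}^k t_i}-1\big)\Big\}.
\]
Because $1-q_{t_1} \geq 0$, the scalar can be moved inside the supremum, merging the factor into $\prod_{i=1}^k(1-q_{t_i})$; taking the outer supremum over $t_1$ then yields exactly the $k$-fold supremum in the statement.

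The only place requiring real care is the base-case case split, where one must verify that the two piecewise formulas agree after taking suprema; the identity $q_t e^{-t} + (1-q_t)e^{\diffp - t} = 1$ (a rescaling of the identity above) does all the work, and the inductive step is essentially mechanical once that identity is in hand.
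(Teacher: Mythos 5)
Your proposal is correct and follows essentially the same inductive strategy as the paper: unfold the recursion from Lemma~\ref{lem:adap_recursion_hetero}, use Lemma~\ref{lem:equal_in_trivial} for the branch falling into the trivial regime, apply the inductive hypothesis to the other branch, and collapse via the identity $q_{t}e^{\diffp_g-t}+(1-q_{t})e^{\diffp_g+\diffp-t}=e^{\diffp_g}$. The only stylistic difference is in the base case $k=1$, where you split on the regime of $t$ rather than (as the paper does) on whether $\diffp_g\leq-\diffp$ or $-\diffp<\diffp_g\leq 0$; both treatments are valid and reach the same conclusion.
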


We then have the following result that together with Lemma~\ref{lem:gapLEM} covers almost all choices of $\diffp_g \in \R$.

\begin{lemma}\label{lem:noGap}
For any $\diffp_g \geq (k-1)\diffp$ or $\diffp_g \leq -(k-1)\diffp$ we have 

\[
\delta_{\opt}(\abr^{k},\diffp_g) = \delta_{\opt}(\nonadaptBR^k,\diffp_g)
\]
\end{lemma}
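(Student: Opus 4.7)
The plan is to apply the nonadaptive formula from Lemma~\ref{lem:non_interactive_GRR} in each of the two regimes, and show it collapses to exactly the explicit expressions that Lemmas~\ref{lem:upper_no_gap} and~\ref{lem:lower_no_gap} give for $\delta_{\opt}(\abr^k,\diffp_g)$. The key algebraic move throughout is to rewrite $q_{t_i} = e^{t_i} p_{t_i}$ and $1-q_{t_i} = e^{t_i-\diffp}(1-p_{t_i})$, so that the generic summand in Lemma~\ref{lem:non_interactive_GRR} becomes
\[
T_S(\bbt) \;\defeq\; \prod_{i \notin S} p_{t_i} \prod_{i \in S}(1-p_{t_i})\cdot\max\!\Big\{ e^{\sum_i t_i - |S|\diffp} - e^{\diffp_g},\, 0 \Big\}.
\]

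For the upper regime $\diffp_g \geq (k-1)\diffp$, first I would observe that whenever $|S|\geq 1$, one has $\sum_i t_i - |S|\diffp \leq k\diffp - \diffp = (k-1)\diffp \leq \diffp_g$, so $T_S(\bbt)=0$. Thus only the $S=\emptyset$ term survives, yielding $\sup_{\bbt}\prod_i p_{t_i}\max\{e^{\sum t_i}-e^{\diffp_g},0\}$. Factoring $e^{\sum t_i}$ out of the max converts $p_{t_i}e^{t_i}$ back to $q_{t_i}$, giving exactly $\sup_{\bbt}\prod_i q_{t_i}\max\{1-e^{\diffp_g-\sum t_i},0\}$, which matches Lemma~\ref{lem:upper_no_gap}.

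For the lower regime $\diffp_g \leq -(k-1)\diffp$, the opposite phenomenon occurs: for every $S$ with $|S|\leq k-1$, $\sum_i t_i - |S|\diffp \geq -(k-1)\diffp \geq \diffp_g$, so the max is redundant and $T_S$ is just the unclipped difference $\prod_{i\notin S}q_{t_i}\prod_{i\in S}(1-q_{t_i}) - e^{\diffp_g}\prod_{i\notin S}p_{t_i}\prod_{i\in S}(1-p_{t_i})$. Only the term $S=\{1,\ldots,k\}$ may require clipping. Using the telescoping identity $\sum_{S\subseteq\{1,\ldots,k\}}\bigl(\prod_{i\notin S}q_{t_i}\prod_{i\in S}(1-q_{t_i}) - e^{\diffp_g}\prod_{i\notin S}p_{t_i}\prod_{i\in S}(1-p_{t_i})\bigr)=1-e^{\diffp_g}$ together with $\max\{x,0\}-x = \max\{-x,0\}$, the full sum simplifies to
\[
1-e^{\diffp_g} + \max\bigl\{\, e^{\diffp_g}\!\prod_i(1-p_{t_i}) - \prod_i(1-q_{t_i}),\,0\bigr\}.
\]
Substituting $1-p_{t_i}=e^{\diffp-t_i}(1-q_{t_i})$ yields $\prod_i(1-q_{t_i})\max\{e^{\diffp_g+k\diffp-\sum t_i}-1,0\}$. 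Taking the supremum over $\bbt$ and noting that the extra $\max$ never changes the value of the sup (both $\bbt=\vec{0}$ and $\bbt=\vec{\diffp}$ give $0$, and the sup is non-negative), this agrees with the expression for $\delta_{\opt}(\abr^k,\diffp_g)$ in Lemma~\ref{lem:lower_no_gap}.

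The proof is essentially a bookkeeping exercise; there is no real obstacle. The only step that requires a touch of care is the lower regime, where one must both handle the possibly-clipped $S=\{1,\ldots,k\}$ summand correctly and then verify that dropping the outer $\max$ inside the supremum (as Lemma~\ref{lem:lower_no_gap} is stated) does not change the value. Both points follow from the observation that the quantity inside is non-negative at some $\bbt$ in $[0,\diffp]^k$ and vanishes on the boundary regions where it would otherwise be negative.
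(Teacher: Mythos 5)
Your approach is correct, and it is the ``direct'' route the paper alludes to but does not carry out: you substitute into the nonadaptive formula of Lemma~\ref{lem:non_interactive_GRR} and show that in each regime it collapses exactly to the closed-form expressions for $\delta_{\opt}(\abr^k,\diffp_g)$ in Lemmas~\ref{lem:upper_no_gap} and~\ref{lem:lower_no_gap}. The paper's actual proof is much terser: it simply observes that those two lemmas exhibit the \emph{adaptive} optimum as a supremum over a fixed, outcome-independent vector $\bbt\in[0,\diffp]^k$, which is a nonadaptive strategy, and for each such $\bbt$ the displayed quantity is bounded above by $\delta(\bbt,\diffp_g)$ (in the upper regime it is precisely the $S=\emptyset$ summand); combined with the trivial inequality $\delta_{\opt}(\nonadaptBR^k,\diffp_g)\leq\delta_{\opt}(\abr^k,\diffp_g)$, equality follows without doing the full algebra. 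Your route is more work but gives a stronger statement (pointwise equality of the two formulas for every fixed $\bbt$, not just the one-sided bound needed), and would be the preferred version if one wanted a self-contained verification rather than a ``by inspection'' argument.

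One small imprecision in your lower-regime argument: you justify dropping the outer $\max$ by saying ``both $\bbt=\vec{0}$ and $\bbt=\vec{\diffp}$ give $0$.'' That is true for the clipped quantity $g(\bbt)=\prod_i(1-q_{t_i})\max\{e^{\diffp_g+k\diffp-\sum t_i}-1,0\}$, but for the \emph{unclipped} quantity $f(\bbt)=\prod_i(1-q_{t_i})(e^{\diffp_g+k\diffp-\sum t_i}-1)$ one has $f(\vec{\diffp})=e^{\diffp_g}-1<0$, so the parenthetical as written does not apply to $f$. The fix is to argue: $f(\vec{0})=0$ (since $1-q_0=0$), hence $\sup f\geq 0$; and since $g\geq f$ everywhere with $g=f$ wherever $f\geq 0$, the supremum of $g$ is attained where $g>0$, where the two agree, so $\sup g=\max\{\sup f,0\}=\sup f$. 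The conclusion is unaffected, but the justification needs that small repair.
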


\begin{proof}
Each case can be proven directly following similar reasoning as in Lemmas~\ref{lem:upper_no_gap} and ~\ref{lem:lower_no_gap}. But, we can more easily point out that in both cases, Lemmas~\ref{lem:upper_no_gap} and ~\ref{lem:lower_no_gap} imply that the choices of $t_1,\ldots, t_k$ are not adaptively made, so we must have $\delta_{\opt}(\abr^{k},\diffp_g) = \delta_{\opt}(\nonadaptBR^k,\diffp_g)$.
\end{proof}

\section{Improved and efficient adaptive composition bounds} 
\label{sec:mgf_approach}
Although we have presented the optimal composition bound in \eqref{lem:adap_recursion_hetero}, directly computing it is intractable.  We then aim to bound the privacy loss with computationally efficient bounds that improve on previous work.  
Given our formulation of the privacy loss in terms of a summation of individual generalized randomized response privacy loss variables, we then follow a similar analysis to concentration inequalities, e.g. Azuma-Hoeffding bounds, by bounding the moments of the privacy loss.  We now present the main result of this section.

\mgfnew*

By \Cref{cor:reduc} we can assume all the component mechanisms are generalized randomized response. 
Let $L_i$ be log likelihood ratio in the $i$-th term in the total privacy loss and we will write $t_i = t_i(y_{1}, \cdots, y_{i-1})$
\begin{align*}
	L_i(y_1,\ldots, y_i)
	&= \log \frac{\Pr[\grr{\diffp_i, t_i}(1)= y_i|y_{1}, \cdots, y_{i-1} ]}{\Pr[\grr{\diffp_i, t_i}(0)= y_i|y_{1}, \cdots, y_{i-1}]}\\
	&=\left\{
		\begin{array}{ll}
		t_i, 		& \text{if } y_i = 1, \\
		t_i-\diffp_i, 		& \text{if } y_i = 0.
		\end{array}
		\right.
\end{align*}
Recall from Corollary~\ref{cor:reduc} that we need only consider deterministic adversaries $\cA = (\emptyset,\cDe)$ and the class of mechanisms $\vcM$ to be the class of generalized randomized response mechanisms $\vcRR$.  For simplicity, let $P$ be the output distribution for $\AdaComp(\cA,\vcM,b)$ and $Q$ be the output distribution for $\AdaComp(\cA,\vcM,1-b)$ on $
\{0,1\}^k$
and $L=\sum_{i=1}^k L_i = \log \frac{Q}{P}$ be the log likelihood ratio of the composed mechanism. Then we have
$$\delta_{\opt}(\abr^{1:k},\diffp_g)= Q[L>\diffp_g]-e^{\diffp_g} P[L>\diffp_g].$$
The classical method introduced in \citet{DworkRoVa10} make two approximations: (1) ignore the negative term $-e^{\diffp_g} P[L>\diffp_g]$ and (2) use moment generating function of $L$ to bound the tail probability $Q[L>\diffp_g]$. We follow (1) and improve on (2) with the help of the reduction in \Cref{cor:reduc}. We present the proof of \Cref{thm:MGFthm} to point out the stages in the analysis where other approaches used weaker bounds.
The initial steps remain consistent:
	\begin{align*}
	\delta_{\opt}(\abr^{1:k},\diffp_g)
	&= Q[L>\diffp_g]-e^{\diffp_g} P[L>\diffp_g]\\
		&\leqslant Q[L>\diffp_g]\\
		&= \Pr\big[\sum L_i> \diffp_g\big]\\
		&\leqslant \inf_{\lambda>0}\Pr\big[e^{\lambda\sum L_i}> e^{\lambda\diffp_g}\big]\\
		&\leqslant \inf_{\lambda>0}e^{-\lambda\diffp_g}\cdot\E\big[e^{\lambda\sum L_i}\big].
	\end{align*}
With a standard conditional probability argument we have the following result.
	\begin{lemma}\label{lem:origin}
		If there is a function $U_i:(0,+\infty)\to \R$ such that for each $i=1,2,\ldots,k$ the following holds for any arbitrary outcomes $y_1,\ldots,y_{i-1}$ of the previous generalized randomized response mechanisms,
		\begin{align*}
			\E_Q[e^{\lambda L_i}\mid y_1,\ldots,y_{i-1}]\leqslant e^{U_i(\lambda)},
		\end{align*}
		then the following holds for any $\lambda>0$,
		\[
		\delta_{\opt}(\abr^{1:k},\diffp_g)
		\leqslant e^{-(\lambda\diffp_g-\sum_{i} U_i(\lambda))}.
		\]
	\end{lemma}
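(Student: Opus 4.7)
The plan is to continue directly from the Chernoff-type bound already established just above the lemma, namely
\[
\delta_{\opt}(\abr^{1:k},\diffp_g) \leq \inf_{\lambda>0} e^{-\lambda \diffp_g}\, \E_Q\bigl[e^{\lambda \sum_{i=1}^k L_i}\bigr],
\]
and bound the moment generating function $\E_Q[e^{\lambda \sum_i L_i}]$ by the product $\prod_i e^{U_i(\lambda)}$ using the tower property of conditional expectation. The key point is that once we condition on $y_1, \ldots, y_{i-1}$, the adversary's choice of the next $t_i = t_i(y_1,\ldots,y_{i-1})$ (and hence the distribution of $L_i$) is fully determined, so the per-coordinate hypothesis $\E_Q[e^{\lambda L_i}\mid y_1,\ldots,y_{i-1}] \leq e^{U_i(\lambda)}$ applies pointwise on the conditioning event.

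The iteration proceeds from the outside in. I would peel off the last factor by writing
\[
\E_Q\bigl[e^{\lambda \sum_{i=1}^k L_i}\bigr]
= \E_Q\Bigl[e^{\lambda \sum_{i=1}^{k-1} L_i}\cdot \E_Q\bigl[e^{\lambda L_k}\,\big|\, y_1,\ldots,y_{k-1}\bigr]\Bigr],
\]
where the first factor inside is measurable with respect to $(y_1,\ldots,y_{k-1})$ and therefore comes outside the inner conditional expectation. Applying the assumed bound $\E_Q[e^{\lambda L_k}\mid y_1,\ldots,y_{k-1}]\leq e^{U_k(\lambda)}$ pulls out the scalar $e^{U_k(\lambda)}$, and we obtain $\E_Q[e^{\lambda \sum_{i=1}^k L_i}] \leq e^{U_k(\lambda)}\cdot \E_Q[e^{\lambda\sum_{i=1}^{k-1} L_i}]$. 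Iterating this step $k$ times yields $\E_Q[e^{\lambda\sum_i L_i}] \leq \exp(\sum_{i=1}^k U_i(\lambda))$.

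Substituting back into the Chernoff bound gives
\[
\delta_{\opt}(\abr^{1:k},\diffp_g)\leq e^{-\lambda\diffp_g + \sum_i U_i(\lambda)} = e^{-(\lambda\diffp_g - \sum_i U_i(\lambda))}
\]
for every $\lambda>0$, which is exactly the claimed inequality. The only delicate point worth stating explicitly is the reduction from the original BR setting to generalized randomized response via Corollary~\ref{cor:reduc}, which lets us write $L_i$ concretely as $t_i$ or $t_i-\diffp_i$ and justifies that the supremum over adversaries is realized by deterministic, generalized-randomized-response adversaries, so the per-step MGF bound is the only non-routine ingredient; the rest is bookkeeping with conditional expectation and requires no additional hypothesis beyond the one already stated.
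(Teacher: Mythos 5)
Your proof is correct and is exactly the ``standard conditional probability argument'' the paper alludes to but does not spell out: starting from the Chernoff bound $\delta_{\opt}(\abr^{1:k},\diffp_g)\leq e^{-\lambda\diffp_g}\E_Q[e^{\lambda\sum_i L_i}]$, you peel off one factor at a time via the tower property, using that $\sum_{i<k}L_i$ is measurable with respect to $(y_1,\ldots,y_{k-1})$, and invoke the per-step MGF hypothesis. The appeal to Corollary~\ref{cor:reduc} to justify the reduction to deterministic, generalized-randomized-response adversaries is also the right framing and matches the paper's setup.
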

	Different bounds correspond to different choices of $U_i(\lambda)$ in \Cref{lem:origin}, which result in different bounds on $\delta_{\opt}(\abr^{1:k},\diffp_g)$. For example, both \citet{DworkRoVa10} and \citet{DurfeeRo19} utilize the following lemma:
	\begin{lemma}[Hoeffding's lemma]
		If a random variable $X\in[a,b]$ then $\log \E[e^{\lambda X}]\leqslant \frac{1}{8}(b-a)^2\lambda^2 + \lambda \E X$.
	\end{lemma}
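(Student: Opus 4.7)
The plan is to prove Hoeffding's lemma by the classical convexity plus Taylor-expansion argument. First I would reduce to the centered case by writing $Y = X - \E X$, noting that $Y \in [a',b']$ with $a' = a - \E X \le 0 \le b - \E X = b'$ and $b' - a' = b - a$; since $\log \E[e^{\lambda X}] = \lambda \E X + \log \E[e^{\lambda Y}]$, it suffices to show $\log \E[e^{\lambda Y}] \le \tfrac18 (b-a)^2 \lambda^2$ for a mean-zero variable $Y \in [a',b']$.

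Next I would exploit convexity of $x \mapsto e^{\lambda x}$. For any $x \in [a',b']$, writing $x$ as the convex combination $x = \tfrac{b'-x}{b'-a'} a' + \tfrac{x-a'}{b'-a'} b'$ yields
\[
e^{\lambda x} \le \tfrac{b'-x}{b'-a'} e^{\lambda a'} + \tfrac{x-a'}{b'-a'} e^{\lambda b'}.
\]
Taking expectations and using $\E Y = 0$ gives $\E[e^{\lambda Y}] \le \tfrac{b'}{b'-a'} e^{\lambda a'} - \tfrac{a'}{b'-a'} e^{\lambda b'}$. Setting $p = -a'/(b'-a') \in [0,1]$ and $u = \lambda(b'-a') = \lambda(b-a)$, this bound rearranges to
\[
\log \E[e^{\lambda Y}] \le L(u) \defeq -pu + \log\bigl(1 - p + p e^{u}\bigr).
\]

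The remaining task is to show $L(u) \le u^2/8$ for all $u \in \R$. I would compute $L(0) = 0$, differentiate to get $L'(u) = -p + \tfrac{p e^u}{1 - p + p e^u}$ so $L'(0) = 0$, and then compute
\[
L''(u) = \frac{p(1-p)e^u}{(1 - p + p e^u)^2} = q(1-q) \le \tfrac14,
\]
where $q = pe^u/(1 - p + p e^u) \in [0,1]$, and the final inequality is the standard $q(1-q) \le 1/4$. Taylor's theorem with remainder then gives $L(u) = \tfrac12 L''(\xi) u^2 \le u^2/8$ for some $\xi$ between $0$ and $u$, which is exactly $\tfrac18 (b-a)^2 \lambda^2$. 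Combining with the centering reduction yields the lemma.

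There is no serious obstacle; the only mildly nontrivial step is the uniform bound $L''(u) \le 1/4$, which follows immediately from the AM–GM style inequality $q(1-q) \le 1/4$ once $L''$ is written in the form above. The whole argument is a few lines and is entirely self-contained, requiring only convexity of the exponential and a second-order Taylor expansion.
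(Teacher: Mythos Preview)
Your proof is correct and is the standard textbook argument for Hoeffding's lemma. Note, however, that the paper does not actually prove this statement: it is invoked as a classical result (attributed to Hoeffding) and used as a black box in comparing the $U_i(\lambda)$ bounds of \citet{DworkRoVa10} and \citet{DurfeeRo19}. So there is nothing to compare against; your self-contained derivation via centering, the convexity bound on $e^{\lambda x}$, and the second-order Taylor estimate $L''(u)=q(1-q)\le 1/4$ is exactly the usual proof and would be a fine supplement if the paper wanted one.
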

	We now walk through the following comparisons with previous work to highlight our improvement.
	\citet{DworkRoVa10} only uses the fact that $L_i\in[-\diffp_i,\diffp_i]$
		(which is weaker than $\diffp$-BR).
		It implies
		\begin{enumerate}[(a)]
			\item $\log \E_Q[e^{\lambda L_i}\mid y_1,\ldots,y_{i-1}]\leqslant \frac{1}{2}\diffp_i^2\lambda^2+\lambda\E_Q[{L_i}\mid y_1,\ldots,y_{i-1}]$
			\item $\E_Q[{L_i}\mid y_1,\ldots,y_{i-1}]\leqslant \diffp_i\tanh\tfrac{\diffp_i}{2}\leqslant\tfrac{1}{2}\diffp_i^2$.
		\end{enumerate}
		For part (b), \citet{DworkRoVa10} used a much rougher estimate. The $\tfrac{1}{2}\diffp_i^2$ upper bound appears in \cite{BunSt16}. For the most refined bound in terms of hyperbolic tangent function, readers can refer to Lemma D.8 in \cite{DongRoSu19}. 

		Combining both (a) and (b), we have $U_i(\lambda) = \frac{1}{2}\diffp_i^2(\lambda^2+\lambda)$, which we refer to as ``Improved DRV10'' in \Cref{fig:U}.
		
		Using the bounded range property from \citet{DurfeeRo19}, we know that for $\ep$-BR there is a $t_i\in[0,\diffp_i]$ such that $a=t_i-\diffp_i,b=t_i$ in Hoeffding's lemma. A similar argument yields $U_i(\lambda) = \frac{1}{2}\diffp_i^2(\frac{1}{4}\lambda^2+\lambda)$, which we label as ``DR19" in \Cref{fig:U}.

		A straightforward improvement could come from a finer treatment of (b). By definition of $L_i$, 
		\begin{align*}
			\E_Q[{L_i}\mid y_1,\ldots,y_{i-1}]
			&= \mathrm{KL}\big(\mathrm{Bern}(q_{\ep_i,t_i})\|\mathrm{Bern}(p_{\diffp_i,t_i})\big)\\
			&= q_{\diffp_i,t_i}\cdot \log\frac{q_{\diffp_i,t_i}}{p_{\diffp_i,t_i}}+(1-q_{\diffp_i,t_i})\cdot \log\frac{1-q_{\diffp_i,t_i}}{1-p_{\diffp_i,t_i}}\\
			&= t_iq_{\diffp_i,t_i}+(t_i-\diffp_i)(1-q_{\diffp_i,t_i})\\
			&= t_i-\frac{\diffp_i}{e^{\diffp_i}-1}(e^{t_i}-1).
		\end{align*}
		A bit of calculus shows the above expression is maximized at $t_i = \log \frac{e^{\diffp_i}-1}{\diffp_i}$, and the value is 
		\[
		\mathrm{maxkl}(\diffp):=\frac{\diffp}{e^\diffp-1}-1-\log\frac{\diffp}{e^\diffp-1}.
		\]
		That is, we have replaced (b) with
		\begin{itemize}
			\item [$\mathrm{(b')}$] $\E_Q[{L_i}\mid y_1,\ldots,y_{i-1}]\leqslant \mathrm{maxkl}(\diffp_i)$.
		\end{itemize}
		Combining (a) and $\mathrm{(b')}$, we can use $U_i(\lambda) = \frac{1}{8}\diffp_i^2\lambda^2+\lambda\cdot \mathrm{maxkl}(\diffp_i)$, which we label as ``KL-improved DR19" in \Cref{fig:U}.  This observation on the expectation together with the \citet{DurfeeRo19} bound that uses Azuma-Hoeffding, but with a weaker bound on the expectation term, we have the following result.
		
		\mgf*
		
		Instead of trying to come up with analytic, closed form upper bounds, we directly compute $\E_Q[e^{\lambda L_i}\mid y_1,\ldots,y_{i-1}]$, resorting to numerical tools when necessary. 
Recall that $p_{\diffp,t} = \frac{e^{-t}-e^{-\diffp}}{1-e^{-\diffp}}$ and $q_{\diffp,t} = e^tp_{\diffp,t} = \frac{1-e^{t-\diffp}}{1-e^{-\diffp}}$, we then have the following result. 
\begin{lemma} \label{lem:MGFexpression}
\begin{align*}
	\E_Q[e^{\lambda L_i}\mid y_1,\ldots,y_{i-1}]
	&=p_{\diffp_i,\diffp_i-t_i}^{\lambda+1} q_{\diffp_i,\diffp_i-t_i}^{-\lambda}+(1-p_{\diffp_i,\diffp_i-t_i})^{\lambda+1} (1-q_{\diffp_i,\diffp_i-t_i})^{-\lambda}
\end{align*}
where $t_i = t_i(y_{1}, \cdots, y_{i-1})$.
\end{lemma}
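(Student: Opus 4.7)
The plan is to compute $\E_Q[e^{\lambda L_i}\mid y_1,\ldots,y_{i-1}]$ by direct case analysis on $y_i \in \{0,1\}$ and then recognize the result as the stated expression. By Corollary~\ref{cor:reduc} we may assume the $i$-th component is a generalized randomized response $\grr{\diffp_i,t_i}$ with $t_i$ a deterministic function of the previous outcomes. Conditioning on $y_1,\ldots,y_{i-1}$ fixes $t_i$, so $L_i$ becomes a two-valued random variable taking the extremal values $t_i$ and $t_i - \diffp_i$ dictated by Definition~\ref{defn:gen_rr}, with the corresponding probabilities under $Q$ being either $q_{\diffp_i,t_i}$ and $1-q_{\diffp_i,t_i}$ or $p_{\diffp_i,t_i}$ and $1-p_{\diffp_i,t_i}$ depending on the side. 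This immediately yields an expression of the form
\[
\E_Q[e^{\lambda L_i}\mid y_1,\ldots,y_{i-1}] = \alpha\, e^{\lambda t_i} + (1-\alpha)\, e^{\lambda(t_i-\diffp_i)},
\]
which is the starting point for the algebra.

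Next I would recast the two exponentials using the two one-line identities $q_{\diffp,t} = e^{t} p_{\diffp,t}$ and $1-q_{\diffp,t} = e^{t-\diffp}(1-p_{\diffp,t})$, each verified directly from Definition~\ref{defn:gen_rr}. They let me write
\[
e^{\lambda t_i} = \Bigl(\tfrac{q_{\diffp_i,t_i}}{p_{\diffp_i,t_i}}\Bigr)^{\lambda}, \qquad e^{\lambda(t_i-\diffp_i)} = \Bigl(\tfrac{1-q_{\diffp_i,t_i}}{1-p_{\diffp_i,t_i}}\Bigr)^{\lambda},
\]
so that the moment generating function factors cleanly into
\[
q_{\diffp_i,t_i}^{\lambda+1} p_{\diffp_i,t_i}^{-\lambda} + (1-q_{\diffp_i,t_i})^{\lambda+1}(1-p_{\diffp_i,t_i})^{-\lambda}.
\]

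Finally, I would translate this to the stated $\diffp_i - t_i$ form by means of the symmetry identities $p_{\diffp,\diffp-t} = 1 - q_{\diffp,t}$ and $q_{\diffp,\diffp-t} = 1 - p_{\diffp,t}$, each obtained by substituting $t \mapsto \diffp - t$ in the formulas for $p$ and $q$. Under these substitutions the term $q^{\lambda+1}p^{-\lambda}$ transforms into $(1-p_{\diffp_i,\diffp_i-t_i})^{\lambda+1}(1-q_{\diffp_i,\diffp_i-t_i})^{-\lambda}$ and $(1-q)^{\lambda+1}(1-p)^{-\lambda}$ into $p_{\diffp_i,\diffp_i-t_i}^{\lambda+1}q_{\diffp_i,\diffp_i-t_i}^{-\lambda}$, which is exactly the claimed identity. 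The only real obstacle is bookkeeping: reconciling the labeling conventions for $Q$ versus $P$ and for $y_i = 0$ versus $y_i = 1$ so that the correct pair of probabilities appears in the two-term sum. Once that is settled, everything reduces to the two identities above, and no further estimates or limits are required.
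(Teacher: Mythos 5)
Your proposal is correct and follows essentially the same route as the paper: the paper writes $\E_Q[e^{\lambda L_i}] = \int (Q_i/P_i)^\lambda Q_i = q_{\diffp_i,t_i}^{\lambda+1}p_{\diffp_i,t_i}^{-\lambda}+(1-q_{\diffp_i,t_i})^{\lambda+1}(1-p_{\diffp_i,t_i})^{-\lambda}$ and then applies the same symmetry identities $q_{\diffp,\diffp-t} = 1-p_{\diffp,t}$, $p_{\diffp,\diffp-t} = 1-q_{\diffp,t}$; your two-point expectation computation with $q = e^t p$ and $1-q = e^{t-\diffp}(1-p)$ just unpacks that integral in a slightly more elementary way, arriving at the identical intermediate expression.
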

\begin{proof}
	Let $P_i$ be the distribution for $\mathrm{Bern}(p_{\diffp_i,t_i})$ and  $Q_i$ be the distribution for  $\mathrm{Bern}(q_{\diffp_i,t_i})$. Then
	\begin{align*}
		\E_Q[e^{\lambda L_i}\mid y_1,\ldots,y_{i-1}]
		&= \int \Big(\frac{Q_i}{P_i}\Big)^\lambda \cdot Q_i\\
		&= q_{\diffp_i,t_i}^{\lambda+1}p_{\diffp_i,t_i}^{-\lambda}+(1-q_{\diffp_i,t_i})^{\lambda+1}(1-p_{\diffp_i,t_i})^{-\lambda}
	\end{align*}
	It is easy to verify that $q_{\diffp,\diffp-t} = 1-p_{\diffp,t}$ and $p_{\diffp,\diffp-t} = 1-q_{\diffp,t}$. Plugging these into the above expression yields the desired result.
\end{proof}

We now simplify the expression in Lemma~\ref{lem:MGFexpression} with the following function,
\begin{align*}
	h_\ep(\lambda)
	&=\sup_{t\in[0,\ep]} \log \big(p_{\ep,t}^{\lambda+1} q_{\ep,t}^{-\lambda}+(1-p_{\ep,t})^{\lambda+1} (1-q_{\ep,t})^{-\lambda}\big)\\
	&= \sup_{t\in[0,\ep]} \log \big(p_{\ep,t}e^{-\lambda t}+(1-p_{\ep,t})e^{-\lambda (t-\ep)}\big)\\
	&=\sup_{t\in[0,\ep]}\lambda(\ep-t)+\log\big(1+p_{\ep,t}(e^{-\lambda\ep}-1)\big).
\end{align*}
The second line above makes use of the fact that
\[q_{\ep,t} = e^t p_{\ep,t} \text{ and } 1-q_{\ep,t} = e^{t-\ep} (1-p_{\ep,t}).\]
Now it is easy to see that $U_i(\lambda)$ can be taken as $h_{\diffp_i}(\lambda)$, which we label as ``General MGF" in \Cref{fig:U}. Combining this with \Cref{lem:origin}, we have \Cref{thm:MGFthm}.

\begin{figure}[!htp]
\centering
\includegraphics[width=0.7\textwidth]{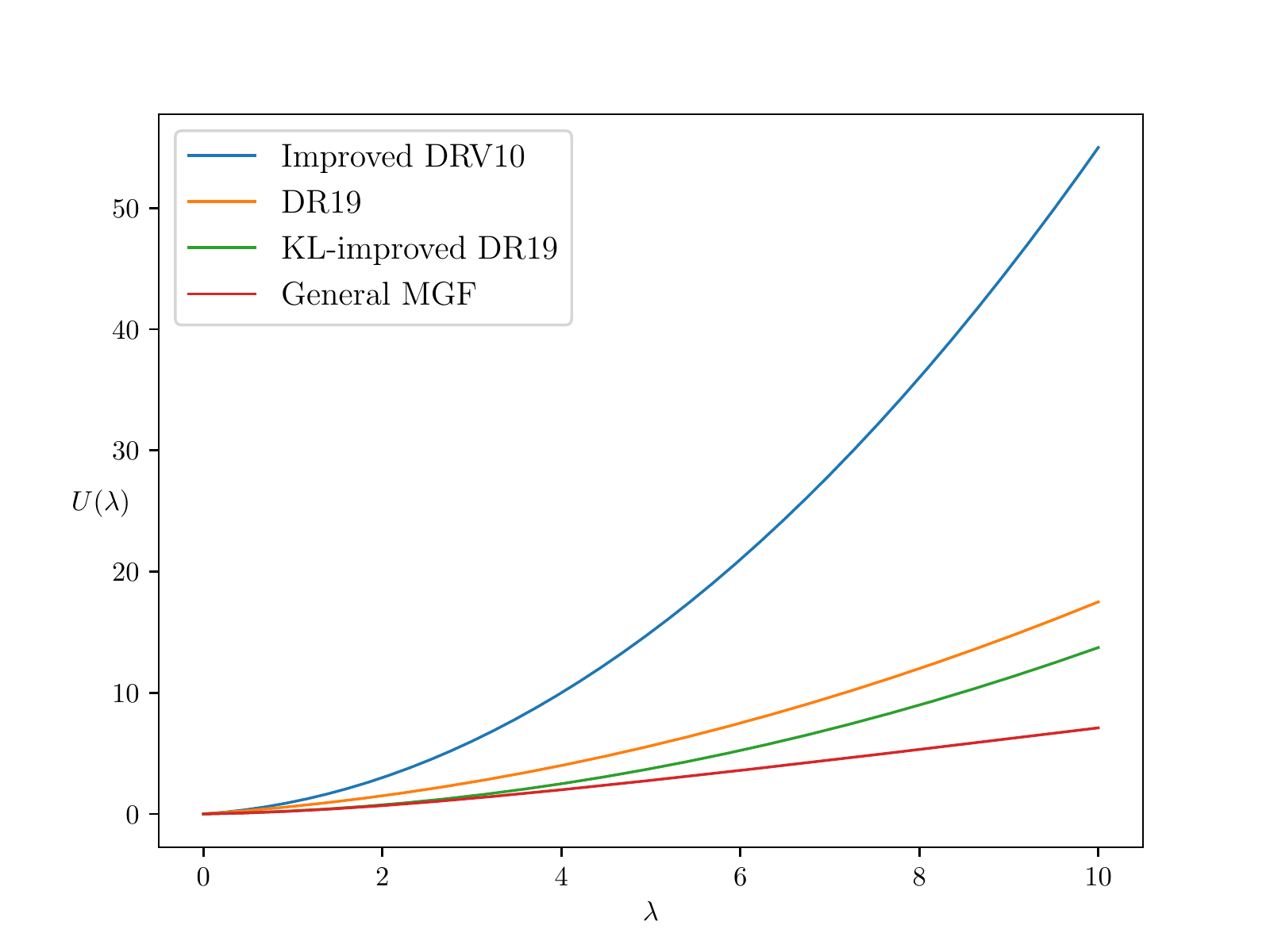}
\caption{A unified view and comparison of composition theorems involving concentration inequalities. The figure shows graphs of different $U$ functions (see \Cref{lem:origin}) used in different results, such as from \citet{DworkRoVa10} (labeled ``Improved DRV10'') and \citet{DurfeeRo19} (labeled ``DR19''). According to \Cref{lem:origin}, smaller function $U$ yields tighter privacy result. \Cref{thm:MGFthm} uses the smallest $U$ (labeled ``General MGF'') among all and is hence the tightest. All curves use $\ep=1$.\label{fig:U}}
\end{figure}

\paragraph{Numerical Issue} 
\label{par:numerical_issue}
We now point out a potential numeric issue in computing the function $h_\ep(\lambda)$.  Note that it can be simplified differently as
\begin{align*}
	h_\ep(\lambda)
	&= \sup_{t\in[0,\ep]} -\lambda t+\log \big(p_{\ep,t}+e^{\ep\lambda}(1-p_{\ep,t})\big).
\end{align*}
For comparision, the expression we use in \Cref{thm:MGFthm} is
\begin{align*}
	h_\ep(\lambda)
	&=\sup_{t\in[0,\ep]}\lambda(\ep-t)+\log\big(1+p_{\ep,t}(e^{-\lambda\ep}-1)\big).
\end{align*}
At first glance it may appear that the above two expressions are equal. However, the one used in the theorem is far more robust numerically, as in the optimization step, $\diffp\lambda$ can be large, which could make $e^{\diffp\lambda}$ beyond the range of floating point numbers.

\section{Conclusion and future directions} 

In this work, we studied the privacy loss when composing multiple exponential mechanisms, which is a fundamental class of DP algorithms.  We considered the privacy loss bounds when the exponential mechanisms can be adaptively selected at each round or when they are all selected in advance, as well as differentiated the homogeneous (all privacy parameters are the same) and the heterogeneous (privacy parameters can be different) case. We then made the connection between exponential mechanisms and the generalized randomized response mechanism to help simplify our privacy loss expressions.  Although we provided formulas for each case, we only provided an efficient calculation for computing the optimal composition bound in the nonadaptive and homogeneous case.  We conjecture that computing the optimal composition bound in the nonadaptive and heterogeneous case has similar hardness results as shown in \citet{MurtaghVa16} and we leave the problem open for future work.  

We then showed for the optimal homogenous composition bound that there is a separation between in the adaptive and nonadaptive case, which to our knowledge is a first of its kind result.  We then provided improved and computationally efficient composition bounds for the adaptive and homogeneous case by tailoring concentration bounds for our particular setting.  In order to better understand the adaptive composition bound, one potential direction for future work is to understand the asymptotics of the privacy loss bound, as $k \to \infty$.  We conjecture that the asymptotic gap collapses between the optimal composition bound for the adaptive and nonadaptive cases, and leave that as future work to study. 
Furthermore, in the non-asymptotic setting we believe that the gap between adaptive and non-adaptive is quite small, and also leave proving a strong upper bound on this gap to future work. 

Lastly, it is interesting to study composition bounds that account for different types of DP mechanisms at each round.  General DP composition bounds can be used in cases where Laplace and exponential mechanisms are used, but perhaps those bounds can be improved with composition that accounts for exponential mechanisms and Laplace mechanisms separately.  We leave this as an interesting direction of future work.  

\section{Acknowledgements} 

We thank our colleagues Reza Hosseini, Krishnaram Kenthapadi, Sean Peng, and Subbu Subramaniam for their helpful feedback and comments.

\clearpage

\bibliography{bib}
\bibliographystyle{abbrvnat}

\clearpage

\appendix

\section{Proof of Lemma~\ref{lem:reduc}} 
\label{sec:gen-rand-response}

In order to use this interpretation, we will need to first establish some notation. For a pair of probability distributions $P$ and $Q$ on a common probability space $\Omega$, its trade-off function \cite{DongRoSu19} describes the hardness of the hypothesis testing problem $H_0: P$ vs $H_1: Q$. Let $E \subseteq \Omega$ be an arbitrary rejection region and
	\begin{align*}
		\alpha_E &= P[E]\\
		\beta_E &= 1-Q[E]
	\end{align*}
	be the type I and type II errors of the test $E$ respectively. Fix a level $\alpha_0$ and let $E$ run over all test with type I error at most $\alpha_0$, the minimal type II error is
	\[\inf\{\beta_E: E \text{ is a rejection region s.t. } \alpha_E\leqslant \alpha_0\}.\]
	This correspondence of $\alpha_0$ to the minimal type II error defines a function from $[0,1]$ to $[0,1]$. We will call this function $T(P,Q)$. Formally,
	\begin{align*}
		T(P,Q):[0,1]&\to[0,1]\\
		\alpha_0 &\mapsto \inf\{\beta_E: \alpha_E\leqslant \alpha_0\}
	\end{align*}
	
For our proof, we will use this function $T$ and apply Blackwell's theorem (\cite{blackwell1950comparison}, Theorem 10). The following form is taken from \cite{DongRoSu19}.
	\begin{theorem}\label{thm:blackwell}
	Let $P,Q$ be probability distributions on $Y$ and $P',Q'$ be probability distributions on $Z$. The following two statements are equivalent:
	\begin{enumerate}
	\item[(a)] $\F(P,Q)\leqslant \F(P',Q')$.
	\item[(b)] There exists a randomized algorithm $\Proc: Y \to Z$ such that $\Proc(P)=P',\Proc(Q)=Q'$.
	\end{enumerate}
	\end{theorem}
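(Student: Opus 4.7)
The proof proposal splits into the two implications of the stated equivalence. The direction (b)$\Rightarrow$(a) is the data processing inequality for trade-off functions. Given a randomized $\Proc:Y\to Z$ with $\Proc(P)=P'$ and $\Proc(Q)=Q'$, any rejection region $E\subseteq Z$ lifts to a randomized test on $Y$ that rejects $y$ with probability $\Proc(E\mid y)$; this lifted test has type I and type II errors exactly $P'[E]$ and $1-Q'[E]$, so every achievable $(\alpha,\beta)$ pair under $(P',Q')$ is achievable under $(P,Q)$, and taking the pointwise infimum yields $\F(P,Q)(\alpha)\leqslant \F(P',Q')(\alpha)$ for every $\alpha\in[0,1]$.

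For the converse (a)$\Rightarrow$(b), I would follow the classical Blackwell-Sherman-Stein strategy in three steps. First, reduce to the case in which $Y$ and $Z$ are finite. By Radon-Nikodym factorization, $(P,Q)$ and $(P',Q')$ can be replaced by their pushforwards under the respective likelihood-ratio statistics onto $\R_{\geq 0}$, and by routine truncation and discretization it suffices to prove the theorem when the alphabets are finite; a weak-$*$ compactness argument on the compact set of Markov kernels then lifts the finite-case kernel back to the original spaces. Second, identify the trade-off function with the lower boundary of the convex ``risk region'' $\mathcal{A}(P,Q):=\{(\E_P\phi,\,1-\E_Q\phi):\phi:Y\to[0,1]\}\subseteq[0,1]^2$. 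By the Neyman-Pearson lemma this region is the convex hull generated by likelihood-ratio tests, and its lower boundary is exactly $\F(P,Q)$, so the hypothesis $\F(P,Q)\leqslant \F(P',Q')$ becomes the geometric containment $\mathcal{A}(P',Q')\subseteq \mathcal{A}(P,Q)$.

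Third, I would construct $\Proc$ via linear programming duality. Viewing a randomized kernel as a row-stochastic matrix $\Proc\in\R^{|Y|\times|Z|}$, the desired conditions are the linear equalities $\Proc^\top p=p'$ and $\Proc^\top q=q'$, where $p,q,p',q'$ are the probability mass functions. Feasibility of this linear program is equivalent, by Farkas' lemma, to the non-existence of a separating pair $(\lambda,\mu)\in\R^2$ for which some $z\in Z$ satisfies $\lambda p'(z)+\mu q'(z)>\max_{y\in Y}\bigl(\lambda p(y)+\mu q(y)\bigr)$. The main obstacle is to show that such a dual certificate directly contradicts the hypothesis: any separating $(\lambda,\mu)$ would produce a point of $\mathcal{A}(P',Q')$ strictly outside $\mathcal{A}(P,Q)$, which in turn exhibits a level $\alpha$ at which $\F(P,Q)(\alpha)>\F(P',Q')(\alpha)$, contradicting (a). The delicate part is keeping track of signs and normalizations so that each separating hyperplane corresponds to a supporting line of the trade-off function at a genuine level $\alpha\in[0,1]$; once this correspondence is pinned down, Farkas' lemma delivers the required $\Proc$ in the finite case, and the reduction in the first step then produces the Markov kernel on the original $Y$ and $Z$.
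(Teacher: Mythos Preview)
The paper does not supply its own proof of this statement: Theorem~\ref{thm:blackwell} is quoted as a classical result of Blackwell (with the trade-off function formulation attributed to Dong--Roth--Su) and is simply invoked in the proof of Lemma~\ref{lem:reduc}. So there is no paper proof to compare against; you are filling in an external citation.

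As an outline of the classical argument your proposal is sound. The direction (b)$\Rightarrow$(a) is exactly the data-processing inequality for trade-off functions and your pull-back of tests is the right justification. For (a)$\Rightarrow$(b), the identification of $\F(P,Q)$ with the lower boundary of the risk region and the geometric reformulation $\mathcal{A}(P',Q')\subseteq\mathcal{A}(P,Q)$ are correct, and the Farkas/separating-hyperplane route in the finite case is the standard Blackwell--Sherman--Stein mechanism. The one place I would flag as a genuine (not merely cosmetic) obstacle is your first step: the reduction from arbitrary measurable $(Y,Z)$ to finite alphabets is not ``routine.'' Blackwell's original 1950 result is for experiments with finite parameter and sample spaces, and lifting a limit of finite-alphabet kernels back to a bona fide Markov kernel on the original spaces requires either a tightness/Prokhorov argument on the space of kernels or an appeal to Le~Cam's randomization criterion; the phrase ``weak-$*$ compactness on the compact set of Markov kernels'' hides exactly this work. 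If you intend a self-contained proof, that step deserves an explicit construction (or a precise citation); for the purposes of this paper, citing Blackwell together with the Dong--Roth--Su formulation, as the authors do, is entirely adequate.
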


We now prove that we can post-process the generalized random response to simulate any BR mechanism on neighboring inputs.

\begin{proof}[Proof of \Cref{lem:reduc}]

Let $P$ be the outcome distribution of $M(x^0)$ and $Q$ be the outcome distribution of $M(x^1)$.
	By \Cref{cor:br},  we know there exists some $t \in [0,\ep]$ such that 
	\[t-\ep\leqslant\log\frac{Q(y)}{P(y)} \leqslant t.\]
	Equivalently, for any event $E \subseteq \cY$, 
	\begin{equation}\label{eq:PEQE1}
		\e^{t-\ep}P[E]\leqslant Q[E]\leqslant \e^tP[E].
	\end{equation}
	Applying the same rule for the complement event $E^c$, we have
	\begin{equation}\label{eq:PEQE2}
		\e^{t-\ep}P[E^c]\leqslant Q[E^c]\leqslant \e^tP[E^c].
	\end{equation}
	The second inequality of \eqref{eq:PEQE1} and the first inequality of \eqref{eq:PEQE2} imply
	\begin{equation}\label{eq:aebe}
		1-\beta_E\leqslant \e^t \alpha_E, \quad \e^{t-\ep}(1-\alpha_E)\leqslant \beta_E.
	\end{equation}
	Let the piece-wise linear function $l_{t,\ep}:[0,1]\to[0,1]$ be defined as
	\[
	l_{t,\ep}(x) = \max\{1-\e^tx, \e^{t-\ep}(1-x)\}.
	\]
	It's easy to see that \eqref{eq:aebe} implies $T(P,Q)\geqslant l_{t,\ep}$ pointwise in $[0,1]$. 

	Furthermore, it is straightforward to verify that $l_{t,\ep} \equiv T(\grr{\diffp,t}(0),\grr{\diffp,t}(1))$ because the respective inequalities in \eqref{eq:aebe} are tight for $E = \{0\}$ and $E = \{1\}$, respectively. Therefore, there must be a $t=t(M,x^0,x^1)$ such that
	\[
	T\big(M(x^0), M(x^1)\big)\geqslant T\big(\grr{\diffp,t}(0), \grr{\diffp,t}(1)\big).
	\]
Applying Theorem~\ref{thm:blackwell} then gives our desired claim.
	\end{proof}

\section{Omitted Proofs from Section~\ref{sec:nonadaptive}}\label{sec:nonadaptiveapp} 

We provide here the proofs from Section~\ref{sec:nonadaptive} that were omitted.

\subsection{Proof of Lemma~\ref{lem:gen_convexity}}

This lemma will be proven in two main sublemmas. First, we show that it holds for $k = 2$, then we show how we can reduce the general case to $k = 2$ by conditioning outcomes other than the first and second terms.

\begin{lemma}\label{lem:2_convexity} For any $\diffp_g \in \R$ and $t_1, t_2 \in [0,\diffp]$
\[
\delta((t_1,t_2),\diffp_g) \leq \delta\left( \left(\frac{t_1 + t_2}{2}, \frac{t_1 + t_2}{2} \right), \diffp_g\right)
\]
Further, the inequality is strict whenever $\diffp_g< t_1 + t_2 <\diffp_g + 2\diffp$ and $t_1 \neq t_2$.
\end{lemma}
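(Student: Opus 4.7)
The plan is to reduce $\delta((t_1,t_2),\diffp_g)$ to a compact form and then case-split on $\diffp_g$. First I would use the identities $q_{\diffp,t} = e^t p_{\diffp,t}$ and $1-q_{\diffp,t} = e^{t-\diffp}(1-p_{\diffp,t})$ to factor each summand of $\delta$, and then group by $|S|$ to obtain
\[
\delta((t_1,t_2),\diffp_g) \;=\; \sum_{j=0}^{2} \pi_j(t_1,t_2)\cdot \max\bigl\{e^{t_1+t_2-j\diffp} - e^{\diffp_g},\,0\bigr\},
\]
where $\pi_j(t_1,t_2)=\sum_{|S|=j}\prod_{i\notin S} p_{t_i}\prod_{i\in S}(1-p_{t_i})$ is the probability, under independent Bernoullis with parameters $p_{t_1},p_{t_2}$, that exactly $j$ coordinates are flipped. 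The crucial observation is that each exponent $t_1+t_2-j\diffp$ is preserved under the averaging $(t_1,t_2)\mapsto(\bar t,\bar t)$, so only the weights $\pi_j$ change.

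I would then split on where $e^{\diffp_g}$ sits among the thresholds $e^{t_1+t_2-2\diffp}<e^{t_1+t_2-\diffp}<e^{t_1+t_2}$. Two of the four resulting cases are trivial: if $e^{\diffp_g}\geq e^{t_1+t_2}$ then all three maxes vanish and $\delta\equiv 0$; if $e^{\diffp_g}\leq e^{t_1+t_2-2\diffp}$ then all three are active and the expression telescopes (using $\sum_S a_S = \sum_S b_S = 1$) to the $\bbt$-independent value $1-e^{\diffp_g}$. Both extremes lie outside the open range $\diffp_g\in(t_1+t_2-2\diffp,\,t_1+t_2)$ of the strict-inequality hypothesis. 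The case where only the $j=0$ term survives reduces to showing $p_{t_1}p_{t_2}\leq p_{\bar t}^2$, which is log-concavity of $t\mapsto p_{\diffp,t}=(e^{-t}-e^{-\diffp})/(1-e^{-\diffp})$; this is strict on $[0,\diffp)$ by direct computation of $(\log p_t)''$.

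The main case is where the $j=0$ and $j=1$ terms both survive. Here I would exploit the one-line identity
\[
e^{-\diffp} + (1-e^{-\diffp})\, p_{\diffp,t} \;=\; e^{-t},
\]
which is immediate from the definition of $p_{\diffp,t}$. Using it, both differences $p_{\bar t}^2-p_{t_1}p_{t_2}$ and $2p_{\bar t}-(p_{t_1}+p_{t_2})$ factor through the single quantity $M := e^{-t_1}+e^{-t_2}-2e^{-\bar t}$, which is nonnegative by convexity of $e^{-t}$ and strictly positive whenever $t_1\neq t_2$. After substituting these expressions, the difference $\delta((\bar t,\bar t),\diffp_g)-\delta((t_1,t_2),\diffp_g)$ collapses algebraically (with $s+r=1$ and $2s+r=1+e^{-\diffp}$ producing the needed cancellation) to $\tfrac{M}{(1-e^{-\diffp})^2}\bigl(e^{\diffp_g}-e^{t_1+t_2-2\diffp}\bigr)$, whose positivity is precisely the defining inequality of this case.

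The main obstacle is this last algebraic collapse: a naive Jensen-type argument picks up one positive and one negative contribution of opposite sign, since log-concavity gives $p_{\bar t}^2\geq p_{t_1}p_{t_2}$ while convexity gives $2p_{\bar t}\leq p_{t_1}+p_{t_2}$, and these two perturbations push $\delta$ in opposite directions. The point of the identity $e^{-\diffp}+(1-e^{-\diffp})p_{\diffp,t}=e^{-t}$ is to express both perturbations as explicit multiples of the same quantity $M$, so that their combined effect cancels exactly down to the sign-correct factor $e^{\diffp_g}-e^{t_1+t_2-2\diffp}$; strictness then follows because both $M>0$ and $e^{\diffp_g}>e^{t_1+t_2-2\diffp}$ are guaranteed by the hypotheses of the strict inequality.
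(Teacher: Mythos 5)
Your proposal is correct and follows essentially the same route as the paper: the same rewriting of $\delta$ via $q_t=e^tp_t$, $1-q_t=e^{t-\diffp}(1-p_t)$, the same four-way case split on which $\max$ terms are active (with the two extreme cases trivially $\bbt$-independent), and the same underlying convexity of $e^{-t}$ driving strictness. The only cosmetic difference is in the mixed case: you expand the two active terms and verify the cancellation down to $\tfrac{M}{(1-e^{-\diffp})^2}\bigl(e^{\diffp_g}-e^{t_1+t_2-2\diffp}\bigr)$ directly, whereas the paper reaches the identical comparison by subtracting the inactive $|S|=2$ term from the total $1-e^{\diffp_g}$, reducing to $(1-p_{t_1})(1-p_{t_2})<(1-p_{\bar t})^2$.
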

\begin{proof}
Using the fact that $q_{t} = e^t p_t$ and $(1 - q_t) = e^{t-\diffp} (1 - p_t)$, we rewrite 
\[
\delta((t_1,t_2),\diffp_g) = \sum_{S \subseteq \{1,2\}} \prod_{i \notin S}p_{t_i} \prod_{i \in S} (1 - p_{t_i}) \max\left\{e^{t_1 + t_2 - |S|\diffp}- e^{\diffp_g}, 0 \right\}
\]

We will then prove our desired inequality by considering four cases. 

\paragraph{Case I ($t_1 + t_2 \leq \diffp_g$):} This implies that $\max\{e^{t_1 + t_2 - |S|\diffp}- e^{\diffp_g}, 0 \} = 0$ for any subset $S$ and 

\[
\delta((t_1,t_2),\diffp_g) = \delta\left( \left(\frac{t_1+t_2}{2},\frac{t_1 + t_2}{2} \right),\diffp_g\right) = 0.
\]

\paragraph{Case II ($t_1 + t_2 \geq \diffp_g + 2\diffp$):} This implies $\max\{e^{t_1 + t_2 - |S|\diffp}- e^{\diffp_g}, 0 \} = e^{t_1 + t_2 - |S|\diffp}- e^{\diffp_g}$ for any $S$, which gives

\[
\delta((t_1,t_2),\diffp_g) = \sum_{S \subseteq \{1,2\}} \left(\prod_{i \notin S} q_{t_i} \prod_{i \in S} (1 - q_{t_i}) - e^{\diffp_g}\prod_{i \notin S}p_{t_i} \prod_{i \in S} (1 - p_{t_i}) \right)= 1 - e^{\diffp_g}
\]
and equivalently holds for $ \delta(\left(\frac{t_1+t_2}{2},\frac{t_1 + t_2}{2}\right),\diffp_g)$.

\paragraph{Case III ($\diffp_g < t_1 + t_2 \leq \diffp_g + \diffp$):} This implies that $\max\{e^{t_1 + t_2 - |S|\diffp}- e^{\diffp_g}, 0 \} = 0$ for any $S$ such that $|S| > 0$. Therefore,
\[
\delta((t_1,t_2),\diffp_g) = p_{t_1}p_{t_2}\left(e^{t_1 + t_2} - e^{\diffp_g}\right)
\]

Equivalently, we have 
\[
 \delta\left(\left( \frac{t_1+t_2}{2},\frac{t_1 + t_2}{2}\right),\diffp_g\right) = p_{\frac{t_1 + t_2}{2}}^2 \left(e^{t_1 + t_2} - e^{\diffp_g}\right)
\]

We want strict inequality for this case, so it suffices to show $p_{t_1}p_{t_2} < p_{\frac{t_1 + t_2}{2}}^2$. Plugging in the explicit formula for each $p_t$ and performing some simple algebraic manipulations gives that this is equivalent to
\[
2e^{-\frac{t_1 + t_2}{2}} < e^{-t_1} + e^{-t_2}
\]
which holds due to the strict-convexity of the exponential function.

\paragraph{Case IV ($\diffp_g + \diffp \leq t_1 + t_2 < \diffp_g + 2\diffp$):} This implies that $\max\{e^{t_1 + t_2 - |S|\diffp}- e^{\diffp_g}, 0 \} = 0$ when $|S| = 2$. Therefore,  

\[
\delta((t_1,t_2),\diffp_g) = p_{t_1}p_{t_2}\left(e^{t_1 + t_2} - e^{\diffp_g}\right) + \left(p_{t_1} (1 - p_{t_2}) + p_{t_2}(1 - p_{t_1})\right)\left(e^{t_1 + t_2 - \diffp} - e^{\diffp_g}\right)
\]
From Case II, we know
\[
\sum_{S \subseteq \{1,2\}} \prod_{i \notin S}p_{t_i} \prod_{i \in S} (1 - p_{t_i}) \left(e^{t_1 + t_2 - |S|\diffp}- e^{\diffp_g}\right) = 1 - e^{\diffp_g}
\]
which yields
\[
\delta((t_1,t_2),\diffp_g) = 1 - e^{\diffp_g} - (1 - p_{t_1})(1 - p_{t_2})\left(e^{t_1 + t_2 - 2\diffp} - e^{\diffp_g}\right).
\]
This equivalently holds for $ \delta(\left(\tfrac{t_1+t_2}{2},\tfrac{t_1 + t_2}{2}\right),\diffp_g)$ and because $e^{t_1 + t_2 - 2\diffp} - e^{\diffp_g} < 0$, we have
\[
\delta(\left(t_1,t_2\right),\diffp_g)  <  \delta\left(\left(\frac{t_1+t_2}{2},\frac{t_1 + t_2}{2}\right),\diffp_g\right) \qquad \Leftrightarrow \qquad 
(1 - p_{t_1})(1 - p_{t_2})  <  \left(1 - p_{\frac{t_1 + t_2}{2}}\right)^2.
\]

Once again, we plug in the explicit formula for each $p_t$ and perform some simple algebraic manipulations to see that this is also equivalent to

\[
2e^{-\frac{t_1 + t_2}{2}} < e^{-t_1} + e^{-t_2}
\]

and this again holds due to the strict-convexity of the exponential function.

\end{proof}

We now want to extend this to $k > 2$, which will be done by fixing an arbitrary subset of $\{3,\cdots,k\}$ and show that the inequality holds when we restrict the summation to subsets of $\{1, \cdots,k\}$ that must contain that subset of $\{3,\cdots,k\}$. This will allow for easy cancellation.  We will denote $\delta_U(\bbt,\diffp_g,S)$ for a set $U \subseteq [k]$ and $S \subseteq U$ as 

\begin{multline*}
\delta_U(\bbt,\diffp_g,S)\defeq \prod_{i \in U\setminus S} p_{t_i} \prod_{i \in S} (1 - p_{t_i}) \\
\cdot   \sum_{S' \subseteq [k] \setminus U} \max\left\{ e^{\sum_{j \in U}t_j - |S|\diffp} \prod_{i \notin U \cup S'} q_{t_i} \prod_{i \in S'} (1 - q_{t_i}) - e^{\diffp_g} \prod_{i \notin U \cup S'}p_{t_i} \prod_{i \in S'} (1 - p_{t_i}), 0 \right\}.
\end{multline*}

\begin{claim}\label{claim:sum_fixed_sets}
Let $\diffp_g \in \R$.  Then for any $\bbt \in [0,\ep]^k$, we have for $U = \{3, \cdots, k \}$
\[
\delta(\bbt,\diffp_g) = \sum_{S \subseteq U }  \delta_U(\bbt,\diffp_g,S) 
\]

\end{claim}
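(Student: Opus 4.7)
The plan is to reorganize the sum defining $\delta(\bbt,\diffp_g)$ by partitioning each subset $S \subseteq [k]$ according to its intersection with $U = \{3,\ldots,k\}$ and with its complement $\{1,2\}$, and then recognize the inner sum as exactly $\delta_U(\bbt,\diffp_g,S_U)$.

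First, I would write each $S \subseteq [k]$ uniquely as a disjoint union $S = S_U \sqcup S'$ with $S_U \subseteq U$ and $S' \subseteq [k]\setminus U$, so that
\[
\delta(\bbt,\diffp_g) = \sum_{S_U \subseteq U}\sum_{S'\subseteq [k]\setminus U} \max\Bigl\{\prod_{i\notin S_U\cup S'} q_{t_i}\prod_{i\in S_U\cup S'}(1-q_{t_i}) - e^{\diffp_g}\prod_{i\notin S_U\cup S'} p_{t_i}\prod_{i\in S_U\cup S'}(1-p_{t_i}),\,0\Bigr\}.
\]
The goal is to show that the inner sum over $S'$ equals $\delta_U(\bbt,\diffp_g,S_U)$.

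Next, I would use the identities $q_{t_i} = e^{t_i}p_{t_i}$ and $1-q_{t_i} = e^{t_i-\diffp}(1-p_{t_i})$ (both appearing already in the proof of Lemma~\ref{lem:bound_sum_t}) to rewrite the $U$-factors of the first product inside the max:
\[
\prod_{i\in U\setminus S_U} q_{t_i}\prod_{i\in S_U}(1-q_{t_i}) = e^{\sum_{j\in U} t_j - |S_U|\diffp}\prod_{i\in U\setminus S_U} p_{t_i}\prod_{i\in S_U}(1-p_{t_i}).
\]
The second term inside the max already has the factor $\prod_{i\in U\setminus S_U} p_{t_i}\prod_{i\in S_U}(1-p_{t_i})$ from the same splitting, so this nonnegative factor is common to both terms of the max and can be pulled outside of the $\max\{\cdot,0\}$ operation.

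Finally, after factoring, the inner sum takes the form
\[
\prod_{i\in U\setminus S_U} p_{t_i}\prod_{i\in S_U}(1-p_{t_i})\sum_{S'\subseteq [k]\setminus U}\max\Bigl\{e^{\sum_{j\in U} t_j - |S_U|\diffp}\!\!\prod_{i\notin U\cup S'}\!\! q_{t_i}\prod_{i\in S'}(1-q_{t_i}) - e^{\diffp_g}\!\!\prod_{i\notin U\cup S'}\!\! p_{t_i}\prod_{i\in S'}(1-p_{t_i}),\,0\Bigr\},
\]
which is precisely $\delta_U(\bbt,\diffp_g,S_U)$. Summing over $S_U \subseteq U$ yields the claim.

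There is no real obstacle here; the proof is a direct algebraic rearrangement. The only thing to be careful about is justifying pulling the common factor outside the $\max$, which is valid because $p_{t_i},1-p_{t_i}\in[0,1]$ and so the factor $\prod_{i\in U\setminus S_U} p_{t_i}\prod_{i\in S_U}(1-p_{t_i})$ is nonnegative.
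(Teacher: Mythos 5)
Your proof is correct and is essentially the paper's own argument: both rely on the identities $q_t = e^t p_t$ and $1-q_t = e^{t-\diffp}(1-p_t)$ to convert the $U$-factor, and on the bijection $S \leftrightarrow (S\cap U,\, S\cap\{1,2\})$ to reindex the sum; the paper merely runs the manipulation in the opposite direction (starting from $\delta_U$ and pushing the nonnegative factor inside the max, rather than pulling it out).
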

\begin{proof}
We fix a set $S \subseteq \{3, \cdots, k \} = U$.  Using the fact that $q_{t} = e^t p_t$ and $(1 - q_t) = e^{t-\diffp} (1 - p_t)$, we have
\[
\prod_{i \in U\setminus S} q_{t_i} \prod_{i \in S} (1 - q_{t_i}) = e^{t_3 + \cdots + t_k - |S |\diffp} \prod_{i \in U\setminus S} p_{t_i} \prod_{i \in S} (1 - p_{t_i}) 
\]
Therefore, we also have 
\[
\delta_U(\bbt,\diffp_g,S) = \sum_{S' \subseteq \{1,2\}} \max\left\{ \prod_{i \notin S' \cup S} q_{t_i} \prod_{i \in S' \cup S} (1 - q_{t_i}) - e^{\diffp_g} \prod_{i \notin S' \cup S}p_{t_i} \prod_{i \in S' \cup S} (1 - p_{t_i}), 0 \right\}
\]
Summing over all $S$ we can simply rewrite this summation over all subsets of $\{1,\cdots,k\}$, giving our desired equality.
\end{proof}

\begin{lemma}\label{lem:set_fixed_convexity}
For any $S \subseteq \{3,...,k\} = U$, we have the following inequality
\[
\delta_U(\bbt,\diffp_g,S ) \leq \delta_U\left(\left(\frac{t_1 + t_2}{2}, \frac{t_1 + t_2}{2}, t_3,...,t_k\right),\diffp_g, S\right)
\]
Further, the inequality is strict if $\diffp_g < \sum_{i=1}^k t_i - |S|\diffp <\diffp_g + 2 \diffp$ and $t_1 \neq t_2$.
\end{lemma}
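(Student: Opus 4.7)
The plan is to reduce the statement to the already-established $k=2$ case (Lemma~\ref{lem:2_convexity}) by factoring out everything that depends only on $t_3,\ldots,t_k$ and recognizing the leftover expression as a copy of $\delta((t_1,t_2),\cdot)$ evaluated at a shifted global parameter. Since $S\subseteq U=\{3,\ldots,k\}$, the prefactor $\prod_{i\in U\setminus S}p_{t_i}\prod_{i\in S}(1-p_{t_i})$ in the definition of $\delta_U$ is independent of $(t_1,t_2)$, so it drops out of both sides of the desired inequality. This prefactor is manifestly nonnegative, so any inequality we establish for the remaining sum lifts immediately.

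Next I would manipulate the inner sum, which is indexed by $S'\subseteq[k]\setminus U=\{1,2\}$. Using the standard identities $q_{t_i}=e^{t_i}p_{t_i}$ and $1-q_{t_i}=e^{t_i-\diffp}(1-p_{t_i})$ for $i\in\{1,2\}$, together with pulling the nonnegative constant $e^{\sum_{j\in U}t_j-|S|\diffp}$ inside the $\max\{\cdot,0\}$, one rewrites the sum as
\[
e^{\sum_{j\in U}t_j-|S|\diffp}\sum_{S'\subseteq\{1,2\}}\prod_{i\in\{1,2\}\setminus S'}p_{t_i}\prod_{i\in S'}(1-p_{t_i})\,\max\!\Big\{e^{t_1+t_2-|S'|\diffp}-e^{\diffp_g'},\,0\Big\},
\]
where $\diffp_g':=\diffp_g-\sum_{j\in U}t_j+|S|\diffp$. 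The right-hand side is exactly $e^{\sum_{j\in U}t_j-|S|\diffp}\cdot\delta\big((t_1,t_2),\diffp_g'\big)$ in the notation of Section~\ref{sec:nonadaptive}.

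Now I would invoke Lemma~\ref{lem:2_convexity} directly: it gives $\delta((t_1,t_2),\diffp_g')\leq\delta\big((\tfrac{t_1+t_2}{2},\tfrac{t_1+t_2}{2}),\diffp_g'\big)$, with strict inequality whenever $\diffp_g'<t_1+t_2<\diffp_g'+2\diffp$. Multiplying by the nonnegative prefactor and the nonnegative exponential factor, both of which depend only on $t_3,\ldots,t_k$ and are therefore unchanged on the two sides, yields the claimed inequality for $\delta_U$. Finally, translating the condition $\diffp_g'<t_1+t_2<\diffp_g'+2\diffp$ back through the definition of $\diffp_g'$ gives exactly $\diffp_g<\sum_{i=1}^k t_i-|S|\diffp<\diffp_g+2\diffp$, matching the strict-inequality hypothesis in the statement.

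There is no real obstacle here beyond careful bookkeeping; the only point that requires a moment of care is the pull-through of the exponential factor into the $\max\{\cdot,0\}$, which is valid because the factor is strictly positive, and the verification that the set $[k]\setminus(U\cup S')$ in the original definition is exactly $\{1,2\}\setminus S'$, so the inner sum really does have the shape of a two-round nonadaptive expression. Once those two identifications are in place, the result is an immediate corollary of Lemma~\ref{lem:2_convexity}.
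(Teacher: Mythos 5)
Your proposal is correct and follows essentially the same route as the paper's proof: factor out the $(t_3,\ldots,t_k)$-dependent prefactor, absorb the exponential factor into a shifted global parameter $\diffp_g' = \diffp_g - \sum_{j\in U} t_j + |S|\diffp$, recognize the inner sum as $\delta((t_1,t_2),\diffp_g')$, and invoke Lemma~\ref{lem:2_convexity}, with the strictness condition translating back exactly as you state. Your bookkeeping is in fact slightly more explicit than the paper's (whose displayed $\diffp_g'$ omits a factor of $\diffp$ on the $|S|$ term, evidently a typo), so nothing further is needed.
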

\begin{proof}
We fix $S \subseteq \{3,\cdots, k \}$.  Let $\diffp'_g = \diffp_g + |S| - t_3 - \cdots - t_k$, and then by cancelling non-negative like terms it suffices to show

\begin{multline*}
\sum_{S' \subseteq \{1,2\}} \max\left\{ \prod_{i \notin S'} q_{t_i} \prod_{i \in S'}(1 - q_{t_i}) - e^{\diffp'_g} \prod_{i \notin S'} p_{t_i} \prod_{i \in S'} (1 - p_{t_i}), 0 \right \} 
\\
\leq
\sum_{S' \subseteq \{1,2\}} \max\left\{ \prod_{i \notin S'} q_{t'} \prod_{i \in S'}(1 - q_{t'}) - e^{\diffp'_g} \prod_{i \notin S'} p_{t'} \prod_{i \in S'} (1 - p_{t'}), 0 \right \} 
\end{multline*}
where $t' = \frac{t_1 + t_2}{2}$. By definition, this is then equivalent to showing
\[
\delta( (t_1,t_2),\diffp'_g) \leq \delta\left(\left(\frac{t_1 + t_2}{2}, \frac{t_1 + t_2}{2}\right), \diffp'_g\right)
\]
which follows from Lemma~\ref{lem:2_convexity}, and the strictness follows from the fact that $\diffp'_g = \diffp_g + |S| - \sum_{j > 2} t_j$.
\end{proof}

With these we can now prove our main convexity lemma.

\begin{proof}[Proof of Lemma~\ref{lem:gen_convexity}]
It immediately follows from Claim~\ref{claim:sum_fixed_sets} and Lemma~\ref{lem:set_fixed_convexity} that for any $\bbt \in [0,\ep]^k$

\[
\delta(\bbt,\diffp_g) \leq \delta\left(\left(\frac{t_1 + t_2}{2}, \frac{t_1 + t_2}{2}, t_3,...,t_k\right),\diffp_g\right)
\]

Additionally, if we assume that $t_1 \neq t_2$ and $\diffp_g <\sum t_i < \diffp_g + k \diffp$, then there must exist some $\ell \in [0,k-2]$ such that $\diffp_g + \ell\diffp< \sum t_i < \diffp_g + (\ell + 2)\diffp$, which implies that $\diffp_g< \sum t_i - \ell \diffp < \diffp_g  + 2\diffp$. Further,
we know that for any $\ell \in [0,k-2]$ there exists $S \subseteq \{3, \cdots, k\}$ such that $|S| = \ell$.
Therefore, for one of these subsets the inequality is strict and the sum must be a strict inequality as well.
\end{proof}

\subsection{Proof of Lemma~\ref{lem:partial_derivative_full}}

Recall that we had the following definition, for which we wanted to compute the partial derivate with respect to $t$.

\begin{equation}
F_{\ell}(t,\diffp_g) \defeq \sum_{i = 0}^{\ell} {k \choose i} p_{t}^{k-i}(1 - p_{t})^{i} \left( e^{kt - i\diffp} - e^{\diffp_g} \right)
\label{eq:F_t}
\end{equation}

We further split each $F_{\ell}(t,\diffp_g)$ into the individual terms to more easily differentiate the full summation with respect to $t$.
\[
f_{\ell}(t,\diffp_g) \defeq  {k \choose \ell} p_{t}^{k-\ell}(1 - p_{t})^{\ell} \left( e^{kt - \ell\diffp} - e^{\diffp_g} \right)
\]

In particular, giving a much simpler formulation for the partial derivative will rely upon an inductive proof, so this definition will allow an even easier comparison between $F_{\ell}(t,\diffp_g)$ and $F_{\ell + 1}(t,\diffp_g)$ that follows immediately from the definition.

\begin{corollary}\label{cor:inductive_for_t} For any $\ell \in [1,k]$
\[
F_{\ell}(t,\diffp_g) = F_{\ell - 1}(t,\diffp_g) +  f_{\ell}(t,\diffp_g)
\]
\end{corollary}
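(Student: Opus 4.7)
The statement is an immediate bookkeeping identity that follows by splitting off the final summand in the definition of $F_\ell$. The plan is simply to unfold the definitions and observe that the $\ell$-th summand of $F_\ell$ is exactly $f_\ell$.

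More concretely, I would start from the definition
\[
F_{\ell}(t,\diffp_g) = \sum_{i = 0}^{\ell} {k \choose i} p_{t}^{k-i}(1 - p_{t})^{i} \left( e^{kt - i\diffp} - e^{\diffp_g} \right),
\]
separate the final $i = \ell$ term from the rest, and match the remaining sum (with $i$ ranging over $0, 1, \ldots, \ell-1$) against the definition of $F_{\ell-1}(t,\diffp_g)$, while matching the isolated $i = \ell$ term against the definition of $f_{\ell}(t,\diffp_g)$. Since $\ell \geq 1$, the index range $0,\ldots,\ell-1$ is nonempty (or empty when $\ell = 1$ and $\ell-1 = 0$, in which case $F_0$ is just the single $i=0$ term), so no boundary issues arise.

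There is no real obstacle here; the identity is purely definitional, and the whole proof is one line of algebra. The only mild care needed is to confirm that the definition of $F_\ell$ as written is a closed sum up to and including $i = \ell$ (which it is), so that the extracted tail term is exactly $f_\ell$ and the truncated sum up through $i = \ell - 1$ coincides with $F_{\ell-1}$.
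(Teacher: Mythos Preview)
Your proposal is correct and matches the paper's approach: the paper simply states that the identity follows immediately from the definition of $F_\ell$ and $f_\ell$, which is exactly the split-off-the-last-summand argument you give.
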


We first differentiate the simplest of these expressions $F_0(t,\diffp_g)$, and then we will ultimately use this as the base case for proving a simplified formulation of derivative for the general case.

\begin{lemma}\label{lem:base_case}

\[
\frac{\partial F_0(t,\diffp_g)}{\partial t}  = k p_{t}^{k-1} \frac{1}{1 - e^{-\diffp}} \left(e^{\diffp_g - t} - e^{kt - \diffp}\right)
\]

\end{lemma}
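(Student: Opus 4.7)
The plan is a direct calculation. Since $F_0(t,\diffp_g) = \binom{k}{0} p_t^k (1-p_t)^0 (e^{kt} - e^{\diffp_g}) = p_t^k(e^{kt} - e^{\diffp_g})$, I only need the product rule and a simple identity; no induction is required for this base case (the inductive step will appear in the proof of Lemma~\ref{lem:partial_derivative_full}).

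First I would compute $\partial p_t/\partial t$. From Definition~\ref{defn:gen_rr}, $p_t = \frac{e^{-t}-e^{-\diffp}}{1-e^{-\diffp}}$, so $\partial p_t/\partial t = \frac{-e^{-t}}{1-e^{-\diffp}}$. Then by the product rule,
\[
\frac{\partial F_0(t,\diffp_g)}{\partial t} = k p_t^{k-1}\cdot \frac{-e^{-t}}{1-e^{-\diffp}}\cdot (e^{kt}-e^{\diffp_g}) + p_t^k \cdot k e^{kt}.
\]
Factoring out $k p_t^{k-1}/(1-e^{-\diffp})$ and using the identity $p_t(1-e^{-\diffp}) = e^{-t}-e^{-\diffp}$ to rewrite the second term, the bracket becomes
\[
-e^{-t}(e^{kt}-e^{\diffp_g}) + (e^{-t}-e^{-\diffp}) e^{kt} = -e^{kt-t} + e^{\diffp_g - t} + e^{kt-t} - e^{kt-\diffp} = e^{\diffp_g - t} - e^{kt - \diffp}.
\]

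Assembling the pieces gives exactly
\[
\frac{\partial F_0(t,\diffp_g)}{\partial t} = k p_t^{k-1}\cdot \frac{1}{1-e^{-\diffp}}\bigl(e^{\diffp_g - t} - e^{kt - \diffp}\bigr),
\]
as claimed. There is no real obstacle here; the only thing to watch is that the two $e^{kt-t}$ terms cancel, which is what produces the clean factorization used later (in particular, this is the $\ell=0$ instance of the pattern in Lemma~\ref{lem:partial_derivative_full} and will anchor the induction on $\ell$ carried out through Corollary~\ref{cor:inductive_for_t}).
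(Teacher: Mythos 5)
Your proof is correct and takes essentially the same approach as the paper: write $F_0 = p_t^k(e^{kt}-e^{\diffp_g})$, apply the product rule using $\partial p_t/\partial t = -e^{-t}/(1-e^{-\diffp})$, factor out $k p_t^{k-1}/(1-e^{-\diffp})$, and simplify the bracket. You show the cancellation of the $e^{kt-t}$ terms slightly more explicitly than the paper does, but it is the same computation.
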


\begin{proof}
By definition

\[
F_0(t,\diffp_g) = p_t^k \left( e^{kt} - e^{\diffp_g}\right) = \left( \frac{e^{-t} - e^{-\diffp}}{1 - e^{-\diffp}} \right)^k \left( e^{kt} - e^{\diffp_g}\right)
\]

Therefore, by basic differentiation rules

\begin{multline*}
\frac{\partial F_0(t,\diffp_g)}{\partial t} = \left(-k\frac{e^{-t}}{1 - e^{-\diffp}} \left( \frac{e^{-t} - e^{-\diffp}}{1 - e^{-\diffp}} \right)^{k-1} \left(e^{kt} - e^{\diffp_g}\right)\right) + \left(\frac{e^{-t} - e^{-\diffp}}{1 - e^{-\diffp}}\right)^k k e^{kt} 
\\
= k p_t^{k-1} \frac{1}{1 - e^{-\diffp}} \left( - e^{-t}\left(e^{kt} - e^{\diffp_g} \right) + \left(e^{-t} - e^{\diffp} \right) e^{kt}  \right) 
\end{multline*}
which easily reduces to our desired term.

\end{proof}

To apply an inductive claim to the general case, we will also need to evaluate the partial derivative of the last term for each sum.

\begin{lemma}\label{lem:inductive_step}
For $1 \leq \ell \leq k$ 

\begin{multline*}
\frac{\partial f_{\ell}(t,\diffp_g)}{\partial t} = {k \choose \ell} p_{t}^{k - 1 - \ell} (1 - p_{t})^{\ell - 1} \left(\frac{1}{1 - e^{-\diffp}} \right)^2 \bigg( (k - \ell)\left(e^{\diffp_g - t} + e^{(k-1)t - (\ell + 1)\diffp}\right) \\
+ \ell\left(e^{(k - 1)t - \ell\diffp} + e^{\diffp_g - \diffp - t}\right) - k\left(e^{\diffp_g - 2t} + e^{kt - (\ell + 1)\diffp}\right)\bigg) 
\end{multline*}

\end{lemma}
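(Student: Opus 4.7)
The plan is to differentiate $f_\ell(t,\diffp_g) = \binom{k}{\ell} p_t^{k-\ell}(1-p_t)^\ell (e^{kt-\ell\diffp} - e^{\diffp_g})$ directly via the product rule and then simplify using the closed forms of $p_t$ and $1-p_t$. I would first record the two elementary derivatives
\[
\frac{d p_t}{dt} = \frac{-e^{-t}}{1-e^{-\diffp}}, \qquad \frac{d(1-p_t)}{dt} = \frac{e^{-t}}{1-e^{-\diffp}},
\]
and the two identities $(1-e^{-\diffp})\,p_t = e^{-t}-e^{-\diffp}$ and $(1-e^{-\diffp})(1-p_t) = 1-e^{-t}$, which will be the main algebraic tools.

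Next I would apply the product rule to the three factors. This produces three terms: one from differentiating $p_t^{k-\ell}$, one from differentiating $(1-p_t)^\ell$, and one from differentiating $e^{kt-\ell\diffp} - e^{\diffp_g}$. I would factor out the common coefficient $\binom{k}{\ell} p_t^{k-\ell-1}(1-p_t)^{\ell-1}$, so that each of the three terms still contains either one $p_t$ or one $(1-p_t)$ factor (together with an $e^{-t}/(1-e^{-\diffp})$ for the first two). At this point, using the identities above I would convert each remaining $p_t$ or $(1-p_t)$ into its expression in $e^{-t}$ and $e^{-\diffp}$, which is precisely what pulls out the overall $1/(1-e^{-\diffp})^2$ factor appearing in the target formula.

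The final step is bookkeeping: expand the three products
\[
-(k-\ell)\,e^{-t}(1-e^{-t})(e^{kt-\ell\diffp}-e^{\diffp_g}),\quad \ell\,e^{-t}(e^{-t}-e^{-\diffp})(e^{kt-\ell\diffp}-e^{\diffp_g}),\quad k\,(e^{-t}-e^{-\diffp})(1-e^{-t})\,e^{kt-\ell\diffp},
\]
and regroup the resulting monomials. Each monomial has the form $e^{at+b\diffp+c\diffp_g}$ for small integer exponents, so one can match them term-by-term against the target expression. In particular, the $e^{(k-2)t-\ell\diffp}$ and $e^{kt-\ell\diffp}$ cross-terms coming from the first and third products should cancel exactly, leaving only the six monomials $e^{\diffp_g-t}, e^{(k-1)t-(\ell+1)\diffp}, e^{(k-1)t-\ell\diffp}, e^{\diffp_g-\diffp-t}, e^{\diffp_g-2t}, e^{kt-(\ell+1)\diffp}$ with coefficients $(k-\ell), (k-\ell), \ell, \ell, -k, -k$, respectively.

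The main obstacle is purely notational: keeping track of the several monomials and verifying that the cross-cancellations occur as claimed. This is a mechanical but error-prone calculation; I would organize it in a table of monomials versus their contributions from each of the three product-rule terms, which makes both the cancellation of $e^{(k-2)t-\ell\diffp}$ and the grouping into the $(k-\ell)$, $\ell$, and $-k$ blocks transparent.
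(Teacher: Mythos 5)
Your proposal is correct and follows essentially the same route as the paper: product rule on the three factors, factoring out $\binom{k}{\ell}p_t^{k-1-\ell}(1-p_t)^{\ell-1}(1-e^{-\diffp})^{-2}$ via the identities $(1-e^{-\diffp})p_t=e^{-t}-e^{-\diffp}$ and $(1-e^{-\diffp})(1-p_t)=1-e^{-t}$, then expanding and collecting monomials, and your final list of six monomials with coefficients $(k-\ell),(k-\ell),\ell,\ell,-k,-k$ is exactly right. One small bookkeeping slip: the only monomial that cancels is $e^{(k-2)t-\ell\diffp}$, and its cancellation uses contributions from all three products (coefficients $(k-\ell)+\ell-k=0$), while no $e^{kt-\ell\diffp}$ term ever appears; this does not affect the argument.
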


\begin{proof}
By definition

\[
f_{\ell}(t,\diffp_g) = {k \choose \ell} p_{t}^{k-\ell}(1 - p_{t})^{\ell} \left( e^{kt - \ell\diffp} - e^{\diffp_g} \right)
\]

We can consider this then to instead be $f_{\ell}(t,\diffp_g) = {k \choose \ell} f(t) \cdot g(t) \cdot h(t)$ with $f(t) =  p_{t}^{k-\ell}$, $g(t) = (1 - p_{t})^{\ell}$, and $h(t) = e^{kt - \ell\diffp} - e^{\diffp_g}$. Applying basic differentiation rules and using the fact that $p_t = \frac{e^{-t} - e^{-\diffp}}{1 - e^{-\diffp}}$, we obtain

\begin{multline*}
\frac{\partial f_{\ell}(t,\diffp_g)}{\partial t} = {k \choose \ell}( k - \ell) \left(\frac{-e^{-t}}{1 - e^{-\diffp}}\right) p_{t}^{k - 1 - \ell} (1 - p_{t})^{\ell } \left(e^{kt - \ell \diffp} - e^{\diffp_g}\right)
\\
+ {k \choose \ell} \ell \left( \frac{e^{-t}}{1 - e^{-\diffp}} \right) p_{t}^{k - \ell} (1 - p_{t})^{\ell - 1} \left(e^{kt - \ell \diffp} - e^{\diffp_g}\right)
+ {k \choose \ell} k e^{kt - \ell\diffp} p_{t}^{k  - \ell} (1 - p_{t})^{\ell } 
\end{multline*}

We can pull out similar terms from each expression to achieve

\begin{multline*}
\frac{\partial f_{\ell}(t,\diffp_g)}{\partial t} = {k \choose \ell} p_{t}^{k - 1 - \ell} (1 - p_{t})^{\ell - 1} \left( \frac{1}{1 - e^{-\diffp}} \right)^2
\bigg( 
-(k - \ell)e^{-t}(1 - e^{-t}) \left(e^{kt - \ell \diffp} - e^{\diffp_g}\right) 
\\
+ \ell e^{-t}(e^{-t} - e^{-\diffp})\left(e^{kt - \ell \diffp} - e^{\diffp_g}\right) + 
ke^{kt - \ell\diffp}\left(e^{-t} - e^{-\diffp}\right)\left(1 - e^{-t}\right)
\bigg)
\end{multline*}

Further examination of the inner term by expanding each expression and cancelling like terms gives

\begin{multline*}
-(k - \ell)e^{-t}(1 - e^{-t}) \left(e^{kt - \ell \diffp} - e^{\diffp_g}\right) +
\ell e^{-t}(e^{-t} - e^{-\diffp})\left(e^{kt - \ell \diffp} - e^{\diffp_g}\right) + 
ke^{kt - \ell\diffp}\left(e^{-t} - e^{-\diffp}\right)\left(1 - e^{-t}\right) 
\\
=
  (k - \ell)\left(e^{\diffp_g - t} + e^{(k-1)t - (\ell + 1)\diffp}\right) 
+ \ell\left(e^{(k - 1)t - \ell\diffp} + e^{\diffp_g - \diffp - t}\right) - k\left(e^{\diffp_g - 2t} + e^{kt - (\ell + 1)\diffp}\right) 
\end{multline*}

This then implies our desired expression.

\end{proof}

We now have the pieces to give a simpler evaluation of the partial derivative for the general case using an inductive argument. Surprisingly, with a bit of combinatorial and algebraic massaging, the full partial derivative will reduce to a rather simple expression.

\begin{proof}[Proof of Lemma~\ref{lem:partial_derivative_full}]
The base case of $\ell = 0$ is true from Lemma~\ref{lem:base_case}. We then assume the claim for $\ell - 1$, and by Corollary~\ref{cor:inductive_for_t} we know $F_{\ell}(t,\diffp_g) = F_{\ell - 1}(t,\diffp_g) + f_{\ell}(t,\diffp_g)$, which implies 

\[
\frac{\partial F_{\ell}(t,\diffp_g)}{\partial t} = \frac{\partial F_{\ell-1}(t,\diffp_g)}{\partial t} + \frac{\partial f_{\ell}(t,\diffp_g)}{\partial t}
\]

Applying our inductive claim and Lemma~\ref{lem:inductive_step} we then have 

\begin{multline*}
\frac{\partial F_{\ell}(t,\diffp_g)}{\partial t} = (k - (\ell - 1)) {k \choose \ell - 1} p_{t}^{k - 1 - (\ell - 1)} (1 - p_{t})^{\ell - 1} \frac{1}{1 - e^{-\diffp}} \left(e^{\diffp_g - t} - e^{kt - \ell\diffp} \right) + 
\\
{k \choose \ell} p_{t}^{k - 1 - \ell} (1 - p_{t})^{\ell - 1} \left(\frac{1}{1 - e^{-\diffp}} \right)^2 \bigg( (k - \ell)\left(e^{\diffp_g - t} + e^{(k-1)t - (\ell + 1)\diffp}\right) \\
+ \ell\left(e^{(k - 1)t - \ell\diffp} + e^{\diffp_g - \diffp - t}\right) - k\left(e^{\diffp_g - 2t} + e^{kt - (\ell + 1)\diffp}\right)\bigg) 
\end{multline*}

We use the fact that $(k - (\ell - 1)){k \choose \ell -1} = \ell{k \choose \ell}$ and this reduces to 

\begin{multline*}
\frac{\partial F_{\ell}(t,\diffp_g)}{\partial t} = 
{k \choose \ell} p_{t}^{k - 1 - \ell} (1 - p_{t})^{\ell - 1}  \left(\frac{1}{1 - e^{-\diffp}} \right)^2 \bigg( \ell \left(e^{-t} - e^{-\diffp}\right) \left(e^{\diffp_g - t} - e^{kt - \ell\diffp} \right) +
\\
(k - \ell)\left(e^{\diffp_g - t} + e^{(k-1)t - (\ell + 1)\diffp}\right) 
+ \ell\left(e^{(k - 1)t - \ell\diffp} + e^{\diffp_g - \diffp - t}\right) - k\left(e^{\diffp_g - 2t} + e^{kt - (\ell + 1)\diffp}\right)\bigg) 
\end{multline*}

Further examination of the inner term by expanding each expression and cancelling like terms gives

\begin{multline*}
 \ell \left(e^{-t} - e^{-\diffp}\right) \left(e^{\diffp_g - t} - e^{kt - \ell\diffp} \right) +
(k - \ell)\left(e^{\diffp_g - t} + e^{(k-1)t - (\ell + 1)\diffp}\right) 
\\
+ \ell\left(e^{(k - 1)t - \ell\diffp} + e^{\diffp_g - \diffp - t}\right) - k\left(e^{\diffp_g - 2t} + e^{kt - (\ell + 1)\diffp}\right)
\\
=
(k - \ell) \left(e^{\diffp_g - t} - e^{\diffp_g - 2t} + e^{(k-1)t - (\ell + 1)\diffp} - e^{kt - (\ell + 1)\diffp}\right)
\\
=
(k - \ell) (1 - e^{-t}) \left(e^{\diffp_g - t} -  e^{kt - (\ell + 1)\diffp}\right)
\end{multline*}

Substituting for this simplified expression and using the fact that $1 - p_t = \frac{1 - e^{-t}}{1 - e^{-\diffp}}$ then gives our desired result.
\end{proof}

\section{Omitted Proofs from Section~\ref{sec:gap}}\label{sec:gap-proofs} 

We provide here the proofs from Section~\ref{sec:gap} that were omitted.

\subsection{Proofs from Section~\ref{subsec:gap}}

\begin{proof}[Proof of Corollary~\ref{cor:non_adap_recursion}]
Note that by our definition, $q_t = e^t p_t$ and $1 - q_t = e^{t-\diffp}(1 - p_t)$, so we can equivalently write

\[
\delta^k(t,\diffp_g) = 
\sum_{i = 0}^k {k \choose i} q_{ t }^{k-i}(1 - q_{ t })^{i} \max\left\{\left( 1 - e^{\diffp_g - kt + i\diffp} \right), 0 \right\}.
\]

We then prove by induction. For $k = 1$, the base case, 
\[
\delta^1(t, \diffp_g) = q_t \max\{1 - e^{\diffp_g - t}, 0\} + (1 - q_t)\max\{1 - e^{\diffp_g - t + \diffp}, 0 \}, 
\]
and the claim follows by definition of $\delta^0(t, \diffp_g)$.
We can then apply our inductive hypothesis to get both
\begin{align*}
q_t \cdot \delta^{k-1}(\diffp_g - t) & = \sum_{i=0}^{k-1} {k-1 \choose i} q_{t}^{k-i} (1 - q_t)^i \max\left\{\left( 1 - e^{\diffp_g - kt + i\diffp} \right), 0 \right\}, 
\\
(1 - q_t)\cdot \delta^{k-1}(\diffp_g - t + \diffp) & = \sum_{i = 0}^{k-1} {k-1 \choose i} q_t^{k-1-i} (1 - q_t)^{i + 1} \max\left\{\left( 1 - e^{\diffp_g - kt + (i + 1)\diffp} \right), 0 \right\} 
\\
& = \sum_{i = 1}^{k} {k-1 \choose i - 1} q_t^{k-i} (1 - q_t)^{i } \max\left\{\left( 1 - e^{\diffp_g - kt + i \diffp} \right), 0 \right\} .
\end{align*}

Our claim then follows from the fact that for any $i \in [1,k-1]$, we must have ${k-1 \choose i -1} + {k-1 \choose i} = {k \choose i}$.
\end{proof}

\begin{proof}[Proof of Lemma~\ref{lem:recursive_gap}]

We prove this inductively. For the base case $k = 2$, from Corollary~\ref{cor:non_adap_recursion} and our definition of $t_{\opt}(\nonadaptBR^2,\diffp_g)$ we have

\[
\delta_{\opt}(\nonadaptBR^2,\diffp_g) = q_t \delta^{1}(t,\diffp_g - t) + (1 - q_t)\delta^{1}(t,\diffp_g + \diffp - t)
\]
If $t \notin t_{\opt}(\nonadaptBR^1,\diffp_g - t + \ell'\diffp)$ for some $\ell' \in \{0,1\}$, then $\delta_{\opt}(\nonadaptBR^1,\diffp_g - t + \ell'\diffp) > \delta^{1}(t,\diffp_g - t + \ell'\diffp)$. Applying Lemma~\ref{lem:adap_recursion_hetero} for the homogeneous case, 

\begin{multline*}
\delta_{\opt}(\abr^2,\diffp_g) \geq q_t\delta_{\opt}(\abr^1,\diffp_g - t) + (1 - q_t) \delta_{\opt}(\abr^1,\diffp_g - t + \diffp)
\\
> q_t \delta^{1}(t,\diffp_g - t) + (1 - q_t)\delta^{1}(t,\diffp_g + \diffp - t) = \delta_{\opt}(\nonadaptBR^2,\diffp_g)
\end{multline*}
This equivalently follows if $\delta_{\opt}(\nonadaptBR^1,\diffp_g - t + \ell'\diffp) < \delta_{\opt}(\abr^1,\diffp_g - t + \ell'\diffp) $ for either $\ell' \in \{0,1\}$.

The inductive step will then follow equivalently. Once again, we have 

\[
\delta_{\opt}(\nonadaptBR^k,\diffp_g) = q_t \delta^{k-1}(t,\diffp_g - t) + (1 - q_t)\delta^{k-1}(t,\diffp_g + \diffp - t)
\]
which similarly implies

\begin{multline*}
\delta_{\opt}(\abr^{k},\diffp_g) \geq q_t\delta_{\opt}(\abr^{k-1},\diffp_q - t) + (1 - q_t) \delta_{\opt}(\abr^{k-1},\diffp_g - t + \diffp)
\\
\geq q_t\delta_{\opt}(\nonadaptBR^{k-1},\diffp_q - t) + (1 - q_t) \delta_{\opt}(\nonadaptBR^{k-1},\diffp_g - t + \diffp)
\\
\geq q_t \delta^{k-1}(t,\diffp_g - t) + (1 - q_t)\delta^{k-1}(t,\diffp_g + \diffp - t) 
= \delta_{\opt}(\nonadaptBR^k,\diffp_g)
\end{multline*}
The goal will then be to show that this inequality becomes strict if one of the conditions in the statement holds. 
First, suppose $t \notin t_{\opt}(\nonadaptBR^{k-1},\diffp_g - t + \ell'\diffp)$ for either $\ell' \in \{0,1\}$, then $\delta_{\opt}(\nonadaptBR^{k-1},\diffp_g - t + \ell'\diffp) > \delta_g^{k-1}(t,\diffp_g - t + \ell'\diffp)$ and the inequality must be strict. On the other hand, if $t \in t_{\opt}(\nonadaptBR^{k-1},\diffp_g - t + \ell'\diffp)$ for both $\ell' \in \{0,1\}$, then this fits the condition of our inductive hypothesis, and we will then use this to prove our claim for the remaining cases.

Let $0\leq \ell' \leq \ell < k$ be such that  $\delta_{\opt}(\abr^{k-\ell},\diffp_g - \ell t + \ell'\diffp) > \delta_{\opt}(\nonadaptBR^{k-\ell},\diffp_g - \ell t + \ell'\diffp) $, and if $\ell = 0$, then the inequality holds trivially. 
If $\ell \geq 1$, then rewriting the inequality, we equivalently have both of the following inequalities, 
\begin{align*}
& \delta_{\opt}(\abr^{k-1-(\ell-1)},\diffp_g - t - (\ell - 1) t + \ell'\diffp)  > \delta_{\opt}(\nonadaptBR^{k-1 -(\ell - 1)},\diffp_g - t - (\ell - 1) t + \ell' \diffp), \\ 
\text{ and } & \quad \delta_{\opt}(\abr^{k-1-(\ell-1)},\diffp_g - t + \diffp - (\ell - 1) t + (\ell' - 1)\diffp)  \\
& \qquad\qquad\qquad\qquad\qquad\qquad\qquad\qquad\quad  > \delta_{\opt}(\nonadaptBR^{k-1 -(\ell - 1)},\diffp_g - t + \diffp - (\ell - 1) t + (\ell' - 1)\diffp).
\end{align*}
 If $\ell \geq 1$, then we must have either $0 \leq \ell'  \leq (\ell - 1) < k - 1$ or $0 \leq (\ell' - 1) \leq (\ell - 1) < k - 1$. 
 We can then apply our inductive hypothesis to achieve $\delta_{\opt}(\abr^{k-1},\diffp_g-t) > \delta_{\opt}(\nonadaptBR^{k-1},\diffp_g - t)$, or $\delta_{\opt}(\abr^{k-1},\diffp_g-t + \diffp) > \delta_{\opt}(\nonadaptBR^{k-1},\diffp_g - t + \diffp)$, respectively, which implies that our inequality is strict.
 
 Similarly, let $0\leq \ell' \leq \ell < k$ be such that  $t \notin t_{\opt}(\nonadaptBR^{k-\ell},\diffp_g - \ell t + \ell'\diffp)$
 By definition we cannot have $\ell = 0$, and we previously considered $\ell = 1$, so we assume $\ell > 1$ in order to apply our inductive claim.
 Rewriting the set $t_\opt$, we must then have both hold 
 \begin{align*}
& \qquad  t \notin t_{\opt}(\nonadaptBR^{k-1-(\ell - 1)},\diffp_g -t -  (\ell - 1) t + \ell'\diffp) \\
 \text{ and } & \qquad t \notin t_{\opt}(\nonadaptBR^{k-1 - (\ell - 1)},\diffp_g - t + \diffp - (\ell -  1) t + (\ell' - 1)\diffp).
 \end{align*} 
 If $\ell > 1$, then $\ell - 1 > 0$ and either $0 \leq \ell'  \leq (\ell - 1) < k - 1$ or $0 \leq (\ell' - 1) \leq (\ell - 1) < k - 1$.
 Applying our inductive hypothesis, we have either case hold, respectively
\begin{align*}
\delta_{\opt}(\abr^{k-1},\diffp_g-t) & > \delta_{\opt}(\nonadaptBR^{k-1},\diffp_g - t), \\
\text{ or }  \delta_{\opt}(\abr^{k-1},\diffp_g-t + \diffp) & > \delta_{\opt}(\nonadaptBR^{k-1},\diffp_g - t + \diffp).
\end{align*}
This implies our inequality is strict.
\end{proof}

In order to prove Lemma~\ref{lem:use_2_case}, we will also need the following edge case.

\begin{lemma}\label{lem:edge_2_case}
$t_{\opt}(\nonadaptBR^2,-3\diffp/2) \cap t_{\opt}(\nonadaptBR^2,\diffp/2) = \emptyset$ and $ t_{\opt}(\nonadaptBR^2,-\diffp/2)  \cap t_{\opt}(\nonadaptBR^2,\diffp/2) = \emptyset$
\end{lemma}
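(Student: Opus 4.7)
The plan is to apply Lemma~\ref{lem:set_of_opt_t} to reduce each set $t_\opt(\nonadaptBR^2,\cdot)$ to a finite list of candidate values, and then to verify disjointness either by inspection or by explicit comparison of $\delta^2$ values. For $k=2$ and $\diffp_g\in(-2\diffp,2\diffp)$, Lemma~\ref{lem:set_of_opt_t} gives
\[
t_\opt(\nonadaptBR^2,\diffp_g)\;\subseteq\; \left\{\tfrac{\diffp_g+\diffp}{3},\;\tfrac{\diffp_g+2\diffp}{3}\right\}\cap(0,\diffp).
\]
Evaluating at $\diffp_g=-3\diffp/2$ yields the candidate set $\{\diffp/6\}$ (since $-\diffp/6\notin(0,\diffp)$); at $\diffp_g=-\diffp/2$ it yields $\{\diffp/6,\diffp/2\}$; and at $\diffp_g=\diffp/2$ it yields $\{\diffp/2,5\diffp/6\}$.

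The first claim is immediate: the candidate supersets $\{\diffp/6\}$ and $\{\diffp/2,5\diffp/6\}$ are already disjoint, hence so are the smaller $t_\opt$ sets contained within them. For the second claim the candidate supersets overlap only at $\diffp/2$, so the task reduces to showing that $t=\diffp/2$ is not simultaneously an optimizer for both $\diffp_g=-\diffp/2$ and $\diffp_g=\diffp/2$.

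My plan is to rule out membership in one of the two by direct comparison. First, use Lemma~\ref{lem:f_t_equal_delta} to identify the correct cut $\ell$ so that $\delta^2(t,\diffp_g^*)=F_\ell(t,\diffp_g^*)$ at each candidate: at $(t,\diffp_g^*)=(\diffp/2,\diffp/2)$ only the $i=0$ term is positive, so $\delta^2=F_0$; at $(t,\diffp_g^*)=(5\diffp/6,\diffp/2)$ both $i=0,1$ are positive, so $\delta^2=F_1$; and analogously for $\diffp_g^*=-\diffp/2$. Next, expand each $\delta^2$ explicitly in terms of $u=e^{\diffp/6}$, using the closed form $p_{\diffp,t_\ell^*}=(e^{-t_\ell^*}-e^{-\diffp})/(1-e^{-\diffp})$, and compare the two candidates at each $\diffp_g^*$. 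The comparison reduces to a single polynomial inequality in $u$, which I would verify by factoring out $(u-1)$ and checking nonnegativity of the remaining polynomial for $u>1$.

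The principal obstacle will be this algebraic step: $\diffp/2$ is the critical point of $F_0(\cdot,\diffp/2)$ and of $F_1(\cdot,-\diffp/2)$ simultaneously, so the first-order analysis used to produce the candidates is silent about the comparison, and one must fall back on the explicit value, which yields polynomials of degree up to twelve in $u$. If the polynomial manipulation proves unwieldy, one fallback is a small-$\diffp$ Taylor expansion to settle the sign at leading order followed by a monotonicity check in $\diffp$; another is to exploit the symmetry $p_{\diffp,\diffp-t}=1-q_{\diffp,t}$ (Claim~\ref{claim:symmetric}) between the cases $\diffp_g=\pm\diffp/2$ to halve the work.
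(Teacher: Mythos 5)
Your reduction via Lemma~\ref{lem:set_of_opt_t} is exactly the paper's approach, and your handling of the first disjointness claim is correct and identical to the paper's: the candidate supersets $\{\diffp/6\}$ and $\{\diffp/2,5\diffp/6\}$ are already disjoint, so nothing more is needed. (For the singleton cases $\pm3\diffp/2$ one implicitly also uses that the supremum of $\delta^2(\cdot,\diffp_g)$ is attained on $[0,\diffp]$, so the candidate set's unique element really is the optimizer.)

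The algebraic program you sketch for the second claim, however, cannot succeed, because the statement as written is false. For any $\diffp$ you try, $t=\diffp/2$ is in fact the optimizer on both sides. For $\diffp=1$: $\delta^2(1/2,\,1/2)=p_{1/2}^2(e-e^{1/2})\approx 0.1524$ while $\delta^2(5/6,\,1/2)\approx 0.0970$, so $t_\opt(\nonadaptBR^2,\diffp/2)=\{\diffp/2\}$; and $\delta^2(1/2,\,-1/2)\approx 0.4858$ while $\delta^2(1/6,\,-1/2)\approx 0.4525$, so $t_\opt(\nonadaptBR^2,-\diffp/2)=\{\diffp/2\}$ as well. The intersection is $\{\diffp/2\}\neq\emptyset$, and the same pattern persists at $\diffp\in\{0.1,0.5,2\}$. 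Consequently, the polynomial inequality you hope to extract in $u=e^{\diffp/6}$ would go the wrong direction; no factoring will rescue it.

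The resolution is that the lemma statement carries a typo. The second claim should read $t_\opt(\nonadaptBR^2,-\diffp/2)\cap t_\opt(\nonadaptBR^2,3\diffp/2)=\emptyset$, with $3\diffp/2$ in place of $\diffp/2$. Two signals point to this. First, the paper's proof explicitly lists all four candidate sets, including $t_\opt(\nonadaptBR^2,3\diffp/2)=\{5\diffp/6\}$ which would be superfluous under the as-written statement, and then declares ``the claim then follows immediately'' — a step that only makes sense if both needed pairs already have disjoint candidate supersets. Second, in Lemma~\ref{lem:use_2_case}, where this lemma is invoked, the edge case occurs at $\diffp_g'=-\diffp/2$; the $j=0$ branch passes $\diffp_g'$ and $\diffp_g'+2\diffp=3\diffp/2$ to the two $t_\opt$ sets (yielding the pair $\{-\diffp/2,3\diffp/2\}$), while the $j>0$ branch passes $\diffp_g'-\diffp=-3\diffp/2$ and $\diffp_g'+\diffp=\diffp/2$ (the first pair). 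The pair $\{-\diffp/2,\diffp/2\}$ is never used. With the corrected statement, your first-step observation already finishes the proof: $t_\opt(\nonadaptBR^2,-\diffp/2)\subseteq\{\diffp/6,\diffp/2\}$ while $t_\opt(\nonadaptBR^2,3\diffp/2)=\{5\diffp/6\}$, which are disjoint. No explicit value comparison, Taylor expansion, or polynomial manipulation is needed.
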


\begin{proof}
For any $\diffp_g \in \{-3\diffp/2,-\diffp/2,\diffp/2,3\diffp/2\}$, from Lemma~\ref{lem:set_of_opt_t} that $t_{\opt}(\nonadaptBR^2,\diffp_g) \subseteq \{\frac{\diffp_g + (\ell + 1)\diffp}{3}\} \cap (0,\diffp)$ for $\ell \in \{0,1\}$. This then implies that $t_{\opt}(\nonadaptBR^2,-3\diffp/2) = \diffp/6$, $t_{\opt}(\nonadaptBR^2,-\diffp/2) \subseteq \{\diffp/6, \diffp/2\}$, $t_{\opt}(\nonadaptBR^2,\diffp/2) \subseteq \{\diffp/2,5\diffp/6\}$, and $t_{\opt}(\nonadaptBR^2,3\diffp/2) = 5\diffp/6$. The claim then follows immediately.

\end{proof}

\begin{proof}[Proof of Lemma~\ref{lem:use_2_case}]
By our assumptions, it immediately follows that either there exists $0 \leq j \leq k-2$ such that $\diffp_g - (k-2)t + j\diffp \in (-\diffp/2,\diffp/2)$, or we are in the edge case where there exists $0 \leq j < k-2$ such that $\diffp_g - (k-2)t + j\diffp =-\diffp/2$. In first case, 
we know that $\delta_{\opt}(\nonadaptBR^2, \diffp_g - (k-2)t_{\ell} + j \diffp) < \delta_{\opt}(\abr^2, \diffp_g - (k-2)t + j \diffp)$ from Lemma~\ref{lem:gap_base_case}. In the second case (the edge case), if $j = 0$ then we know $j + 2 \leq k-2$ because $k \geq 4$, and from Lemma~\ref{lem:edge_2_case} we must either have
$t \notin t_{\opt}(\nonadaptBR^2, \diffp_g - (k-2)t_{\ell} + j \diffp)$  or $t \notin t_{\opt}(\nonadaptBR^2, \diffp_g - (k-2)t_{\ell} + (j + 2) \diffp)$. Otherwise, if $j > 0$, then we again have from Lemma~\ref{lem:edge_2_case} that either $t \notin t_{\opt}(\nonadaptBR^2, \diffp_g - (k-2)t_{\ell} + (j -1) \diffp)$ or $t \notin t_{\opt}(\nonadaptBR^2, \diffp_g - (k-2)t_{\ell} + (j +1) \diffp)$.

In either case, we can immediately apply Lemma~\ref{lem:recursive_gap} to achieve our desired inequality.
\end{proof}

\subsection{Proofs from Section~\ref{subsec:nogap}}

\begin{proof}[Proof of Lemma~\ref{lem:equal_in_trivial}]

We will prove both statements by induction, where the base case $\delta_{\opt}(\abr^0,\diffp_g) = \max\{ 1 - e^{\diffp_g}, 0\} = 0$ for $\diffp_g \geq 0$ and $\delta_{\opt}(\abr^0,\diffp_g) = \max\{ 1 - e^{\diffp_g}, 0\} = 1 - e^{\diffp_g}$ for $\diffp_g \leq 0$.
For any $t \in [0,\diffp]$, if $\diffp_g \geq k\diffp$ we must have $\diffp_g - t \geq (k-1)\diffp$ and $\diffp_g - t + \diffp \geq (k-1)\diffp$. Similarly, if $\diffp_g \leq -k\diffp$ we must have $\diffp_g - t \leq -(k-1)\diffp$ and $\diffp_g - t + \diffp \leq -(k-1)\diffp$. Using Lemma~\ref{lem:adap_recursion_hetero} for the homogeneous case, we know

\[
\delta_{\opt}(\abr^{k},\diffp_g) = \sup_{t \in [0,\diffp]} \left\{ q_{t} \delta_{\opt}(\abr^{k-1}, \diffp_g - t) + (1 - q_{t})  \delta_{\opt}(\abr^{k-1}, \diffp_g + \diffp - t) \right\}
\]
and applying our inductive hypothesis easily gives $\delta_{\opt}(\abr^{k},\diffp_g) = 0$ for any $\diffp_g \geq k\diffp$. Applying our inductive hypothesis for $\diffp_g \leq -k\diffp$, we have for any $t$ that 

\begin{multline*}
q_{t} \delta_{\opt}(\abr^{k-1}, \diffp_g - t) + (1 - q_{t})  \delta_{\opt}(\abr^{k-1}, \diffp_g + \diffp - t) 
\\
= \frac{1 - e^{t-\diffp}}{1 - e^{-\diffp}}\left(1 - e^{\diffp_g - t}\right) + \frac{e^{t-\diffp} - e^{-\diffp}}{1 - e^{-\diffp}}\left(1 - e^{\diffp_g - t + \diffp}\right) = 1 - e^{\diffp_g}.
\end{multline*}

This then implies $\delta_{\opt}(\abr^{k},\diffp_g) = 1 - e^{\diffp_g}$ for any $\diffp_g \leq -k\diffp$.

\end{proof}

\begin{proof}[Proof of Lemma~\ref{lem:lower_no_gap}]
We show this inductively. For $k = 1$, if $\diffp_g \leq 0$, then for any $t \in [0,\diffp]$, we must have $\diffp_g - t \leq 0$ and $\max\{1 - e^{\diffp_g - t},0\} = 1 - e^{\diffp_g - t}$. This then implies
\[
\delta_{\opt}(\abr^1,\diffp_g) =   \sup_{t \in [0,\diffp]}  \left\{q_{t} (1-e^{\diffp_g - t})  + (1 - q_{t})\max\{1 - e^{\diffp_g + \diffp - t},0\} \right\}.
\]

If $\diffp_g \leq -\diffp$ then $\max\{1 - e^{\diffp_g + \diffp - t}, 0\} = 1 - e^{\diffp_g + \diffp - t}$ for any $t \in [0,\diffp]$ 
and $\delta_{\opt}(\abr^1,\diffp_g) = 1 - e^{\diffp_g}$ because from the proof of 
Lemma~\ref{lem:equal_in_trivial} we have $q_{t} (1-e^{\diffp_g - t})  + (1 - q_{t})(1 - e^{\diffp_g + \diffp - t}) = 1 - e^{\diffp_g}$ for any $t$. 
Furthermore, $e^{\diffp_g + \diffp - t} - 1 \leq 0$ for any $t \in [0,\diffp]$, so $\sup_{t\in[0,\diffp]} \left\{ (1 - q_t)(e^{\diffp_g + \diffp - t} - 1) \right\}= 0$ by setting $t = 0$, and we have our desired equality.

If $\diffp <\diffp_g \leq 0$, then there must exist some $t \in [0,\diffp]$ such that $\diffp_g + \diffp - t > 0$.
Once again, we know $q_{t} (1-e^{\diffp_g - t})  + (1 - q_{t})(1 - e^{\diffp_g + \diffp - t}) = 1 - e^{\diffp_g}$ for any $t$. Consequently, the supremum must be achieved for some $t \in [0,\diffp_g + \diffp) \subset [0,\diffp]$ such that $1 - e^{\diffp_g + \diffp - t} < 0$. Thus,

\begin{align*}
\delta_{\opt}(\abr^1,\diffp_g) & =   \sup_{t \in [0,\diffp_g + \diffp)}  q_{t} (1-e^{\diffp_g - t})  
\\
& =  \sup_{t \in [0,\diffp_g + \diffp)}  \left\{q_{t} (1-e^{\diffp_g - t})  + (1 - q_{t})(1 - e^{\diffp_g + \diffp - t}) + (1 - q_{t})( e^{\diffp_g + \diffp - t} - 1) \right\} 
\\
& =1 - e^{\diffp_g} + \sup_{t \in [0,\diffp_g + \diffp)}  (1 - q_{t})( e^{\diffp_g + \diffp - t} - 1)  
\\
& = 1 - e^{\diffp_g} + \sup_{t \in [0,\diffp]}  (1 - q_{t})( e^{\diffp_g + \diffp - t} - 1) .
\end{align*}

The inductive step for $k \geq 2$ follows more easily, where for any $t \in [0,\diffp]$, we must have $\diffp_g - t \leq -(k-1)\diffp$, so from Lemma~\ref{lem:equal_in_trivial}, we have 

\[
\delta_{\opt}(\abr^{k},\diffp_g) = \sup_{t \in [0,\diffp]}\left\{ q_{t} (1 - e^{\diffp_g - t}) + (1 - q_{t}) \delta_{\opt}(\abr^{2:k},\diffp_g + \diffp - t)
\right\}.
\]

We can then apply our inductive hypothesis because $k - 1\geq 1$ and $\diffp_g + \diffp - t \leq -(k-2)\diffp$, and therefore
\[
\delta_{\opt}(\abr^{k-1},\diffp_g + \diffp - t) =
1 - e^{\diffp_g + \diffp - t} +
\sup_{t_i \in [0,\diffp]} 
\prod_{i=1}^{k-1}(1 - q_{t_i}) 
\left(e^{\diffp_g + \diffp - t + (k-1)\diffp - \sum_{i=1}^{k-1} t_i} - 1\right) 
\]

Plugging in this term and once again using the fact that $q_{t} (1-e^{\diffp_g - t})  + (1 - q_{t})(1 - e^{\diffp_g + \diffp - t}) = 1 - e^{\diffp_g}$ for any $t$, gives our desired equality.
\end{proof}

\end{document}